\title{Comparing the Power of Advice Strings: a Notion of Complexity for Infinite Words}
\titlerunning{Comparing the Power of Advice Strings}
\author{Gaëtan Douéneau-Tabot}{\'Ecole Normale Supérieure Paris-Saclay, Université Paris-Saclay, Cachan, France\footnote{This work was partially done during a stay of the author in RWTH Aachen University.}}{gaetan.doueneau@ens-paris-saclay.fr}{}{}
\authorrunning{G. Douéneau-Tabot}
\subjclass{Theory of computation $\rightarrow$ Automata over infinite objects}
\keywords{infinite words, advice automata, automatic structures, transducers}
\newtheorem{proposition}[theorem]{\bfseries Proposition}
\newtheorem{fact}[theorem]{\bfseries Fact}
\newcommand{\mb}[1]{\mathbb{#1}}
\newcommand{\mc}[1]{\mathcal{#1}}
\newcommand{\mf}[1]{\mathfrak{#1}}
\renewcommand{\epsilon}{\varepsilon}
\newcommand{\pc}{\preccurlyeq}
\newcommand{\reg}{\operatorname{Reg}}
\newcommand{\regi}{\reg^{\infty}}
\newcommand{\wreg}{\omega\hspace{-0.07cm}\operatorname{Reg}}
\newcommand{\REG}{\pc_{\reg}}
\newcommand{\REGi}{\pc_{\regi}}
\newcommand{\wREG}{\pc_{\wreg}}
\newcommand{\aut}{\operatorname{AutStr}}
\newcommand{\auti}{\aut^\infty}
\newcommand{\waut}{\omega \hspace{-0.1cm}\operatorname{AutStr}}
\newcommand{\MSO}{\operatorname{\textsf{MSO}}}
\newcommand{\WMSO}{\operatorname{\textsf{WMSO}}}
\renewcommand{\S}{\operatorname{\textsf{S}}}
\newcommand{\FS}{\operatorname{\textsf{FS}}}
\newcommand{\MSOT}{\operatorname{\textsf{MSOT}}}
\newcommand{\FO}{\operatorname{\textsf{FO}}}
\newcommand{\LTL}{\operatorname{\textsf{LTL}}}
\newcommand{\DFT}{\operatorname{\textsf{1WFT}}}
\newcommand{\SST}{\operatorname{\textsf{SST}}}
\newcommand{\WFT}{\operatorname{\textsf{2WFT}}}
\newcommand{\Pref}{\operatorname{Pref}}
\newcommand{\dom}{\operatorname{dom}}
\newcommand{\imp}{\operatorname{Imp}}
\newcommand{\ind}{\operatorname{Index}}
\newcommand{\code}{\operatorname{Code}}
\newcommand{\atoms}{\operatorname{Atoms}}
\newcommand{\qs}{Q_{\atoms}}
\newcommand{\qi}{Q_{\subseteq}}
\newcommand{\val}{\operatorname{\textsf{val}}}
\newcommand{\out}{\operatorname{\textsf{out}}}
\newcommand{\X}{\operatorname{\textsf{X}}}
\newcommand{\G}{\operatorname{\textsf{G}}}
\newcommand{\U}{\operatorname{\textsf{U}}}
\renewcommand{\S}{\operatorname{\textsf{S}}}
\begin{document}

\maketitle

\begin{abstract}

This paper is the extended version of On the Complexity of Infinite Advice Strings \cite{doueneau2018}.

We investigate a notion of comparison between infinite strings. In a general way, if $\mc{M}$ is a computation model (e.g. Turing machines) and $\mc{C}$ a class of objects (e.g. languages), the complexity of an infinite word $\alpha$ can be measured with respect to the amount of objects from $\mc{C}$ that are presentable with machines from $\mc{M}$ using $\alpha$ as an oracle.

In our case, the model $\mc{M}$ is finite automata and the objects $\mc{C}$ are either recognized languages or presentable structures, known respectively as advice regular languages and advice automatic structures. This leads to several different classifications of infinite words that are studied in detail; we also derive logical and computational equivalent measures. Our main results explore the connections between classes of advice automatic structures, $\MSO$-transductions and two-way transducers. They suggest a closer study of the resulting hierarchy over infinite words.
 \end{abstract}



\section{Introduction}

Several measures have been defined to describe the (intuitive) complexity of infinite strings. Among others we mention subword complexity \cite{allouche2003}, Kolmogorov complexity, and Turing degrees \cite{sacks1963}. Whereas the two first methods focus on the intrinsic information contained in a string, the other one studies the relation of computability from one word to another, defining a preorder whose properties are now quite well understood. Equivalently, this preorder compares the expressive power of Turing machines that use an infinite word as oracle.

This paper follows a similar idea: we consider finite automata that can access an infinite \emph{advice} string while processing their input. Such automata define classes of \emph{advice regular languages} \cite{salomaa1968}, that generalize standard regularity. This notion enables us to introduce a comparison for infinite words: $\alpha$ is simpler (in the sense of languages) than $\beta$ if every language recognized by an automaton with advice $\alpha$ can also be recognized with advice $\beta$, what corresponds to the intuition that $\alpha$ contains less information than $\beta$.

Before going further, we evoke the current motivations around advice regular languages. Standard regular languages can be used to encode finite signature structures, known as \emph{automatic structures}. This concept, derived from Büchi's early automata-logic techniques, has been shown especially relevant since its formalization in \cite{khoussainov1995} and \cite{blumensath2000}. The model opened the door to a vast range of decision procedures via automata constructions, but it suffers from a lack of expressiveness, since e.g. $\langle \mb{Q}, +\rangle$ is not automatic \cite{tsankov2011}. However, $\langle \mb{Q}, +\rangle$ is an example of \emph{advice automatic structure}: it can be encoded using advice regular languages (instead of regular languages) \cite{KRSZ12}. Such structures share many properties with the former automatic structures, furthermore the use of advices builds an interesting framework to discuss algorithmic meta-theorems \cite{abu2017}. We shall not follow a model-theoretic point of view on advice automatic structures, but we use them to define another notion of comparison over infinite words as follows: $\alpha$ is simpler (in the sense of structures) than $\beta$ if every  automatic structure with advice $\alpha$ is also  automatic with advice $\beta$.

\textbf{Objectives and outline.} This paper is structured as a quest for a relevant way to compare infinite strings through the notion of advice. The informal criteria we use to define a good complexity measure are the following: it should have a simple definition, be robust enough, but not too coarse because we want to separate understandable sequences. Note that Turing degrees do not match this intuition since they make no distinction between computable -  useful - sequences. Our results will establish an interesting correspondence between the expressive power of advices (compared more or less using \emph{languages}) and certain forms of \emph{transductions}, when considering the way they classify infinite strings. This is somehow surprising, since the theory of transformations between words tends to be more fruitful and more difficult than the study of languages, following an early remark of Dana Scott \cite{scott1967}: ``the functions computed by the various machines are more important - or at least more basic - than the sets accepted by these device''. The concept of advice helps unifying these two frameworks. Furthermore, we shall use this idea to provide slightly new perspectives on (advice) automatic presentations and logic over infinite words.

After recalling preliminary results on structures, formal languages and logic, we present formally in Section \ref{sec:reglang} the notion of regularity with advice, under several variants. We study the comparisons of words provided by classes of advice  regular languages, as evoked above. An easy correspondence is drawn with transductions, for instance we show that every regular language with advice $\alpha$ is also regular with advice $\beta$ if and only if $\alpha$ is the image of $\beta$ under a Mealy machine. Nevertheless, we note that such comparisons are far from being robust. Next we turn in Section \ref{sec:autstr} to classes of advice  automatic structures and briefly study some standard properties. We then show that some variants of advice regular languages have no influence on the classes of presentable structures. This first involved result is also a first step to obtain a new robust notion of comparison. The proof of this result also provides an original normal form for $\MSO$-formulas when interpreted in a fixed word model.

Section \ref{sec:msot} intends to understand the comparison over infinite words defined with respect to advice automatic presentations (see above); it develops the most involved results of this paper. Similar investigations were built in \cite{loding2007}, under the formalism of set-interpretations - a very close notion. We particularize their results to show that every  automatic structure with advice $\alpha$ is also  automatic with $\beta$ if and only if $\alpha$ is the image of $\beta$ under an $\MSO$-transduction (some logical transformation between words). We then give a more handy equivalent statement: $\alpha$ is the image of $\beta$ under a two-way transducer. This result is quite specific and original, since such transducers are however not powerful enough to realize all \emph{functions} of infinite words defined by $\MSO$-transductions \cite{alur2012}. We conclude this paper investigating more precisely the structure of this last preorder (defined in particular by two-way transductions) in Section \ref{sec:hierarchy}. Even if no previous research was done on the subject, a very similar study was carried out in \cite{endrullis2015} for comparison via one-way finite transducers. In the light of their results, we rough out the structure of a new hierarchy and explain why an involved questioning may be fruitful.

\section{Preliminaries}

\paragraph*{Words and languages}

Greek capitals $\Sigma$, $\Gamma$ and $\Delta$ are used to denote alphabets, i.e. finite sets of letters; $\square$ is a padding letter that never belongs to these alphabets. If $w$ is a (possibly infinite) word, let $|w| \in\mb{N}\cup\{\omega\}$ be its length, and for $n \ge 0$ let $w[n]$ be its $(n+1)$-th letter (when defined). For $0 \le m \le n$, let $w[m:n] = w[m]w[m+1] \cdots w[n-1]$ (when defined, possibly $\epsilon$). We write $w[:n]$ for the prefix $w[0:n]$, and $w[n:]$ for the (possibly infinite) suffix $w[n] w[n+1] \cdots $.

Denote by $\reg$ (resp. $\wreg$) the class of regular (resp. $\omega$-regular) languages. We assume familiarity with the standard results of automata theory. We make a large use of logic-automata connections, especially over infinite words (see e.g. \cite{thomas1997} for a good survey).

\begin{definition}[Convolution] If $u$ and $v$ are (possibly infinite) words, their \emph{convolution} $u \otimes v$ is the word of length $\max(|u|,|v|)$ such that:
\begin{itemize}
\item $(u\otimes v)[n] = (u[n],v[n])$ if $n < \min(|u|,|v|)$;
\item $(u\otimes v)[n] =(u[n],\square)$ if $|v| \le n < |u|$;
\item $(u\otimes v)[n] = (\square,v[n])$ if $|u| \le n < |v|$.
\end{itemize}

\end{definition}

Convolution is defined in a similar way for $k$-tuples of (finite or infinite) words.

\paragraph*{Structures and logic}

We deal with structures over a finite signature, denoted by fraktur letters $\mf{A}$, $\mf{B}$\dots{} When needed, structures are seen as purely relational (we replace the functions by their graphs). We associate to each infinite word $\alpha$ its word structure $\mf{W}^\alpha = \langle \mb{N}, <, (P_a)_{a \in \Gamma}\rangle$ where $<$ is the usual ordering on positive integers, and $n \in P_a$ if and only if $\alpha[n] = a$. For succinctness reasons, $\alpha \models \phi$ often stands for $\mf{W}^\alpha \models \phi$. If $\tau$ is a signature and $\mc{L}$ a logic, $\mc{L}[\tau]$-formulas are $\mc{L}$-formulas over the signature $\tau$. We assume that equality implicitly belongs to every signature and write $\MSO[<,\Gamma]$ for $\MSO[<,(P_a)_{a \in \Gamma}]$. $\MSO$-formulas can be interpreted using weak semantic ($\WMSO$), where we allow set quantifications to range only over finite sets.

We recall how to present structures with languages: we encode the elements of the domain as words, so that the relations can be described in consistent way. 

\begin{definition}\label{def:pres}

Let $\mf{A} := \langle A, R_1 \dots R_n \rangle$ a relational structure and $\mc{C}$ a class of languages (possibly over infinite words). A  \emph{$\mc{C}$-presentation} of $\mf{A}$ is a tuple $(L, L_{=}, L_1 \dots L_n)$ of languages from $\mc{C}$ such that there exists a surjective function $\nu : L \rightarrow A$ with:
\item 
\begin{itemize}

\item $L_= = \{w \otimes w'~|~w,w' \in L \text{ and } \nu(w)  = \nu(w')\}$;
\item for $R_i$ (arity $r_i$), $L_i = \{w_1\otimes \cdots \otimes w_{r_i}~|~\forall 1 \le j \le r_i, w_j \in L \text{ and } (\nu(w_1), \dots, \nu(w_{r_i})) \in R_i\}$.
\end{itemize}

\end{definition}

The function $\nu$ describes how $A$ is encoded in $L$. Since we never deal with the elements directly, it does not belong explicitly to the presentation and can be considered as a notation. The alphabet of $L$ is called \emph{encoding alphabet} and often denoted $\Sigma$. The presentation is said \emph{injective} if $L_= := \{w \otimes w~|~w \in L\}$.

The point is to find a class of languages which is both robust and decidable enough. If $\mc{C}$ is the class of recursive languages, the $\mc{C}$-presentable structures correspond to early studied recursive structures \cite{millar1978}. More recently, the class of $(\omega)\hspace{-0.1cm}\reg$-presentable structures generated much attention, under the name of \emph{($\omega$-)automatic structures} \cite{blumensath2000}. Such structures can be described using a tuple of automata for the languages of the presentations.We denote by $(\omega)\hspace{-0.1cm}\aut$ the class of  ($\omega$-)automatic structures.

\begin{example} $\langle\mb{N}, +, 0, 1\rangle \in \aut$.

\end{example}

The well-known behavior of automata produces literally hundreds of nice properties in this field, the most famous being probably the following.
 
 \begin{proposition}[\cite{blumensath2000}]
 
 Every $(\omega)$-automatic structure has a decidable $\FO$-theory. The method is effective starting from a presentation by finite automata.
 
 \end{proposition}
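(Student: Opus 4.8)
The plan is to establish decidability of the first-order theory by reducing each $\FO$-sentence, step by step, to an emptiness question for an automaton, exploiting the closure properties of the class $(\omega)\hspace{-0.1cm}\reg$ together with the presentation machinery of Definition~\ref{def:pres}. First I would set up the correspondence between a formula $\phi(x_1,\dots,x_k)$ with free variables and a language $L_\phi$ over the encoding alphabet $\Sigma$, consisting of the convolutions $w_1 \otimes \cdots \otimes w_k$ of codes such that the tuple $(\nu(w_1),\dots,\nu(w_k))$ satisfies $\phi$ in $\mf{A}$. The goal is to prove by induction on the structure of $\phi$ that every such $L_\phi$ belongs to $(\omega)\hspace{-0.1cm}\reg$ and that an automaton for it can be effectively computed from the given presentation.

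The base case handles atomic formulas: for a relation symbol $R_i$ the language $L_i$ is given outright in the presentation, and equality is handled by $L_=$; one only has to intersect with the language $L^{\otimes k}$ of well-formed tuples of codes to respect the domain constraint, which is regular since $\mc{C}$ is closed under convolution and intersection. For the inductive step, conjunction and disjunction correspond to intersection and union of the associated languages, and negation corresponds to complementation \emph{relative to} the set of valid code-tuples $L^{\otimes k}$ (rather than to the full set of words over $\Sigma$, since only genuine codes name elements). The existential quantifier $\exists x_{k+1}$ is the delicate operation: it corresponds to a projection that erases the $(k+1)$-th component of the convolution, and for this to yield a regular language one must cope with the padding letter $\square$ and the possibly different lengths of the codes. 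Concretely, projecting and then renormalizing the padding stays within $(\omega)\hspace{-0.1cm}\reg$ because these are letter-to-letter and erasing homomorphic images, under which the class is closed.

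The main obstacle, which I expect to be the genuinely load-bearing step, is the \emph{effective closure under complementation for $\omega$-automata}. Unlike the finite-word case, complementing a Büchi automaton is nontrivial and costly, but it is effective (e.g. via Safra's determinization or the Büchi complementation constructions), and it is exactly this that makes the whole inductive translation algorithmic rather than merely existential. I would emphasize that at each step the construction is effective: starting from the finite tuple of automata realizing the presentation, each logical connective and quantifier is matched by a concrete automaton construction, so the automaton for $L_\phi$ is computable. Finally, for a \emph{sentence} $\phi$ with no free variables, the language $L_\phi$ lives over a single-letter (empty) convolution, so $\mf{A} \models \phi$ holds if and only if $L_\phi$ is nonempty; since emptiness of (Büchi) automata is decidable, this yields a decision procedure for the full $\FO$-theory, as claimed.
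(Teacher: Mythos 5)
Your proposal is correct and is precisely the classical argument behind this result: the paper itself states the proposition only by citation to \cite{blumensath2000}, and the proof given there (and in Khoussainov--Nerode) is exactly your induction translating formulas to automata via closure under boolean operations and projection, with effective Büchi complementation as the costly step and emptiness testing at the end. No essential difference; the only detail you leave implicit is that each $L_\phi$ is saturated under the encoding equivalence $L_=$ (needed so that relative complementation really computes negation in the quotient structure), which is a routine induction.
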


 One of the main current challenges is to describe which structures have an automatic presentation, and elegant characterizations have been stated for certain classes, such as finitely generated groups \cite{oliver2005}. However, as shown in Theorem \ref{theo:rat}, the presentation fails for simple structures with decidable $\FO$-theory. This motivates the study of possible extensions.

\begin{theorem}[\cite{tsankov2011}]

\label{theo:rat}

$\langle \mb{Q}, + \rangle$ is not an ($\omega$-)automatic structure.

\end{theorem}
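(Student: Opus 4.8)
The plan is to argue by contradiction, exploiting the fact that $\langle\mb{Q},+\rangle$ is a \emph{divisible} torsion-free group, which clashes with the finite-state rigidity of ($\omega$-)regular relations. Suppose we are given a presentation $(L,L_=,R_+)$ over an encoding alphabet $\Sigma$, with surjection $\nu:L\to\mb{Q}$ realizing $+$. The first observation is that every $\FO$-definable relation of the structure is again ($\omega$-)regular: this is exactly the closure property underlying the decidability of the $\FO$-theory (the Proposition above). Since $\mb{Q}$ is uniquely $2$-divisible, the graph of the halving map $h:x\mapsto x/2$ is $\FO$-definable from $+$ (it is the unique $y$ with $y+y=x$), hence the relation $H=\{u\otimes v : u,v\in L,\ \nu(v)=\nu(u)/2\}$ is ($\omega$-)regular, and similarly for doubling and for multiplication by any fixed integer. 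The target contradiction should come from iterating $h$: the chain $q,\,q/2,\,q/4,\dots$ consists of pairwise distinct elements, and I want to show that a finite-state device cannot sustain such an infinite strictly nested divisible chain.

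For the \emph{automatic} (finite-word) case I would first reduce to an injective presentation by choosing the length-lexicographically least code in each class, which is a regular selection. Then I would use the standard fact that an automatic \emph{bijection} has bounded length distortion: if the convolution $u\otimes v$ of the graph is accepted by a DFA with $k$ states, then the $\square$-padded tail cannot be pumped without violating functionality in one direction or injectivity in the other, so $\bigl||\code(x/2)|-|\code(x)|\bigr|\le k$. This controls how the code lengths of $q/2^n$ evolve, and combined with a counting estimate on the subgroup $\mb{Z}[1/2]\cdot q$ (whose elements must all receive distinct short codes while being closed under the automatic translations by $q/2^n$) it yields the desired contradiction.

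The \emph{$\omega$-automatic} case is the genuinely hard one, since here one cannot in general select a canonical representative from an $\omega$-regular class, so injectivity is unavailable and the length-distortion lemma breaks down. Following Tsankov, I would instead pass to a measure-theoretic argument. Equip $\Sigma^\omega$ with the uniform Bernoulli measure $\mu$; the equality relation $L_=$ is an $\omega$-regular equivalence relation with exactly one class per rational, hence countably many classes, and by countable additivity one of them (say the class of codes of some $q_0$) carries positive measure. I would then analyse how the $\omega$-regular doubling relation transports $\mu$ between classes: using a Lebesgue-density / finite-state mixing property of $\omega$-regular sets, doubling must distort the conditional measure at density points by a definite multiplicative factor, and iterating it $n$ times along $q_0,2q_0,4q_0,\dots$ would multiply these factors without bound, which is incompatible with all masses lying in $[0,1]$ unless the factor is trivial --- and triviality in turn contradicts the strict nesting of the divisible chain.

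The main obstacle is the quantitative measure-rescaling estimate for $\omega$-regular relations used in the last step: turning the intuitive ``a finite-state relation cannot double measure indefinitely'' into a rigorous statement is precisely the technical core of Tsankov's theorem, and it is what makes the $\omega$-automatic case substantially deeper than its finite-word counterpart.
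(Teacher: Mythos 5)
First, a framing remark: the paper does not prove Theorem~\ref{theo:rat} at all --- it is imported wholesale from \cite{tsankov2011} --- so there is no in-paper proof to compare against, and your argument has to stand on its own. As written it has a fatal structural flaw. Everything you use --- the halving map, the chain $q, q/2, q/4, \dots$, the subgroup $\mb{Z}[1/2]\cdot q$, and in the $\omega$-case the doubling chain $q_0, 2q_0, 4q_0, \dots$ --- lives entirely inside the group of dyadic rationals $\mb{Z}[1/2]$, i.e.\ exploits divisibility by the single prime $2$. But $(\mb{Z}[1/2],+)$ \emph{is} automatic (hence also $\omega$-automatic): encode a dyadic rational as its integer part written least-significant-bit first convoluted with its finite fractional expansion, and recognize addition by a nondeterministic finite automaton that guesses and verifies the carries; this is a standard sharpness example accompanying Tsankov's theorem. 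Consequently, any proof scheme that only uses $2$-divisibility would equally show that $(\mb{Z}[1/2],+)$ is not automatic, which is false; Tsankov's proof must, and does, exploit divisibility by infinitely many primes. The quantitative step does not close either: your length-distortion lemma (which is itself fine) only gives $|\code(q/2^n)| \le |\code(q)| + kn$, so the $2^n$ pairwise distinct subset sums of $q/2,\dots,q/2^n$ all carry codes of length $O(n)$ --- but there are $|\Sigma|^{O(n)}$ codes of that length, so counting produces no contradiction. This failure of naive counting is precisely why the theorem is nontrivial.

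The $\omega$-automatic half has two further gaps. The step ``one equality class has positive measure by countable additivity'' presupposes $\mu(L)>0$, which nothing guarantees (the set of valid codes may well be a null set for the Bernoulli measure); and the measure-rescaling estimate, which you yourself flag as missing, cannot be patched in any form that relies on doubling alone, again by the $\mb{Z}[1/2]$ obstruction above. Note finally that within this paper's own framework the $\omega$-case requires no new idea: $\langle\mb{Q},+\rangle$ is countable, so by Theorem~\ref{theo:countable} (instantiated with an ultimately periodic advice) together with Corollary~\ref{cor:autaut}, if it were $\omega$-automatic it would already be automatic. Hence the entire content of the statement is the finite-word case --- Tsankov's theorem proper --- and that is exactly the part your sketch does not establish.
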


\paragraph*{Interpretations}

A useful tool in model theory is the concept of \emph{interpretation}, describing a structure in another (host) structure via a tuple of logical formulas.

\begin{definition}[interpretation] Let $\mf{A}$ be a structure over a signature $\tau$, $\mc{L}$ be a logic and  $\mc{I} := (\phi_\delta(\overline{x}),\phi_=(\overline{x},\overline{y}), \phi_1(\overline{x_1} \dots \overline{x_{r_1}}) \dots \phi_{p}(\overline{x_1} \dots \overline{x_{r_{p}}}) )$ a tuple of $\mc{L}[\tau]$-formulas where $\overline{x},\overline{y}$ and the $ \overline{x_i}$ are $k$-tuples of free variables. Let

\item

\begin{itemize}

\item $A_\delta := \{\overline{a} = (a_1 \dots a_k)~|~\mf{A} \models \phi_\delta(\overline{a})\}$;

\item $\sim$ a binary relation on $A_\delta$ with $\overline{a} \sim \overline{b}$ if and only if $\mf{A} \models \phi_=(\overline{a},\overline{b})$;

\item for $1 \le i \le p$, $R_i$ is a relation on $A_\delta$ defined as $(\overline{a_1} \dots \overline{a_{r_i}}) \in R_i$ if and only if $\mf{A} \models \phi_i(\overline{a_1} \dots\overline{a_{r_i}})$.

\end{itemize}

We say that $\mc{I}$ is a \emph{$k$-dimensional $\mc{L}$-interpretation} of a structure $\mf{B}$ in the structure $\mf{A}$ if the following conditions are met:

\item 

\begin{itemize}

\item $\sim$ defines an congruence relation on $A_\delta$ with respect to $R_1 \dots R_{p}$;

\item $\langle A_\delta, R_1 \dots R_{p} \rangle / \sim$ is isomorphic to $\mf{B}$.
\end{itemize}

\end{definition}

The interpretation is said \emph{injective} if $\sim$ is the equality relation of $A_\delta$. In the literature, interpretations are often directly assumed to be \emph{1-dimensional injective interpretations}. The choice of the logic $\mc{L}$ gives several kinds of interpretation, detailed in Definition \ref{def:zoointer}.

\begin{definition} \label{def:zoointer}

\item 
\begin{enumerate}

\item An \emph{$\FO$-interpretation} is a tuple of $\FO$-formulas. The elements of $\mf{A}$ are encoded as tuples of elements in the host structure $\mf{B}$.

\item An \emph{$\MSO$-interpretation} is a tuple of $\MSO$-formulas with free first-order variables. If we use the weak semantic, we speak of $\WMSO$-interpretation. Once more, the elements of $\mf{A}$ are encoded as tuples of elements of $\mf{B}$.

\item An \emph{$\S$-interpretation} (set) is a tuple of $\MSO$-formulas with free set variables. If we use weak semantic, we speak of \emph{$\FS$-interpretation} (finite set). The elements of $\mf{A}$ are encoded as tuples of (finite) sets of elements in the host structure.

\end{enumerate}

\end{definition}

We briefly recall the behavior of interpretations with respect to composition.

\begin{fact}[closure under composition] \label{fact:compo} \item
\begin{enumerate}

\item If $\mf{A}$ is $\FO$-interpretable in $\mf{B}$ which is $\FO$-interpretable in $\mf{C}$, then $\mf{A}$ is directly $\FO$-interpretable in $\mf{C}$.

\item If $\mf{A}$ is $\MSO$-interpretable in $\mf{B}$ which is 1-dimensionally $\MSO$-interpretable in $\mf{C}$, then $\mf{A}$ is directly $\MSO$-interpretable in $\mf{C}$
\end{enumerate}
\end{fact}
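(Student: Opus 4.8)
The standard approach is syntactic substitution. Suppose $\mf{A}$ is $k$-dimensionally interpretable in $\mf{B}$ via a tuple $\mc{I}_1 = (\phi_\delta, \phi_=, \phi_1, \dots, \phi_p)$ of formulas over the signature of $\mf{B}$, and $\mf{B}$ is $\ell$-dimensionally interpretable in $\mf{C}$ via a tuple $\mc{I}_2 = (\psi_\delta, \psi_=, \psi_1, \dots)$ of formulas over the signature of $\mf{C}$, where the $\psi_i$ provide, for each relation symbol of $\mf{B}$, a defining formula in $\mf{C}$. Since each element of $\mf{A}$ is named by a $k$-tuple of $\mf{B}$-elements and each $\mf{B}$-element is named by an $\ell$-tuple of $\mf{C}$-elements, an element of $\mf{A}$ is named by a $k\ell$-tuple of $\mf{C}$-elements; the composite interpretation will therefore be $k\ell$-dimensional.

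The core is a translation $\chi \mapsto \chi^\sharp$ sending any formula $\chi$ over the signature of $\mf{B}$ to a formula over the signature of $\mf{C}$: I would replace each first-order variable $x$ of $\chi$ by a fresh block $\overline{x}$ of $\ell$ variables, rewrite every atom $R(x_1, \dots, x_m)$ as $\psi_R(\overline{x_1}, \dots, \overline{x_m})$ and every equality $x = y$ as $\psi_=(\overline{x}, \overline{y})$, and relativize each quantifier, turning $\exists x\, \theta$ into $\exists \overline{x}\, (\psi_\delta(\overline{x}) \wedge \theta^\sharp)$. The key step is the substitution lemma, proved by structural induction on $\chi$: for any $\mf{C}$-tuples $\overline{a_1}, \dots, \overline{a_m}$ satisfying $\psi_\delta$ and naming the $\mf{B}$-elements $b_1, \dots, b_m$, one has $\mf{B} \models \chi(b_1, \dots, b_m)$ if and only if $\mf{C} \models \chi^\sharp(\overline{a_1}, \dots, \overline{a_m})$. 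The atomic and boolean cases are immediate from $\mc{I}_2$; the quantifier case uses the relativization together with the surjectivity of the naming map from $\psi_\delta$-tuples onto $\mf{B}$. Applying the lemma to each component of $\mc{I}_1$ then yields the composite interpretation of $\mf{A}$ in $\mf{C}$: its domain formula is $(\phi_\delta)^\sharp$ conjoined with $\psi_\delta$ on each of the $k$ blocks, its equality formula is $(\phi_=)^\sharp$, and its relation formulas are the $(\phi_i)^\sharp$. A routine check confirms that $(\phi_=)^\sharp$ is again a congruence and that the resulting quotient is isomorphic to $\mf{A}$, giving part (1).

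For part (2) the only new ingredient is set quantification, and this is where the main obstacle lies. A set variable $X$ ranging over subsets of $\mf{B}$ would, under the translation, have to range over sets of $\ell$-tuples of $\mf{C}$-elements, i.e. over $\ell$-ary relations on $\mf{C}$; but $\MSO$ can only quantify over monadic sets, so this translation is not available in general. This is exactly why part (2) requires the inner interpretation $\mc{I}_2$ to be $1$-dimensional: when $\ell = 1$, a set of $\mf{B}$-elements corresponds to a $\psi_=$-closed subset of $\psi_\delta$ in $\mf{C}$, so each set quantifier $\exists X$ translates faithfully to a single monadic quantifier (relativized to $\psi_\delta$ and, if $\mc{I}_2$ is not injective, restricted to $\psi_=$-saturated sets), and the membership atom $x \in X$ becomes a genuine $\MSO$-atom. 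With this restriction the substitution lemma extends to set variables by the same induction, and the construction goes through verbatim, yielding a $k$-dimensional $\MSO$-interpretation of $\mf{A}$ in $\mf{C}$. I expect the bookkeeping around non-injectivity, namely ensuring that translated set quantifiers are confined to $\sim$-saturated sets so that the congruence is respected, to be the most delicate point, but it does not affect the dimension count.
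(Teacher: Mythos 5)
Your proof is correct and follows essentially the same route as the paper, which dismisses this fact as folklore with the one-line idea that the formulas of the outer interpretation can be rewritten over the host structure ``while adding some variables if necessary'' --- precisely your syntactic substitution with blocks of variables, translated atoms, and relativized quantifiers. Your explanation of why the inner $\MSO$-interpretation must be $1$-dimensional (set variables would otherwise have to range over sets of $\ell$-tuples, which monadic quantification cannot express) is exactly the obstruction the paper itself points out in the discussion immediately following the fact, so nothing is missing.
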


\begin{proof}[Proof idea (folklore).] The formulas of the interpretation in $\mf{B}$ can equivalently be described in $\mf{A}$, while adding some variables if necessary.

\end{proof}

The presence of sets and the use of several dimensions allows to describe more transformations, but it forces to be careful in the statements of the previous fact:
\begin{itemize}

\item If $\mf{A}$ is (1-dim.) $\S$-interpretable in $\mf{B}$ which is (1-dim.) $\S$-interpretable in $\mf{C}$, there is no reason why $\mf{A}$ should be $\S$-interpretable in $\mf{C}$. Indeed, the elements of $\mf{C}$ (i.e. sets of $\mf{B}$) should be coded as sets of sets of $\mf{A}$.

\item The case of $\MSO$ is a bit more subtle. If $\mf{A}$ is $\MSO$-interpretable in $\mf{B}$ which is $k$-dim. $\MSO$-interpretable in $\mf{C}$ with $k >1$, there is no reason why $\mf{A}$ should directly be $\MSO$-interpretable in $\mf{C}$. Indeed, the sets of $\mf{B}$ (that can be used in the last interpretation) are sets of $k$-tuples of elements of $\mf{C}$, but we can only describe $k$-tuples of sets, what is formally different.

\end{itemize}

\begin{remark} The composition properties allow - in specific cases - to transfer the decidability of the logical theory from the host structure to the other.
\end{remark}

Interpretations are a key concept to extend standard automata-logic equivalences to automatic structures.

\begin{proposition}[\cite{khoussainov2004}] A structure $\mf{A}$ is automatic (resp. $\omega$-automatic) if and only if $\mf{A}$ is $\FS$-interpretable (resp. $\S$-interpretable) in $( \mb{N}, < )$. 

\end{proposition}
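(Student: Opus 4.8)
The plan is to set up a dictionary between the two formalisms and then push each direction through a Büchi-type theorem. The crucial observation is that a finite word $w \in \Sigma^*$ with $\Sigma = \{a_1, \dots, a_m\}$ is the same datum as the $m$-tuple of \emph{finite} sets $(P_{a_1}, \dots, P_{a_m})$ over $\mb{N}$, where $P_{a_j} = \{i : w[i] = a_j\}$: these sets are pairwise disjoint and their union is an initial segment $\{0, \dots, |w|-1\}$, and conversely any tuple of finite sets with this property encodes a unique word. Dually, an infinite word corresponds in the same way to a tuple of (now possibly infinite) sets over $\mb{N}$. Under this dictionary the convolution $w_1 \otimes \cdots \otimes w_r$ becomes the tuple of sets obtained by juxtaposing the encodings of the $w_j$, the positions automatically aligning over the common domain $\mb{N}$ and the padding letter $\square$ at a position $i \geq |w_j|$ corresponding simply to $i$ lying outside every set of the $j$-th block. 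I will use the Büchi--Elgot--Trakhtenbrot theorem in the form ``$L \subseteq \Sigma^*$ is regular iff it is $\WMSO$-definable'' and Büchi's theorem ``$L \subseteq \Sigma^\omega$ is $\omega$-regular iff it is $\MSO$-definable'', both read through this encoding over $(\mb{N}, <)$, in which the successor relation is $\MSO$-definable.

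For the direction from interpretations to presentations, suppose first that $\mf{A}$ is $\FS$-interpretable in $(\mb{N}, <)$ by $\mc{I} = (\phi_\delta, \phi_=, \phi_1, \dots, \phi_p)$, where all formulas have free \emph{finite}-set variables grouped into $k$-tuples. First I would read each $\phi_\delta(\overline X)$, $\phi_=(\overline X, \overline Y)$, $\phi_i(\overline{X_1}, \dots, \overline{X_{r_i}})$ as defining a property of tuples of finite sets, i.e. of words over the convolution alphabet $(\{0,1\}^k)$, after intersecting with the $\WMSO$-definable validity condition that each block be a genuine word-encoding. Applying $\WMSO = \reg$ yields regular languages $L, L_=, L_1, \dots, L_p$, and by construction the map $\nu$ sending a valid tuple to its $\sim$-class exhibits these as a $\reg$-presentation of $\mf{A}$; hence $\mf{A} \in \aut$. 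The $\S$-interpretable case is identical word for word, replacing finite sets by arbitrary sets, $\WMSO$ by $\MSO$, and $\reg$ by $\wreg$, so that $\mf{A} \in \waut$.

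For the converse, suppose $\mf{A}$ is automatic with $\reg$-presentation $(L, L_=, L_1, \dots, L_p)$ over an encoding alphabet $\Sigma$. I would encode each word as the tuple of $|\Sigma|$ finite sets described above, and translate each regular language of the presentation into an equivalent $\WMSO$-formula by Büchi--Elgot--Trakhtenbrot; composing with the coding, and conjoining the ($\WMSO$-definable) validity condition that the free finite sets form legitimate word-encodings, produces formulas $\phi_\delta, \phi_=, \phi_i$ over $(\mb{N}, <)$ with free finite-set variables. The presentation axioms guarantee exactly that $\phi_=$ defines a congruence with respect to the $\phi_i$ and that the quotient is isomorphic to $\mf{A}$, so $\mc{I} = (\phi_\delta, \phi_=, \phi_1, \dots, \phi_p)$ is the desired $\FS$-interpretation. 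Again the $\omega$-automatic case runs in parallel with $\MSO$ and infinite sets.

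The part requiring the most care is the bookkeeping of the convolution and the $\square$-padding across both directions: I must check that the alignment of the positions of $w_1, \dots, w_r$ in $w_1 \otimes \cdots \otimes w_r$ matches the choice of letting all blocks of sets range over the \emph{same} copy of $\mb{N}$, and that the ``initial segment'' validity condition, which encodes that a word has no gaps and that $\square$ occurs only as a suffix, is $\WMSO$-definable (resp. $\MSO$-definable) so that it can be absorbed into every formula of the interpretation. Once this translation of validity conditions is in place, the equivalence between ``congruence compatible with the relations'' on the logical side and the ``behaviour of $L_=$ and the $L_i$'' on the presentation side is a routine verification, and both directions close.
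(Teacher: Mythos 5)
Your proof is correct and follows essentially the same route as the canonical argument behind the paper's citation (the paper itself states this proposition without proof): the dictionary between words and tuples of (finite) characteristic sets over $\mb{N}$, combined with the Büchi--Elgot--Trakhtenbrot theorem for the $\FS$/$\reg$ case and Büchi's theorem for the $\S$/$\wreg$ case, applied formula-by-formula and language-by-language in the two directions. The bookkeeping you flag at the end (alignment of convolution padding with a common copy of $\mb{N}$, and $\WMSO$- resp. $\MSO$-definability of the word-encoding validity conditions) is indeed the only delicate point, and your treatment of it is sound.
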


\section{Simple case: regular languages with advice}

\label{sec:reglang}

We present in this section an extension of regular languages known as \emph{regular languages with advice}. This concept enables us to study some preorders over infinite words; we discuss their relevance and establish a first link with transductions. The fruits we catch here are hanging close to the ground, but they raise intuitions about the climbing that follows.

\subsection{Terminating languages}

\label{sec:term}

The idea of advice regularity is to consider languages accepted by automata that read an infinite advice string while processing its input \cite{baer1969}. We provide an equivalent definition which does not directly deal with automata but only languages.

\begin{definition}

\label{def:regad}

$L \subseteq \Sigma^*$ is \emph{terminating regular} with advice $\alpha \in \Gamma^\omega$ if there exists a regular language $L' \subseteq (\Sigma \times \Gamma)^*$ such that $L = \{ w~|~w \otimes \alpha[:|w|] \in L')\}$.

\end{definition}

\begin{example}

\label{ex:reg1} \item

\begin{enumerate}

\item If $L \subseteq \Sigma^*$ is regular, so is $\{w \otimes w'~|~w \in L, w' \in \Gamma^*, |w| = |w'|\}$, and considering this language shows that $L$ is regular with any advice;

\item the set $\Pref(\alpha) := \{\alpha[:n]~|n \ge 0\}$ is regular with advice $\alpha$.
 
 \end{enumerate}
\end{example}

\begin{remark} There are non-computable languages regular with some advice.

\end{remark}

We denote by $\reg[\alpha]$ the class of regular languages with advice $\alpha$. As evoked in the introduction, our goal is to measure the complexity of infinite words, through the expressiveness of their advice classes. We write $\alpha \REG \beta$ whenever $\reg[\alpha] \subseteq \reg[\beta]$, this relation is clearly a preorder over infinite words. Let the \emph{$\REG$-degrees} be the equivalence classes of the relation $\REG \cap \succcurlyeq_{\reg}$, they describe the sets of equally complex advices. We remark that ultimately periodic words (i.e. infinite words of the form $u v^\omega$) form the least $\REG$-degree; indeed the inclusion $\reg \subseteq \reg[\alpha]$ is strict if and only if $\alpha$ is not ultimately periodic \cite{baer1969, reinhardt2013}. We now provide a first equivalence with transductions.

\begin{definition}

A (deterministic) \emph{Mealy machine} is a 6-tuple $( Q,q_0,  \Delta, \Gamma, \delta, \theta )$ where $Q$ is the finite set of states, $q_0 \in Q$ is the initial state, $\Delta$ is the input alphabet, $\Gamma$ is the output alphabet, $\delta: Q \times \Delta \rightarrow Q$ is the (partial) transition function, $\theta: Q \times \Delta \rightarrow \Gamma$ is the (partial) output function.

\end{definition}

A run of a Mealy machine is a run of the underlying deterministic automaton. On input $\beta$, the machine outputs $\alpha$ the concatenation of the outputs along the run on $\beta$.

\begin{proposition} \label{prop:carreg} The following conditions are equivalent:

\item

\begin{enumerate}

\item $\reg[\alpha] \subseteq \reg[\beta]$;

\item $\alpha$ is the image of $\beta$ under some Mealy machine.

\end{enumerate}

\end{proposition}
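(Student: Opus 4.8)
The plan is to prove the two implications separately, with $(2)\Rightarrow(1)$ a routine product construction and $(1)\Rightarrow(2)$ requiring a genuine extraction argument.

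For $(2)\Rightarrow(1)$, I assume $\alpha = M(\beta)$ for a Mealy machine $M=(Q,q_0,\Delta,\Gamma,\delta,\theta)$, where $\Delta$ and $\Gamma$ are the alphabets of $\beta$ and $\alpha$. Given $L\in\reg[\alpha]$, take a DFA for the witnessing $L'\subseteq(\Sigma\times\Gamma)^*$ of Definition \ref{def:regad}. I would build a DFA over $\Sigma\times\Delta$ whose states are pairs (state of $M$, state of the $L'$-automaton): on reading $(a,b)$ in state $(q,p)$, it moves $M$ via $\delta(q,b)$ while feeding the letter $(a,\theta(q,b))$ to the $L'$-automaton. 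Since the run of $M$ on the prefix $\beta[:|w|]$ reconstructs $\alpha[:|w|]$ exactly (Mealy machines are synchronous and length-preserving), this product accepts $w\otimes\beta[:|w|]$ iff the $L'$-automaton accepts $w\otimes\alpha[:|w|]$, i.e. iff $w\in L$; hence $L\in\reg[\beta]$. Partiality of $\delta,\theta$ causes no trouble, as we only ever evaluate on genuine prefixes of $\beta$.

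For $(1)\Rightarrow(2)$, the idea is to isolate each letter of $\alpha$ by a unary position-marker language. For each $c\in\Gamma$ I set $L_c := \{0^{n+1}~|~\alpha[n]=c\}\subseteq\{0\}^*$. Each $L_c$ is terminating regular with advice $\alpha$, witnessed by the regular $L'_c\subseteq(\{0\}\times\Gamma)^*$ accepting exactly the nonempty words whose last letter has second component $c$. By hypothesis $L_c\in\reg[\beta]$, so from its witness (restrict the first component to $0$ and project onto $\Delta$) I obtain a regular $M_c\subseteq\Delta^*$ with $\beta[:n+1]\in M_c \iff \alpha[n]=c$ for all $n$. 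Crucially, because $\alpha[n]$ is a single letter, for every $n$ exactly one $c$ satisfies $\beta[:n+1]\in M_c$. I then turn the finite family $(M_c)_{c\in\Gamma}$ into a Mealy machine: let $\mathcal{A}$ be the product, read over $\Delta$, of DFAs recognizing the $M_c$; take its states and transitions for $M$, and define $\theta(q,b)$ to be the unique $c$ for which the successor $\delta(q,b)$ is accepting in the $M_c$-component (chosen arbitrarily where this is not unique, which never happens along $\beta$). On input $\beta$ this outputs $\alpha[n]$ at step $n$, so $M(\beta)=\alpha$.

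The main obstacle is the direction $(1)\Rightarrow(2)$, and specifically ensuring that the extracted device is genuinely a Mealy machine — synchronous and length-preserving — rather than some more permissive transducer. The unary encoding of positions is what secures this: because the marker languages $L_c$ live over the one-letter alphabet $\{0\}$, transferring them to advice $\beta$ automatically aligns input position $n$ with the prefix $\beta[:n+1]$, so that each letter $\alpha[n]$ becomes a bounded-memory function of $\beta[:n+1]$ — exactly the information available to a Mealy machine. The uniqueness of the accepting component along $\beta$ is precisely what makes the output function well-defined.
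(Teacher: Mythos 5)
Your proof is correct, but the hard direction $(1)\Rightarrow(2)$ follows a genuinely different route from the paper's. The paper exploits the single ``complete'' language $\Pref(\alpha)$: since $\Pref(\alpha)\in\reg[\alpha]\subseteq\reg[\beta]$, it takes a \emph{deterministic} automaton $\mc{A}$ over $\Gamma\times\Delta$ witnessing this, and prunes it — first removing non-accepting states (the run on any $\alpha[:n]\otimes\beta[:n]$ stays in accepting states by determinism and prefix-closure), then removing ambiguous pairs of transitions $q \rightarrow^{(a,b)} q'$, $q \rightarrow^{(a',b)} q''$ with $a\neq a'$ (neither can be used on a genuine prefix, as $\Pref(\alpha)$ contains exactly one word per length) — after which the automaton \emph{is} a Mealy machine. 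You instead decompose $\alpha$ letter by letter via the unary marker languages $L_c=\{0^{n+1}\mid \alpha[n]=c\}$, transfer each to advice $\beta$, and reassemble a Mealy machine as a product of DFAs for the resulting languages $M_c\subseteq\Delta^*$, with the output function read off from which component accepts. Your construction is essentially the finite-word analogue of the technique the paper uses for Proposition \ref{prop:carwreg} (its Lemma \ref{lem:unary} and the $1\Rightarrow 4$ step use exactly these unary languages), so it has the virtue of unifying the two propositions and of making explicit that each $\alpha[n]$ is a bounded-memory function of $\beta[:n+1]$; the paper's pruning argument is shorter and buys the structural observation that any deterministic witness for $\Pref(\alpha)$ with advice $\beta$ is already, up to removing dead parts, a Mealy machine. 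Your handling of the delicate points — uniqueness of the accepting component only along $\beta$, arbitrary choices elsewhere, and the alphabet restriction/projection to get $M_c$ — is sound, as is the routine product construction for $(2)\Rightarrow(1)$, which matches the paper's one-line composition argument.
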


\begin{proof}
A Mealy machine answering $\alpha$ on $\beta$ clearly provides a way to transform any language of $\reg[\alpha]$ into a language of $\reg[\beta]$, after some composition. Conversely, if $\reg[\alpha] \subseteq \reg[\beta]$ then  $\Pref(\alpha)= \{ w~|~w \otimes \beta[:|w|] \in \mc{L}(\mc{A})\}$ for some deterministic automaton $\mc{A}$ on $\Gamma \times \Delta$. Since the (unique) run on a finite word $\alpha[:n] \otimes \beta[:n]$ only uses accepting states (due to determinism and prefix-closure), non-accepting states can wlog be removed. If the resulting automaton has transitions of the form $q \rightarrow^{(a,b)} q'$ and $q \rightarrow^{(a',b)} q''$ with $a \neq a'$, we can remove them. Indeed, it is impossible for a run on some $\alpha[:n] \otimes \beta[:n]$ to use one of them, since all states are now accepting. This last automaton can easily be seen as a Mealy machine.

\end{proof}

Comparison via $\REG$ thus corresponds to computability via Mealy machines. The properties of this preorder were studied under this form in \cite{belovs2008}. However, tiny changes in the words completely modify their $\REG$-degree: those classes are far from being robust.

\begin{fact}[\cite{belovs2008}] \label{fact:chain}
Whenever $\alpha$ is not ultimately periodic, we have a strictly increasing chain $\alpha \prec_{\reg} \alpha[1:] \prec_{\reg} \cdots \prec_{\reg} \alpha[n:] \prec_{\reg} \cdots$. A strictly decreasing chain can be obtained similarly with $\alpha \succ_{\reg} \square \alpha \succ_{\reg}  \dots \succ_{\reg}  \square^n\alpha \succ_{\reg} \cdots $.

\end{fact}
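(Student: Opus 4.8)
The plan is to push everything through the Mealy-machine characterization of Proposition~\ref{prop:carreg}: $\alpha\preccurlyeq_{\reg}\beta$ holds precisely when $\alpha$ is the image of $\beta$ under some Mealy machine. Both chains then rest on a single pair of observations, one supplying each inequality hidden in a ``$\prec_{\reg}$''.

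For the \emph{non-strict} inclusions I would use a one-letter \emph{delay} machine that stores in its state the last letter it has read. To obtain $\alpha\preccurlyeq_{\reg}\alpha[1:]$, the machine emits on its first input letter the hardcoded constant $\alpha[0]$ and memorizes the letter just read, and on each later step emits the memorized letter and refreshes the memory; fed with $\alpha[1:]=\alpha[1]\alpha[2]\cdots$ it outputs $\alpha[0]\alpha[1]\alpha[2]\cdots=\alpha$. The very same device, using the padding letter $\square$ as its initial constant and fed with $\alpha$, outputs $\square\alpha$, which gives $\square\alpha\preccurlyeq_{\reg}\alpha$.

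The heart of the proof is the \emph{strictness}, i.e.\ showing the reverse reductions fail. Both reverse reductions amount to shifting a word one step to the \emph{left} (deleting $\alpha[0]$, resp.\ the leading $\square$), and I claim no finite Mealy machine can do this for a non-ultimately-periodic word. Suppose a machine with state set $Q$ reads a word $\gamma$ along the run $q_0\xrightarrow{\gamma[0]}q_1\xrightarrow{\gamma[1]}\cdots$ and outputs $\gamma[1:]$; by the output convention its $n$-th emitted letter is $\theta(q_n,\gamma[n])=\gamma[n+1]$, so the machine predicts the next input letter. As $Q$ and the alphabet are finite, the pairs $(q_n,\gamma[n])$ must repeat, so pick $i<j$ with $(q_i,\gamma[i])=(q_j,\gamma[j])$. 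Since $\delta$ and $\theta$ are deterministic, an immediate induction yields $(q_{i+k},\gamma[i+k])=(q_{j+k},\gamma[j+k])$ for all $k\ge 0$, whence $\gamma[i+k]=\gamma[j+k]$ and $\gamma$ is ultimately periodic of period $j-i$. Applying this with $\gamma=\alpha$ shows $\alpha[1:]\not\preccurlyeq_{\reg}\alpha$, and with $\gamma=\square\alpha$ shows $\alpha\not\preccurlyeq_{\reg}\square\alpha$; in each case the hypothesis that $\alpha$ (hence also $\square\alpha$) is not ultimately periodic produces the contradiction, so both comparisons are strict.

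Finally I would iterate. Every suffix $\alpha[n:]$ and every word $\square^n\alpha$ is again non-ultimately-periodic, since ultimate periodicity is insensitive to finite prefixes; applying the two observations to each of them gives $\alpha[n:]\prec_{\reg}\alpha[n+1:]$ and $\square^{n+1}\alpha\prec_{\reg}\square^n\alpha$ for all $n$, which assemble into the two announced chains. The only genuinely delicate point is the strictness step above, whose prediction-and-pigeonhole argument is where the non-ultimate-periodicity assumption is consumed; the rest is the routine bookkeeping of composing length-preserving transducers.
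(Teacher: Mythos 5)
Your proposal is correct. Note that the paper itself gives no proof of this fact — it is imported verbatim from \cite{belovs2008} — so there is no internal argument to compare against; your route is the natural one given Proposition \ref{prop:carreg}, and it is essentially the standard argument for this statement. The two halves are both sound: the one-letter delay machine correctly yields $\alpha \preccurlyeq_{\reg} \alpha[1:]$ and $\square\alpha \preccurlyeq_{\reg} \alpha$, and your prediction-plus-pigeonhole lemma (a Mealy machine outputting $\gamma[1:]$ on input $\gamma$ satisfies $\theta(q_n,\gamma[n])=\gamma[n+1]$, so a repeated pair $(q_i,\gamma[i])=(q_j,\gamma[j])$ propagates forward by determinism and forces $\gamma$ to be ultimately periodic) correctly kills both reverse reductions, since $\alpha[1:]$ and $\alpha$ are exactly the left shifts of $\alpha$ and $\square\alpha$ respectively. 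The iteration step is also justified, as non-ultimate-periodicity passes to every $\alpha[n:]$ and every $\square^n\alpha$.
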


An interesting point is the closure properties of these classes.

\begin{proposition}[\cite{baer1969}] $\reg[\alpha]$ is closed under boolean operations.

\end{proposition}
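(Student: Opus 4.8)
The plan is to exploit the fact that a terminating regular language with advice $\alpha$ is precisely the preimage of a regular language under a single fixed \emph{total} map, and that preimages under a total map commute with all boolean operations.

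First I would introduce the convolution-with-advice map $\mu_\alpha : \Sigma^* \to (\Sigma\times\Gamma)^*$ defined by $\mu_\alpha(w) := w \otimes \alpha[:|w|]$. This map is total on $\Sigma^*$: for every finite word $w$ the prefix $\alpha[:|w|]$ is well defined, so $w \otimes \alpha[:|w|]$ is a single well-defined word of length $|w|$. By Definition~\ref{def:regad}, a language $L \subseteq \Sigma^*$ lies in $\reg[\alpha]$ if and only if $L = \mu_\alpha^{-1}(L')$ for some regular $L' \subseteq (\Sigma\times\Gamma)^*$. Hence $\reg[\alpha]$ is exactly the collection of $\mu_\alpha$-preimages of regular languages over the product alphabet.

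Next I would invoke the elementary fact that for a total function $f : X \to Y$ the preimage operation $B \mapsto f^{-1}(B)$ is a boolean homomorphism, satisfying $f^{-1}(B_1 \cup B_2) = f^{-1}(B_1) \cup f^{-1}(B_2)$, then $f^{-1}(B_1 \cap B_2) = f^{-1}(B_1) \cap f^{-1}(B_2)$, and $f^{-1}(Y \setminus B) = X \setminus f^{-1}(B)$. Given $L_1, L_2 \in \reg[\alpha]$ over the same alphabet $\Sigma$, witnessed by regular languages $L_1', L_2' \subseteq (\Sigma\times\Gamma)^*$, this yields $L_1 \cup L_2 = \mu_\alpha^{-1}(L_1' \cup L_2')$, $L_1 \cap L_2 = \mu_\alpha^{-1}(L_1' \cap L_2')$ and $\Sigma^* \setminus L_1 = \mu_\alpha^{-1}((\Sigma\times\Gamma)^* \setminus L_1')$. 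Since $\reg$ is closed under boolean operations, each right-hand witness is again regular, so each resulting language lies back in $\reg[\alpha]$.

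The only point requiring care — and the sole place the argument could slip — is the complement: the identity $\Sigma^* \setminus \mu_\alpha^{-1}(L_1') = \mu_\alpha^{-1}((\Sigma\times\Gamma)^* \setminus L_1')$ genuinely relies on $\mu_\alpha$ being defined on \emph{all} of $\Sigma^*$, since a word $w$ fails to belong to $L_1$ exactly when its always-existing image $\mu_\alpha(w)$ avoids $L_1'$. For union and intersection totality is not even needed. No automaton construction is required; equivalently one may simply note that the standard product and complementation constructions applied to the witnessing automata over $\Sigma \times \Gamma$ produce the desired witnesses directly.
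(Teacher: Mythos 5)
Your proof is correct, and the only genuinely delicate point — that complementation commutes with preimage because the map $w \mapsto w \otimes \alpha[:|w|]$ is total on $\Sigma^*$ (no padding occurs, as $\alpha[:|w|]$ always exists and has length $|w|$) — is handled explicitly. The paper itself gives no proof, merely citing \cite{baer1969}, and your preimage-homomorphism argument is the standard one (equivalent to the product/complementation constructions on the witnessing automata), so there is nothing in the paper for it to diverge from.
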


However, when $\alpha$ is not ultimately periodic, $\reg[\alpha]$ is not closed under projection (with respect to $\otimes$) \cite{reinhardt2013}. This is a serious issue if one intends to encode logical theories, what may explain why automata with advice have remained unused for many years. A possible solution, detailed in the next paragraph, is to use $\omega$-regularity instead of finite regularity.

\subsection{Non-terminating languages and $\omega$-regularity}

Once more, we shall provide a definition in terms of languages, but it could equivalently be defined with $\omega$-automata that read an advice string.

\begin{definition}[\cite{KRSZ12}]

$L \subseteq \Sigma^\omega$ is $\omega$-regular with advice $\alpha \in \Gamma^\omega$ if there is an $\omega$-regular language $L' \subseteq (\Sigma \times \Gamma)^\omega$ such that $L = \{ w~|~w \otimes \alpha \in L')\}$.

\end{definition}

\begin{example} \item

\begin{enumerate}

\item Every $\omega$-regular language is also $\omega$-regular with any advice;

\item $\{\alpha\} $ is $\omega$-regular with advice $\alpha$.

\end{enumerate}
\end{example}

We denote by $\wreg[\alpha]$ the class of $\omega$-regular languages with advice $\alpha$. The next definition generalizes $\omega$-regularity with advice to finite-words languages.

\begin{definition}[\cite{KRSZ12}]

A language $L \subseteq \Sigma^*$ is non-terminating regular with advice $\alpha \in \Gamma^\omega$ if there is an $\omega$-regular language $L' \subseteq ((\Sigma \uplus \square) \times \Gamma)^\omega$ such that $L = \{w~|~w \otimes \alpha \in L'\}$.

\end{definition}

\begin{example} \label{ex:reginf} $\forall n \ge 0$, $\Pref(\alpha[n:])$ is non-terminating regular with advice $\alpha$.

\end{example}

Let $\reg^{\infty}[\alpha]$ be the class of non-terminating regular languages with advice $\alpha$. It follows from the definitions that $L \in \reg^{\infty}[\alpha] $ if and only if $ \{w\square^\omega~|~w \in L\} \in \wreg[\alpha]$. These new definitions increase the expressiveness of advice languages, since $\reg[\alpha] \subseteq \reg^\infty[\alpha]$ and the inclusion is strict when $\alpha$ is not ultimately periodic \cite{KRSZ12}. Furthermore, they solves the lack of closure properties evoked in the end of Subsection \ref{sec:term}.

\begin{proposition}[\cite{KRSZ12}] \label{prop:closewreg} $\reg^\infty[\alpha]$ and $\wreg[\alpha]$ are closed under boolean operations, cylindrification, and projection (with respect to $\otimes$).

\end{proposition}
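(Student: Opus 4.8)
The plan is to reduce all three closure properties to the corresponding (standard) closure properties of plain $\omega$-regular languages, by observing that the advice-restriction map behaves like a homomorphism. For an $\omega$-regular language $L' \subseteq (\Sigma \times \Gamma)^\omega$ write $L'_\alpha := \{w \in \Sigma^\omega \mid w \otimes \alpha \in L'\}$, so that by definition $\wreg[\alpha] = \{ L'_\alpha \mid L' \text{ is } \omega\text{-regular}\}$. Since each $w \in \Sigma^\omega$ yields a well-defined $w \otimes \alpha \in (\Sigma \times \Gamma)^\omega$, one checks directly the elementary identities $(L'_1 \cup L'_2)_\alpha = (L'_1)_\alpha \cup (L'_2)_\alpha$ and $(L'_1 \cap L'_2)_\alpha = (L'_1)_\alpha \cap (L'_2)_\alpha$, while complementation lifts by taking the full complement in the host space, $((\Sigma \times \Gamma)^\omega \setminus L')_\alpha = \Sigma^\omega \setminus L'_\alpha$. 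As $\omega$-regular languages are closed under the three boolean operations, so is $\wreg[\alpha]$.

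For cylindrification and projection the point is that these act on the $\Sigma$-tracks only, never touching the $\Gamma$-track that carries the advice. Given $L \subseteq (\Sigma_1 \times \cdots \times \Sigma_k)^\omega$ in $\wreg[\alpha]$ with witness $L' \subseteq (\Sigma_1 \times \cdots \times \Sigma_k \times \Gamma)^\omega$, cylindrification is witnessed by the $\omega$-regular language obtained from $L'$ by inserting a free, unconstrained $\Sigma_{k+1}$-track just before the $\Gamma$-track, and the projection $\{u_1 \otimes \cdots \otimes u_{k-1} \mid \exists u_k,\ u_1 \otimes \cdots \otimes u_k \in L\}$ is witnessed by the $\omega$-regular projection of $L'$ that erases its $\Sigma_k$-track while preserving the $\Gamma$-track. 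Because no padding arises for infinite words (all tracks have length $\omega$), the identity $(\pi(L'))_\alpha = \pi(L'_\alpha)$ holds on the nose, and closure follows from closure of $\omega$-regular languages under projection and cylindrification.

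For $\regi[\alpha]$ I would transport everything through the reduction noted before the statement: $L \in \regi[\alpha]$ iff $\{w\square^\omega \mid w \in L\} \in \wreg[\alpha]$. The map $w \mapsto w\square^\omega$ embeds $\Sigma^*$ as the $\omega$-regular frame $\Sigma^* \square^\omega$, and I would verify that each finite-word operation corresponds, under this embedding, to the matching $\wreg$-operation already handled: union and intersection pass through verbatim; complementation in $\Sigma^*$ becomes complementation relative to the frame $\Sigma^* \square^\omega$, which is itself $\omega$-regular, so it remains within $\wreg[\alpha]$; cylindrification and projection are obtained from the infinite-word versions after aligning the $\square$-padding of convolutions with the $\square^\omega$ tails.

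The hard part will be the projection in the non-terminating case. There the existential quantifier ranges over a track $u_k$ of unbounded finite length, and the convolution $u_1 \otimes \cdots \otimes u_k$ already pads the shorter components with $\square$ before the common $\square^\omega$ tail; so one must check that the $\square$-padding produced by $\otimes$ is correctly absorbed by, and yet not confused with, the $\square^\omega$ suffix. Making this bookkeeping precise — so that the $\omega$-regular projection of the witness recovers exactly $\{w\square^\omega \mid w \in \pi(L)\}$ and introduces nothing spurious — is the only step requiring genuine care; the remaining verifications are routine given the standard closure of $\omega$-regular languages.
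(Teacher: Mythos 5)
The paper offers no proof to compare against: Proposition~\ref{prop:closewreg} is imported verbatim from \cite{KRSZ12}, so your argument can only be judged on its own merits. On those merits it is essentially correct, and it is the natural argument: the restriction map $L' \mapsto L'_\alpha$ commutes with union, intersection and (because $w \mapsto w \otimes \alpha$ is a total injection of $\Sigma^\omega$ into $(\Sigma\times\Gamma)^\omega$) with complement, and for infinite words all tracks have length $\omega$, so the identity $(\pi(L'))_\alpha = \pi(L'_\alpha)$ for the $\Gamma$-preserving track erasure holds with no padding issues; this settles $\wreg[\alpha]$ completely. The one step you flag but do not finish --- projection for $\regi[\alpha]$ --- does go through, and the missing observation is short. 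Let $p$ be the letter-to-letter map that erases the $\Sigma_k$-component of each convolution letter and then relabels any resulting all-$\square$ tuple to the padding symbol $\square$ (keeping $\square \mapsto \square$). Then for finite words $u_1,\dots,u_k$ one checks position by position that
$$p\bigl((u_1 \otimes \cdots \otimes u_k)\square^\omega\bigr) = (u_1 \otimes \cdots \otimes u_{k-1})\square^\omega,$$
i.e.\ the extra $\square$-columns created when $|u_k|$ exceeds $\max_{i<k}|u_i|$ are exactly absorbed into the $\square^\omega$ tail, and nothing spurious arises because every letter of a genuine convolution has at least one non-$\square$ component. Lifting $p$ to $\hat{p}(c,g) = (p(c),g)$, which fixes the $\Gamma$-track, one gets that preimages of $y \otimes \alpha$ under $\hat{p}$ are exactly the $x \otimes \alpha$ with $p(x)=y$; hence if $M'$ witnesses $\{v\square^\omega \mid v \in L\} \in \wreg[\alpha]$, then the morphic image $\hat{p}(M')$ (still $\omega$-regular, as $\omega$-regular languages are closed under letter-to-letter morphisms) witnesses $\{w\square^\omega \mid w \in \pi(L)\} \in \wreg[\alpha]$, i.e.\ $\pi(L) \in \regi[\alpha]$. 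With that paragraph added, your proof is complete.
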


Let us compare infinite words with respect to this $\omega$-regular use of advice. We define the preorders $\REGi$ (resp. $\wREG$) based on the inclusion of the $\reg^{\infty}$ (resp. $\wreg$) classes, and the corresponding notions of degrees. It is not hard to see that  ultimately periodic words are again the least $\REGi$- and $\wREG$-degree. We now make a non-trivial step towards a generic correspondence between advices, machine transductions, and logic.

\begin{definition} An $\omega$-regular function $f$ is a (partial) mapping $\Gamma^\omega \rightarrow \Delta^\omega$ whose graph $\{w \otimes f(w)~|~w \in \dom(f)\}$ is an $\omega$-regular language.

\end{definition}

\begin{definition} \label{def:relabel}We say that $\alpha \in \Gamma^\omega$ is the image of $\beta \in \Delta^\omega$ under an \emph{$\MSO$-relabelling} if there is a tuple $\MSO[<, \Delta]$-formulas $(\phi_a(x))_{a \in \Gamma}$ such that: $\forall n \ge 0$, $\alpha[n] = a$ if and only if $\beta\models \phi_a(n)$.

\end{definition}

\begin{proposition} \label{prop:carwreg} For $\alpha \in \Gamma^\omega$ and $\beta \in \Delta^\omega$, the following are equivalent:

\item

\begin{enumerate}

\item $\reg^\infty[\alpha] \subseteq \reg^\infty[\beta]$;

\item $\wreg[\alpha] \subseteq \wreg[\beta]$;

\item $\alpha$ is the image of $\beta$ under some $\omega$-regular function;

\item $\alpha$ is the image of $\beta$ under some $\MSO$-relabelling.

\end{enumerate}

\end{proposition}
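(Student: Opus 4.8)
The plan is to prove the four conditions equivalent by establishing a cycle of implications, with the two "easy" endpoints being the logical and automata-theoretic characterizations, and the inclusion of advice classes sitting in between.

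First I would argue $(3) \Leftrightarrow (4)$, since this is essentially the automata-logic correspondence over infinite words applied to graphs of functions. If $\alpha$ is the image of $\beta$ under an $\MSO$-relabelling given by formulas $(\phi_a(x))_{a \in \Gamma}$, then the graph $\{\beta \otimes \alpha\}$ is $\MSO$-definable over $\beta$ (the formula asserts that at each position $n$ the second component carries the unique letter $a$ with $\beta \models \phi_a(n)$), hence by Büchi's theorem it is $\omega$-regular, giving an $\omega$-regular function. Conversely, given an $\omega$-regular function $f$ with $f(\beta) = \alpha$, the condition ``$f$ outputs letter $a$ at position $x$'' can be expressed in $\MSO[<,\Delta]$ by encoding a run of the Büchi automaton for the graph: one quantifies existentially over the state sequence (as a tuple of position sets), asserts it is a valid accepting run compatible with $\beta$ as first component, and reads off the second component at $x$. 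This yields the relabelling formulas $\phi_a$. The determinacy of the output letter is guaranteed because $f$ is a function, so exactly one $a$ satisfies $\phi_a(x)$.

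Next I would prove $(4) \Rightarrow (1)$ and $(4) \Rightarrow (2)$ by a composition argument. Suppose $\alpha$ is the image of $\beta$ under the relabelling $(\phi_a)_{a \in \Gamma}$. Given $L \in \reg^\infty[\alpha]$ witnessed by an $\omega$-regular $L' \subseteq ((\Sigma \uplus \square) \times \Gamma)^\omega$ with $L = \{w \mid w \otimes \alpha \in L'\}$, I want to exhibit $L \in \reg^\infty[\beta]$. The idea is to build an $\omega$-regular $L'' \subseteq ((\Sigma \uplus \square) \times \Delta)^\omega$ such that $w \otimes \beta \in L''$ exactly when $w \otimes \alpha \in L'$. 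Since membership in $L'$ is $\MSO$-definable and each component $\alpha[n]$ is $\MSO$-definable from $\beta$ via $\phi_a$, one simply substitutes $\phi_a$ for the predicates describing the $\Gamma$-component; the resulting $\MSO[<, \Sigma \uplus \square, \Delta]$ formula defines $L''$, which is $\omega$-regular by Büchi. The same substitution works verbatim for the $\wreg$ case, giving $(4) \Rightarrow (2)$. This step is where Proposition~\ref{prop:closewreg} is morally used, though the clean way to phrase it is through $\MSO$-definability and Büchi's theorem rather than explicit automaton products.

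Finally I would close the cycle with $(1) \Rightarrow (4)$ and $(2) \Rightarrow (4)$, which I expect to be the main obstacle, mirroring the converse direction of Proposition~\ref{prop:carreg}. The strategy is to apply the hypothesis to a cleverly chosen witnessing language that forces $\alpha$ to be recoverable from $\beta$ pointwise. Using Example~\ref{ex:reginf}, for each letter $a \in \Gamma$ and each offset the language $\{\alpha[n:] \text{-type prefixes}\}$ lives in $\reg^\infty[\alpha]$; more directly, the language $L_a = \{\square^n a' \mid \alpha[n] = a, a' \in \Sigma\}$-style encoding records the positions where $\alpha$ carries letter $a$, and each such $L_a$ is non-terminating regular with advice $\alpha$. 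By the hypothesis $(1)$ these languages are also in $\reg^\infty[\beta]$, so each set $\{n \mid \alpha[n] = a\}$ is $\omega$-regular relative to $\beta$, hence $\MSO$-definable over $\beta$ by some $\phi_a(x)$. These formulas are exactly an $\MSO$-relabelling witnessing $(4)$; one checks they are mutually exclusive and exhaustive because the $L_a$ partition the positions. The analogous argument from $(2)$ uses $\{w \in \Sigma^\omega \mid w[n] = a \Rightarrow \alpha[n] = a\}$-type $\omega$-languages. The delicate point here, as in the finite case, is ensuring the chosen witness languages genuinely isolate single positions of $\alpha$ so that the extracted formulas define a total, single-valued relabelling; getting the padding and the encoding alphabet to cooperate is the crux of the argument.
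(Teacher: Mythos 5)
Your proposal is correct, and its core ingredients are the same as the paper's; only the organization differs. For the hard direction $(1)\Rightarrow(4)$ you, like the paper, apply the hypothesis to the per-letter position languages $\{0^{n+1}\mid \alpha[n]=a\}\in\reg^\infty[\alpha]$ and then convert the resulting unary languages of $\reg^\infty[\beta]$ into formulas $\phi_a(x)$; the step you assert as ``hence $\MSO$-definable over $\beta$'' is precisely the paper's Lemma \ref{lem:unary}, proved by translating the witnessing $\omega$-automaton into a sentence and replacing each $P_{(0,b)}(y)$ by $P_b(y)\wedge y\le x$ and each $P_{(\square,b)}(y)$ by $P_b(y)\wedge y>x$ (also note that $\square$ is reserved as a padding symbol, so your languages should be written over a fresh letter, as above). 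Where you genuinely diverge is in the implication graph: the paper proves the cycle $1\Rightarrow 4\Rightarrow 2\Rightarrow 3\Rightarrow 1$ and obtains the $\omega$-regular function in $(3)$ by applying a uniformization theorem \cite{carayol2012} to an arbitrary witness of $\{\alpha\}\in\wreg[\beta]$, whereas you make $(4)$ the hub and derive $(3)$ directly, observing that the relation defined by $\forall x\,\bigwedge_a\bigl(P_a(x)\leftrightarrow\phi_a(x)\bigr)$ is automatically single-valued; this observation is correct and spares the appeal to uniformization, at the price of an extra run-encoding argument for $(2)\Rightarrow(4)$ that the paper's cycle avoids (there, the simplest witness is $\{\alpha\}\in\wreg[\alpha]\subseteq\wreg[\beta]$ rather than your implication-style languages). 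One point needs repair in your $(4)\Rightarrow(3)$: the $\omega$-regular object cannot be the singleton $\{\beta\otimes\alpha\}$, since a singleton $\omega$-language is $\omega$-regular only when its word is ultimately periodic, and the defining condition must not mention $\beta$; the correct object is the full relation $\{w\otimes v\mid \forall x\,\bigwedge_a(v[x]=a\leftrightarrow w\models\phi_a(x))\}$ with the $\phi_a$ evaluated on the variable input $w$, which is $\MSO$-definable, functional, and maps $\beta$ to $\alpha$ --- exactly the paper's formula $\Phi$.
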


\begin{proof} See Appendix \ref{proof:prop:carwreg}. \end{proof}

\begin{remark}  A word $\alpha$ is the image of $\beta$ under some Mealy machine if and only if $\alpha$ is the image of $\beta$ under a \emph{relativized $\MSO$-relabelling}, defined as a relabelling where in the formulas $\phi_a(x)$ every quantification is relativized under $x$, i.e. of the form $Q y/Y \le x$, see Fact \ref{fact:MMmso}.

\end{remark}

We obtain in particular $\wREG = \REGi$ and $\REG \subsetneq \REGi$ (see Fact \ref{fact:chain} and Example \ref{ex:reginf}). To understand its structure, we briefly give a simple necessary condition for $\alpha\REGi \beta$.

\begin{definition}

Let $\alpha \in \Gamma^\omega$. Its \emph{subword complexity} is the function $p_{\alpha}: \mb{N} \rightarrow \mb{N}$  defined by $p_\alpha(k) = \# \{w \in \Gamma^k~|~w \text{ is a factor of } \alpha\}$.

\end{definition}

This function counts for each $k \ge 0$ the number of factors of size $k$ appearing in $w$. We now show that this measure can only decrease when applying an $\omega$-regular function.

\begin{proposition}

\label{prop:subword}

If $\alpha \in \Gamma^\omega$ is the image of $\beta \in \Delta^\omega$ under some $\omega$-regular function, then $p_\alpha \le K \times p_\beta$ for some constant $K \ge 0$.

\end{proposition}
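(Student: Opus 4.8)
The plan is to exploit the characterization from Proposition~\ref{prop:carwreg}: since $\alpha$ is the image of $\beta$ under an $\omega$-regular function, it is equivalently the image of $\beta$ under an $\MSO$-relabelling, i.e. there is a tuple of $\MSO[<,\Delta]$-formulas $(\phi_a(x))_{a \in \Gamma}$ such that $\alpha[n] = a$ iff $\beta \models \phi_a(n)$. However, $\MSO$-formulas with a free first-order variable interpreted on a word model are governed by a composition / Ehrenfeucht--Fra\"iss\'e argument, and the key fact I would use is that the ``$\MSO$-type'' (up to the finite quantifier rank $q$ needed to evaluate all the $\phi_a$) of a position $n$ in $\beta$ is determined by bounded information about $\beta$ near $n$ together with the types of the surrounding blocks. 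The upshot I want is that the letter $\alpha[n]$ depends only on a bounded amount of local data around position $n$ in $\beta$.

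Concretely, first I would fix the maximal quantifier rank $q$ appearing among the finitely many formulas $\phi_a$. By the standard Feferman--Vaught / composition theorem for $\MSO$ over linear orders, there are only finitely many $\MSO$-$q$-types of words, say $N$ of them, and the $q$-type of $\beta$ determines, for each position $n$, which $\phi_a$ holds at $n$ as a function of the $q$-type of the prefix $\beta[:n]$ and the $q$-type of the suffix $\beta[n{:}]$ (with the letter $\beta[n]$ marked). The second step is to package this into a bound on factor counts. I would show that the letter $\alpha[m]$ at any position $m$ inside a factor is determined by the $q$-type of the left context and the $q$-type of the right context relative to the factor window, so that a factor of $\alpha$ of length $k$ is determined by: the factor of $\beta$ of the same length $k$ at the same position, together with a pair of $q$-types (one for the part of $\beta$ to the left, one for the part to the right). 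Since there are at most $N$ choices for each type, each factor of $\beta$ of length $k$ can give rise to at most $N^2$ distinct factors of $\alpha$ of length $k$, which yields $p_\alpha(k) \le N^2 \, p_\beta(k)$, establishing the claim with $K = N^2$.

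The main obstacle I expect is making the dependence on context genuinely \emph{bounded and uniform in $k$}: a naive reading of the relabelling lets $\phi_a(n)$ quantify over all of $\beta$, so $\alpha[n]$ could in principle depend on unboundedly much of $\beta$. The composition theorem is exactly what tamps this down --- it replaces the unbounded left and right contexts by their $q$-types, of which there are only finitely many --- but I must be careful that the \emph{same} factor of $\beta$ occurring at two different positions, with the same pair of left/right context types, produces the \emph{same} factor of $\alpha$. This is where the argument really lives: two occurrences of an identical length-$k$ window in $\beta$ sharing the left-context $q$-type and the right-context $q$-type must, by the composition theorem applied position-by-position inside the window, induce identical length-$k$ windows in $\alpha$. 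Once that determinism-up-to-type claim is in hand, the counting is immediate. A minor point to handle cleanly is that near the beginning of the word the ``left context'' is a genuine prefix rather than an abstract type, but this only \emph{reduces} the number of possible contexts, so it does not affect the upper bound.
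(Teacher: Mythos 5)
Your proof is correct, but it follows a genuinely different route from the paper's. The paper argues directly on an $\omega$-automaton $\mc{A}$ with state set $Q$ recognizing the graph of the function: it fixes an accepting run $\rho$ on $\alpha \otimes \beta$ and shows by run surgery that if two length-$k$ windows of $\beta$ are equal and the run states at their borders coincide, then swapping the corresponding $\alpha$-windows still yields an accepted word; since $\{w \mid w \otimes \beta \in \mc{L}(\mc{A})\}$ is the singleton $\{\alpha\}$ (functionality), the two $\alpha$-windows must be equal, and a pigeonhole argument gives $K = |Q|^2$. You instead first pass to an $\MSO$-relabelling via Proposition \ref{prop:carwreg} and replace the border states by the $\MSO$ rank-$q$ types of the prefix and suffix contexts, using the composition (Feferman--Vaught/Shelah) theorem for $\MSO$ over linear orders to get the same determinism-up-to-context claim, yielding $K = N^2$ with $N$ the number of $q$-types. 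Both proofs share the same counting skeleton (an $\alpha$-factor is determined by the underlying $\beta$-factor plus two bounded context tags), but the mechanisms differ in an interesting way: in the paper the determinism claim is not free and must be extracted from functionality of the relation via the singleton/swap argument, whereas in your version it is automatic because a relabelling defines each letter of $\alpha$ outright; conversely, the paper's argument is elementary and self-contained with a small explicit constant $|Q|^2$, while yours imports two nontrivial results (Proposition \ref{prop:carwreg}, itself proved via uniformization, and the composition theorem) and produces a constant $N^2$ that is non-elementary in the quantifier rank --- harmless for the statement, but worth noting. Your handling of the edge cases (empty or short left context) is also fine, since it only restricts the set of context types that actually occur.
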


\begin{proof}
 
Let $\mc{A}$ be an $\omega$-automaton with states $Q$,  that describes the graph of the function in $(\Gamma \times \Delta)^\omega$, then $\{\alpha\} = \{w~|~w \otimes \beta \in \mc{L}(\mc{A})\}$. Let $\rho$ be an accepting run of $\mc{A}$ on $\alpha \otimes \beta$, $\rho(m)$ being state after reading letter $m$. If $k,m,m' \ge 0$ are such that $\beta[m:m+k] = \beta[m':m'+k]$, $\rho(m-1) = \rho(m'-1)$, and $\rho(m+k) = \rho(m'+k)$, then both $\alpha \otimes \beta $ and $(\alpha[:m] \alpha[m':m'+k] \alpha[m+k+1:])\otimes \beta$ are accepted. Thus $\alpha[m:m+k]$ and $\alpha[m':m'+k']$ must be equal. A pigeonhole argument then shows that there are at most $|Q|^2  \times p_{\beta}(k)$ factors of size $k$ in the word $\beta$.

\end{proof}

For all $n \ge 1$, there exists a (computable) string $\alpha_n$ such that $p_{\alpha_n} : k \mapsto n^k$. Necessarily $\regi[\alpha_n]$ is not contained in any $\regi[\beta]$ for $\beta \in \{1, \dots, n-1\}^{\omega}$ because $p_{\beta}(k) \le ({n-1})^k$. This observation shows that the size of the alphabet is an unavoidable parameter for $\REGi$, which is not good news when looking for a robust notion of complexity. The rest of this paper will no longer deal with the preorders defined by languages, but it move towards presentable structures in order to describe a more relevant notion of comparison.

%
%

\section{Advice automatic structures}

\label{sec:autstr}

An interesting point of the previous results is the correspondence described by Propositions \ref{prop:carreg} and \ref{prop:carwreg}: they relate the power of advices to logical or computational comparisons. However, the resulting preorders were somehow disappointing.

We now consider structures that are presentable with advice in order to derive similar results. Following the definitions of \cite{abu2017}, we denote by $\aut[\alpha]$ the class of $\reg[\alpha]$-presentable structures, $\aut^\infty[\alpha]$ for $\reg^\infty[\alpha]$-presentable, and $\waut[\alpha]$ for  $\wreg[\alpha]$-presentable (see Definition \ref{def:pres}). Such structures are said (\emph{$\omega$-)automatic with advice $\alpha$}. Their study is located a level of abstraction higher than what was done in Section \ref{sec:reglang}, since the languages have no longer importance in theirselves, but are only used to encode other objects.

\subsection{Tools and basic properties of advice presentations}

An advice automatic structure can be described ``effectively'' via a tuple of automata (as for standard automatic structures), and a certain advice $\alpha$. In fact, the decidability feature of automatic structures is preserved as soon as $\alpha$ is decidable enough.

\begin{proposition}[\cite{abu2017}] 
 If $\mf{W}^\alpha$ has a decidable $\MSO$-theory, every structure in $\waut[\alpha]$, $\auti[\alpha]$ or $\aut[\alpha]$ has a decidable $\FO$-theory.

\end{proposition}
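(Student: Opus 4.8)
The plan is to reduce the $\FO$-theory of an advice automatic structure to the $\MSO$-theory of the advice word $\mf{W}^\alpha$, following the classical automata-theoretic decision procedure for automatic structures while carrying the advice through the whole construction. First I would collapse the three cases to one. Since $\reg[\alpha] \subseteq \regi[\alpha]$, every $\reg[\alpha]$-presentation is in particular a $\regi[\alpha]$-presentation, so $\aut[\alpha] \subseteq \auti[\alpha]$; and via the correspondence $L \in \regi[\alpha] \Leftrightarrow \{w\square^\omega~|~w \in L\} \in \wreg[\alpha]$ recalled above, a $\regi[\alpha]$-presentation over finite words can be recast as a $\wreg[\alpha]$-presentation over $\square$-padded infinite words. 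Hence it suffices to treat a structure $\mf{A} = \langle A, R_1, \dots, R_n\rangle$ with a $\wreg[\alpha]$-presentation $(L, L_=, L_1, \dots, L_n)$ over an encoding alphabet $\Sigma$, all presentation languages being $\omega$-regular with advice $\alpha$. The reason for moving to the $\omega$-setting is precisely that $\reg[\alpha]$ is \emph{not} closed under projection, whereas $\wreg[\alpha]$ is, by Proposition \ref{prop:closewreg}.

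Second, by structural induction on an $\FO[\tau]$-formula $\phi(x_1, \dots, x_k)$, I would build an $\omega$-regular language with advice $\alpha$ — call it $M_\phi$, over the $k$-fold convolution alphabet — collecting exactly the convolutions $w_1 \otimes \cdots \otimes w_k$ of codes $w_i \in L$ whose images $\nu(w_1), \dots, \nu(w_k)$ satisfy $\phi$ in $\mf{A}$. The atomic cases $x_i = x_j$ and $R_i(\overline{x})$ are handled directly by $L_=$ and $L_i$, together with cylindrification to insert the unused coordinates; conjunction and negation use closure under intersection and complement (relativised to the set of valid tuples of codes); and existential quantification $\exists x_j\, \phi$ is handled by projecting away the $j$-th code coordinate. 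Each operation stays inside $\wreg[\alpha]$ by Proposition \ref{prop:closewreg}, which is the crux of the argument. For a \emph{sentence} $\phi$ (no free variables), after projecting away all code components the construction yields an $\omega$-regular language $L'_\phi \subseteq \Gamma^\omega$, effectively computable from the tuple of automata, such that $\mf{A} \models \phi$ if and only if $\alpha \in L'_\phi$.

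Third, I would turn the membership test $\alpha \in L'_\phi$ into an $\MSO$-query on $\mf{W}^\alpha$. By Büchi's theorem the $\omega$-regular language $L'_\phi$ is defined by an effectively computable $\MSO[<,\Gamma]$-sentence $\psi_\phi$, meaning $\gamma \in L'_\phi \Leftrightarrow \mf{W}^\gamma \models \psi_\phi$ for every $\gamma \in \Gamma^\omega$. Consequently $\mf{A} \models \phi$ if and only if $\mf{W}^\alpha \models \psi_\phi$, and since $\mf{W}^\alpha$ has a decidable $\MSO$-theory by hypothesis, the right-hand side is decidable; the entire procedure is effective starting from automata for the presentation, which gives the claimed effectivity. (Equivalently, one may phrase steps two and three as translating $\FO$ over $\mf{A}$ into an $\S$-interpretation in $\mf{W}^\alpha$ and transferring decidability along it.)

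The step I expect to be the main obstacle is the inductive construction of step two, and specifically the guarantee that \emph{every} $\FO$-operation — above all existential quantification, i.e. projection — keeps us inside the advice class. This is exactly why a bare $\reg[\alpha]$-presentation is insufficient and one must first pass to $\regi[\alpha]$ or $\wreg[\alpha]$, where Proposition \ref{prop:closewreg} supplies closure under Boolean operations, cylindrification and projection. A secondary, routine subtlety is the bookkeeping of the convolution alphabets and the $\square$-padding, so that codes of different lengths are compared correctly; this is mechanical once closure under cylindrification and projection is available.
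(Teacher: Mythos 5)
Your proof is correct, and it is essentially the canonical argument behind this proposition, which the paper itself does not prove but cites from \cite{abu2017}: reduce all three presentation types to the $\omega$-regular-with-advice setting, run the standard Khoussainov--Nerode-style induction on $\FO$-formulas using the closure properties of Proposition \ref{prop:closewreg}, and convert the final membership test $\alpha \in L'_\phi$ into an $\MSO$-query on $\mf{W}^\alpha$ via Büchi's theorem. Your parenthetical remark is also exactly the route suggested by the paper's own machinery, namely Proposition \ref{prop:si} ($\S$-/$\FS$-interpretability in $\mf{W}^\alpha$) together with the remark that composition of interpretations transfers decidability from the host structure.
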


Large classes of infinite words with decidable $\MSO$-theory have been described, see e.g. \cite{barany2008} or \cite{semenov1984}. We briefly show why the generalization from automatic structures to advice automatic structures can be fruitful (compare the next result to Theorem \ref{theo:rat}).

\begin{fact}[\cite{KRSZ12}]

$\langle \mb{Q}, +\rangle \in \aut[\alpha]$ for some advice $\alpha$ with decidable $\MSO$-theory.

\end{fact}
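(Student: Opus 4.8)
The plan is to give an explicit injective $\reg[\alpha]$-presentation of $\langle \mb{Q}, +\rangle$ based on the factorial number system (Definitions~\ref{def:pres} and~\ref{def:regad}), and to choose the layout word $\alpha$ so that it lands in a class of infinite words known to have a decidable $\MSO$-theory. The essential structural fact is $\mb{Q} = \bigcup_{n \ge 1} \tfrac{1}{n!}\mb{Z}$, which yields a unique \emph{terminating} factorial expansion of every rational: a sign, an integer part $\sum_{i \ge 1} c_i\, i!$ with $0 \le c_i \le i$, and a fractional part $\sum_{i \ge 2} a_i/i!$ with $0 \le a_i \le i-1$, only finitely many digits nonzero (termination is exactly rationality, since $n!\,q \in \mb{Z}$ for $n$ large). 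Because the digit at factorial position $i$ is unbounded as $i$ grows, I keep the encoding alphabet $\Sigma$ finite by writing each digit in binary inside a block of position-determined width $\ell_i = O(\log i)$, separating blocks by a marker and laying them out at absolute indices so that position $i$ occupies the same range in the code of every rational (padding absent digits by zeros and truncating after the last nonzero block). A rational is thus a finite word $w \in \Sigma^*$.

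The advice $\alpha \in \Gamma^\omega$ is then the fixed infinite word realising this layout: at each index it records the block markers together with the binary digits of the current factorial index $i$, that is, the successive thresholds $i$ that govern the base at each factorial position. This value $i$ is precisely the position-dependent information a finite automaton cannot compute on its own but can read off $\alpha$. With this in hand I verify the three languages of the presentation. The domain language $L$ checks, consulting $\alpha$, that every block encodes a binary digit below its threshold $i$ --- a finite-state comparison of the block against the bits of $i$. Since the layout is canonical the presentation is injective, so $L_=$ is trivial. The heart of the argument is the graph of addition $L_+$: given $w_1 \otimes w_2 \otimes w_3 \otimes \alpha[:N]$, the automaton performs, block by block, a binary addition of the two input digits with a bounded local carry, obtaining a value in $[0, 2i-1]$, followed by a base-$i$ normalisation where it compares this value with the threshold $i$ read from $\alpha$, emits the normalised digit, and propagates a factorial carry of $0$ or $1$ to the neighbouring block. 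Every quantity used is either bounded (the carries, a constant number of bits) or supplied by $\alpha$ (the bits of $i$), so a single synchronous finite automaton suffices, exactly as in the proof that addition is automatic in a fixed base. This establishes $\langle\mb{Q},+\rangle \in \aut[\alpha]$.

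Finally I check that $\alpha$ can be taken with a decidable $\MSO$-theory. The word $\alpha$ is entirely determined by the factorial layout: its marker positions are the partial sums $\sum_{j \le i} \ell_j = O(i \log i)$ and its content is the binary expansion of $i$, so $\alpha$ is $\MSO$-interpretable in the structure $\langle \mb{N}, <, \{n! : n \ge 1\}\rangle$, whose $\MSO$-theory is decidable by the classical sparse-predicate techniques for factorials (see \cite{semenov1984,barany2008}); alternatively one verifies directly that $\alpha$ is a morphic word. Transporting decidability across the interpretation gives an advice with decidable $\MSO$-theory, matching the reference \cite{KRSZ12} and contrasting with Theorem~\ref{theo:rat}.

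I expect the main obstacle to lie in the addition construction of the second paragraph: designing one synchronous finite automaton that manages the \emph{nested} carry structure --- binary within a block, factorial across blocks --- while consulting the advice for each position-dependent base $i$, and doing so uniformly under $\otimes$-padding when the three operands have different lengths. Reconciling the reading directions of the intra-block and inter-block carries is the delicate (if routine) point on which the whole presentation hinges.
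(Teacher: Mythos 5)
The paper offers no proof of this Fact --- it is imported from \cite{KRSZ12} --- so your proposal has to be measured against the construction in that reference, which it essentially reproduces: factorial number system, digits written in binary inside advice-delimited blocks whose thresholds are supplied by the advice, and a synchronous carry-propagating automaton for the graph of addition. That first half is sound, and the difficulty you flag in your closing paragraph is indeed only routine: since $\reg[\alpha]$ is defined through arbitrary regular languages $L'$ (Definition \ref{def:regad}), the automaton may be nondeterministic, guess the inter-block (factorial) and intra-block (binary) carries, and verify the purely local constraints, exactly as in automatic presentations of $\langle\mb{N},+\rangle$; the mismatch of reading directions is absorbed by the guess-and-check.

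The genuine gap is in your third paragraph, i.e. in the half of the statement asserting that $\alpha$ has a decidable $\MSO$-theory; neither of your two justifications holds. Your advice contains the binary expansion of every block index $i$, so every finite binary word occurs in it (as a factor of some $\mathrm{bin}(i)$), and its subword complexity is at least $2^n$. Consequently: (i) $\alpha$ is \emph{not} a morphic word, since morphic words have subword complexity $O(n^2)$ \cite{allouche2003}; (ii) $\alpha$ cannot be obtained from the characteristic word of the factorials --- whose subword complexity is polynomial --- by any $\MSO$-relabelling, by Propositions \ref{prop:carwreg} and \ref{prop:subword}, and there is no reason it should be $\MSO$-interpretable in $\langle\mb{N},<,\{n!\,:\,n\ge 1\}\rangle$ in the $1$-dimensional sense needed to transfer decidability: $\MSO$ over a sparse unary predicate cannot perform the binary counting that produces your block contents (note also that your markers sit at positions $\Theta(i\log i)$, not at factorials, so even the alignment of the proposed interpretation is off). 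This is precisely the point where the Elgot--Rabin/Semenov sparse-predicate techniques you invoke stop applying: the advice needed here is a dense, exponential-complexity ``counting'' word. Its $\MSO$-theory \emph{is} decidable, but this is a nontrivial theorem resting on Bárány's automatic ($k$-lexicographic) words \cite{barany2008} --- the word structure of the counting word is automatic (encode a position by the block index in binary paired with the offset in unary), and such automatic words have decidable $\MSO$-theories --- and this is what \cite{KRSZ12} rely on. As written, your argument establishes $\langle\mb{Q},+\rangle\in\aut[\alpha]$ but not the decidability half of the Fact; replacing your two justifications by an appeal to \cite{barany2008} repairs it.
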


We now briefly describe basic properties of presentations with advice.

\begin{fact} Inclusion of language classes give $\aut \subseteq \aut[\alpha] \subseteq \auti[\alpha] $ and $\waut \subseteq \waut[\alpha]$. Inclusions are equalities if $\alpha$ is ultimately periodic.

\end{fact}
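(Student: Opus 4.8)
The plan is to prove the \textbf{Fact} asserting the chain of inclusions $\aut \subseteq \aut[\alpha] \subseteq \auti[\alpha]$ and $\waut \subseteq \waut[\alpha]$, together with the collapse to equalities when $\alpha$ is ultimately periodic. Each inclusion should follow by ``lifting'' a presentation at the level of languages, using the language inclusions already established in Section \ref{sec:reglang}, namely $\reg \subseteq \reg[\alpha] \subseteq \reg^\infty[\alpha]$ and $\wreg \subseteq \wreg[\alpha]$ (these are stated in Example \ref{ex:reg1}(1), in the remark that $\reg[\alpha]\subseteq\reg^\infty[\alpha]$, and in the $\omega$-regular example). The key observation is that a $\mc{C}$-presentation (Definition \ref{def:pres}) is just a tuple of languages from $\mc{C}$, so if $\mc{C}\subseteq\mc{C}'$ then every $\mc{C}$-presentable structure is $\mc{C}'$-presentable by reusing the very same tuple.

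Concretely, I would argue as follows. For $\aut\subseteq\aut[\alpha]$: take $\mf{A}\in\aut$ with a $\reg$-presentation $(L,L_=,L_1,\dots,L_n)$. Since each regular language is regular with advice $\alpha$ (Example \ref{ex:reg1}(1)), all these languages lie in $\reg[\alpha]$, so the same tuple is a $\reg[\alpha]$-presentation of $\mf{A}$, witnessing $\mf{A}\in\aut[\alpha]$. For $\aut[\alpha]\subseteq\auti[\alpha]$: given a $\reg[\alpha]$-presentation, the inclusion $\reg[\alpha]\subseteq\reg^\infty[\alpha]$ shows every component language is non-terminating regular with advice $\alpha$, so the tuple is a $\reg^\infty[\alpha]$-presentation. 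The inclusion $\waut\subseteq\waut[\alpha]$ is identical, using that every $\omega$-regular language is $\omega$-regular with any advice. In all three cases the encoding function $\nu$ is unchanged, so the defining conditions of Definition \ref{def:pres} are automatically preserved; there is essentially nothing to recompute.

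For the equality claim when $\alpha$ is ultimately periodic, the plan is to invoke the facts recorded in Section \ref{sec:reglang} that the advice classes collapse to the advice-free classes in this case. Specifically, the discussion following Definition \ref{def:regad} notes that $\reg\subseteq\reg[\alpha]$ is strict \emph{if and only if} $\alpha$ is not ultimately periodic, so for ultimately periodic $\alpha$ we have $\reg[\alpha]=\reg$; similarly the remark after Example \ref{ex:reginf} records that $\reg[\alpha]\subsetneq\reg^\infty[\alpha]$ precisely when $\alpha$ is not ultimately periodic, giving $\reg^\infty[\alpha]=\reg[\alpha]=\reg$ in the periodic case, and likewise $\wreg[\alpha]=\wreg$. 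Since the presentable-structure classes depend only on the underlying language classes, equality of language classes immediately yields $\aut=\aut[\alpha]=\auti[\alpha]$ and $\waut=\waut[\alpha]$.

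I do not expect any serious obstacle here: this Fact is a direct ``monotonicity of presentability under inclusion of language classes'' statement, and the only content is citing the correct language-level inclusions and their strictness characterizations from the preceding section. The one point requiring a word of care is to state explicitly that a presentation is transferred \emph{verbatim}, with the same $\nu$, so that the congruence and isomorphism conditions need no re-verification; the mildest subtlety is the equality direction, where one must remember that it rests on the nontrivial (but already cited) characterizations of when the advice genuinely adds power, rather than on anything proved afresh here.
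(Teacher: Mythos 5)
Your overall strategy is exactly the paper's (implicit) one: the paper states this Fact without any proof, the intended justification being precisely your observation that a presentation is just a tuple of languages together with an unchanged encoding function $\nu$, so inclusions of language classes transfer verbatim to the classes of presentable structures. That part of your proposal is correct and complete.

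There is, however, a citation slip in your equality argument. For $\reg = \reg[\alpha]$ you can indeed point to the paper's iff-statement (``the inclusion $\reg \subseteq \reg[\alpha]$ is strict if and only if $\alpha$ is not ultimately periodic''). But the paper's remark about $\reg[\alpha] \subseteq \reg^\infty[\alpha]$ says only that the inclusion is strict \emph{when} $\alpha$ is not ultimately periodic --- not ``precisely when'' --- and no statement of the form $\wreg[\alpha] = \wreg$ appears anywhere before this Fact. So the collapses $\regi[\alpha] = \reg$ and $\wreg[\alpha] = \wreg$ for ultimately periodic $\alpha$ are not ``recorded'' facts you can simply cite; they require an argument. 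Fortunately it is a one-line folklore one: if $\alpha = uv^\omega$, a finite automaton can generate the advice itself by tracking its position up to $|u|$ and then modulo $|v|$ in its state, so any automaton or $\omega$-automaton reading $w \otimes \alpha$ can be simulated by one reading $w$ alone (for $\regi$, note additionally that acceptance of the residual input $\square^\omega \otimes \alpha[|w|:]$ depends only on the current state and the position modulo $|v|$, so it can be hardcoded into the acceptance condition of a finite-word automaton). With that sentence added in place of the two miscited remarks, your proof is complete.
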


\begin{remark} There is however no immediate argument to deduce $\aut \subsetneq \aut[\alpha]$ when $\alpha$ is not ultimately periodic. We shall see in Section \ref{sec:hierarchy} that this statement is true.

\end{remark}

As an immediate consequence of the definitions, $\auti[\alpha] \subseteq \waut[\alpha]$ and $\waut[\alpha]$ contains uncountable structures, whereas $\auti[\alpha]$ does not. This idea can be refined.

\begin{theorem}[\cite{abu2017}]

\label{theo:countable}

$\aut^\infty[\alpha]$ is exactly the subclass of countable structures of $\waut[\alpha]$.

\end{theorem}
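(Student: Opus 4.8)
The plan is to prove the two inclusions separately. The easy direction is $\auti[\alpha] \subseteq \waut[\alpha]$ together with the fact that every structure in $\auti[\alpha]$ is countable. Both of these are already noted in the excerpt as immediate consequences of the definitions: a non-terminating presentation uses languages of finite words (padded with $\square^\omega$), hence the domain $L$ is a subset of $\Sigma^*$, which is countable, whereas $\wreg[\alpha]$ allows genuinely infinite encoding words and so can present uncountable structures such as $\langle \mb{R}, +\rangle$. So it only remains to show the converse: \emph{every countable structure in $\waut[\alpha]$ already lies in $\auti[\alpha]$}.

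The core of the argument is therefore a \emph{reduction of encoding length}. Suppose $\mf{A}$ is countable and has a $\wreg[\alpha]$-presentation $(L, L_=, L_1, \dots, L_n)$ with domain language $L \subseteq \Sigma^\omega$ and surjection $\nu : L \to A$. Since $A$ is countable, I can pick one infinite representative $w_a \in L$ for each element $a \in A$, but these representatives are infinite words and I need a new presentation whose encoding language consists of finite words. The natural idea is to replace each chosen infinite representative by a finite word that identifies it unambiguously. The key step is to produce, for each element, a finite \emph{index} word: because $\mf{A}$ is countable and we may assume (by picking a transversal of $\nu$) that each element has a single representative, I want to build a new domain of finite words $\{u_a : a \in A\}$ together with $\omega$-regular definability, using $\alpha$, of the new equality and relation languages over the padded finite words. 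The most economical route is to encode each element by a finite word describing the run-based information needed to recover its $\omega$-regular representative, in the spirit of the subword/pigeonhole reasoning in Proposition \ref{prop:subword}.

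Concretely, I would work through automata. Fix Büchi (or Muller) automata for $L, L_=, L_1, \dots, L_n$ reading against $\alpha$. For a fixed element $a$ with representative $w_a \in \Sigma^\omega$, the accepting run of the domain automaton on $w_a \otimes \alpha$ eventually settles into a pattern of states that repeats on the advice. The crucial observation is that, to check a relation such as $(\nu(w_{a_1}), \dots, \nu(w_{a_r})) \in R_i$, one only needs the relevant run of the relation automaton on $w_{a_1} \otimes \cdots \otimes w_{a_r} \otimes \alpha$; and because the structure is countable, I can select the representatives coherently so that a \emph{finite} prefix, together with $\alpha$, determines membership. I would then let the new (finite-word) encoding of $a$ be such a distinguishing finite prefix, padded by $\square^\omega$, and verify that the corresponding languages of padded finite words are $\omega$-regular with advice $\alpha$, i.e. lie in $\auti[\alpha]$ via the equivalence $L \in \regi[\alpha] \iff \{w\square^\omega : w \in L\} \in \wreg[\alpha]$ recorded earlier. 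Closure of $\wreg[\alpha]$ under projection (Proposition \ref{prop:closewreg}) is what makes it legitimate to existentially quantify over the infinite ``completion'' of a finite code, which is exactly how one passes from the infinite representatives back to finite ones.

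The main obstacle I expect is precisely this last step: showing that countability lets us choose the representatives so that \emph{finite} codes suffice, and that the resulting relation languages remain $\omega$-regular with advice. The danger is that different elements may only be distinguishable by their infinite tails, so a naive truncation collapses distinct elements; the argument must instead use a well-ordering of the (countably many) representatives to assign each one a finite code that is guaranteed unique, and then re-express the original $\omega$-regular relations over these codes using projection and the advice. I would therefore expect the heart of the proof to be a careful interplay between the countability of $A$ (to enumerate and finitely name elements) and the closure properties of $\wreg[\alpha]$ from Proposition \ref{prop:closewreg} (to redefine $L_=$ and the $L_i$ over the finite codes), rather than any single clever automaton construction.
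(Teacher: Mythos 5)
A point of reference before the mathematics: the paper does not prove Theorem \ref{theo:countable} at all — it is imported from \cite{abu2017} — so your attempt has to be measured against the proof given there, which, in the special case of an ultimately periodic advice, specializes to the theorem of Kaiser, Rubin and B\'ar\'any that every countable $\omega$-automatic structure is automatic. That already signals how much weight the ``hard'' direction carries.

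Your easy direction is fine: $\auti[\alpha] \subseteq \waut[\alpha]$ and the countability of every structure in $\auti[\alpha]$ follow from the definitions, as the paper itself notes. The genuine gap is in the converse, and it sits exactly where you flag it yourself. The assertion ``because the structure is countable, I can select the representatives coherently so that a finite prefix, together with $\alpha$, determines membership'' is not a step one can take — it \emph{is} the theorem. Your proposed repair, a well-ordering of the countably many representatives assigning each one a unique finite code, cannot work as stated: the issue is not uniqueness of the codes but \emph{recognizability} of the induced relations. If the codes $u_a$ are produced by an arbitrary enumeration, there is no reason whatsoever that the languages $\{u_a \otimes u_b~|~(\nu(w_a),\nu(w_b)) \in R_i\}$ (padded and convolved with $\alpha$) should lie in $\regi[\alpha]$; closure under projection (Proposition \ref{prop:closewreg}) only removes components of a relation that is \emph{already} $\omega$-regular with advice, it cannot turn an arbitrary bijection between finite codes and equivalence classes into an automaton-definable one. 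What the actual proof requires is a structural lemma of Ramsey type: every class of an $\omega$-regular-with-advice equivalence relation having only countably many classes contains a representative of a special, finitely parametrizable shape — ultimately periodic words in the advice-free case, and in the advice case words whose tail is eventually produced from the corresponding tail of $\alpha$ by finite-state means — together with an automaton-definable selection of such representatives, from which the new $\regi[\alpha]$-presentation is built. Nothing in your sketch states or proves this lemma; the appeal to the pigeonhole argument of Proposition \ref{prop:subword} concerns the subword complexity of a single image word and is not a substitute. Since for trivial advice your argument would have to reprove the Kaiser--Rubin--B\'ar\'any theorem and contains no mechanism that could do so, the gap is fundamental rather than a matter of missing detail.
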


The next result shows to what extent the advice contains the seeds of every presentation, and how we generalized the case of automatic structures.

\begin{proposition}[\cite{zaid2016}]
\label{prop:si}

\item
\begin{enumerate}

\item $\mf{A} \in \waut[\alpha]$ if and only if $\mf{A}$ is $\S$-interpretable in $\mf{W}^\alpha$;

\item $\mf{A} \in \aut^\infty[\alpha]$ if and only if $\mf{A}$ is $\FS$-interpretable in $\mf{W}^\alpha$.

\end{enumerate}
\end{proposition}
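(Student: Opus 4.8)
The plan is to read this as the advice-relative version of the characterization ``automatic $\Leftrightarrow$ $\FS$-interpretable in $(\mb{N},<)$, $\omega$-automatic $\Leftrightarrow$ $\S$-interpretable in $(\mb{N},<)$'' \cite{khoussainov2004}, the only new ingredient being that the advice $\alpha$ is absorbed into the host structure: it is \emph{not} carried by a convolution track but is already present in $\mf{W}^\alpha$ through the predicates $(P_a)_{a \in \Gamma}$. First I would fix the encoding dictionary. Writing the encoding alphabet as $\Sigma = \{0,1\}^k$ (any finite $\Sigma$ can be binarised), an infinite word $w \in \Sigma^\omega$ is identified with a $k$-tuple of subsets $\overline{X_w} = (X_1,\dots,X_k)$ of $\mb{N}$, where $n \in X_j$ records the $j$-th bit of $w[n]$; a \emph{finite} word $w \in \Sigma^*$ corresponds to a $k$-tuple of \emph{finite} sets, and an $r$-fold convolution $w_1 \otimes \cdots \otimes w_r$ to the concatenated list of $r$ such tuples. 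Under this dictionary the number of sets coding one element is exactly the dimension of the interpretation.

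For the forward direction of item~1, start from a $\wreg[\alpha]$-presentation $(L, L_=, L_1, \dots, L_n)$ of $\mf{A}$. Each of these languages has the shape $\{u \mid u \otimes \alpha \in L'\}$ with $L' \subseteq (\Sigma' \times \Gamma)^\omega$ $\omega$-regular (here $\Sigma'$ is $\Sigma$ or a larger convolution alphabet). By Büchi's theorem $L'$ is defined by an $\MSO[<, \Sigma' \times \Gamma]$-sentence; reading its $\Sigma'$-track as free set variables and its $\Gamma$-track as the fixed predicates of $\mf{W}^\alpha$ turns this sentence into an $\MSO[<,\Gamma]$-formula with free set variables that holds in $\mf{W}^\alpha$ exactly when $u \otimes \alpha \in L'$. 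Doing this for $L, L_=, L_1, \dots, L_n$ produces formulas $\phi_\delta, \phi_=, \phi_1, \dots, \phi_n$, i.e. an $\S$-interpretation of dimension $k$. The congruence and surjectivity conditions required of an interpretation are precisely the defining properties of a $\mc{C}$-presentation in Definition~\ref{def:pres}, so the interpretation indeed yields $\mf{A}$.

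The converse of item~1 reverses this translation: given an $\S$-interpretation of $\mf{A}$ in $\mf{W}^\alpha$, view the free set variables of each formula as a $\Sigma$-track and the predicates $(P_a)$ as a $\Gamma$-track, obtaining $\MSO[<, \Sigma' \times \Gamma]$-sentences; Büchi's theorem makes the corresponding languages $L'$ $\omega$-regular, and $\{u \mid u \otimes \alpha \in L'\}$ are the languages of a $\wreg[\alpha]$-presentation of $\mf{A}$. Item~2 is obtained by running the entire argument with $\MSO$ replaced by $\WMSO$, $\omega$-regular replaced by its weak counterpart on $\square$-padded words, and arbitrary sets replaced by finite sets; here one uses the weak Büchi--Elgot--Trakhtenbrot theorem together with the equivalence $L \in \regi[\alpha] \Leftrightarrow \{w\square^\omega \mid w \in L\} \in \wreg[\alpha]$ noted before Proposition~\ref{prop:closewreg}. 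Since finite words (equivalently finite sets) give countable domains while arbitrary sets give possibly uncountable ones, this is also coherent with Theorem~\ref{theo:countable}.

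I expect the main obstacle to be the bookkeeping around convolution and padding rather than any genuinely hard step: one must check that the $r$-ary relation languages, which live over $r$-fold convolutions, correspond cleanly to the $r$ tuples of free variables of an interpretation formula $\phi_i(\overline{X_1}, \dots, \overline{X_r})$, and --- in item~2 --- that an $\omega$-regular acceptance condition read on an eventually-$\square$ word genuinely collapses to a $\WMSO$ condition on the finite encoding, so that finiteness of the sets is both necessary and sufficient. Once these translations are pinned down, the equivalence of the interpretation axioms with the presentation axioms is immediate.
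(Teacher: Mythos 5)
The paper itself does not prove Proposition~\ref{prop:si}; it imports it from \cite{zaid2016}, and the intended argument is exactly the B\"uchi-style translation you describe: letters of the $\Sigma$-track become bit-membership conditions on free set variables, the $\Gamma$-track becomes the fixed predicates $(P_a)_{a\in\Gamma}$ of $\mf{W}^\alpha$, and the axioms of a presentation (Definition~\ref{def:pres}) match the axioms of an interpretation verbatim. For item~1 your argument is complete: infinite words over $\{0,1\}^k$ are in bijection with $k$-tuples of subsets of $\mb{N}$, so the dictionary is faithful and the two directions are mutually inverse translations.

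There is, however, a genuine gap in item~2: your dictionary for finite words is not injective. With $\Sigma = \{0,1\}^k$, the words $w$ and $w\cdot(0,\dots,0)^j$ determine the same $k$-tuple of finite sets, i.e.\ the length of $w$ is not recoverable from $\overline{X_w}$. This breaks the argument as written, in both directions. In the forward direction, the sentence defining $L' \subseteq ((\Sigma \uplus \{\square\}) \times \Gamma)^\omega$ contains atomic predicates $P_{(\square,g)}(y)$ whose translation must say ``$y \ge |w|$ and $P_g(y)$'', and $|w|$ is simply not determined by $\overline{X_w}$; worse, nothing in Definition~\ref{def:pres} prevents a presentation in which both $w$ and $w\cdot(0,\dots,0)$ belong to $L$ and encode \emph{distinct} elements, in which case your $\phi_\delta$ and $\phi_=$ conflate them and the quotient structure fails to be $\mf{A}$. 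The repair is standard but has to be stated: either raise the dimension to $k+1$ by adding one finite-set variable $Y$ constrained to be the initial segment $\{0,\dots,|w|-1\}$ containing every $X_j$, so that $(\overline{X_w},Y)$ determines $w$ uniquely and ``$\square$-position'' becomes $y \notin Y$; or first recode the presentation (staying inside $\regi[\alpha]$) so that no encoding ends with the all-zero letter. A secondary imprecision: the fact needed in item~2 is not the weak B\"uchi--Elgot--Trakhtenbrot theorem, which is a finite-word statement, but the equivalence of $\WMSO$ and $\MSO$ over $\omega$-words --- the host structure $\mf{W}^\alpha$ stays infinite and only the free set variables range over finite sets; this is classical but nontrivial (it rests on McNaughton determinization). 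With these two patches your proof goes through.
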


\begin{remark}[\cite{zaid2016}] If the presentation is injective and the encoding is alphabet binary, the resulting interpretation can be done $1$-dimensional and injective.
\end{remark}

We now discuss a few structural properties of advice automatic structures. The statements are not deeply technical nor enlightening, but they are essential tools in the discussions of Section \ref{sec:msot}. A first question is to know whether each presentation can be made injective.

\begin{proposition}[\cite{reinhardt2013}] \label{prop:injective}

If $\mf{A}$ has a $\regi[\alpha]$-presentation, it has an injective $\regi[\alpha]$-presentation.

\end{proposition}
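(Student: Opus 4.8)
The plan is to pick, inside each class of the congruence induced by $L_=$, a single canonical representative and to check that this selection, together with the relations restricted to representatives, yields an injective presentation whose languages stay in $\regi[\alpha]$. The natural choice is the length-lexicographic minimum: fix a linear order on the encoding alphabet $\Sigma$ and let $\le_{\mathrm{llex}}$ denote the induced length-lexicographic order on $\Sigma^*$ (compare first by length, then lexicographically). Since $\le_{\mathrm{llex}}$ is a well-order, every nonempty $\nu$-class contains a least element for it, which I take as its representative. I then expect the whole argument to be driven by the closure properties of $\regi[\alpha]$ recorded in Proposition~\ref{prop:closewreg}.

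Concretely, I would first set
\[
L' = \{\, w \in L \mid \text{for all } w' \in L \text{ with } \nu(w')=\nu(w),\ w \le_{\mathrm{llex}} w' \,\},
\]
and then define the injective presentation $(L', L'_=, L'_1, \dots, L'_n)$ by taking $L'_= = \{w\otimes w \mid w\in L'\}$ and letting $L'_i$ be the intersection of $L_i$ with the set of convolutions of $r_i$-tuples from $L'$ (i.e.\ the relation $R_i$ restricted to representatives). The encoding map is $\nu' = \nu|_{L'}$.

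The central step is to verify $L' \in \regi[\alpha]$. The relation $\{u \otimes v \mid u \le_{\mathrm{llex}} v\}$ is $\omega$-regular, hence (advice-free) regular with advice $\alpha$, as is its strict version $<_{\mathrm{llex}}$. The condition defining $L'$ reads $w \in L$ and $\neg\exists w'\ (w' \in L \wedge \nu(w')=\nu(w) \wedge w' <_{\mathrm{llex}} w)$, where the equality $\nu(w')=\nu(w)$ is exactly membership of $w'\otimes w$ in the given language $L_=$. Thus $L'$ is obtained from $L$, $L_=$ and $<_{\mathrm{llex}}$ by intersection, an existential quantification over $w'$ (a projection with respect to $\otimes$), and a complementation, all of which keep us inside $\regi[\alpha]$ by Proposition~\ref{prop:closewreg}. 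The diagonal $L'_=$ and the relations $L'_i$ (intersections of the $L_i$ with convolutions obtained from $L'$ by cylindrification) lie in $\regi[\alpha]$ for the same reasons. It then remains to check that this is a presentation of $\mf{A}$: surjectivity of $\nu'$ follows from the existence of a least element in each class; injectivity holds because two distinct words of $L'$ are the minima of their respective classes and, being distinct, must come from distinct classes, hence map to distinct elements; and the defining equalities of Definition~\ref{def:pres} for $L'_=$ and the $L'_i$ are immediate from the construction.

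The main obstacle is precisely this definability argument for $L'$: one must ensure that the universally quantified minimality condition is genuinely expressible as a complemented projection within $\regi[\alpha]$, which is exactly where closure under projection is indispensable — note that mere closure under boolean operations, which is all one has for the terminating classes $\reg[\alpha]$, would not suffice. A minor point requiring care is that, since these are finite-word languages handled through the $\square$-padding into $\wreg[\alpha]$, one should confirm that convolution and projection interact correctly with the padding; this too is guaranteed by Proposition~\ref{prop:closewreg}.
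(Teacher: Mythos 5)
Your proof is correct: the length-lexicographic-minimum construction, with $L'$ carved out of $L$ by an intersection, a projection (for the existential witness of a smaller equivalent word) and a complementation, all justified by Proposition~\ref{prop:closewreg}, is exactly the standard argument for making (advice) automatic presentations injective, and it is the argument used in the cited source~\cite{reinhardt2013} — the paper itself states this proposition by citation and gives no proof. Your closing remark correctly identifies why this works for $\regi[\alpha]$ but would fail for the terminating classes $\reg[\alpha]$, where closure under projection is unavailable.
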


A second point it to understand how the encoding alphabet can be restricted. Binary resentations are enough to describe all automatic structures \cite{blumensath2000}. We show that it is still possible here, up to a small modification of the advice.

\begin{definition}

\label{def:duplication}

For $n \ge 1$ let $\mu_n : \Gamma \rightarrow \Gamma^*$ mapping each letter $a$ to $a^n$. We extend this function to infinite words in a morphic way.
\end{definition}

\begin{example}

$\mu_3((01)^\omega) = (000111)^\omega$.

\end{example}

\begin{proposition}

\label{prop:binary}

If $\mf{A}$ has a $\regi[\alpha]$-presentation, there is $n \ge 1$ such that $\mf{A}$ has a $\regi[\mu_n(\alpha)]$-presentation over a binary encoding alphabet. If the first presentation was injective, so in the second.

\end{proposition}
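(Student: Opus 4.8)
The plan is to replace the encoding alphabet $\Sigma$ of the given presentation by $\{0,1\}$ via a fixed-length binary coding, and to repair the resulting loss of synchronisation between input and advice by stretching $\alpha$ with $\mu_n$ (Definition \ref{def:duplication}). Concretely, I fix $k := \max(1,\lceil \log_2 |\Sigma| \rceil)$, an injective coding $h : \Sigma \to \{0,1\}^k$ extended morphically to words, and I set $n := k$. The arithmetical fact driving everything is that $\mu_k(\alpha)$ is \emph{constant on every block of $k$ consecutive positions}: indeed $\mu_k(\alpha)[kj+i] = \alpha[j]$ for $0 \le i < k$. Hence, grouping positions of a binary word into blocks of length $k$, block $j$ of $h(w) \otimes \mu_k(\alpha)$ carries exactly the codeword $h(w[j])$ (or $\square^k$ once $w$ has ended, since $|h(w)| = k|w|$ is a multiple of $k$) together with the advice letter $\alpha[j]$ repeated $k$ times.

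Starting from a $\regi[\alpha]$-presentation $(L,L_=,L_1,\dots,L_p)$ with $L = \{w \mid w\otimes\alpha \in L'\}$ for an $\omega$-regular $L' \subseteq ((\Sigma\uplus\square)\times\Gamma)^\omega$, I define the binary language $\widehat L := h(L)$, and similarly $\widehat L_=$, $\widehat L_i$ by coding each component of a convolution. A routine block-alignment check shows that coding commutes with convolution, i.e.\ $h(w)\otimes h(w') = $ the componentwise coding of $w\otimes w'$ (the $\square$-padding at the end of a shorter word falls on a block boundary, since lengths are multiples of $k$), so the relational languages transfer cleanly. To realise $\widehat L$ with advice $\mu_k(\alpha)$, I introduce the partial \emph{block-decoding} map sending a length-$k$ block $(c_0,g_0)\cdots(c_{k-1},g_{k-1})$ over $(\{0,1\}\uplus\square)\times\Gamma$ to a single symbol over $(\Sigma\uplus\square)\times\Gamma$: if the $g_i$ agree with common value $g$ and $c_0\cdots c_{k-1}$ is a codeword $h(b)$, it outputs $(b,g)$; if $c_0\cdots c_{k-1}=\square^k$ it outputs $(\square,g)$; otherwise it is undefined. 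By construction this map sends (block-grouped) $u\otimes\mu_k(\alpha)$ to $\mathrm{dec}(u)\otimes\alpha$ precisely when $u$ is a concatenation of codewords, i.e.\ when $u = h(w)$. I then take $\widehat{L}'$ to be the set of inputs whose every block decodes and whose decoding lies in $L'$.

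The one genuine point to verify is that $\widehat{L}'$ is $\omega$-regular: this follows from the standard closure of $\omega$-regular languages under inverse images of fixed-length block maps, where an automaton for $\widehat L'$ reads $k$ input letters at a time and simulates, on each admissible block, the single $L'$-transition on the decoded letter (sending non-decodable blocks to a rejecting sink). The only care needed is the B\"uchi acceptance condition, handled by flagging whether an accepting state is visited within a block; this is the main, though routine, obstacle. Granting this, $\{u \mid u\otimes\mu_k(\alpha)\in\widehat L'\} = h(L) = \widehat L$, and likewise for the relation languages, so $(\widehat L,\widehat L_=,\widehat L_1,\dots,\widehat L_p)$ is a $\regi[\mu_k(\alpha)]$-presentation over $\{0,1\}$ with encoding $\widehat\nu(h(w)) := \nu(w)$, which is well defined and surjective since $h$ is a bijection $L \to \widehat L$. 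The presentation axioms of Definition \ref{def:pres} transfer verbatim through $h$, and if the original presentation is injective, i.e.\ $L_= = \{w\otimes w \mid w\in L\}$, then $\widehat L_= = \{h(w)\otimes h(w) \mid w \in L\} = \{u\otimes u \mid u\in\widehat L\}$, so the new presentation is injective as well.
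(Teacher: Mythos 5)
Your proof is correct and follows essentially the same route as the paper: replace each letter of $\Sigma$ by a fixed-length binary codeword (extended morphically, also to convolutions), stretch the advice with $\mu_n$ so that codeword blocks stay aligned with the repeated advice letters, and check that the coded languages still form a presentation with injectivity preserved. The only differences are cosmetic --- the paper uses codewords $0^i1^{n-i}$ of length $n = |\Sigma|$ instead of your length-$\lceil \log_2 |\Sigma| \rceil$ code, and it leaves implicit the block-decoding automaton argument for $\omega$-regularity that you spell out.
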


\begin{proof}[Proof sketch] Let $\mf{A} = \langle A, R_1, \dots R_n \rangle\in \auti[\beta]$ and $(L, L_{=}, L_1, \dots, L_n)$ the corresponding presentation over an alphabet $\Sigma = \{a_1 \dots a_n\}$. The idea is to replace $a_i$ by a binary string of length $n$. Formally let $w_i = 0^k 1^{n-k} $ and let $f : \Sigma \rightarrow \{0,1\} $ mapping $a_i$ to $w_i$. $f$ is extended morphically to (convolutions of) words of $\Sigma^*$. Note $|f(w)| = n |w|$. We check that $(f(L), f(L_{=}), f(L_1), \dots, f(L_n))$ is a tuple of languages of $\regi[\mu_n(\beta)]$ which is still a presentation of $\mf{A}$. If $L_{=} = \{w \otimes w~|~w \in L\}$ then $f(L_{=}) =  \{w \otimes w~|~w \in f(L)\}$ thus injectivity is preserved by this construction.
\end{proof}

\begin{remark} This proof also works for $\reg[\alpha]$- and $\wreg[\alpha]$-presentations.

\end{remark}

\subsection{Terminating and non-terminating encodings}

Dealing directly with $\reg[\alpha]$-presentations seems more difficult, since basic properties lack to this class of languages.  We now show $\auti[\alpha] = \aut[\alpha]$, hence the expression ``advice automatic structure'' is not ambiguous. To give an intuition of the proof, we note that an $\omega$-automaton performs an infinite run on $w \otimes \alpha$ (for $w$ finite) in two steps: first, it follows a finite run on $w \otimes \alpha[:|w|]$, then it checks some $\omega$-regularity on $\square^\omega \otimes \alpha[|w|:]  \simeq \alpha[|w|:]  $.  Basically, the  $\omega$-regularity feature is only used on suffixes of the advice. On the other hand, a automaton for $\reg[\alpha]$ is blind to the $\omega$-future. We show that it can nevertheless look at some ``finite amount of future'' and deduce corresponding $\omega$-regularity on the suffixes. A key idea is that since the advice is \emph{fixed}, so are several properties of its suffixes.

\begin{theorem}

\label{theo:infini}

Let $L$ be an $\omega$-regular language and $\alpha \in \Gamma^\omega$ a fixed word. There is a (finite words) regular language $L'$ and $N \ge 0$ such that for all $n \ge N$, $\alpha[n:] \in L$ if and only if $\alpha[n:]$ has a finite prefix in $L'$. Furthermore, if $L$ can be described by an $\FO[<, \Gamma]$-sentence, $L'$ can be described by an $\FO[<, \Gamma]$-sentence as well.

\end{theorem}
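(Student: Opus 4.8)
The plan is to reduce the membership of the suffixes $\alpha[n:]$ in $L$ to a datum that, beyond a threshold, depends only on a finite prefix. I would fix a deterministic parity automaton $\mc{A}=(Q,q_0,\delta,\Omega)$ for $L$, write $\delta^*$ for the extended transition function and, for $w\in\Gamma^*$, let $\widehat{w}\colon q\mapsto\delta^*(q,w)$ be the induced map $Q\to Q$. For each state $q$ let $L_q$ be the $\omega$-regular language accepted from $q$; since parity acceptance is a tail condition, $\alpha[n:]\in L_q$ if and only if $\alpha[m:]\in L_{\widehat{\alpha[n:m]}(q)}$ for every $m\ge n$. Introduce the \emph{suffix type} $\tau(n)\in\{0,1\}^Q$ with $\tau(n)(q)=1$ iff $\alpha[n:]\in L_q$, so that $c(n):=(\alpha[n:]\in L)=\tau(n)(q_0)$ and the shift law $\tau(n)=\tau(m)\circ\widehat{\alpha[n:m]}$ holds for all $m\ge n$. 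This already separates the acceptance question into a value $\tau(m)$ coming from the (fixed) future and a transition map $\widehat{\alpha[n:m]}$ read off the prefix.

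The decisive use of the fixedness of $\alpha$ is a Ramsey argument: colour each pair $n<m$ by $\widehat{\alpha[n:m]}$, a map in the finite set $Q^Q$; an infinite homogeneous set $I$ yields an idempotent $g=\widehat{\alpha[i:j]}$ (for $i<j$ in $I$), with $g\circ g=g$. Using $\tau(i)=\tau(j)\circ g$ and idempotency one checks that $\tau$ is \emph{constant} on $I$, equal to some $\tau^*$ satisfying $\tau^*=\tau^*\circ g$; moreover for every $n\ge N:=\min I$ and every $m\in I$ with $m\ge n$ one gets $c(n)=\tau^*(\widehat{\alpha[n:m]}(q_0))$, independently of the chosen $m$. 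Thus beyond $N$ the acceptance of a suffix is governed by a fixed, tail-invariant datum of $\alpha$ (the idempotent $g$ and the recurrent type $\tau^*$) together with a transient captured by a finite prefix. The natural candidate is $L'=\{\,w\in\Gamma^*\ :\ \tau^*(\widehat{w}(q_0))=1\text{ and }w\text{ ends with a block of transition map }g\,\}$, which is regular. Completeness is then immediate: if $c(n)=1$, then taking $m$ to be the second element of $I$ beyond $n$, the prefix $\alpha[n:m]$ ends with a $g$-block, $\widehat{\alpha[n:m]}(q_0)$ is the stabilised (hence $g$-fixed) state, and $\tau^*$ of it equals $c(n)=1$, so $\alpha[n:m]\in L'$.

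The hard part—and the only genuinely delicate step—is \textbf{soundness}: a prefix ending at a position $m\notin I$ could look accepting (its transition map may be $g$-stable and land in a $\tau^*$-accepting state) while the true, fixed continuation $\alpha[m:]$ first traverses a non-$g$ block before re-entering the recurrent regime, so that the actual suffix is rejected. Such a spurious prefix would break the equivalence for a rejecting suffix. This is exactly the obstruction that an $\omega$-regular (limit) condition is being traded for an open (prefix) one, and it must again be defeated using that $\alpha$ is fixed: the function sending a position to the transition map of the block up to the next $I$-point takes finitely many values and is a fixed invariant of $\alpha$, so by pushing $N$ past the finitely many transient anomalies and by strengthening the ``recurrence clause'' (requiring the detected $g$-block to be long enough, or followed inside the prefix by a confirmation that the run has stabilised under $g$) one excludes the derailing continuations. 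I expect the bulk of the technical work to lie in making this calibration precise, i.e. in proving that the surviving accepting prefixes are precisely the \emph{safe} ones, occurring only as prefixes of accepting suffixes. As a sanity anchor, when the suffix type stabilises (all large $m$ give $\tau(m)=\tau^*$) soundness is free, since then $\tau^*(\widehat{\alpha[n:m]}(q_0))=\tau(m)(\widehat{\alpha[n:m]}(q_0))=c(n)$ for every $m\ge N$; the whole difficulty is concentrated in the oscillating case.

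For the $\FO$ refinement, if $L$ is $\FO[<,\Gamma]$-definable then its syntactic morphism is aperiodic, so the colouring above can be taken through an aperiodic monoid and all the ingredients entering $L'$—the $g$-stable, $\tau^*$-accepting transition maps and the confirmation of stabilisation—are counter-free conditions on factors. Hence $L'$ can be taken star-free, i.e. $\FO[<,\Gamma]$-definable over finite words, as the statement requires.
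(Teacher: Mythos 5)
Your framework --- suffix types $\tau(n)$, the shift law, and a Ramsey-extracted idempotent $g$ with constant type $\tau^*$ on a homogeneous set $I$ --- is correct as far as it goes, and it is a genuinely different route from the paper, which handles the general case via Semenov's congruences and regular $\mc{E}$-systems and the $\FO$ case via Kamp's theorem and elimination of $\G$-subformulas. But the step you yourself flag as delicate, soundness, is not a calibration issue to be deferred: it is the entire content of the theorem, and your candidate $L'$ is genuinely wrong, in a way that none of your three suggested repairs can fix.

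Here is a concrete counterexample. Let $\alpha=(ab)^\omega$ and let $L$ be recognized by the deterministic Büchi automaton with state set $\mb{Z}/3\mb{Z}$, initial state $0$, transitions $\delta(i,a)=i+1$, $\delta(i,b)=i-1$, and accepting state $1$. Then $\alpha[n:]\in L$ iff $n$ is even. The transition map of a factor is the translation by $(\#_a-\#_b) \bmod 3$, and every factor of $\alpha$ translates by $-1$, $0$ or $+1$; consequently every infinite homogeneous set $I$ lies in a single parity class and has idempotent colour $g=\mathrm{id}$. Take $I$ inside the even positions (the odd case is symmetric). Then $\tau^*$ is the type of even suffixes: $\tau^*(0)=\tau^*(1)=1$ and $\tau^*(2)=0$. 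Now let $n$ be odd, so $\alpha[n:]=(ba)^\omega\notin L$; its prefix $w=(ba)^k$ satisfies $\widehat{w}=\mathrm{id}=g$, hence $\widehat{w}(q_0)=0$ with $\tau^*(0)=1$, and $w$ itself (or any of its even-length suffixes) is a $g$-block. So $w\in L'$ while the suffix is rejecting, and soundness fails. Worse, these spurious prefixes occur at \emph{every} odd position, so the anomalies are not ``finitely many transient'' ones and no increase of $N$ helps; they can be taken arbitrarily long, so requiring long $g$-blocks does not help; and the run on them is trivially stabilised under $g=\mathrm{id}$, so the ``confirmation'' clause is vacuous. (Your observation that soundness is free when the types stabilise is correct --- but the oscillating case is the only case that matters, and this is it: the theorem itself holds here trivially with $L'=\{a\}$, which shows the defect lies in the construction, not the statement.) The underlying obstruction is that membership of a position in $I$, i.e.\ the value of the true type $\tau(m)$, cannot be certified by any regular property of the prefix ending at $m$; the paper's machinery exists precisely to manufacture a prefix criterion that cannot fire at misaligned positions --- Semenov's regular $\mc{E}$-system attaches to each position a canonical checkpoint at which a finite automaton provably returns the correct verdict, and in the $\FO$ case the $\G$-free positive $\LTL$ normal form makes truth genuinely witnessable by a prefix. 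Since your $\FO$ refinement is built on top of the same $L'$, it inherits the gap; as written, the proposal does not prove the theorem, and the missing part is not routine.
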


\begin{proof}[Proof]

Both proofs are detailed in Appendix \ref{proof:lem:restrict}. The case of $\FO$ is achieved via expressive equivalence with $\LTL$ (known as Kamp's Theorem, see \cite{rabinovich2014}). For $\MSO$ in general, we make use of results of A.L. Semenov \cite{semenov1984}.

\end{proof}

Corollary \ref{cor:positions} will formalize our intuition that terminating automata can check $\omega$-regular properties on suffixes. It thus enables us to explicit the relationships between $\reg[\alpha]$ and $\reg^\infty[\alpha]$, and between $\aut[\alpha]$ and $\auti[\alpha]$.

\begin{corollary}

\label{cor:positions}

Let $L\subseteq \Gamma^\omega$ be an $\omega$-regular language and $\alpha \in \Gamma^\omega$. There is a function $f: \mb{N} \rightarrow \mb{N}$ such that $\{0^n \square^{f(n)}~|~\alpha[n:] \in L\} \in \reg[\alpha]$.

\end{corollary}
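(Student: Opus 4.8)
The plan is to use Theorem~\ref{theo:infini} as the engine: it converts the $\omega$-regular membership test ``$\alpha[n:] \in L$'' into a \emph{finite} prefix test ``$\alpha[n:]$ has a prefix in $L'$'', valid for all $n \ge N$. Since a terminating automaton with advice $\alpha$ can only read finitely many advice letters, this finitary reformulation is exactly what makes the property checkable in $\reg[\alpha]$. The small finite set of exceptional positions $n < N$ can be handled separately, since $\Pref(\alpha) \in \reg[\alpha]$ and finite modifications stay within $\reg[\alpha]$ (by the boolean closure of Proposition~\ref{prop:closewreg} restricted to the finite case, or directly by the definition). The role of the padding $\square^{f(n)}$ is to give the automaton enough room on its input to scan the relevant finite prefix of $\alpha[n:]$: the function $f(n)$ will be chosen as (roughly) the length of the shortest prefix of $\alpha[n:]$ witnessing membership in $L'$, so that the input word $0^n \square^{f(n)}$ aligns, under convolution with $\alpha[:n+f(n)]$, precisely with the advice segment the automaton must inspect.

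Concretely, first I would apply Theorem~\ref{theo:infini} to obtain the regular language $L' \subseteq \Gamma^*$ and the threshold $N$ such that for every $n \ge N$, $\alpha[n:] \in L$ iff $\alpha[n:]$ has a finite prefix in $L'$. Next I would define $f(n)$ for $n \ge N$ to be the least length $\ell$ such that $\alpha[n:n+\ell] \in L'$ whenever such $\ell$ exists (and set $f(n)$ to some convenient default, e.g. $0$, otherwise, since in that case $0^n\square^{f(n)}$ should simply not be in the target language), and $f(n) = 0$ for $n < N$. Then I would exhibit the terminating automaton with advice $\alpha$ recognizing the language $\{0^n \square^{f(n)} : \alpha[n:] \in L\}$: on reading the input $w = 0^n \square^{m}$ convolved with $\alpha[:|w|] = \alpha[:n+m]$, the machine first counts the block of $0$'s to locate position $n$ (reading the corresponding advice letters but ignoring them), and then, upon entering the $\square$-block, runs a deterministic automaton for $L'$ on the advice letters $\alpha[n], \alpha[n+1], \dots$, accepting exactly when this reaches an accepting state of $L'$ at the moment the input ends. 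Because $f(n)$ was chosen as the shortest witnessing length, the input $0^n \square^{f(n)}$ stops the $L'$-automaton precisely at acceptance, so the convolution-based membership condition of Definition~\ref{def:regad} is met.

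The main obstacle, and the point requiring care, is the bookkeeping around $f$ and the boundary cases. One must ensure that $f$ is genuinely a function $\mathbb{N} \to \mathbb{N}$ (well-defined even at positions where $\alpha[n:] \notin L$) and that the resulting language is exactly the stated set, with no spurious words: the automaton must reject any input $0^n \square^m$ with $m \neq f(n)$, which is automatic if acceptance is triggered only at the \emph{first} entry into an accepting $L'$-state and the run must end exactly there. A subtle issue is that $L'$ from Theorem~\ref{theo:infini} tests prefixes of the suffix $\alpha[n:]$, so I should take $L'$ to be prefix-free or, more simply, work with the minimal witnessing length so that $f(n)$ is unambiguous; the correctness then reduces to the equivalence ``$\alpha[n:]$ has some prefix in $L'$'' $\iff$ ``the minimal such prefix has length $f(n)$ and $0^n\square^{f(n)}$ is accepted''. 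Handling the finitely many positions $n < N$ by brute force (hard-coding their membership) completes the construction, invoking again that $\reg[\alpha]$ is closed under finite modifications. I do not expect any deep difficulty beyond this alignment argument, since all the analytic work is already packaged inside Theorem~\ref{theo:infini}.
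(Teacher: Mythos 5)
Your proposal is correct and follows essentially the same route as the paper: apply Theorem~\ref{theo:infini} to get $L'$ and $N$, take $f(n)$ to be the minimal witnessing prefix length (arbitrary/default otherwise), build the terminating automaton that runs an $L'$-automaton on the advice during the padding block, and hardcode the finitely many positions $n < N$. Your explicit treatment of the ``accept only at the first accepting state'' mechanism is a useful clarification of a point the paper's proof glosses over with ``is clearly terminating regular''.
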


\begin{proof} By applying Theorem \ref{theo:infini} we get a regular language $L'$ and $N \ge 0$ such that for all $n \ge N$, $\alpha[n:] \in L$ if and only if $\alpha[n:]$ has a finite prefix in $L'$. If $\alpha[n:] \in L$, let $f(n)$ be the length of the smallest prefix of $\alpha[n:]$ belonging to $L'$. We take $f(n)$ arbitrarily in the other cases to define a mapping $f : \mb{N} \rightarrow \mb{N}$. The set $\{0^n \lozenge^{f(n)} ~|~n \ge N \text{ and }\alpha[n:n+f(n)+1] \in L'\} = \{0^n \lozenge^{f(n)} ~|~n \ge N \text{ and }\alpha[n:] \in L\}$ is clearly terminating regular with advice $\alpha$. Thus $\{0^n \lozenge^{f(n)} ~|~n \ge 0 \text{ and }\alpha[n:] \in L\} \in \reg[\alpha]$ as well (we hardcode in the automaton what happens before $N$).

\end{proof}

\begin{corollary}

\label{prop:langreg}

Let $\alpha \in \Gamma^\omega$. For every language $L \in \reg^\infty[\alpha]$, there is a function $f: \mb{N} \rightarrow \mb{N}$ such that $\{w\square^{f(|w|)}~|~w\in L\} \in \reg[\alpha]$.

\end{corollary}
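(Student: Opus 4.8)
The plan is to reduce membership in $L$ to a membership question about suffixes of the fixed word $\alpha$, and then to use Theorem \ref{theo:infini} in the spirit of Corollary \ref{cor:positions}. First I would fix a deterministic $\omega$-automaton $\mc{A}$ over $((\Sigma\uplus\square)\times\Gamma)$ (a Muller or parity automaton) recognising the witnessing language $L'$, so that $L=\{w~|~w\otimes\alpha\in L'\}$. Reading the prefix $w\otimes\alpha[:m]$ (with $m=|w|$) drives $\mc{A}$ deterministically to a state $q(w)$, after which the run continues on $\square^\omega\otimes\alpha[m:]$. Hence $w\in L$ if and only if $\alpha[m:]\in L_{q(w)}$, where for each state $q$ the set $L_q:=\{\gamma\in\Gamma^\omega~|~\mc{A}\text{ accepts }\square^\omega\otimes\gamma\text{ started in }q\}$ is $\omega$-regular. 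So the question becomes: from the state $q(w)$ reached after reading $w$, decide the $\omega$-regular property $\alpha[m:]\in L_{q(w)}$ using only a bounded look-ahead on $\alpha$.

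The difficulty, compared with Corollary \ref{cor:positions}, is that different words $w$ of the same length $m$ reach different states $q(w)$, so the naive padding ``first prefix of $\alpha[m:]$ witnessing $\alpha[m:]\in L_{q(w)}$'' would depend on $w$ and not merely on $|w|$. To obtain a single $f$ I would apply Theorem \ref{theo:infini} not only to each $L_q$ but also to its complement $L_q^{c}$ (again $\omega$-regular). This yields, for every state $q$, finite-words regular languages $L_q'$ and $L_q''$ and a common threshold $N$ such that, for $m\ge N$, $\alpha[m:]\in L_q$ is equivalent to $\alpha[m:]$ having a prefix in $L_q'$, while $\alpha[m:]\notin L_q$ is equivalent to $\alpha[m:]$ having a prefix in $L_q''$. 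Since exactly one of the two alternatives holds, the membership of $\alpha[m:]$ in $L_q$ gets \emph{decided} after reading a finite prefix of $\alpha[m:]$, of some length $\lambda_q(m)$. Crucially, I then set $f(m):=\max_{q}\lambda_q(m)$, the first length at which \emph{all} states are simultaneously decided: this quantity depends only on $m$ and on the fixed data $\mc{A},L_q',L_q''$, never on the particular $w$, and it satisfies $f(m)\ge\lambda_{q(w)}(m)$.

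It then remains to realise $\{w\square^{f(|w|)}~|~w\in L\}$ as a member of $\reg[\alpha]$. The terminating automaton I would build first checks that the input has the shape $w\square^{t}$ (leading non-$\square$ letters, then only $\square$) and simulates $\mc{A}$ on the $w$-part to remember $q(w)$; during the $\square$-part it reads $\alpha[m],\alpha[m+1],\dots$ off the advice component and runs in parallel all the prefix-detectors for the $L_q'$ and $L_q''$, maintaining the monotone set of already-decided states. It accepts exactly when the input ends at the first position where every state has become decided (this pins down $t=f(m)$, since ``all decided'' is monotone in $t$) and where $q(w)$ has in particular been decided to lie in $L_{q(w)}$ (which certifies $w\in L$). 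For the finitely many lengths $m<N$ the values $f(m)$ and the correct verdicts depend only on the fixed subset $\{q~|~\alpha[m:]\in L_q\}$ and can be hard-coded, exactly as in the proof of Corollary \ref{cor:positions}. The main obstacle is precisely the one handled in the second paragraph: forcing the padding length to be a function of $|w|$ alone and to be \emph{forward-detectable}; the two-sided use of Theorem \ref{theo:infini} is what converts the otherwise non-local requirement ``no further state will ever be decided'' into the local, monotone condition ``every state is already decided'', which a terminating automaton reading $\alpha$ letter by letter can recognise.
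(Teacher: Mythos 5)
Your proof is correct, and it follows the same skeleton as the paper's: decompose acceptance of $w \otimes \alpha$ into a finite run reaching a state $q(w)$ followed by an $\omega$-regular condition on the suffix $\alpha[|w|:]$ (the languages $L_q$), then invoke Theorem \ref{theo:infini} and take a maximum so that the padding length depends only on $|w|$, and finally hard-code the finitely many lengths below the threshold. The one genuine difference is your \emph{two-sided} use of Theorem \ref{theo:infini}. The paper's sketch sets $f := \max f_1 \dots f_n$ where each $f_i$ comes from Corollary \ref{cor:positions} applied to $L_i$ alone, so $f_i(m)$ is defined \emph{arbitrarily} when $\alpha[m:] \notin L_i$; read literally, this runs into exactly the obstacle you isolate: a terminating automaton that has seen positive witnesses for some states cannot tell whether a so-far-silent state will still produce a witness later or will stay silent forever, so the event ``$t = f(m)$'' depends on the unread future of $\alpha$ and is not recognizable from the run. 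Your additional application of Theorem \ref{theo:infini} to the complements $L_q^c$ converts this non-local condition into the monotone, locally checkable condition ``every state has been decided, positively or negatively'', which is what makes the construction of the terminating automaton actually go through. Interestingly, the paper introduces precisely this complement trick only one step later, in the sketch of Corollary \ref{cor:autaut} (``\dots and the automata for their complements, since we also need to know when a property does not hold''); your write-up in effect imports that repair back into Corollary \ref{prop:langreg}, where it is equally needed. So: same approach, but your version supplies a detail the paper leaves to the reader, and is the more watertight of the two.
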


\begin{proof}[Proof sketch]

$L = \{w \in \Sigma^*~|~w \otimes \alpha \in \mc{L}(\mc{A})\}$ for some $\omega$-automaton $\mc{A}$. This automaton checks the belonging of suffixes of $\alpha$ to a finite number of $\omega$-regular languages $L_1 \dots L_n$ (as evoked in the beginning of this subsection). We take $f := \max f_1 \dots f_n$ where each $f_i$ is the function given by Corollary \ref{cor:positions} for $L_i$. The reader will get convinced that an automaton $\mc{A}'$ (for finite words) can be built so that $ \{w\lozenge^{f(|w|)}~|~w\in L\} = \{v~|~v\otimes \alpha[:|v|] \in \mc{L}(\mc{A}')\}$.

\end{proof}

The ideas developed above can be applied to obtain the result we claimed.

\begin{corollary}

\label{cor:autaut}

For every advice $\alpha$, $\aut[\alpha] = \aut^\infty[\alpha]$.

\end{corollary}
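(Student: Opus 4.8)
The plan is to prove the two inclusions separately. The inclusion $\aut[\alpha] \subseteq \auti[\alpha]$ is immediate, since $\reg[\alpha] \subseteq \regi[\alpha]$ means that any $\reg[\alpha]$-presentation is already a $\regi[\alpha]$-presentation. All the work lies in $\auti[\alpha] \subseteq \aut[\alpha]$. Starting from a $\regi[\alpha]$-presentation $(L, L_=, L_1, \dots, L_n)$ of a structure $\mf{A}$ over an encoding alphabet $\Sigma$, I would build a $\reg[\alpha]$-presentation of an isomorphic copy of $\mf{A}$ by replacing every code word $w$ with a padded word $w\square^{g(|w|)}$ for a suitable length function $g : \mb{N} \rightarrow \mb{N}$. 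This is exactly the mechanism behind Corollary \ref{prop:langreg}: pushing enough padding lets a terminating automaton look far enough into the advice to settle, before stopping, the $\omega$-regular suffix conditions that the non-terminating automaton would only resolve "at infinity".

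The key point is to use a \emph{single} function $g$ for all $n+2$ languages at once. Each of the finitely many $\omega$-automata describing $L, L_=, L_1, \dots, L_n$ reduces, after its finite run on a code, to a suffix condition $\alpha[m:] \in L_q$ for an $\omega$-regular $L_q$ depending on the reached state $q$. Applying Theorem \ref{theo:infini} to every such $L_q$ \emph{and} its complement yields finite witness languages, so that for large $m$ the membership $\alpha[m:] \in L_q$ is decided by the first finite prefix of $\alpha[m:]$ landing in one of these witness languages. I would let $g(m)$ be the length at which all these conditions are simultaneously resolved, i.e. the maximal first-witness length over all relevant $L_q$ (hardcoding the finitely many $m$ below the threshold of Theorem \ref{theo:infini}). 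Crucially this $g$ is \emph{intrinsic}: a terminating automaton sitting at advice index $m$ can run all witness DFAs in parallel while consuming padding letters and detect precisely the index $m + g(m)$ at which the last condition resolves, which is what lets it check that the padding has the correct length \emph{without} any counter.

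For consistency I would set $\nu'(w\square^{g(|w|)}) := \nu(w)$, which is well defined and injective because $g$ depends only on $|w|$, so the $\Sigma$-prefix $w$ is recovered from $w\square^{g(|w|)}$. For the relations, the decisive observation is that in a convolution the padding letter $\square$ and the convolution filler coincide, so the padded tuple $(w_1\square^{g(|w_1|)}) \otimes \cdots \otimes (w_{r}\square^{g(|w_r|)})$ agrees with $w_1 \otimes \cdots \otimes w_r$ on the first $\max_j |w_j|$ positions and is all-$\square$ afterwards. A terminating automaton can locate where each $\Sigma$-part ends, launch one resolution search per component, confirm membership of the underlying tuple in $L_i$ through the resolved suffix conditions, and accept exactly when every search has terminated. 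This shows that each padded language lies in $\reg[\alpha]$ and that the resulting tuple presents $\mf{A}$.

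The step I expect to be the main obstacle is precisely the interaction between per-component padding and convolution, together with the exact-length verification. If one naively took $g$ to be the pointwise maximum of the functions supplied by Corollary \ref{prop:langreg} and padded each convolution at its end, a non-monotone $g$ would misalign the components, since the trailing all-$\square$ block then has length $\max_j(|w_j| + g(|w_j|)) - \max_j |w_j|$ rather than $g(\max_j |w_j|)$. Defining $g$ as the intrinsic joint-resolution length and verifying the padding through parallel per-component witness searches sidesteps both the monotonicity requirement and the need for counting, and is the part demanding the most care. If convenient, one may additionally invoke Proposition \ref{prop:injective} and Proposition \ref{prop:binary} to assume the starting presentation is injective over a binary alphabet, but this is not essential to the argument.
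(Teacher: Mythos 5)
Your proof is correct and follows essentially the same route as the paper's own proof sketch: pad each code word $w$ with $\square^{g(|w|)}$, where the padding length is determined by applying Theorem \ref{theo:infini} to the suffix languages of all the $\omega$-automata of the presentation \emph{and} their complements, and taking a maximum so that a terminating automaton can resolve every $\omega$-condition before the input ends. Your extra care about convolution alignment and the intrinsic (counter-free) detection of the padding length fills in details that the paper's sketch leaves implicit, and does so correctly.
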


\begin{proof}[Proof sketch] We follow the same sketch as for Corollary \ref{cor:positions} and extend the $\reg^\infty[\alpha]$-presentation by adding a well-chosen finite number of padding symbols $\lozenge$ behind each word. The function $f$ is now a maximum over the properties of suffixes checked by all the $\omega$-automata for the languages of the presentation and the automata for their complements, since we also need to know when a property does not hold.

\end{proof}

\subsection{Digression: relativization of $\MSO$-formulas}

The results of this subsection can be considered as a digression since they will not be helpful for the rest of our study. As an application of Theorem \ref{theo:infini}, we provide an original normal form for $\MSO$-formulas with free variables when interpreted in a fixed word model. We first take some abbreviations for $\MSO[<]$-formulas:  $\forall x\le y~\phi$ stands for $\forall x~(x\le y \rightarrow \phi)$ and $\exists x\le y~\phi$ for $\exists x~(x\le y \land \phi)$; similarily with set quantifications: $\forall X\le y~\phi$ for $\forall X~((\forall x~(x \in X \rightarrow x \le y)) \rightarrow \phi)$ and $\exists X\le y~\phi$ for $\exists X~((\forall x~(x \in X \rightarrow x \le  y)) \land \phi)$.

\begin{definition}[relativized formulas]

\label{def:restriction}

An $\MSO[<,\Gamma]$ formula $\phi$ with free variables $\overline{X}, \overline{x}, y$ is said to be \emph{relativized under} $y$ if 

$$\phi(\overline{X}, \overline{x}, y) = \bigwedge_{x \in \overline{x}} x \le y \wedge \bigwedge_{X \in \overline{X}} X \le y \wedge \psi(\overline{X}, \overline{x})$$

and every quantification in $\psi$ is of the form $Q z\le y$ or $Q Z\le y$.
\end{definition}

We note that relativized sentences provide a suitable logical formalism to describe the transformations performed by Mealy machines. The proof of the next fact follows from standard logic-automata transformations. 

\begin{fact} 

\label{fact:MMmso}

$\alpha \in \Gamma^\omega$ is the image of $\beta \in \Delta^\omega$ under some Mealy machine if and only there exists a tuple of  $\MSO[<, \Delta]$-formulas $(\phi_a(x))_{a \in \Gamma}$ relativized under $x$, such that for all $n \ge 0$, $\alpha[n] = a$ if and only if $\beta \models \phi_a(n)$. We call such a tuple a \emph{relativized $\MSO$-relabelling}.

\end{fact}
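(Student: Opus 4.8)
The plan is to prove the two implications separately, both resting on the standard correspondence between $\MSO$ over finite words and regular languages (Büchi--Elgot--Trakhtenbrot, see \cite{thomas1997}).

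For the forward direction (Mealy machine $\Rightarrow$ relativized relabelling), let $\mc{M} = (Q, q_0, \Delta, \Gamma, \delta, \theta)$ be a Mealy machine with image $\alpha$ on input $\beta$. The crucial observation is that the output letter satisfies $\alpha[n] = \theta(q_n, \beta[n])$, where $q_n = \delta^*(q_0, \beta[:n])$ is the state reached after reading the prefix $\beta[:n]$; hence $\alpha[n]$ is determined by the positions $\le n$ alone. I would encode the run of $\mc{M}$ on $\beta[:n+1]$ by a tuple of set variables $(X_q)_{q \in Q}$, where $X_q$ collects the positions $i$ at which the machine sits in state $q$ just before reading $\beta[i]$. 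The formula $\phi_a(x)$ then asserts, by existential quantification over such sets all bounded by $x$, that the $X_q$ partition $\{0, \ldots, x\}$, that $0 \in X_{q_0}$, that the transition constraint holds at each $i < x$ (if $i \in X_q$ and $\beta[i] = b$ then $i+1 \in X_{\delta(q,b)}$), and that the unique $q$ with $x \in X_q$ satisfies $\theta(q, \beta[x]) = a$. Since every quantifier is bounded by $x$, this is a relativized formula in the sense of Definition \ref{def:restriction}, and determinism of $\mc{M}$ guarantees that exactly one $\phi_a(n)$ holds at each $n$.

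For the converse (relativized relabelling $\Rightarrow$ Mealy machine), the defining feature of a relativized $\phi_a(x)$ is that all its quantifications are bounded by $x$, so the truth value of $\phi_a(n)$ in $\mf{W}^\beta$ depends only on the substructure induced by $\{0, \ldots, n\}$, that is, only on the finite prefix $\beta[:n+1]$. I would therefore associate to each $\phi_a$ the finite-word language $L_a = \{w \in \Delta^+ \mid w \models \phi_a(|w|-1)\}$, evaluated by binding $x$ to the last position of $w$. By Büchi--Elgot--Trakhtenbrot each $L_a$ is regular, and since the relabelling is well-defined the family $(L_a)_{a \in \Gamma}$ partitions $\Delta^+$. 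Taking a common deterministic automaton $(Q, q_0, \delta)$ recognising all the $L_a$, with accepting sets $(F_a)_{a \in \Gamma}$ partitioning the states reachable by non-empty words, I then define a Mealy machine with the same states, initial state and transition function, and with output $\theta(q, b) = a$ where $\delta(q, b) \in F_a$. A direct unwinding shows that after reading $\beta[:n+1]$ this machine has produced exactly $\alpha[0]\cdots\alpha[n]$.

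The main obstacle --- and essentially the only point requiring genuine care --- is making precise the equivalence ``relativized under $x$'' $\Leftrightarrow$ ``depends only on the prefix up to $x$'', which is what licenses the passage to regular languages of finite words in the converse direction. Once this is established (it amounts to the observation that a formula all of whose quantifiers are bounded by $x$ cannot distinguish $\mf{W}^\beta$ from its restriction to $\{0, \ldots, x\}$), both directions reduce to routine automata--logic translations.
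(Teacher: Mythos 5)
Your route is the paper's route: the paper dismisses this fact as following ``from standard logic-automata transformations'', and your two directions --- the B\"uchi encoding of the run by set variables bounded by $x$ for the forward implication, and B\"uchi--Elgot--Trakhtenbrot plus the prefix-dependence of relativized formulas for the converse --- are precisely those standard translations. The forward direction as you write it is correct: all quantifiers are bounded by $x$, so the formula is relativized in the sense of Definition~\ref{def:restriction}, and determinism gives that exactly one $\phi_a(n)$ holds at each $n$.

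There is, however, one step in your converse that is false as stated: you claim that ``since the relabelling is well-defined the family $(L_a)_{a \in \Gamma}$ partitions $\Delta^+$'', and correspondingly that the sets $F_a$ partition the reachable states, which is what makes your $\theta$ well-defined. This does not follow. The hypothesis constrains the formulas $\phi_a$ only on the single word $\beta$: for each $n$, exactly one $\phi_a(n)$ holds in $\mf{W}^\beta$. On a finite word $w$ that is not a prefix of $\beta$, several $\phi_a$ may hold at the last position, or none, so the $L_a$ may overlap or fail to cover $\Delta^+$. The repair is cheap and uses only what you already proved: take the product DFA tracking membership in every $L_a$ simultaneously, and define $\theta(q,b)$ to be \emph{some} $a$ with $\delta(q,b) \in F_a$, arbitrarily if there is none or more than one. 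Along the run on $\beta$ itself this ambiguity never arises: by your prefix-dependence lemma, $\beta[:n+1] \in L_a$ if and only if $\beta \models \phi_a(n)$, i.e.\ if and only if $\alpha[n] = a$, so the state reached after reading $\beta[:n+1]$ lies in $F_a$ for exactly one $a$, namely $\alpha[n]$. The machine therefore outputs $\alpha$ on input $\beta$, which is all the fact asks --- the Mealy machine must be correct on $\beta$, not realize the relabelling on every input.
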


We now consider formulas of the form $\exists y~\phi$ where $\phi$ is relativized under $y$. These formulas are far less expressive than full $\MSO$, since there is always a ``finite proof'' of their validity.

\begin{example}

Let $\phi:= \forall x\exists y~y>x\land P_a(y)$ meaning ``there are infinitely many letters $a$''. There is no relativized sentence equivalent to $\phi$, but among the suffixes $\alpha[n:]$ of a fixed word $\alpha$, this property either always or never holds.

\end{example}

We now show that such formulas (with free variables) are enough to describe the full power of $\MSO$ in a \emph{fixed} infinite word model.

\begin{corollary}

\label{cor:interres}

Let $\phi(\overline{X},\overline{x})$ be a $\MSO[<, \Gamma]$-formula and $\alpha \in \Gamma^\omega$ fixed. There is a formula $\psi(\overline{X}, \overline{x},y)$ relativized under $y$ such that for every tuple $\overline{A}$ of finite sets, and tuple $\overline{a}$ of positions: $\alpha \models \phi(\overline{A},\overline{a})$ if and only if $ \alpha \models \exists y~\psi(\overline{A},\overline{a},y).$

\end{corollary}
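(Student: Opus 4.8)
The plan is to reduce to Theorem \ref{theo:infini} through the standard automaton encoding of $\MSO$-formulas with free variables, crucially exploiting that the free set variables range only over \emph{finite} sets, so the encoded valuation has finite support. First I would translate $\phi(\overline{X},\overline{x})$ into an $\omega$-automaton $\mc{A}$ over the extended alphabet $\Gamma \times \{0,1\}^k$, where the $\{0,1\}^k$-component encodes, at each position, the membership in the free set variables and the location of the free first-order variables (as singletons); thus $\mc{A}$ accepts $\alpha \otimes \mathrm{ann}(\overline{A},\overline{a})$ if and only if $\alpha \models \phi(\overline{A},\overline{a})$. The key observation is that since $\overline{A}$ are finite sets and $\overline{a}$ finitely many positions, the annotation $\mathrm{ann}(\overline{A},\overline{a})$ is zero beyond some position $M$ (the maximum of the support); past $M$ the automaton merely reads the pure letters of $\alpha$ paired with the zero annotation. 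I would therefore define, for each state $p$ of $\mc{A}$, the $\omega$-regular language $L_p \subseteq \Gamma^\omega$ of words $\beta$ such that $\mc{A}$, started in $p$, accepts $\beta$ paired with the constant zero annotation.

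Next I would apply Theorem \ref{theo:infini} to each $L_p$: for the fixed word $\alpha$ there are a finite-words regular language $L'_p$ and a threshold $N_p \ge 0$ such that for all $n \ge N_p$, $\alpha[n:] \in L_p$ if and only if $\alpha[n:]$ has a finite prefix in $L'_p$. Setting $N := \max_p N_p$, I claim that $\alpha \models \phi(\overline{A},\overline{a})$ holds exactly when there exist a cut position $m$, a state $q$, and an endpoint $y$ with $m \ge \max(M,N)$ and $m \le y$ such that (i) some run of $\mc{A}$ on the annotated prefix $\alpha[:m] \otimes \mathrm{ann}(\overline{A},\overline{a})$ ends in $q$, and (ii) the factor $\alpha[m:y]$ lies in $L'_q$. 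Here $y$ is chosen as the right end of the witnessing prefix of $\alpha[m:]$ in $L'_q$. Sliding the cut point $m$ forward, past both the support $M$ and the constant $N$, is harmless because on the segment $[M,m)$ the annotation is zero and matches the pure letters of $\alpha$; this is precisely what lets me invoke the threshold characterization of Theorem \ref{theo:infini} even when the support $M$ happens to be small.

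Finally I would read off the relativized formula $\psi(\overline{X},\overline{x},y)$ following Definition \ref{def:restriction}. It asserts the relativization conjunction (all free variables $\le y$), that $m$ is a position $\le y$ lying above the support and above the constant $N$ (both expressible by quantifications bounded by $y$, the latter by counting that at least $N$ positions precede $m$), that a partial run of $\mc{A}$ on the annotated prefix up to $m$ exists (an existentially guessed labeling of the positions $\le m$ by states of $\mc{A}$, whose initial value, transition consistency and compatibility with $\overline{X},\overline{x}$ are checked locally), and that $\alpha[m:y] \in L'_q$ (an existentially guessed run of a finite automaton for $L'_q$ on the segment $[m,y]$). Every quantifier introduced is bounded by $y$, so $\psi$ is relativized under $y$, and by the equivalence above $\alpha \models \phi(\overline{A},\overline{a})$ if and only if $\alpha \models \exists y~\psi(\overline{A},\overline{a},y)$.

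I expect the main obstacle to be the threshold bookkeeping: Theorem \ref{theo:infini} only supplies the finite-prefix characterization for suffixes starting beyond $N$, whereas the support of the free variables can be arbitrarily small, so a naive cut at $M$ may fall below the threshold. The trick that resolves this is exactly to let the cut $m$ advance past $\max(M,N)$ while reading zero annotation (legitimate since the free variables have finite support) and to anchor the finite-prefix witness at $m$ rather than at $M$. A secondary, routine point is verifying that an $\omega$-automaton run on a bounded prefix and a finite-automaton run on a bounded factor are both faithfully captured by quantifications relativized under $y$, which is the usual existential second-order encoding of runs; I do not expect difficulty there.
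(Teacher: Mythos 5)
Your proposal is correct and is essentially the paper's own proof with its intermediate results inlined: the paper encodes the valuation as a finite characteristic word, observes that $\{\chi_{\overline{A}} \mid \alpha \models \phi(\overline{A})\} \in \regi[\alpha]$, and invokes Corollary \ref{prop:langreg} (itself built on Theorem \ref{theo:infini} applied to the per-state suffix languages, exactly as you do) to pad these words into a terminating $\reg[\alpha]$-language whose finite-word automaton is then translated back into the relativized formula, with the padded word's endpoint playing the role of your witness $y$. The only differences are bookkeeping: you slide the cut point past $\max(M,N)$ where the paper hardcodes the finitely many suffixes below the threshold $N$ inside Corollary \ref{cor:positions}, and you treat general tuples of free variables where the paper's sketch treats a single set variable.
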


\begin{proof}[Proof sktech]

We treat the case of formulas $\phi(X)$ with one free set variable. If $A \subseteq \mb{N}$ is a finite set, denote by $\chi_A\in \{0,1\}^*$ the finite word of length $\max A +1$ with $\chi[n] = 1$ iff $n \in A$. It follows from standard logic-automata translations that $\{\chi_A \otimes \alpha~|~A \text{ finite and }\alpha \models \phi(A)\}$ is an $\omega$-regular language, thus $\{\chi_A~|~A \text{ finite and }\alpha \models \phi(A)\} \in \regi[\alpha]$. From Theorem $\ref{prop:langreg}$ we get that $L := \{\chi_A\lozenge^{f(|\chi_A|)}~|~A \text{ finite and }\alpha \models \phi(A)\} \in \reg[\alpha]$ for some $f: \mb{N} \rightarrow \mb{N}$. Hence there is a finite words automaton $\mc{A}$ such that $L = \{w ~|~w \otimes \alpha[:|w|] \in \mc{L}(\mc{A})\}$. It can be translated back into a formula $\exists y~ X \le y \wedge \psi(X,y)$ with restricted quantifications, where $X$ describes the possible set of positions labelled by $1$.

\end{proof}

\section{Complexity of advices when describing structures}

\label{sec:msot}

After the first results of the previous section on advice automatic structures, we are now able to understand which preorder they describe over infinite words. Corollary \ref{cor:autaut} implies in particular that $\aut[\alpha] \subseteq \aut[\beta]$ if and only if $\aut^\infty[\alpha] \subseteq \aut^\infty[\beta]$. The objective of this section is to show equivalence with $\waut[\alpha] \subseteq \waut[\beta]$ and give several other characterizations. The climax lies in Theorem \ref{theo:caraut} and Theorem \ref{theo:2WFT}, where we relate our notions to well-known logical transformations and finite transducers.

  \begin{definition}

A ($k$-copying) \emph{$\MSO$-transduction ($\MSOT$)} from $\Delta^\omega$ to $\Gamma^\omega$ is a tuple of $\MSO[<, \Delta]$-formulas with free first-order variables.

\vspace*{-0.3cm}

$$(\phi^a_1(x))_{a \in \Gamma} \dots (\phi^a_k(x))_{a \in \Gamma}, (\phi^{<}_{i,j}(x,y))_{1 \le i,j \le k})$$

\end{definition}

The semantics of an $\MSOT$ $\tau$ is defined as that of an $\MSO$-interpretation in $k$ disjoint copies of a host word structure. More precisely, the structure $I_\tau(\mf{W}^\beta)$ (not necessarily a word) has signature $\{<, (P_a)_{a \in \Gamma}\}$ and is defined as follows:

\begin{itemize}

\item $ \dom(I_\tau(\mf{W}^\beta)) = \bigcup_{\substack{1 \le i \le k}} \{(n,i)~|~\text{ there is } a \in A \text{ such that }\beta \models \phi^a_i(n)\}$;

\item if $(n,i) \in \dom(I_\tau(\mf{W}^\beta))$, then $(n,i) \in P_a$ if and only if $\beta \models \phi^a_i(n)$;

\item if $(m,j) \in \dom(I_\tau(\mf{W}^\beta))$, then $(n,i) < (m,j)$ if and only if $\mc{U}\models \phi^{<}_{i,j}(n,m)$.
\end{itemize}

Since we are interested in transformations between words, we only consider the case when $I_\tau(\mf{W}^\beta)$ is a word structure (what is syntactically definable by adding an \mbox{$\MSO[<,\Delta]$}-sentence for the domain). Each $\MSO$-transduction $\tau$ then realizes a (partial) function $\tau: \Delta^\omega \rightarrow \Gamma^\omega$ whose domain is $\{\beta \in \Delta^\omega~|~I_{\tau}(\mf{W}^\beta) \text{ is (isomorphic to) a word structure}\}$, the image $\tau(\beta)$ of $\beta$ being the unique $\alpha$ such that $I_\tau(\mf{W}^\beta) \simeq \mf{W}^\alpha$.

The reader is asked to keep in mind that $\MSOT$ define a certain class of functions on infinite strings, even if our main concern is only the existence of a transduction between two fixed words. We write $\alpha \pc_{\MSOT} \beta$ if there is a $\MSO$-transduction $\tau$ such that $\tau(\beta) = \alpha$.

\begin{remark}
$\MSO$-relabelings (see Definition \ref{def:relabel}), relativized $\MSO$-relabelings, and $1$-dimensional $\MSO$-interpretations can all seen as syntactical fragments of 1-copying $\MSOT$.

\end{remark}

\begin{remark}

Even if $\MSO$-interpretations in general are not closed under composition, it is the case of $\MSOT$ (see e.g. \cite{alur2012}, the problem of tuples of sets disappears). Thus $\pc_{\MSOT}$ is transitive, and is even a preorder over infinite words. Furthermore, the composition of an $\MSOT$ and a $\S$-interpretation can be realized by an unique $\S$-interpretation.

\end{remark}

\begin{example} \item
\label{ex:msotrob}
\begin{enumerate}

\item If $\alpha \REGi \beta$ then $\alpha \pc_{\MSOT} \beta$ (thus $\pc_{\MSOT}$ is a more generic notion of comparison than the preorders of Section \ref{sec:reglang}, we shall see that the increase of power is strict);

\item modifying a finite part of $\alpha$ does not change its $\MSOT$-degree;

\item if the $\mu_n$ are the morphisms of Definition \ref{def:duplication}, then $ \mu_n(\alpha) \pc_{\MSOT} \alpha$ and $\alpha \pc_{\MSOT} \mu_n(\alpha)$ for all $n \ge 1$;

\item if $w$ is a finite word, we denote by $\widetilde{w}$ its mirror image; if $\alpha:= w_1 \# w_2 \# \cdots \in (\Gamma^*\#)^\omega$, let $\widetilde{\alpha}:= \widetilde{w_1} \# \widetilde{w_2} \#\cdots$; then $\widetilde{\alpha} \pc_{\MSOT} \alpha$.

 \end{enumerate}

\end{example}

 \begin{figure}[h!]
    \begin{center}
      \begin{tikzpicture}[scale=0.7]
        \tikzstyle{etat} = [draw, fill=white];
         \node () at (-1.5,0) {$\alpha:=$};
         \node (1) at (0,0) {$a$};
         \node  (2) at (1.5,0) {$b$};
         \node  (3) at (3,0) {$\#$};
         \node  (4) at (4.5,0) {$b$};
         \node  (5) at (6,0) {$a$};
         \node  (6) at (7.5,0) {$a$};
         \node  (7) at (9,0) {$\#$};
         \node  (8) at (10.5,0) {$\dots$};
         
         \draw[->] (1) edge[]  node [above] {} (2);
         \draw[->] (2) edge[]  node [above] {} (3);
         \draw[->] (3) edge[]  node [above] {} (4);
         \draw[->] (4) edge[]  node [above] {} (5);
         \draw[->] (5) edge[]  node [above] {} (6);
         \draw[->] (6) edge[]  node [above] {} (7);
         \draw[->] (7) edge[]  node [above] {} (8);
         
         \node () at (-1.5,-1.5) {$\widetilde{\alpha}=$};
         \node (1) at (0,-1.5) {$a$};
         \node  (2) at (1.5,-1.5) {$b$};
         \node  (3) at (3,-1.5) {$\#$};
         \node  (4) at (4.5,-1.5) {$b$};
         \node (5) at (6,-1.5) {$a$};
         \node  (6) at (7.5,-1.5) {$a$};
         \node  (7) at (9,-1.5) {$\#$};
         \node  (8) at (10.5,-1.5) {$\dots$}; 
         
         \draw[->] (2) edge[]  node [above] {} (1);
         \draw[->] (1) edge[bend right = 30]  node [above] {} (3);
         \draw[->] (3) edge[bend right = -30]  node [above] {} (6);
         \draw[->] (6) edge[]  node [above] {} (5);
         \draw[->] (5) edge[]  node [above] {} (4);
         \draw[->] (4) edge[bend right = 30]  node [above] {} (7);       
         \draw[->] (7) edge[bend right = -30]  node [above] {} (8);      
      \end{tikzpicture}
    \end{center}
    \vspace*{-0.7cm}
    \caption{\small \label{fig:MSOT} Reversing the factors with an $\MSOT$}
\end{figure}

\subsection{From automatic structures to $\MSO$-transductions}

When searching a complete structure of an advice, a naive idea is that $\mf{W}^\alpha \in \auti[\beta]$ if and only if $\auti[\alpha] \subseteq \auti[\beta]$. However, this statement will turn out to be false. We need a stronger object that is presented in Definition \ref{def:pf}.

\begin{definition}[\cite{loding2007}] \label{def:pf}

Let $\mf{A} = \langle A, R_1 \dots R_n \rangle$ be a structure, we define its \emph{weak powerset structure} $\mc{P}^f(\mf{A})$ as the structure $\langle \mc{P}^f(A), R'_1 \dots R'_n, \subseteq \rangle$ where:
\item
\begin{itemize}

\item $\mc{P}^f(A)$ is the weak powerset (set of finite subsets) of $A$;

\item $\subseteq$ is the inclusion relation on $\mc{P}^f(A)$;

\item $R'_i(A_1, \dots, A_{r_i})$ holds in $\mc{P}^f(\mf{A})$ if and only if $A_1, \dots A_{r_i}$ are singletons $\{a_1\}, \dots, \{a_{r_i}\}$ and $R_i(a_1, \dots a_{r_i})$ holds in $\mf{A}$.

\end{itemize}

\end{definition}

\begin{remark} $\mf{A}$ is $\FS$-interpretable in $\mf{B}$ if and only if $\mf{A}$ is $\FO$-interpretable in $\mc{P}^f(\mf{B})$.
\end{remark}

\begin{fact} \label{fact:pf}  $\auti[\alpha]$ is the class of structures $\FO$-interpretable in $\mc{P}^f(\mf{W}^\alpha)$ (see Proposition \ref{prop:si}). We have $\auti[\alpha] \subseteq \auti[\beta]$ if and only if $\mc{P}^f(\mf{W}^\alpha) \in \auti[\beta]$.

\end{fact}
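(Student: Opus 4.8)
The plan is to prove the two assertions of Fact~\ref{fact:pf} separately, relying on the weak-powerset remark and Proposition~\ref{prop:si}. For the first assertion, recall that Proposition~\ref{prop:si}(2) states $\auti[\alpha]$ is exactly the class of $\FS$-interpretable structures in $\mf{W}^\alpha$, and the remark preceding this fact states that $\mf{A}$ is $\FS$-interpretable in $\mf{B}$ if and only if $\mf{A}$ is $\FO$-interpretable in $\mc{P}^f(\mf{B})$. Chaining these two equivalences with $\mf{B} = \mf{W}^\alpha$ immediately gives that $\auti[\alpha]$ is the class of structures $\FO$-interpretable in $\mc{P}^f(\mf{W}^\alpha)$. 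This first part is therefore a one-line combination of two already-stated results.

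For the second assertion, I would prove the two implications of the equivalence $\auti[\alpha] \subseteq \auti[\beta] \iff \mc{P}^f(\mf{W}^\alpha) \in \auti[\beta]$. The direction from left to right is the easy one: the structure $\mc{P}^f(\mf{W}^\alpha)$ is trivially $\FO$-interpretable in itself (via the identity interpretation), so by the first assertion it belongs to $\auti[\alpha]$; if $\auti[\alpha] \subseteq \auti[\beta]$, then $\mc{P}^f(\mf{W}^\alpha) \in \auti[\beta]$ directly.

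The converse direction is where the real work lies, and I expect it to be the main obstacle. Assume $\mc{P}^f(\mf{W}^\alpha) \in \auti[\beta]$ and take an arbitrary $\mf{A} \in \auti[\alpha]$; I want $\mf{A} \in \auti[\beta]$. By the first assertion, $\mf{A}$ is $\FO$-interpretable in $\mc{P}^f(\mf{W}^\alpha)$. The hypothesis $\mc{P}^f(\mf{W}^\alpha) \in \auti[\beta]$ means, again by Proposition~\ref{prop:si}(2) and the remark, that $\mc{P}^f(\mf{W}^\alpha)$ is itself $\FO$-interpretable in $\mc{P}^f(\mf{W}^\beta)$. The natural move is then to compose these two interpretations using the closure property of Fact~\ref{fact:compo}(1), which says $\FO$-interpretations compose; this would give $\mf{A}$ directly $\FO$-interpretable in $\mc{P}^f(\mf{W}^\beta)$, hence $\mf{A} \in \auti[\beta]$ by the first assertion, as desired. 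The delicate point to check is that both interpretations in the chain are genuinely $\FO$-interpretations (possibly multidimensional) so that Fact~\ref{fact:compo}(1), which has no dimensionality restriction for $\FO$ unlike the $\MSO$ case, applies cleanly; in particular one must be careful that the $\FS$-interpretation hidden inside $\mf{A} \in \auti[\alpha]$ has been correctly converted into an $\FO$-interpretation over the powerset structure before composing, which is exactly what the first assertion and the preceding remark guarantee.
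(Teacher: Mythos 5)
Your proposal is correct and follows exactly the route the paper intends: the first assertion is the chaining of Proposition~\ref{prop:si}(2) with the remark that $\FS$-interpretability in $\mf{B}$ coincides with $\FO$-interpretability in $\mc{P}^f(\mf{B})$, and the equivalence then follows from the identity interpretation in one direction and unrestricted composition of $\FO$-interpretations (Fact~\ref{fact:compo}(1)) in the other. The paper itself treats this as an immediate consequence (calling the characterization ``in some respects, trivial''), and your careful check that the composition step avoids the dimensionality pitfalls of $\S$- and $\MSO$-interpretations is precisely why the statement holds.
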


This result provides a characterization which is abstract and, in some respects, trivial. Nevertheless, we get the intuition that powerset structures are a key notion to understand advice automaticity. In the sequel, a ($\Delta$-labelled) tree structure has the form $\langle A, <, (P_a)_{a \in \Delta} \rangle$ where the domain $A$ is a prefix-closed subset of $\{0,1\}^*$, $w<w'$ holds whenever $w$ is a prefix of $w'$ and the $P_a$ label the nodes of $A$ with $a \in \Delta$. Word structures are particular trees.

\begin{theorem}[\cite{loding2007}, Corollary 4.4] \label{theo:loding} Let $\mf{A}$ a structure and $\mf{T}$ a tree structure. If $\mc{P}^f(\mf{A})$ is $1$-dimensionally injectively $\FS$-interpretable in $\mf{T}$, then $\mf{A}$ is $1$-dimensionally injectively $\WMSO$-interpretable in $\mf{T}$.

\end{theorem}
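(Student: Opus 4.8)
The goal is to prove Theorem~\ref{theo:loding}: if $\mc{P}^f(\mf{A})$ is $1$-dimensionally injectively $\FS$-interpretable in a tree structure $\mf{T}$, then $\mf{A}$ itself is $1$-dimensionally injectively $\WMSO$-interpretable in $\mf{T}$. The plan is to exploit the special shape of the weak powerset structure, in particular the presence of the inclusion relation $\subseteq$ and the distinguished family of singletons, which is definable in $\mc{P}^f(\mf{A})$ since the singletons are exactly the atoms of the inclusion order (the minimal non-empty elements). Because the interpretation of $\mc{P}^f(\mf{A})$ is $\FS$, each finite subset $S \in \mc{P}^f(A)$ is encoded by a single finite set $U_S$ of nodes of $\mf{T}$, and injectivity means this encoding is a bijection between $\mc{P}^f(A)$ and a $\FO$/$\WMSO$-definable family of finite subsets of $\mf{T}$.

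First I would isolate, inside the given $\FS$-interpretation, the $\WMSO[\mf{T}]$-formulas that define: the set of codes $U_S$ (the domain formula $\phi_\delta$), the inclusion relation between two codes, and the codes of singletons. The key structural observation is that the original elements of $\mf{A}$ correspond precisely to the $\subseteq$-atoms of $\mc{P}^f(\mf{A})$, so an element $a \in A$ can be represented in $\mf{T}$ by the finite set $U_{\{a\}}$ coding the singleton $\{a\}$. This gives a candidate $1$-dimensional (indeed set-based) representation of the elements of $\mf{A}$. The relations $R_i$ of $\mf{A}$ are then directly read off from the relations $R'_i$ of $\mc{P}^f(\mf{A})$ restricted to singletons, which are already available as $\WMSO$-formulas over the codes.

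The main obstacle, and the real content of the theorem, is the passage from a \emph{set-based} ($\FS$) representation to a genuinely \emph{$1$-dimensional} ($\WMSO$, i.e. first-order-element-based) one. A priori each atom $\{a\}$ is still coded by a finite \emph{set} $U_{\{a\}}$ of tree nodes, not by a single node, so naively we only get an $\FS$-interpretation of $\mf{A}$, not a $1$-dimensional $\WMSO$-interpretation. The way to overcome this is to use the tree structure of $\mf{T}$: in a tree, a $\WMSO$-definable finite set can be canonically represented by a single distinguished node, for instance its lexicographically least element, or the root of the least subtree containing it, and this choice function is itself $\WMSO$-definable because $\WMSO$ on trees can quantify over finite sets and express ``$x$ is the $<$-least element of the finite set $X$''. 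Thus I would define a $\WMSO$-definable injection from codes $U_S$ of singletons to single nodes $x_S$ of $\mf{T}$, replacing the free set variable coding an element of $\mf{A}$ by a free first-order variable. This is precisely where the hypothesis that $\mf{T}$ is a \emph{tree} (and not an arbitrary structure) is indispensable, and where the result \cite{loding2007} does its work.

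Having fixed this node-representation, I would then rewrite every formula of the interpretation. For the domain, $\phi'_\delta(x)$ asserts that $x$ is the canonical node of some singleton code, i.e. there exists a finite set $X$ that is a valid singleton code and whose canonical node is $x$; the set quantifier here is legitimate in $\WMSO$. For equality I take genuine equality of nodes (injectivity), which is sound because the code-to-node map is injective on singletons. For each relation $R_i$, the formula $\phi'_i(x_1,\dots,x_{r_i})$ guesses the corresponding singleton codes $X_1,\dots,X_{r_i}$ via finite-set quantifiers, checks that each $x_j$ is the canonical node of $X_j$, and applies the interpretation's formula for $R'_i$ on singletons. The only step requiring care is verifying that all these rewritten formulas remain within $\WMSO$: since they only ever quantify over the finite sets that arise as codes (finiteness is guaranteed by the $\FS$-interpretation) and over their canonical nodes, no infinite-set quantification is introduced, so weak semantics suffice. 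Assembling these formulas yields the desired $1$-dimensional injective $\WMSO$-interpretation of $\mf{A}$ in $\mf{T}$, which completes the proof.
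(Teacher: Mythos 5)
There is a genuine gap, and it sits exactly at the step you yourself flag as ``where the result does its work'': the passage from singleton codes (finite sets of nodes) to single nodes. The canonical-node map you propose --- send a code $U_{\{a\}}$ to its lexicographically least element, or to the root of the least subtree containing it --- is indeed $\WMSO$-definable, but it is \emph{not injective}, and you assert injectivity (``which is sound because the code-to-node map is injective on singletons'') with no argument. Two distinct finite sets of tree nodes can perfectly well share their least element or their least common ancestor, so distinct elements $a \neq b$ of $\mf{A}$ may collapse to the same node, after which your equality formula, and the relation formulas that ``guess'' a code from its canonical node, are no longer well defined. Producing a \emph{definable injective} assignment of codes to nodes is precisely the content of the theorem, not a routine bookkeeping step, and no fixed choice function on finite sets can do it in general.

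The tell-tale sign is that your argument never uses the codes of the non-singleton sets: after identifying the singletons as the $\subseteq$-atoms, everything you do involves only the restriction of the given interpretation to singletons, i.e.\ only a $1$-dimensional injective $\FS$-interpretation of $\mf{A}$ itself in $\mf{T}$. If the argument were correct, it would therefore prove that \emph{every} structure $\FS$-interpretable in a tree is $1$-dimensionally $\WMSO$-interpretable in it --- which is false, and the paper itself exhibits the separation: by Remark \ref{rem:autw} there are non-ultimately-periodic $\alpha$ with $\mf{W}^\alpha$ automatic, i.e.\ $\FS$-interpretable in $(\mb{N},<)$; but a $1$-dimensional $\WMSO$-interpretation of $\mf{W}^\alpha$ in $(\mb{N},<)$ is a special case of an $\MSOT$, so it would give $\alpha \pc_{\MSOT} 0^\omega$ and force $\alpha$ to be ultimately periodic by Proposition \ref{prop:ult} (equivalently, compare Corollary \ref{cor:autspf}). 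What the hypothesis on $\mc{P}^f(\mf{A})$ buys --- and what the paper's self-contained proof in Appendix \ref{proof:theo:caraut} actually exploits --- is a counting argument: since \emph{every} finite set of atoms must receive its own code and inclusion must be recognized by a fixed automaton, a pigeonhole bound (Lemma \ref{lem:comb}: $2^{(K_1)^2}$ distinct subsets versus at most $2^{2|\qi| K_1}$ distinguishable automaton behaviours) shows that only boundedly many atoms can share a given ``index'' position. This yields a bounded-to-one index function $\ind : \atoms \rightarrow \mb{N}$ (Lemma \ref{lem:pro}), and it is this function --- not any canonical-node choice --- that provides the definable, injective re-encoding of elements by positions up to a bounded colour, from which the $1$-dimensional $\WMSO$-interpretation follows. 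Your sketch is missing exactly this idea.
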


In the case of advice automatic structures, Theorem \ref{theo:loding} is at the same time too generic and too restrictive. On the one hand, we only use interpretations in word structures $\mf{W}^\alpha$. On the other hand, we need arbitrarily dimensional $\FS$-interpretations, and they are not supposed to be injective. We will manage to meet this conditions, up to a slight modification of the advice, and the $\WMSO$-interpretation will be transformed into a more generic $\MSOT$.

\begin{corollary}  \label{cor:autincl}
	If $\mc{P}^{f}(\mf{W}^\alpha) \in \auti[\beta]$, then $\alpha \pc_{\MSOT} \beta$.
\end{corollary}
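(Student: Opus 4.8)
The plan is to feed the hypothesis through the normalization machinery of Section~\ref{sec:autstr} until it exactly matches the restrictive shape required by Theorem~\ref{theo:loding}, and then to upgrade the resulting $\WMSO$-interpretation into an $\MSOT$. First I would unfold the hypothesis: by Proposition~\ref{prop:si}(2), $\mc{P}^{f}(\mf{W}^\alpha) \in \auti[\beta]$ means that $\mc{P}^{f}(\mf{W}^\alpha)$ admits a $\regi[\beta]$-presentation, i.e. is $\FS$-interpretable in $\mf{W}^\beta$. This interpretation is a priori arbitrary-dimensional and non-injective, whereas Theorem~\ref{theo:loding} demands a $1$-dimensional injective $\FS$-interpretation into a tree. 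To bridge this gap I would apply Proposition~\ref{prop:injective} to make the presentation injective, then Proposition~\ref{prop:binary} to push it onto a binary encoding alphabet at the price of replacing the advice $\beta$ by $\mu_n(\beta)$ for some $n \ge 1$ (injectivity being preserved). By the remark following Proposition~\ref{prop:si}, an injective binary presentation yields a $1$-dimensional injective $\FS$-interpretation of $\mc{P}^{f}(\mf{W}^\alpha)$ in the word (hence tree) structure $\mf{W}^{\mu_n(\beta)}$, which is precisely the input format of Theorem~\ref{theo:loding}.

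Next I would invoke Theorem~\ref{theo:loding} with $\mf{A} := \mf{W}^\alpha$ and $\mf{T} := \mf{W}^{\mu_n(\beta)}$, obtaining that $\mf{W}^\alpha$ is $1$-dimensionally injectively $\WMSO$-interpretable in $\mf{W}^{\mu_n(\beta)}$. To turn this into an $\MSOT$ I would replace the weak semantics by the full one: since all interpretation formulas here have only free first-order variables (the target relations being $<$ and the unary $(P_a)_{a\in\Gamma}$, and the interpretation being $1$-dimensional), and since $\WMSO$ and $\MSO$ are expressively equivalent over the linear order $\langle \mb{N}, < \rangle$, each $\WMSO$-formula can be rewritten as an equivalent $\MSO[<,\Delta]$-formula over $\mf{W}^{\mu_n(\beta)}$. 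This produces a $1$-dimensional injective $\MSO$-interpretation of $\mf{W}^\alpha$ in $\mf{W}^{\mu_n(\beta)}$, which, as noted in the remark following the definition of $\MSOT$, is a $1$-copying $\MSOT$; its image word structure is $\mf{W}^\alpha$, so $\alpha \pc_{\MSOT} \mu_n(\beta)$. Finally, Example~\ref{ex:msotrob}(3) gives $\mu_n(\beta) \pc_{\MSOT} \beta$, and transitivity of $\pc_{\MSOT}$ (closure of $\MSOT$ under composition) yields $\alpha \pc_{\MSOT} \beta$, as required.

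The main obstacle is the first paragraph rather than the logical manipulations: Theorem~\ref{theo:loding} is sharply stated for $1$-dimensional injective interpretations, whereas the presentation coming from $\auti[\beta]$ gives neither guarantee for free. Absorbing this mismatch forces the detour through an injective binary presentation, which unavoidably alters the advice to $\mu_n(\beta)$ — so the crux is recognizing that this alteration is $\MSOT$-reversible and can be discharged by composing with Example~\ref{ex:msotrob}(3). The only other point requiring care is the passage from weak to full monadic semantics; here I rely on the classical $\WMSO = \MSO$ equivalence over $\langle \mb{N}, < \rangle$, which is legitimate precisely because the $1$-dimensional injective interpretation leaves no free set variables to worry about.
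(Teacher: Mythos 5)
Your proposal is correct and follows essentially the same route as the paper's own proof: normalize the presentation via Propositions~\ref{prop:injective} and~\ref{prop:binary} (accepting the advice change to $\mu_n(\beta)$), obtain a $1$-dimensional injective $\FS$-interpretation in $\mf{W}^{\mu_n(\beta)}$, apply Theorem~\ref{theo:loding}, view the resulting $\WMSO$-interpretation as an $\MSOT$, and discharge the advice alteration by composing with $\mu_n(\beta) \pc_{\MSOT} \beta$ from Example~\ref{ex:msotrob}. Your only addition is making explicit the weak-to-full semantics passage, which the paper asserts without comment; this is a harmless (and arguably welcome) elaboration, not a different argument.
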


\begin{proof} 

Assume the hypothesis holds. Then by Propositions \ref{prop:injective} and \ref{prop:binary}, $\mc{P}^{f}(\mf{W}^\alpha)$ has an injective $\regi[\mu_n(\beta)]$-presentation over a binary alphabet, for some $n \ge 1$. Thus  $\mc{P}^{f}(\mf{W}^\alpha)$ is $1$-dimensionally injectively $\FS$-interpretable in $\mf{W}^{\mu_n(\beta)}$ (remarks above). By applying Theorem \ref{theo:loding}, $\mf{W}^\alpha$ is $1$-dimensionally injectively $\WMSO$-interpretable in $\mf{W}^{\mu_n(\beta)}$. Such interpretations are a particular case of $\MSOT$, so $\alpha \pc_{\MSOT} \mu_n(\beta)$. Since $\mu_n(\beta) \pc_{\MSOT} \beta$ (Example \ref{ex:msotrob}), composing both transductions provides $\alpha \pc_{\MSOT} \beta$.

\end{proof}

Since the proof Theorem \ref{theo:loding} in \cite{loding2007} is rather long and involved, we provide in Appendix \ref{proof:theo:caraut} a direct and self-contained proof of Corollary \ref{cor:autincl}. It avoids useless work in the specific case of infinite words and arbitrarily dimensional interpretations.

\begin{remark} Corollary \ref{cor:autincl} can be extended to presentations using tree languages with (infinite) tree advice (see e.g. \cite{abu2017} for a definition).

\end{remark}

\begin{remark}[uniformly automatic classes]

Let $P$ a set of infinite words. A class of structures $\mc{C}$ (over a given signature) is said \emph{uniformly automatic with advice set $P$} if there are fixed automata whose languages with advice $\alpha$ describe presentations of each structure in $\mc{C}$ when $\alpha$ ranges in $P$ \cite{abu2017}. In particular, if $P$ is $\omega$-regular, the $\FO$-theory of the class $\mc{C}$ is decidable. Since the proof of Theorem \ref{theo:loding} only depend of the automata for the presentation of $\mc{P}^f(\mf{A})$, it can be generalized to show that if the uniform classes with $P \subseteq \Gamma^\omega$ are also uniform with $Q \subseteq \Delta^\omega$, then there is an $\MSO$-transduction $\tau$ such that $\tau(Q) = P$.
\end{remark}

We now have all the ingredients to provide effortlessly a useful and elegant characterization for the inclusion of classes.

\begin{theorem} \label{theo:caraut}

The following conditions are equivalent:

\item

\begin{enumerate}

\item $\waut[\alpha]\subseteq \waut[\beta]$; 

\item $\aut^\infty[\alpha]\subseteq \aut^\infty[\beta]$;

\item $\alpha \pc_{\MSOT} \beta$.

\end{enumerate}

\end{theorem}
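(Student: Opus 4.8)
The plan is to establish a cycle of implications among the three conditions, leveraging the machinery developed earlier in the section. The implications $(1) \Rightarrow (2)$ and $(3) \Rightarrow (1)$ should be the routine directions, while $(2) \Rightarrow (3)$ will carry the real content via Corollary \ref{cor:autincl}.

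First I would prove $(1) \Rightarrow (2)$. This is immediate from Theorem \ref{theo:countable}: since $\auti[\alpha]$ is exactly the subclass of countable structures of $\waut[\alpha]$, any inclusion $\waut[\alpha] \subseteq \waut[\beta]$ restricts to the countable structures, giving $\auti[\alpha] \subseteq \auti[\beta]$. Next, for $(2) \Rightarrow (3)$, I would observe that $\mc{P}^f(\mf{W}^\alpha)$ is a countable structure lying in $\auti[\alpha]$ (by Fact \ref{fact:pf}, since it is $\FO$-interpretable in itself, or directly because it admits a presentation built from $\mf{W}^\alpha$). Applying the hypothesis $\auti[\alpha] \subseteq \auti[\beta]$ yields $\mc{P}^f(\mf{W}^\alpha) \in \auti[\beta]$, which is exactly the hypothesis of Corollary \ref{cor:autincl}. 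That corollary then delivers $\alpha \pc_{\MSOT} \beta$, closing this implication.

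For $(3) \Rightarrow (1)$ I would argue semantically using $\S$-interpretations. By Proposition \ref{prop:si}, $\mf{A} \in \waut[\beta]$ if and only if $\mf{A}$ is $\S$-interpretable in $\mf{W}^\beta$, and similarly for $\alpha$. So suppose $\mf{A} \in \waut[\alpha]$, i.e. $\mf{A}$ is $\S$-interpretable in $\mf{W}^\alpha$. Since $\alpha \pc_{\MSOT} \beta$, there is an $\MSO$-transduction $\tau$ with $\tau(\beta) = \alpha$, that is, $\mf{W}^\alpha \cong I_\tau(\mf{W}^\beta)$; this exhibits $\mf{W}^\alpha$ via an $\MSO$-transduction of $\mf{W}^\beta$. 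I would then invoke the composition remark stated just after the definition of $\pc_{\MSOT}$ --- that the composition of an $\MSOT$ and an $\S$-interpretation can be realized by a single $\S$-interpretation --- to conclude that $\mf{A}$ is $\S$-interpretable in $\mf{W}^\beta$, hence $\mf{A} \in \waut[\beta]$. This gives $\waut[\alpha] \subseteq \waut[\beta]$.

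The main obstacle is concentrated in the step $(2) \Rightarrow (3)$, but essentially all of that difficulty has already been absorbed into Corollary \ref{cor:autincl} and, beneath it, Theorem \ref{theo:loding} of \cite{loding2007}: the passage from a $1$-dimensional injective $\FS$-interpretation of the \emph{powerset} structure to a $\WMSO$-interpretation of the original structure. The only genuinely new verification at this level is that $\mc{P}^f(\mf{W}^\alpha)$ indeed belongs to $\auti[\alpha]$ so that the inclusion hypothesis can be applied to it; this should follow directly from Fact \ref{fact:pf}. I would also take care, in the $(3) \Rightarrow (1)$ direction, that the composition lemma for $\MSOT$ with $\S$-interpretations is applied in the correct order (transduction producing $\mf{W}^\alpha$ from $\mf{W}^\beta$, then the $\S$-interpretation of $\mf{A}$ in $\mf{W}^\alpha$), since the general failure of $\MSO$-interpretations to compose under higher dimensions is exactly what the cited remark circumvents.
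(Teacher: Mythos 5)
Your proposal is correct and follows essentially the same route as the paper: $1\Rightarrow 2$ via Theorem \ref{theo:countable}, $2\Rightarrow 3$ by feeding $\mc{P}^f(\mf{W}^\alpha) \in \auti[\beta]$ (obtained from Fact \ref{fact:pf}) into the machinery of Corollary \ref{cor:autincl}, and $3\Rightarrow 1$ via Proposition \ref{prop:si} together with the remark that an $\MSOT$ composed with an $\S$-interpretation is again an $\S$-interpretation. The only cosmetic difference is that the paper re-inlines the proof of Corollary \ref{cor:autincl} (passing through an injective binary $\regi[\mu_n(\beta)]$-presentation and Theorem \ref{theo:loding}) whereas you invoke the corollary directly, which is the same argument.
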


\begin{proof}[Proof sktech.] We use Proposition \ref{prop:si} several times. The way from $1.$ to $2.$ is a consequence of Theorem \ref{theo:countable}. If $2.$ holds, we show that $\mc{P}^f(\mf{W}^\alpha)$ has an injective binary $\regi[\beta']$-presentation for some infinite word $\beta'$ so that $\beta' \pc_{\MSOT} \beta$. As remarked above, $\mc{P}^f(\mf{W}^\alpha)$ is thus $1$-dimensionally injectively $\FS$-interpretable in the tree $\mf{W}^{\beta'}$, hence Theorem \ref{theo:loding} provides a $1$-dimensionally $\WMSO$-interpretation of $\mf{W}^\alpha$ in $\mf{W}^{\beta'}$, what implies $\alpha \pc_{\MSOT} \beta'$. Composing $\MSOT$ concludes that $\alpha \pc_{\MSOT} \beta$. If $3.$ is true and $\mf{A}$ is $\S$-interpretable in $\mf{W}^\alpha$, then $\mf{A}$ is $\S$-interpretable in $\mf{W}^\beta$ by some composition argument.
\end{proof}

As a consequence, all the preorders defined by advice-presentable structures converge towards the same comparison via $\MSO$-transductions. This point gives a deep theoretical meaning to their study. Another virtue of Theorem \ref{theo:caraut} is the ability to translate immediately the results of Example \ref{ex:msotrob} in terms of advice automatic structures.

\begin{example} \item
\label{ex:autrob}
\begin{enumerate}

\item If $\alpha \REGi \beta$ then $\aut[\alpha] \subseteq \aut[\beta]$;
\item modifying a finite part of $\alpha$ does not modify $\aut[\alpha]$;
\item  if $\alpha \in (\Gamma^*\#)^\omega$, then $\aut[\alpha] = \aut[\widetilde{\alpha}]$.
 
 \end{enumerate}

\end{example}

\subsection{An equivalent computational model: two-way transducers}
We will complete our parallel with transductions via an equivalent simple machine model. Furthermore, it will be very useful to describe the structural properties of the preorder.

\begin{definition}

A \emph{two-way finite transducer} ($\WFT$) is a 6-tuple $( Q,q_0,  \Delta \uplus \{\vdash\}, \Gamma, \delta, \theta)$ where $Q$ is the finite set of states, $q_0 \in Q$ is initial, $\Delta$ is the input alphabet, $\Gamma$ is the output alphabet, $\delta: Q \times (\Delta\uplus \{\vdash\}) \rightarrow Q \times \{\triangleleft, \triangleright\}$ is the (partial) transition function, and $\theta: Q \times (\Delta\uplus \{\vdash\})\rightarrow \Gamma^*$ is the (partial) output function.

\end{definition}

A $\WFT$ has a two-way read-only input tape and a one-way output mechanism. The component $\{\triangleleft,\triangleright\}$ determines the left or right move of the head on the input tape. When the $\WFT$ is given $\beta \in \Delta^\omega$ as an input word, this tape contains $\vdash \beta$ (adding a symbol $\vdash$ helps the transducer to notice the beginning of its input when going left). The definition of the (partial) function $\Delta^\omega \rightarrow \Gamma^\omega$ realized the $\WFT$ follows like for Mealy machines.

\begin{remark} The transducer is said to be \emph{one-way} ($\DFT$, or just finite transducer) if all its transitions are of the form $(q, \triangleright)$. Mealy machines are a particular case of $\DFT$.
\end{remark}

\begin{example}

There is a three-state $\WFT$ outputting $\widetilde{\alpha}$ on every $\alpha \in (\Gamma \#)^\omega$. Its behavior is the following: scan a maximal $\#$-free block, read it in a reversed way while outputting, then output $\#$ and move to the next block.

\end{example}

Write $\alpha \pc_{\WFT} \beta$ if $\alpha$ is the image of $\beta$ under a function realized by some $\WFT$. We now give some basic properties of these transductions.

\begin{fact} \label{fact:ultper} If $\alpha$ is ultimately periodic, for every string $\beta$ we have $\alpha \pc_{\WFT} \beta$.

\end{fact}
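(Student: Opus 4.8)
The plan is to build, for an arbitrary target $\alpha = u v^\omega$ with $u \in \Gamma^*$ and $v \in \Gamma^*$ a nonempty period block, a single $\WFT$ that ignores the content of its input $\beta$ and simply emits $\alpha$. The key observation is that the output of an ultimately periodic word requires no information from $\beta$ whatsoever; a $\WFT$ may output arbitrarily long strings via its output function $\theta$ on each transition, so we only need finitely many states to cycle through the prefix $u$ and then loop forever on the period $v$.

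First I would fix a representation $\alpha = u v^\omega$. I would then design the transducer with a small chain of ``prefix'' states $p_0, p_1, \dots$ whose purpose is to output $u$ letter by letter (or all at once from $q_0$, since $\theta$ ranges over $\Gamma^*$), followed by a single ``period'' state $q$ on which the machine loops, outputting $v$ on each visit. Concretely, one can take $q_0$ with $\theta(q_0, \vdash) = uv$ and $\delta(q_0, \vdash) = (q, \triangleright)$, then on the first letter $b$ of $\beta$ set $\theta(q, b) = v$ and $\delta(q, b) = (q, \triangleright)$, continuing to move right forever. Because the transition and output functions may depend trivially on the input letter (they give the same value for every $b \in \Delta$), the behaviour is independent of $\beta$, and the concatenation of outputs along the infinite rightward run is exactly $u v v v \cdots = \alpha$.

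I would then verify that this run is well defined for every $\beta \in \Delta^\omega$: the head only ever moves right, it reads $\vdash$ once and then one genuine letter of $\beta$ per step, so the run is infinite and the machine produces an infinite output, namely $\alpha$. Hence $\alpha \pc_{\WFT} \beta$ for every $\beta$, as claimed. Note that this construction never uses the two-way capability or the endmarker in an essential way, so in fact the witnessing transducer can be taken to be one-way (a $\DFT$), which is consistent with the intuition that producing a fixed periodic stream is the simplest possible transduction.

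There is essentially no obstacle here; the only mild care needed is the bookkeeping to emit the finite prefix $u$, which is handled either by a length-$|u|$ chain of states or by folding $u$ into the output on the initial $\vdash$-transition. The statement is genuinely a sanity check confirming that ultimately periodic words sit at the bottom of the $\pc_{\WFT}$ preorder, mirroring the earlier observation that they form the least degree for the language-based preorders.
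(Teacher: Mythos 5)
Your proof is correct: the paper states this fact without proof (treating it as immediate), and your construction—a transducer that ignores its input, emits the prefix $u$ on the $\vdash$-transition, and then loops rightward outputting the nonempty period $v$ at every step—is exactly the obvious argument the paper has in mind. Your observation that the witness is in fact one-way also matches the paper's later Remark that ultimately periodic words are the least $\DFT$-degree as well.
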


\begin{lemma} \label{lem:wftper} Let $\mc{T}$ be a $\WFT$ transforming $\beta$ into $\alpha$. If $\alpha$ is not ultimately periodic, there is an integer $N$ such that the run of $\mc{T}$ does not visit more than $N$ times each position of the string $\vdash \beta$.
\end{lemma}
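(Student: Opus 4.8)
The plan is to exploit two-way determinism through the notion of \emph{configuration}. A configuration of $\mc{T}$ on input $\beta$ is a pair $(q,p)$ where $q \in Q$ is the current state and $p$ is the position of the head on the tape $\vdash\beta$. Since $\mc{T}$ is deterministic and the tape content $\vdash\beta$ is fixed, the configuration reached at any step of the run entirely determines all subsequent steps and outputs. The central claim I would prove is that \emph{no configuration occurs twice} along the (infinite) run of $\mc{T}$ on $\beta$.

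I would establish this claim by contradiction. Suppose some configuration $(q,p)$ appears at two distinct steps $t_1 < t_2$. By determinism and the fixedness of the input, the run from step $t_1$ onward coincides, up to the time shift $t_2 - t_1$, with the run from step $t_2$ onward; hence the run is eventually periodic with period $t_2 - t_1$. Let $u$ be the finite output emitted strictly before step $t_1$, and let $v$ be the output emitted between steps $t_1$ and $t_2$. If $v = \epsilon$, then from step $t_1$ on the transducer loops forever without emitting anything, so $\alpha = u$ would be finite, which is impossible. Otherwise $v \neq \epsilon$ and the run emits $\alpha = u\, v^\omega$, which is ultimately periodic, contradicting the hypothesis on $\alpha$. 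In both cases we reach a contradiction, so the claim holds.

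From the claim the conclusion is immediate. Fix a position $p$ of $\vdash\beta$. Each visit of the run to $p$ corresponds to some configuration $(q,p)$, and by the claim these configurations are pairwise distinct; since there are only $|Q|$ possible states, position $p$ is visited at most $|Q|$ times. Taking $N := |Q|$ yields the statement. (As a sanity check, since each position is visited boundedly while the run is infinite—it produces the infinite word $\alpha$—the head position necessarily tends to infinity, which is consistent.)

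I do not expect a genuine obstacle here: the point is that for \emph{two-way deterministic} machines the right invariant is the pair (state, position) rather than a full crossing sequence, so a single repeated configuration already forces periodicity. The only mild subtlety worth spelling out carefully is the \textbf{case split on whether the loop is productive} ($v = \epsilon$ versus $v \neq \epsilon$): the unproductive case must be ruled out using that $\alpha$ is infinite, while the productive case is ruled out using that $\alpha$ is not ultimately periodic. Both exclusions are needed to cover every way a repeated configuration could arise.
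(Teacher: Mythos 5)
Your proof is correct and follows essentially the same route as the paper's: a repeated configuration $(q,p)$ forces, by determinism on the fixed input, a loop whose output is finite or ultimately periodic, so each position is visited at most $|Q|$ times. The paper states this in one line ("caught in a loop and must output an ultimately periodic word"); you merely spell out the case split on whether the loop is productive, which is a faithful elaboration rather than a different argument.
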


\begin{proof} Let $N$ be the number of states of $\mc{T}$. If $\mc{T}$ visits more than $N$ times a position, it is caught in a loop and must output an ultimately periodic word.

\end{proof}

When considering definable \emph{functions} between \emph{finite} strings, a well-known equivalence holds between $\MSOT$ and $\WFT$ (Theorem \ref{theo:engel}). The definitions of $\MSOT$ and $\WFT$ have to be slightly sharpened to get the exact correspondence, see details in \cite{engelfriet2001}.

\begin{theorem}[\cite{engelfriet2001}] \label{theo:engel}

(Partial) functions over finite words $\Delta^*\rightarrow \Gamma^*$ definable by $\MSOT$ are the (partial) functions realized by $\WFT$.
\end{theorem}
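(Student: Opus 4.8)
The plan is to establish the two inclusions separately, following the classical argument of Engelfriet and Hoogeboom. Throughout, we work with the sharpened definitions alluded to in the statement: the $\MSOT$ are restricted to those whose domain is a regular language, and the $\WFT$ are taken to be deterministic and to halt on every input in their domain, so that each side genuinely realizes a partial function $\Delta^* \rightarrow \Gamma^*$.

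For the inclusion $\WFT \subseteq \MSOT$, the idea is to encode the accepting run of a deterministic, halting $\WFT$ $\mc{T} = (Q,q_0,\dots)$ directly in $\MSO$. Since $\mc{T}$ halts, on an input $w$ each position is visited at most $|Q|$ times, so we may use a copy set indexed by pairs $(q,i)$ with $q \in Q$ and $1 \le i \le |Q|$. For each such pair one writes an $\MSO[<,\Delta]$-formula $\mathrm{visit}_{q,i}(x)$ expressing that position $x$ is entered for the $i$-th time in state $q$; the $\MSO$-definability of this predicate follows from the standard encoding of two-way runs by their crossing sequences, which obey purely local consistency constraints together with the initial and halting conditions. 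The output positions are then the copies at which $\theta$ emits a nonempty letter; the label formulas $\phi^a_i(x)$ read off that letter, and the order formulas $\phi^<_{i,j}(x,y)$ express that the corresponding visit to $x$ precedes the corresponding visit to $y$ along the (linearly ordered) run. A preliminary padding step splits a single transition that outputs a word of length greater than one into several consecutive copies, increasing the number of copies only by a constant factor.

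The reverse inclusion $\MSOT \subseteq \WFT$ is the substantial direction. Given a $k$-copying $\MSOT$ $\tau$ with label formulas $(\phi^a_i)$ and order formulas $(\phi^<_{i,j})$, the transducer must enumerate the output positions $(n,i)$ in the $\tau$-order and emit their labels. It keeps in its finite control the current copy index $i$ while holding its head on the current input position $n$; at each step it outputs the label of $(n,i)$, then moves to the $\tau$-successor output position $(n',i')$ and updates $i$ accordingly. The heart of the construction is a navigation lemma: for every $\MSO[<,\Delta]$-definable binary relation that is the graph of a partial function $x \mapsto f(x)$ on positions, there is a two-way finite automaton that, started with its head on $x$, drives it to $f(x)$ and halts. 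Applying this to the output-order successor function yields the desired $\WFT$, and the regularity of $\dom(\tau)$ lets a preliminary pass check membership in the domain.

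The main obstacle lies precisely in this navigation lemma and in evaluating the $\tau$-successor. The difficulty is that the automaton has only finite memory, whereas the truth of $\phi^<_{i,j}(n,m)$ may depend on global $\MSO$ properties of the entire input. This is resolved by the composition method for $\MSO$: the truth of any fixed formula with free first-order variables at given positions is determined by the $\MSO$-types, up to the quantifier rank of $\tau$, of the bounded number of factors into which those positions cut the word. There are finitely many such types, and each is computed by a single directional scan — left-to-right for prefix types and right-to-left for suffix types — so the automaton can evaluate every $\phi^a_i$ and $\phi^<_{i,j}$ on the fly, locate the unique successor by a bounded two-way search, and thereby realize $\tau$ deterministically. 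Verifying that the successor is always reachable by such a search, that distinct copies are correctly sequenced, and that the procedure terminates exactly on $\dom(\tau)$ is the delicate bookkeeping that the sharpened definitions are designed to accommodate; the full details are carried out in \cite{engelfriet2001}.
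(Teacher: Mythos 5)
The paper does not actually prove this statement: Theorem \ref{theo:engel} is imported verbatim from \cite{engelfriet2001}, and the text surrounding it only warns that the definitions of $\MSOT$ and $\WFT$ must be sharpened to obtain the exact correspondence, deferring all details to that reference. So there is no in-paper argument to compare yours against; the relevant comparison is with the proof of Engelfriet and Hoogeboom themselves, and your outline is a faithful reconstruction of it: the direction $\WFT \subseteq \MSOT$ via bounded crossing sequences of a deterministic halting two-way machine (each position visited at most $|Q|$ times, run encoded and ordered in $\MSO$), and the converse via enumeration of output positions in the transduction order, reduced to the navigation lemma for $\MSO$-definable partial functions on positions. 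One caveat: the step you describe as "resolved by the composition method," where the two-way automaton evaluates prefix and suffix types "on the fly" by directional scans, is precisely the technical heart of \cite{engelfriet2001} — a two-way finite-state device cannot simply interleave full scans with navigation and return to where it was; this is what their look-around removal construction accomplishes, and your sketch asserts it rather than proves it, deferring back to the same citation. As a proof sketch at the level of detail the paper itself operates at, this is acceptable and structurally correct; as a self-contained proof it leaves the hardest lemma unproven.
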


Fairly recently, this result was extended to functions between infinite strings, but some complications quickly appear: deciding the validity of $\MSO$-sentences is not always possible without reading the (variable) input entirely. Thus $\WFT$ alone are not powerful enough and they need extra features like $\omega$-regular lookahead, i.e. ability to check instantly $\omega$-regular properties of the suffixes of the input starting in the position of the reading head.

\begin{theorem}[\cite{alur2012}] \label{theo:alur}

(Partial) functions over infinite words $\Delta^\omega\rightarrow \Gamma^\omega$  definable by $\MSOT$ are the (partial) functions realized by $\WFT$ with $\omega$-regular lookahead whose runs always visit the whole input string.

\end{theorem}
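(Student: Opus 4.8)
The plan is to establish the two inclusions separately, leaning on the finite-word equivalence of Theorem~\ref{theo:engel} and on the bounded-visit phenomenon of Lemma~\ref{lem:wftper}. Throughout, the enabling observation is that $\omega$-regular lookahead is interchangeable with logic: an $\omega$-regular property of the suffix $\beta[n{:}]$ inspected at position $n$ is captured by an $\MSO[<,\Delta]$-formula with a free first-order variable. This is exactly why lookahead must be added, since the paper already notes that $\WFT$ alone cannot realize every $\MSOT$-definable function.

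\textbf{From a $\WFT$ with lookahead to an $\MSOT$.} This is the more routine direction. Let $\mc{T}$ be such a transducer whose runs visit the whole input, and let $\alpha$ be its output on $\beta$. If $\alpha$ is ultimately periodic it is the image of $\beta$ under a constant $\MSOT$, so assume it is not. Lemma~\ref{lem:wftper} then bounds by some $N$ the number of visits to each input position, so each position contributes at most $Nc$ output letters, where $c$ bounds $|\theta(\cdot)|$. I would take finitely many copies indexed by pairs (pass number, offset within the emitted block) and index the resulting output positions by $(i,n)$ with $n$ an input position. The predicate ``position $n$ is visited in state $q$ on the $i$-th pass'' is $\MSO$-definable over $\mf{W}^\beta$ by the classical crossing-sequence encoding of two-way runs, and the lookahead tests appearing in $\delta,\theta$ become $\MSO[<,\Delta]$-formulas as noted above. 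The labels $(\phi^a_i)$ are read off $\theta$, and the order $(\phi^<_{i,j})$ is the order in which the one-way output head emits the copies, which is $\MSO$-definable from the run. This is precisely the data of an $\MSOT$.

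\textbf{From an $\MSOT$ to a $\WFT$ with lookahead.} This is the substantial direction. Given an $\MSOT$ $\tau$ with $k$ copies, the output $\alpha=\tau(\beta)$ is the structure $I_\tau(\mf{W}^\beta)$, whose domain sits inside $\{1,\dots,k\}\times\mb{N}$ and whose ordering $<$ is $\MSO$-definable. I would design the transducer to enumerate the output positions in increasing $<$-order, emitting at each step the letter forced by the label formulas. The engine is a \emph{navigation routine} computing, from the current output position $(i,n)$, its $<$-successor $(j,m)$: since the successor relation on output positions is $\MSO$-definable and a two-way head can follow an $\MSO$-definable jump between positions of a word, such a routine exists, and the $\omega$-regular lookahead is exactly what lets the head decide at $n$ whether the successor lies ahead and in which copy by instantly querying a property of the suffix. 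Alternatively one may route this through streaming string transducers, a register reformulation of $\MSOT$ from which a two-way machine with lookahead is extracted more mechanically; the finite-word construction of Theorem~\ref{theo:engel} can also be reused on finite windows.

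Two points form the technical heart, and the second is the main obstacle. First, the constructed run must visit the whole input. Because $\alpha$ is infinite while each input position $n$ is used only in the finitely many copies $\{1,\dots,k\}\times\{n\}$, infinitely many input positions must occur in $\dom(I_\tau(\mf{W}^\beta))$; hence enumerating output positions forces the head arbitrarily far to the right, and since a two-way head moves one cell at a time it sweeps every position, meeting the requirement. Second, and harder, is proving that the successor-jump is realizable by a genuinely \emph{finite} two-way machine with lookahead rather than by an unbounded search: one must show the head never stalls, that the interaction between the bounded two-way moves and the $\omega$-regular lookahead guesses is consistent, and that each jump is achieved within a bounded number of direction reversals. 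Calibrating this interaction so that the output order is realized faithfully while the input is swept is where the real work lies.
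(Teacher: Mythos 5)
You should first note a mismatch of scope: the paper never proves this statement at all. It is imported verbatim from \cite{alur2012} as a black box; the paper's own contribution in this area is Theorem~\ref{theo:2WFT} (the fixed-input variant, where the lookahead can be removed), proven in the appendix by going through streaming string transducers and a lookbehind-elimination argument. The literature proof of the present theorem likewise runs through $\SST$: one shows $\MSOT = \SST$ and then $\SST = \WFT$ with $\omega$-regular lookahead. So the route you mention only in passing (``alternatively one may route this through streaming string transducers'') is in fact the actual proof, whereas the route you develop is not carried to completion.

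Judged on its own, your attempt has a genuine gap in the direction $\MSOT \Rightarrow \WFT$ with lookahead, and a repairable flaw in the other one. The flaw: the theorem is about \emph{functions}, so a single transducer/transduction must work uniformly over all inputs; your per-input case split (``if $\alpha$ is ultimately periodic, take a constant $\MSOT$'') and your appeal to Lemma~\ref{lem:wftper} (which is stated for one fixed pair and yields a bound for that pair) do not produce one $\MSOT$. This can be fixed without the case split: by determinism, if a configuration (state, position) repeated, the run would cycle over finitely many positions and could not visit the whole input, so every position is visited at most $|Q|$ times, uniformly in the input; the crossing-sequence encoding then goes through. The gap: your navigation routine rests on the assertion that ``a two-way head can follow an $\MSO$-definable jump between positions of a word,'' with lookahead resolving the choices. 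Realizing an $\MSO$-definable successor relation by a \emph{finite-state} two-way device is precisely the hard content of the theorem --- it is the crux of the Engelfriet--Hoogeboom construction behind Theorem~\ref{theo:engel} on finite words, and of \cite{alur2012} on infinite words --- so invoking it as a known primitive is circular. You acknowledge this yourself (``where the real work lies''), but acknowledging the obstacle is not overcoming it: nothing in the proposal bounds the number of reversals per emitted letter, guarantees the head never stalls, or reconciles the lookahead answers with the two-way moves. A complete proof must either redo that construction adapted to $\omega$-words or establish the two steps $\MSOT \Rightarrow \SST$ and $\SST \Rightarrow \WFT$ with lookahead; neither is carried out here.
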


When looking closely at Theorem \ref{theo:engel} and Theorem \ref{theo:alur} in the light of our previous results, a question arises naturally: it is possible to get rid of the lookaheads when fixing the input infinite word? Indeed, we have always considered transformations from a \emph{fixed} word and we noticed in Subsection \ref{sec:autstr} that this restriction simplified certain notions. Theorem \ref{theo:2WFT} gives a positive answer. This involved result is not a direct consequence of Theorem \ref{theo:alur}, since we are not aware of a simple manner to remove the $\omega$-lookaheads when fixing the input.

\begin{theorem}

\label{theo:2WFT}

$\alpha \pc_{\MSOT} \beta$ if and only if $\alpha \pc_{\WFT} \beta$.

\end{theorem}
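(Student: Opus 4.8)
The plan is to prove the two implications separately; the reverse one is routine and the forward one carries all the difficulty.

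For the direction $\alpha \pc_{\WFT} \beta \Rightarrow \alpha \pc_{\MSOT} \beta$, suppose a $\WFT$ $\mc{T}$ satisfies $\mc{T}(\beta)=\alpha$. If $\alpha$ is ultimately periodic there is nothing to do, since a fixed ultimately periodic word is the image of any $\beta$ under a $1$-copying $\MSOT$ (one copy, domain all positions, the label of $n$ being $\MSO[<,\Delta]$-definable from $n$, the order inherited from $\beta$). Otherwise Lemma \ref{lem:wftper} bounds by some $N$ the number of visits of the run of $\mc{T}$ to each position of $\vdash\beta$; as $\alpha$ is infinite and aperiodic the run cannot remain in a bounded window, so its head position tends to infinity and, moving one cell at a time from the left endmarker, it crosses and hence visits every position. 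Thus $\mc{T}$ is a lookahead-free special case of a transducer whose run visits the whole input, and the easy direction of Theorem \ref{theo:alur} yields an $\MSOT$ $\tau$ realizing the same function, so $\tau(\beta)=\alpha$. (One may equally build $\tau$ by hand: the copies index the at most $N$ visits of each position, the labels and the output order being read off the run, which is $\MSO[<,\Delta]$-definable in $\mf{W}^\beta$ since crossing sequences are.)

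For the converse $\alpha \pc_{\MSOT}\beta \Rightarrow \alpha\pc_{\WFT}\beta$, I would start from Theorem \ref{theo:alur}: an $\MSOT$ with $\alpha=\tau(\beta)$ is realized by a $\WFT$ $\mc{T}'$ with $\omega$-regular lookahead, using finitely many lookahead languages $L_1,\dots,L_m$, whose run visits the whole input. The objective is to delete the lookahead, exploiting that $\beta$ is fixed, and here Theorem \ref{theo:infini} is decisive. Applying it to each $L_i$ and to its complement $\overline{L_i}$ produces finite-word regular languages $L_i'$, $L_i''$ and a threshold $N$ such that, for every $n\ge N$, $\beta[n:]\in L_i$ (resp. $\beta[n:]\notin L_i$) is witnessed by some finite prefix of $\beta[n:]$ lying in $L_i'$ (resp. in $L_i''$). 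Consequently the lookahead bit at a position $n\ge N$ is no longer an $\omega$-property: it is decided by scanning a \emph{finite} — though unboundedly long — portion of the future, namely until one of the deterministic automata for the $L_i'$, $L_i''$ accepts.

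I would then factor the construction through an annotated input. Let $\hat\beta$ be $\beta$ relabelled so that each position $n$ additionally carries the truth values of $\beta[n:]\in L_i$ for $1\le i\le m$. On $\hat\beta$ the machine $\mc{T}'$ becomes an ordinary lookahead-free $\WFT$ $\hat{\mc{T}}$, the queried information being now present in the letters, so $\alpha=\hat{\mc{T}}(\hat\beta)$. It remains to produce $\hat\beta$ from $\beta$ by a plain $\WFT$ $\mc{T}''$ and to compose, using the closure of two-way transducers under composition (or, to avoid invoking it, by inlining the resolution directly into the simulation of $\mc{T}'$). To tag a position $n\ge N$, the transducer $\mc{T}''$ sits on $n$, scans rightwards running the automata for the $L_i'$, $L_i''$ in parallel until all $m$ queries are resolved — which happens after finitely many steps by the previous paragraph, the first $N$ positions being hardcoded — reads off the bits, and must then come back to resume emitting $\hat\beta$ in order.

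The main obstacle is precisely this return: a head-only two-way machine cannot in general relocate a position it left after a rightward excursion of unbounded length, which is exactly why lookahead is unavoidable for varying inputs and can be removed only here thanks to $\beta$ being fixed. I would overcome it by organizing $\mc{T}''$ as a controlled two-way sweep in which every rightward probe is matched by a leftward return steered by the very automata furnished by Theorem \ref{theo:infini}, so that the finiteness of each resolution distance guarantees that every excursion terminates and that the head can recognize the position from which to continue; correctness then reduces to checking that $\mc{T}''$ outputs the letters of $\hat\beta$ in the correct order and that each position is eventually processed. Composing $\mc{T}''$ with $\hat{\mc{T}}$ finally produces a plain $\WFT$ mapping $\beta$ to $\alpha$, giving $\alpha\pc_{\WFT}\beta$ and closing the equivalence.
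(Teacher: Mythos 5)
Your backward direction ($\pc_{\WFT}$ implies $\pc_{\MSOT}$) is fine and is essentially the paper's argument: ultimately periodic outputs are trivially $\MSOT$-definable from any infinite input, and otherwise Lemma \ref{lem:wftper} forces the run to tend to infinity and hence visit every position, so Theorem \ref{theo:alur} applies. The forward direction, however, has a genuine gap, and it sits exactly where you placed your one-sentence apology. Theorem \ref{theo:infini}, applied to each $L_i$ and $\overline{L_i}$, does turn every $\omega$-regular lookahead query at a position $n \ge N$ into a finitely witnessed one (scan right until the automaton for $L_i'$ or $L_i''$ accepts), but this buys \emph{termination} of each probe, not \emph{relocatability}. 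After an excursion of unbounded, input-dependent length, a two-way finite transducer has no marker, no counter, and on a generic $\beta$ no landmark by which to re-identify the cell it left: if $\beta$ contains arbitrarily long blocks of one letter and a query is resolved only past the end of the block, every cell traversed on the way back looks identical, so no finite-state head can stop at $n$ rather than at any other cell of the block. Saying the return is ``steered by the very automata furnished by Theorem \ref{theo:infini}'' is not a mechanism: those automata identify the right endpoint of the probe, but nothing identifies $n$ from that endpoint. This is precisely the obstruction the paper flags when it says it is ``not aware of a simple manner to remove the $\omega$-lookaheads when fixing the input''; your proposal restates the difficulty rather than resolving it. A secondary flaw: your fallback on ``closure of two-way transducers under composition'' is unavailable, since over infinite words this closure (equivalently, transitivity of $\pc_{\WFT}$) is exactly what the paper remarks is unclear \emph{without} Theorem \ref{theo:2WFT}, so invoking it would be circular, and the ``inlining'' alternative runs into the same return problem.

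For contrast, the paper's proof avoids lookahead entirely: from $\alpha \pc_{\MSOT} \beta$ it obtains an $\omega$-streaming string transducer (Theorem \ref{theo:sst}), uses the fixedness of $\beta$ to collapse it to a \emph{simple} $\SST$ (Corollary \ref{cor:ssst}), simulates the simple $\SST$ by a $\WFT$ with \emph{lookbehind} (Lemma \ref{lem:ssstwftb}), and finally removes the lookbehind (Lemma \ref{lem:wftbwft}), which is harmless because a lookbehind concerns only a finite prefix of the input. The point is that all re-navigation there is driven by locally recoverable information — copylessness of the register updates guarantees a unique occurrence to resume from at each step — and never by returning from an unbounded rightward probe. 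To salvage your route you would have to supply exactly such a relocation mechanism, and none is provided.
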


\begin{proof}[Proof sketch]  If $\alpha \pc_{\WFT} \beta$, the result follows from Theorem \ref{theo:alur}. Indeed the transformation can be computed by some $\WFT$ (with a trivial $\omega$-lookahead)  whose run visits the whole input. Assume now that $\alpha \pc_{\MSOT} \beta$. It follows from \cite{alur2012} that $\alpha$ can be computed from $\beta$ by an \emph{$\omega$-streaming string transducer ($\SST$)}. We show in Appendix \ref{proof:theo:2WFT} that an $\SST$ can be transformed into a $\WFT$ with a \emph{lookbehind} feature, when the input word is fixed. Lastly, the lookbehind can be removed by some standard techniques (a lookbehind only deals with a \emph{finite} part of the input, which is not the case of an $\omega$-lookahead).

\end{proof}

\begin{remark}

Without Theorem \ref{theo:2WFT}, it is not clear that $\pc_{\WFT}$ is transitive.

\end{remark}

We finally note that the definition of $\WFT$ can slightly simplified.

\begin{fact}
We can consider w.l.o.g. in Theorem \ref{theo:2WFT} that the transducer reads and moves on an input tape containing $\beta$ instead of $\vdash \beta$.
\end{fact}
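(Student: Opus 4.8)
The plan is to show that the two formulations of Theorem~\ref{theo:2WFT} --- one where the transducer reads $\vdash\beta$ and one where it reads $\beta$ directly --- define the very same preorder over infinite words. Write $\pc_{\WFT}$ for the preorder of the current definition and $\pc'_{\WFT}$ for the one obtained by reading $\beta$. One inclusion is immediate: given a $\WFT$ reading $\beta$, I would simulate it by a $\WFT$ reading $\vdash\beta$ under the position shift $i \mapsto i+1$. The new machine first steps right off $\vdash$ onto $\beta[0]$, then copies the original transitions verbatim, and whenever the original machine makes a left move at position $0$ the simulating machine sits on $\vdash$ and reproduces the prescribed boundary effect. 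This gives $\pc'_{\WFT} \subseteq \pc_{\WFT}$, hence $\pc'_{\WFT} \subseteq \pc_{\MSOT}$ by Theorem~\ref{theo:2WFT}.

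For the converse I would start from a $\WFT$ $\mc{T}$ reading $\vdash\beta$ and eliminate the endmarker. The crucial observation is that $\vdash$ is read only when the head sits on the leftmost input cell, and from $\vdash$ the head can never move left; consequently every visit to $\vdash$ reduces to a single \emph{bounce}: entering $\vdash$ in a state $q$, the machine emits $\theta(q,\vdash)$ and either returns to $\beta[0]$ in the state prescribed by $\delta(q,\vdash)$ or halts. All of this left-boundary behaviour is thus captured by one fixed partial map on the state set, which I would precompute and fold into the finite control of a new machine $\mc{T}'$ reading $\beta$; on every cell other than the boundary, $\mc{T}'$ mimics $\mc{T}$ step for step.

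The main obstacle is exactly what the endmarker was introduced for: $\mc{T}'$ must recognise the moment its head reaches the leftmost cell $\beta[0]$ in order to trigger the bounce, and a bare two-way head cannot test this locally. I would resolve this by the standard elimination of endmarkers for two-way devices via crossing sequences: the behaviour of $\mc{T}$ on the region to the left of any cell boundary is summarised by a bounded sequence of states, and the leftmost summary is fixed once and for all by the bounce map above. Absorbing these summaries into the control of $\mc{T}'$ lets it replay the boundary excursions in the correct order, so that the output word is preserved --- the transducer-specific point, since unlike for acceptors the order of emissions matters, is that the left region contributes only the single bounce above and hence never reorders the output. The ultimately periodic case needs no work by Fact~\ref{fact:ultper}, and in the remaining case Lemma~\ref{lem:wftper} guarantees that every cell, the leftmost one included, is crossed at most $N$ times, so all crossing sequences have bounded length and the construction stays finite-state. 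This yields a $\WFT$ reading $\beta$ that maps $\beta$ to $\alpha$, giving $\pc_{\MSOT} = \pc_{\WFT} \subseteq \pc'_{\WFT}$ and hence the claimed equivalence of the two models.
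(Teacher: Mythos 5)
Your easy direction (simulating an endmarker-free transducer by one reading $\vdash\beta$) is fine, and you correctly isolate the only real difficulty in the converse: a deterministic two-way head on a tape containing just $\beta$ has no local test for ``I am on cell $\beta[0]$'', and attempting a left move to find out destroys the run. The problem is that your proposed resolution does not actually resolve it. ``Absorbing crossing-sequence summaries into the control'' is not a construction here: crossing sequences are position-indexed data about the global run, and a finite control cannot carry one summary per boundary; moreover the classical crossing-sequence/endmarker-elimination arguments are tools for turning two-way \emph{acceptors} into one-way ones over finite words, and they give no mechanism by which your two-way $\mc{T}'$ knows, at the moment it is about to move left, whether the cell underneath is leftmost. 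Knowing that ``the leftmost summary is fixed by the bounce map'' does not help either, since the bounce states can also occur at positions other than $0$, so state information alone cannot trigger the bounces at the right moments. As written, the step ``lets it replay the boundary excursions in the correct order'' assumes exactly the position-detection ability whose absence you identified as the main obstacle.

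The missing idea is temporal rather than spatial, and it is simpler than what you attempt. You already invoke Lemma~\ref{lem:wftper}: in the non-ultimately-periodic case (the other case being dispatched by Fact~\ref{fact:ultper}, as you do), the run of $\mc{T}$ visits the cell carrying $\vdash$ at most $N$ times. Hence \emph{all} visits to $\vdash$ occur before some fixed finite time $n_0$ of the run. Since both $\beta$ and $\alpha$ are fixed, the entire prefix of the run up to time $n_0$ is a single fixed finite object: hardcode into the control of $\mc{T}'$ the finite word output during it, together with the head position and state of $\mc{T}$ at time $n_0$. The machine $\mc{T}'$ first emits that word while walking right to that (fixed) position of $\beta$, then simulates $\mc{T}$ transition for transition; after $n_0$ the run of $\mc{T}$ never returns to $\vdash$, so the simulation never attempts a left move from $\beta[0]$ and no boundary test is ever needed. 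This is precisely the paper's argument; your proposal has all of its ingredients but, at the critical point, substitutes an appeal to a technique that does not apply in place of the one-line observation that makes the boundary problem disappear.
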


\begin{proof} If $\alpha$ is ultimately periodic, the result is obvious by Fact \ref{fact:ultper}. Else, according to Lemma \ref{lem:wftper}, there is $N \ge 0$ such that the transducer does not visit position $0$ of $\vdash \beta$ more than $N$ times. Thus after a certain time $n_0$ the run never visits $\vdash$. What it output before $n_0$ can be hardcoded, and the rest of the computation can be done on $\beta$ directly.

\end{proof}

\section{The two-way transductions hierarchy}

\label{sec:hierarchy}

We initiate in this section a study of the previous two-way transductions between infinite words. It can equivalently be seen as the preorder defined by $\MSOT$, or classes $\aut[\alpha]$, $\auti[\alpha]$ and $\waut[\alpha]$; but the $\WFT$ formulation is - as predicted above - the easiest way to deduce interesting statements. We shall use the term \emph{$\WFT$ hierarchy} to describe the ordered set of $\WFT$-degrees (i.e. equivalence classes of $\pc_{\WFT} \cap \succcurlyeq_{\WFT}$).

A more or less similar work has been done in \cite{endrullis2015}, with the relation $\pc_{\DFT}$ defined by computability via $\DFT$. This definition clearly describes a preorder. Even if no previous research exists on the $\WFT$ hierarchy, we shall see that several results on the $\DFT$ can be adapted in our context, after a variable amount of work. Note that $\pc_{\DFT} \subseteq \pc_{\WFT}$.

\begin{proposition} \label{prop:WFThierarchy} \item 

\begin{enumerate}

\item There are uncountably many distinct $\WFT$-degrees;

\item a set of $\WFT$-degree has an upper bound if and only if it is countable;

\item the $\WFT$ hierarchy has no greatest degree;

\item every $\WFT$-degree contains a binary string.

\end{enumerate}

\begin{proof}[Proof sketch] The proofs of similar statements for $\DFT$ in \cite{endrullis2015} raise no specific issue and their adaptation is straightforward.

\end{proof}

\end{proposition}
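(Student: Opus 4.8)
The four statements concern the structure of the $\WFT$ hierarchy, and the paper tells us that each is obtained by adapting the corresponding result for $\DFT$ from \cite{endrullis2015}. The plan is therefore to treat them one at a time, carrying over the $\DFT$ argument and checking that the extra power of two-way transducers causes no difficulty. The key structural facts I would keep at hand throughout are Proposition \ref{prop:subword} (subword complexity cannot increase under $\omega$-regular functions, hence under $\WFT$ by Theorem \ref{theo:2WFT} composed with Theorem \ref{theo:caraut}), Lemma \ref{lem:wftper} (bounded revisiting for non-periodic targets), and the fact from Theorem \ref{theo:caraut}/\ref{theo:2WFT} that $\pc_{\WFT}$ coincides with $\pc_{\MSOT}$ and is a genuine preorder.

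\textbf{Items 1 and 2.} For the uncountability of the set of degrees, I would first observe that each $\WFT$-degree is countable: fix $\beta$; since there are only countably many two-way transducers, the set $\{\alpha \mid \alpha \pc_{\WFT} \beta\}$ is countable, so every degree (in fact every downward cone) is countable. As there are continuum-many infinite words, they cannot all collapse into countably many countable degrees, giving uncountably many degrees for item 1. Item 2 is then essentially the same counting packaged as a characterization: if a set $S$ of degrees has an upper bound $\beta$, then $S$ is contained in the downward cone of $\beta$, which is countable, so $S$ is countable; conversely, given a countable family $\alpha_0,\alpha_1,\dots$, I would build a single $\beta$ dominating them all by interleaving, e.g. taking $\beta$ to encode the convolution or the shuffle $\alpha_0[0]\,\alpha_0\alpha_1[0]\cdots$ with separators $\#$, and then exhibit, for each $n$, a $\WFT$ that scans $\beta$, locates the $n$-th track (or the subsequence belonging to $\alpha_n$) and copies it out. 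Two-way motion makes the extraction routine. This is the step where I would be most careful: I must ensure the upper bound $\beta$ can be assembled so that each $\alpha_n$ is recoverable by a \emph{fixed} transducer reading the \emph{fixed} word $\beta$, which is exactly the regime Theorem \ref{theo:2WFT} addresses.

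\textbf{Item 3.} To show there is no greatest degree I would argue by contradiction: a greatest degree would be an upper bound for \emph{all} degrees, hence by item 2 the set of all degrees would be countable, contradicting item 1. This makes item 3 a one-line corollary of the first two, and I would present it that way rather than constructing a strictly larger word by hand.

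\textbf{Item 4.} For the claim that every degree contains a binary string, I would invoke the alphabet-reduction machinery already developed: by Example \ref{ex:msotrob}(3) the duplication morphisms satisfy $\mu_n(\alpha)\pc_{\MSOT}\alpha$ and $\alpha\pc_{\MSOT}\mu_n(\alpha)$, and Proposition \ref{prop:binary} (together with Theorem \ref{theo:caraut}, which identifies $\pc_{\MSOT}$ with inclusion of advice-automatic classes) lets me replace $\alpha$ by an equivalent advice over a binary alphabet. Concretely, encoding the letters of $\Gamma$ by fixed-length binary blocks gives a binary word $\alpha'$ with $\alpha\pc_{\WFT}\alpha'$ and $\alpha'\pc_{\WFT}\alpha$ (the block encoder and decoder are both realized by two-way transducers), so $\alpha'$ lies in the same $\WFT$-degree. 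The only point requiring attention is that the block-decoding transducer must recover $\alpha$ letter-by-letter from $\alpha'$ while reading the fixed word $\alpha'$; since blocks have constant length this is immediate. The main obstacle across all four items is thus not the $\DFT$ arguments themselves but verifying that two-way motion, and the passage through the equivalences of Theorems \ref{theo:caraut} and \ref{theo:2WFT}, preserve each construction — which is exactly why the paper can call the adaptation straightforward.
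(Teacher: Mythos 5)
Your proposal is correct and follows essentially the same route as the paper, which simply adapts the corresponding $\DFT$ arguments of \cite{endrullis2015}: counting transducers for items 1--2, the separator-delimited diagonal interleaving for the upper-bound construction, item 3 as a corollary of items 1 and 2, and fixed-length binary block coding for item 4. Your write-up merely makes explicit the details the paper delegates to the citation (the detour through Theorem \ref{theo:caraut} in item 4 is unnecessary but harmless, since the block encoder and decoder are already one-way transducers).
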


\begin{remark} \label{rem:MSOvsREL} Considering binary strings is thus sufficient to describe all the degrees. Comparing this result with Proposition \ref{prop:subword} shows that the preorder $\pc_{\regi}$ defined by $\MSO$-relabelings is strictly weaker than $\pc_{\WFT} = \pc_{\MSOT}$.

\end{remark}

As a consequence of Proposition \ref{prop:WFThierarchy}, the $\WFT$ hierarchy is not trivial. We now show that it is fine-grained enough to distinguish ultimately periodic words.

\begin{proposition}

\label{prop:ult}

Ultimately periodic words are the least $\WFT$-degree.

\end{proposition}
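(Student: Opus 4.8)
The plan is to show two directions. First, every ultimately periodic word $\alpha = uv^\omega$ is $\WFT$-below every infinite string $\beta$; second, if $\alpha$ is \emph{not} ultimately periodic, then $\alpha$ cannot be $\WFT$-below any ultimately periodic word. Together these say that the ultimately periodic words form a single $\WFT$-degree that sits strictly below everything else, i.e. the least degree.

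For the first direction, I would simply invoke Fact \ref{fact:ultper}, which already states that if $\alpha$ is ultimately periodic then $\alpha \pc_{\WFT} \beta$ for \emph{every} $\beta$. Concretely, a $\WFT$ can ignore its input entirely and output $uv^\omega$ by looping in a fixed finite-memory cycle, producing $u$ once and then repeating $v$ forever; no genuine reading of $\beta$ is required. This shows that any two ultimately periodic words are mutually $\WFT$-reducible, so they all collapse into one degree, and that this degree lies below the degree of every string $\beta$.

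For the second direction — that this degree is strictly the least — I must rule out that a non-ultimately-periodic word could also belong to it, equivalently that some ultimately periodic $\beta$ has $\alpha \pc_{\WFT} \beta$ with $\alpha$ not ultimately periodic. Here the key tool is Lemma \ref{lem:wftper}: if a $\WFT$ transforms $\beta$ into a non-ultimately-periodic $\alpha$, its run visits each position of $\vdash\beta$ at most $N$ times, for $N$ the number of states. The plan is to combine this bounded-visit property with the finiteness of distinct suffixes of an ultimately periodic $\beta$. Since $\beta = uv^\omega$, its suffixes $\beta[m:]$ take only finitely many distinct values (at most $|u| + |v|$ of them up to the eventual period). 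A $\WFT$ run on $\beta$ is determined, at any configuration, by the current state and the current suffix seen from the head; because there are finitely many states and finitely many suffix-types, and because each position is visited at most $N$ times, the global behaviour of the run is eventually governed by a finite transition pattern on the periodic part. I would argue that the output produced on the periodic tail is therefore itself eventually periodic: the sequence of (state, position-class) pairs encountered from some point on must cycle, and each cycle emits a fixed finite output block, forcing $\alpha$ to be ultimately periodic — a contradiction.

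The main obstacle is making the last argument rigorous, because a two-way head can move back and forth, so "the output on the periodic tail" is not produced in a simple left-to-right sweep. The clean way to handle this is via crossing sequences: for each boundary between tape positions, record the sequence of states in which the head crosses it. By Lemma \ref{lem:wftper} these crossing sequences have length at most $N$, so they range over a finite set; on the periodic part of $\beta$ the local transition structure repeats with period $|v|$, so the crossing sequences must themselves become eventually periodic along the tape. Once the crossing-sequence profile is periodic, the total output contributed between two consecutive period-boundaries is a fixed finite word, and concatenating these in order yields an ultimately periodic $\alpha$. I would therefore structure the proof around crossing sequences rather than raw run positions, as this converts the two-way motion into a one-dimensional periodic object and lets the pigeonhole/periodicity argument go through. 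This is the step that requires care; the two reductions themselves are immediate from Fact \ref{fact:ultper} and Lemma \ref{lem:wftper}.
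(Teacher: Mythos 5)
Your first direction is exactly the paper's (Fact \ref{fact:ultper}), and your overall plan for the converse — contradiction via Lemma \ref{lem:wftper} plus a pigeonhole/periodicity argument on the run over $uv^\omega$ — is a legitimate alternative route. But your concluding step fails as stated. You claim that once the crossing-sequence profile is eventually periodic, ``the total output contributed between two consecutive period-boundaries is a fixed finite word, and concatenating these in order yields an ultimately periodic $\alpha$.'' This conflates tape order with time order: the output of a $\WFT$ is emitted in time order, and it is not the left-to-right concatenation of per-region contributions. For instance, a transducer that outputs $a$ in region $0$, moves right and outputs $b$ in region $1$, returns to region $0$ and outputs $c$, then runs right forever silently, produces $abc$, while the tape-ordered concatenation of per-region outputs is $ac\cdot b$. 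So periodicity of the per-region outputs does not by itself yield periodicity of $\alpha$; this is precisely the time-order obstacle you yourself flagged, and passing to crossing sequences does not dissolve it — you would still have to control how the excursions interleave in time. A second, smaller gap: the assertion that the crossing sequences ``must become eventually periodic along the tape'' needs a proof; finiteness of the set of possible crossing sequences plus periodicity of the tape is not enough, because the crossing sequence at a boundary depends on the tape content on \emph{both} sides. It can be justified (Shepherdson-type behavior tables for the prefix evolve deterministically left-to-right, hence eventually periodically, while all suffixes at period boundaries equal $v^\omega$), but that argument is missing.

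The clean way to finish your direct approach is to look at the tail of the run rather than at per-region outputs: by Lemma \ref{lem:wftper} each boundary $m_k = |u| + k|v|$ is crossed a last time, in some state $r_k$; after that moment the head stays forever in the region to its right, whose content is $v^\omega$ independently of $k$, so by determinism the suffix of the run after this last crossing depends only on $r_k$. Pigeonhole gives $r_{k_1} = r_{k_2}$ with $k_1 < k_2$; the two run-suffixes then coincide up to a shift, so writing $w$ for the output emitted between the two last crossings and $\gamma$ for the output emitted afterwards, $\gamma = w\gamma$, whence $\gamma = w^\omega$ (and $w = \varepsilon$ would make the output finite), contradicting non-ultimate-periodicity. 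Note also that the paper sidesteps all of this machinery: it uses transitivity of $\pc_{\WFT}$ (available through Theorem \ref{theo:2WFT}) together with Fact \ref{fact:ultper} to reduce to the input $\square^\omega$, where translation invariance of the constant word makes the repetition argument immediate.
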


\begin{proof} Assume that $\alpha$ is ultimately periodic. Fact \ref{fact:ultper} concludes $\alpha \pc_{\MSOT} \beta$ for all $\beta$. Conversely, if $\gamma \pc_{\WFT} \alpha$, we show that $\gamma$ is ultimately periodic. Since $\alpha \pc_{\WFT} \square^\omega$ (take $\beta = \square^\omega$ in the previous argument), we get $\gamma  \pc_{\WFT} \square^\omega$. Let $\mc{T}$ be the $\WFT$ computing this transformation and $(q_i,n_i)_{i \ge 0}$ its run on $\square^\omega$ (sequence of tuples state/position). There exists $j \ge 0$ and $k \ge 1$ such that $q_{j} = q_{j+k}$. Due to determinism and invariance of $\square^\omega$ by translation, this sequence of moves must be repeated (note that necessarily $n_{j+k} \ge n_j$), what shows the ultimate periodicity of $\gamma$.
\end{proof}

\begin{remark} \label{rem:ultdft} Ultimately periodic words are also the least $\DFT$ degree.
\end{remark}

This easy result shows, through the equivalences of Section \ref{sec:msot}, that non-trivial advices \emph{strictly} increase the class of presentable structures (it was not obvious before). We equivalently provided a characterization for automaticity of certain structures.

\begin{corollary} \label{cor:autspf} $\mc{P}^f(\mf{W}^\alpha)$ is automatic if and only if $\alpha$ is ultimately periodic.

\end{corollary}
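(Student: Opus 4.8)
The plan is to derive this as a direct corollary of the equivalences established in Section~\ref{sec:msot}, specifically Fact~\ref{fact:pf} together with the characterization of the least $\WFT$-degree in Proposition~\ref{prop:ult}. The key observation is that ``automatic'' means ``automatic with a trivial (ultimately periodic) advice'', so I would connect the automaticity of $\mc{P}^f(\mf{W}^\alpha)$ to the position of $\alpha$ in the $\WFT$ hierarchy.

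\medskip

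\noindent\textbf{Proof.}
First I would rewrite the statement that $\mc{P}^f(\mf{W}^\alpha)$ is automatic in the advice-parameterized language. A structure is automatic precisely when it is $\omega$-automatic with an ultimately periodic advice, equivalently when it lies in $\auti[\beta]$ for some (equivalently, any) ultimately periodic $\beta$, since by Fact~\ref{fact:pf} the inclusions of the $\auti$-classes collapse for such advices. Thus $\mc{P}^f(\mf{W}^\alpha)$ is automatic if and only if $\mc{P}^f(\mf{W}^\alpha) \in \auti[\beta_0]$ for the ultimately periodic word $\beta_0 := \square^\omega$.

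Now I would invoke Fact~\ref{fact:pf}, which states that $\auti[\alpha] \subseteq \auti[\beta]$ if and only if $\mc{P}^f(\mf{W}^\alpha) \in \auti[\beta]$. Applying this with $\beta := \beta_0 = \square^\omega$ ultimately periodic yields that $\mc{P}^f(\mf{W}^\alpha)$ is automatic if and only if $\auti[\alpha] \subseteq \auti[\square^\omega] = \aut$. By Theorem~\ref{theo:caraut} (the equivalence of $\auti[\alpha] \subseteq \auti[\beta]$ with $\alpha \pc_{\MSOT} \beta$), this inclusion is equivalent to $\alpha \pc_{\MSOT} \square^\omega$, that is, to $\alpha$ lying in the same $\MSOT$-degree as the ultimately periodic word $\square^\omega$. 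Finally, Proposition~\ref{prop:ult} (using Theorem~\ref{theo:2WFT} to identify $\pc_{\WFT}$ with $\pc_{\MSOT}$) identifies the ultimately periodic words as exactly the least degree of this preorder, so $\alpha \pc_{\MSOT} \square^\omega$ holds if and only if $\alpha$ is itself ultimately periodic. Chaining these equivalences gives the claim. \qed

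\medskip

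\noindent I do not expect any genuine obstacle here: every implication is already packaged in an earlier result, and the only care needed is the bookkeeping to make sure ``automatic'' is correctly translated as ``$\omega$-automatic with an ultimately periodic advice''. The one subtlety worth flagging is the choice of base word: although any ultimately periodic $\beta$ works, using $\square^\omega$ makes the identification $\auti[\square^\omega] = \aut$ immediate and keeps the argument clean.
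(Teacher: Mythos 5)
Your proof is correct and takes essentially the same route as the paper, which presents this corollary precisely as the chain you spell out: Fact~\ref{fact:pf}, Theorem~\ref{theo:caraut}, Theorem~\ref{theo:2WFT}, and Proposition~\ref{prop:ult} applied to an ultimately periodic base word. One small citation slip worth fixing: the identification $\aut = \auti[\beta_0]$ for ultimately periodic $\beta_0$ comes from the fact in Section~\ref{sec:autstr} that the inclusions $\aut \subseteq \aut[\beta_0] \subseteq \auti[\beta_0]$ are equalities for such advices (together with Corollary~\ref{cor:autaut}), not from Fact~\ref{fact:pf}, which only provides the powerset characterization of class inclusion.
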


\begin{remark} \label{rem:autw}Some structures $\mf{W}^{\alpha}$ are automatic for $\alpha$ non-ultimately periodic \cite{barany2008}, hence the automaticity of $\mf{W}^{\alpha}$ is not equivalent to that of $\mc{P}^f(\mf{W}^\alpha)$. One of the open questions in this field is to understand when $\mf{W}^\alpha$ is automatic, and Corollary \ref{cor:autspf} may be considered as a small step in this direction.

\end{remark}

We now turn to a more involved statement. A sequence $\beta$ is said to be \emph{prime} if it is a minimal but non-trivial word. Formally, $\beta$ non ultimately periodic is prime in the $\WFT$ hierarchy if for all $\alpha \pc_{\WFT} \beta$, either $\beta \pc_{\WFT} \alpha$ or $\alpha$ is ultimately periodic. The existence of prime sequences shows in particular that the $\WFT$ hierarchy is not dense.

\begin{theorem}

\label{theo:prime}

The sequence $\pi := \prod_{n=0}^\infty 0^n1$ is prime in the $\WFT$-hierarchy.

\end{theorem}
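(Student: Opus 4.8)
The goal is to show that $\pi = \prod_{n=0}^\infty 0^n 1 = 1\,0\,1\,0\,0\,1\,0\,0\,0\,1\cdots$ is prime: any $\alpha \pc_{\WFT} \pi$ is either ultimately periodic or satisfies $\pi \pc_{\WFT} \alpha$. The plan is to fix a $\WFT$ $\mc{T}$ computing $\alpha$ from $\pi$, assume $\alpha$ is not ultimately periodic, and reconstruct $\pi$ back from $\alpha$ by another $\WFT$. The whole argument rests on the special structure of $\pi$: its $1$'s mark off blocks of $0$'s of strictly increasing length, so the positions of the $1$'s form an unambiguous ``ruler'' and the distance between consecutive $1$'s is an effectively readable counter.

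First I would exploit Lemma \ref{lem:wftper}: since $\alpha$ is not ultimately periodic, there is a bound $N$ so that $\mc{T}$ visits each position of $\vdash\pi$ at most $N$ times. This is the key structural fact that makes the reconstruction finite-state. The run of $\mc{T}$ therefore decomposes into a bounded number of left-to-right and right-to-left sweeps, and I would encode, for each position $p$ of $\pi$, the finite ``crossing sequence'' of states in which $\mc{T}$ crosses the boundary at $p$. Because $\mc{T}$ is deterministic and $\pi$ is fixed, the length of output produced between two consecutive crossings of a given region is determined by the local input; the crucial observation is that the output letters $\alpha[m]$ produced while $\mc{T}$ sits inside the $n$-th $0$-block (of length $n$) record, via $\mc{T}$'s state and head movement, enough information to recover $n$. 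The plan is to show that from $\alpha$ alone one can recover, for each $n$, the block boundaries of $\pi$ and hence regenerate the increasing sequence of block lengths.

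Concretely I would build a reverse $\WFT$ $\mc{T}'$ that reads $\alpha$ and outputs $\pi$ as follows. Since $\mc{T}$ has runs bounded by $N$ crossings, the map from positions of $\pi$ to the factor of $\alpha$ produced ``at'' that position is a one-way finite-state (in fact $\MSOT$-definable) function; I would invert it using Theorem \ref{theo:2WFT} by exhibiting an $\MSO$-transduction that, given $\mf{W}^\alpha$, defines $\mf{W}^\pi$. The defining formulas must (i) locate in $\alpha$ the images of the $1$'s of $\pi$ — these are identifiable because consecutive $1$'s of $\pi$ bound input regions of distinct, strictly increasing lengths, so their images in $\alpha$ are MSO-distinguishable once the crossing-sequence labelling is available — and (ii) for each recovered block of length $n$, output $0^n 1$. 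The strict monotonicity of block lengths is what guarantees the recovered marks are canonically ordered and the whole of $\pi$ is reconstructed; without it an $\MSOT$ could not pin down which copy of a repeated block it is reading.

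The main obstacle, and the step deserving the most care, is part (i): proving that the images in $\alpha$ of the block-delimiting $1$'s of $\pi$ are $\MSO$-definable from $\alpha$ alone. The difficulty is that $\mc{T}$ may output the same finite string while processing blocks of different lengths, so block length is not literally visible in the output; what saves the argument is the bounded-visit property together with determinism, which forces the total output between two $1$-images to be a fixed $\MSO$-definable function of the crossing data, and the strict growth of block lengths, which prevents collisions. I would therefore spend most of the proof establishing that the crossing-sequence information is itself $\MSO[<,\Gamma]$-definable over $\mf{W}^\alpha$ (a standard but delicate Hennie/crossing-sequence construction adapted to the fixed infinite input), after which assembling the transduction $\alpha \pc_{\WFT}\pi$ is routine and the primeness of $\pi$ follows.
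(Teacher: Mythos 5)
Your plan hinges on one claim, and it is exactly the step you yourself flag as carrying "most of the proof": that the crossing-sequence/origin information of $\mc{T}$'s run, and hence the set of images in $\alpha$ of the block-delimiting $1$'s of $\pi$, is $\MSO[<,\Gamma]$-definable over $\mf{W}^\alpha$. That claim is false in general, not merely delicate. Hennie-style crossing sequences are attached to positions of the \emph{input} $\pi$; nothing transports them to predicates over the \emph{output} word. Concretely, take the one-way transducer that merges the blocks of $\pi$ in consecutive pairs: on $0^{2k}1\,0^{2k+1}1$ it outputs $0$ for every $0$, outputs $0$ on the $1$ closing block $2k$, and outputs $1$ on the $1$ closing block $2k+1$. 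Its output $\alpha=\prod_{k\ge 0}0^{4k+2}1$ is not ultimately periodic, and the image of each merged boundary $1$ sits at relative offset $2k$ inside the $k$-th $0$-block of $\alpha$, i.e.\ essentially at its midpoint. No $\MSO$-formula $\phi(x)$ over $\mf{W}^\alpha$ can define this set of midpoints: inside a long $0$-block the automaton associated with $\phi$ behaves periodically, so any definable set that contains one point deep inside a long block contains a whole arithmetic progression of such points, whereas the midpoint set has exactly one point per block. Note that the theorem is not threatened by this example: $\pi\pc_{\MSOT}\alpha$ does hold, but only via transductions that ignore origins completely (e.g.\ a $2$-copying $\MSOT$ sending the even-indexed $0$'s of the $k$-th block, minus one, to a copy producing $0^{2k}1$ and the odd-indexed ones to a copy producing $0^{2k+1}1$). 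So the inversion-of-origins strategy cannot be completed; any correct proof must manufacture $\pi$ from $\alpha$ by other means.

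This is also where you diverge from the paper, which never attempts a direct inversion and never needs origin information to be visible in $\alpha$. The paper instead proves the implication $\alpha\pc_{\WFT}\pi\Rightarrow\alpha\pc_{\DFT}\pi$: a normalization lemma (Lemma \ref{lem:10}) shows the two-way transducer may be assumed to reverse direction only on the letter $1$, because an excursion inside a growing $0$-block either turns back within boundedly many steps or crosses the block; then the normalized two-way run is simulated by a one-way transducer over the stretched word $\pi^k=\prod_n(0^n1)^k$ (Lemma \ref{lem:big}); finally $\pi^k\pc_{\DFT}\pi$ (Lemma \ref{lem:pik}). Primeness then follows from the known primeness of $\pi$ in the $\DFT$ hierarchy \cite{endrullis2015}. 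That citation carries the real combinatorial weight, which your proposal would otherwise have to reprove from scratch; the realistic repair of your argument is precisely this reduction to the one-way case rather than an attempt to define the run of $\mc{T}$ inside $\alpha$.
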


\begin{proof}

We show in Appendix \ref{proof:theo:prime} that if $\alpha \pc_{\WFT} \pi$ then $\alpha \pc_{\DFT} \pi$. Since $\pi$ is prime in the $\DFT$ hierarchy \cite{endrullis2015}, either $\pi \pc_{\DFT}  \alpha$ or $\alpha$ is in the least $\DFT$-degree, which is also the set of ultimately periodic words.

\end{proof}

Classifying all infinite strings may neither be relevant nor useful in practice. We now look at two particular classes of infinite words closed under $\WFT$ transformations.

\begin{proposition}[subhierarchies] \item

\begin{enumerate}

\item If $\alpha \pc_{\WFT} \beta$ and if $\beta$ is computable, then $\alpha$ is computable;

\item if $\alpha \pc_{\WFT} \beta$ and if $\mf{W}^\beta$ has a decidable $\MSO$-theory, then $\mf{W}^\alpha$ has a decidable $\MSO$-theory.

\end{enumerate}

\end{proposition}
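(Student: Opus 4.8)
The plan is to handle the two items with the two different characterisations of $\pc_{\WFT}$: the machine model for item~1 and the logical one for item~2.

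For item~1 I would argue by a direct, effective simulation of the transducer. Suppose $\alpha \pc_{\WFT} \beta$ via a (deterministic, partial) $\WFT$ $\mc{T}$, so that $\mc{T}$ has a unique run on $\vdash\!\beta$ and this run emits $\alpha$. Since $\alpha \in \Gamma^\omega$ is infinite, the run is infinite and outputs infinitely many letters. To compute $\alpha[:m]$ for an arbitrary $m$, I would simulate this run configuration by configuration: a configuration is a state together with a head position $i$, and its successor is determined by applying $\delta$ and $\theta$ to the symbol at position $i$ of $\vdash\!\beta$ (namely $\vdash$ when $i=0$, and the computable letter $\beta[i-1]$ otherwise). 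Because $\beta$ is computable, each simulation step is effective. I would append the emitted letters until at least $m$ of them have appeared; this occurs after finitely many steps precisely because the genuine output is infinite. Reading off the first $m$ symbols yields $\alpha[:m]$, whence $\alpha$ is computable. (If $\alpha$ is ultimately periodic the claim is immediate by Fact~\ref{fact:ultper}; otherwise Lemma~\ref{lem:wftper} even guarantees that the head position tends to infinity, although this refinement is not needed.)

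For item~2 I would pass to the logical side. By Theorem~\ref{theo:2WFT}, $\alpha \pc_{\WFT}\beta$ gives $\alpha \pc_{\MSOT}\beta$, i.e.\ an $\MSO$-transduction $\tau$ with $\tau(\beta)=\alpha$; by the semantics of $\MSOT$ this exhibits $\mf{W}^\alpha$ as (isomorphic to) the $k$-copying $\MSO$-interpretation $I_\tau(\mf{W}^\beta)$ of a word structure inside $\mf{W}^\beta$. The key step is the standard backward translation of $\MSO$-sentences under an interpretation: to each $\MSO[<,\Gamma]$-sentence $\varphi$ I would effectively assign an $\MSO[<,\Delta]$-sentence $\widehat{\varphi}$ with $\mf{W}^\alpha \models \varphi$ iff $\mf{W}^\beta \models \widehat{\varphi}$. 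Concretely, each first-order variable is relativised to one of the $k$ copies, a quantifier $\exists x$ becoming a finite disjunction over copy indices $\bigvee_{i=1}^{k}\exists x\,\big(\bigvee_{a\in\Gamma}\phi^a_i(x)\wedge\cdots\big)$; the atoms $P_a(x)$ and $x<y$ are rewritten through the formulas $\phi^a_i$ and $\phi^{<}_{i,j}$; and each set variable $X$ is replaced by a $k$-tuple $(X_1,\dots,X_k)$ of set variables, the membership of an element lying in copy $i$ being translated to membership in $X_i$. Applying $\widehat{\;\cdot\;}$ reduces the $\MSO$-theory of $\mf{W}^\alpha$ to that of $\mf{W}^\beta$, which is decidable by hypothesis.

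The main obstacle lies in item~2, in the clause for set variables of this multi-dimensional backward translation. I expect the subtle point to be checking that the composition obstruction noted after Fact~\ref{fact:compo} (sets of $k$-tuples versus $k$-tuples of sets) does \emph{not} arise here. Since the host $\mf{W}^\beta$ is a genuine word structure whose domain is $\mb{N}$ rather than an interpreted structure, a subset of the $k$-copy domain $\{1,\dots,k\}\times\mb{N}$ corresponds exactly to a $k$-tuple of subsets of $\mb{N}$, and $\mf{W}^\beta$ can quantify over such tuples. Hence the translation goes through verbatim and decidability transfers.
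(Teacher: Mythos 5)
Your proposal is correct and follows essentially the same route as the paper, whose proof is the one-liner ``computability is immediate and decidability follows from the equivalence with $\MSOT$'': your item~1 fills in the ``immediate'' simulation argument, and your item~2 spells out the backward translation that the paper leaves implicit behind Theorem~\ref{theo:2WFT}. Your observation that the tuples-of-sets obstruction does not arise for $k$-copying transductions (as opposed to $k$-dimensional interpretations) is exactly the point the paper makes in its remark that for $\MSOT$ ``the problem of tuples of sets disappears.''
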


\begin{proof} Computability is immediate and decidability follows from the equivalence with $\MSOT$.

\end{proof}

\begin{fact} The string $\pi$ has a decidable $\MSO$-theory (see e.g. \cite{barany2008}).

\end{fact}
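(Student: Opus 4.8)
The plan is to reduce the $\MSO$-theory of $\pi$ to a decidable question about a finite, effectively computable object, following the classical contraction method of Elgot--Rabin (an instance of Semenov's criterion \cite{semenov1984}, already invoked for Theorem \ref{theo:infini}). First I would exploit the block structure $\pi = F_0 F_1 F_2 \cdots$ with $F_n := 0^n 1$. To decide whether $\pi \models \phi$ for a given $\MSO[<,\Gamma]$-sentence $\phi$, I translate $\phi$ into a deterministic parity $\omega$-automaton $\mc{A}$ recognizing exactly the word models of $\phi$ over $\{0,1\}$; deciding the $\MSO$-theory then amounts to deciding, uniformly in $\mc{A}$, whether $\mc{A}$ accepts $\pi$.

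The core step is to observe that $\mc{A}$ cannot distinguish the blocks $F_n$ beyond an ultimately periodic pattern. Working in the transition structure of $\mc{A}$, I associate to each finite word its \emph{profile}: the induced map $Q \to Q$ together with, for each pair of states, the extremal priority witnessed along the corresponding run segment. Profiles form a finite monoid $\mb{M}$ under concatenation. The profile of $F_n$ equals $X^n Y$, where $X$ is the profile of the single letter $0$ and $Y$ that of $1$. Since $\mb{M}$ is finite, the sequence $(X^n)_{n \ge 0}$ is ultimately periodic with computable preperiod and period; hence the profile sequence $\big(\operatorname{prof}(F_n)\big)_{n \ge 0}$ is ultimately periodic, and effectively so.

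It remains to note that acceptance of an $\omega$-word by $\mc{A}$ depends only on its sequence of block profiles, through the finite product and $\omega$-power operations of the associated $\omega$-semigroup. An ultimately periodic profile sequence therefore yields the same answer as an ultimately periodic $\omega$-word built from representative blocks, and acceptance of an ultimately periodic word by $\mc{A}$ is decidable by a standard lasso argument. I expect the main obstacle to be precisely this last bookkeeping: one must ensure the profile records enough priority information for the acceptance condition to be a genuine function of the profile sequence (this is exactly where the $\omega$-semigroup formalism is needed). A shorter, less self-contained alternative sidesteps the whole computation: the morphism $\sigma$ on $\{a,b,c\}$ with $\sigma(a)=ab$, $\sigma(b)=cb$, $\sigma(c)=c$ has fixed point $\sigma^\omega(a) = a\,b\,cb\,ccb\,cccb\cdots$, and the coding $a,b \mapsto 1$, $c \mapsto 0$ sends it to $1\pi$; thus $1\pi$ is a morphic word, whose $\MSO$-theory is decidable (see \cite{barany2008}), and since deleting a finite prefix preserves $\MSO$-decidability, so is that of $\pi$.
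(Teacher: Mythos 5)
Your proposal is correct, and it does more than the paper does: the paper states this fact without any argument, merely pointing to \cite{barany2008}, so your two routes should be compared against that citation rather than against a written proof. Your second, ``shorter'' argument is essentially the justification hiding behind the paper's reference: the coding of the fixed point of $\sigma(a)=ab$, $\sigma(b)=cb$, $\sigma(c)=c$ under $a,b\mapsto 1$, $c\mapsto 0$ is indeed $1\pi$ (check: $\sigma^\omega(a)=ab\cdot cb\cdot ccb\cdot cccb\cdots$ maps to $1\cdot 1\cdot 01\cdot 001\cdot 0001\cdots$), morphic words have decidable $\MSO$-theory, and relativizing quantifiers to positions past a fixed finite prefix reduces the theory of $\pi$ to that of $1\pi$. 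Your first argument is a genuinely different and fully self-contained alternative, and it is sound: for a deterministic parity automaton $\mc{A}$, the profile (state transformation plus, per pair of states, the maximal priority along the corresponding segment) of the block $F_n=0^n1$ is $X^nY$ in the finite profile monoid, so the block-profile sequence is effectively ultimately periodic; acceptance of $\pi$ then coincides with acceptance of the ultimately periodic word obtained by repeating representative blocks, which is decidable by a lasso check. The one detail worth writing out is the point you flag yourself: the maximal priority seen infinitely often along the run equals the maximal priority occurring infinitely often among the per-block maxima (finitely many priorities, so a priority recurring in infinitely many blocks forces some block-maximum $\ge$ it to recur, and conversely), which is exactly why acceptance factors through the profile sequence. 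What each approach buys: the citation route is short but leans on the Carton--Thomas/B\'ar\'any theory of morphic words, whereas your contraction argument is elementary, effective, and generalizes immediately to any sequence whose block profiles are effectively ultimately periodic (Elgot--Rabin style), which is in the spirit of the Semenov machinery the paper already uses for Theorem \ref{theo:infini}.
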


\begin{proposition} There exists a greatest degree of computable strings in the $\WFT$ hierarchy.
\end{proposition}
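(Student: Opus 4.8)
The plan is to build a single \emph{computable} word $\gamma$ that dominates every computable word, i.e.\ $\alpha \pc_{\WFT} \gamma$ for each computable $\alpha$. Since the class of computable words is downward closed under $\pc_{\WFT}$ (the subhierarchies proposition above), every word in the $\WFT$-degree of $\gamma$ is then computable, and $\gamma$ bounds all computable words, so $\deg_{\WFT}(\gamma)$ is the greatest degree consisting of computable strings. First I would reduce to binary sources: every computable $\alpha$ is $\WFT$-equivalent to a computable binary word (encode each letter by a fixed-length binary block and decode it back, both directions being realized by one-way transducers, cf.\ Proposition \ref{prop:binary} and Example \ref{ex:msotrob}). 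By transitivity of $\pc_{\WFT}$ it then suffices that $\gamma$ dominates every computable binary word.

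Next, fix an effective enumeration $M_0, M_1, \dots$ of Turing machines, and write $\alpha_i$ for the binary word computed by $M_i$ whenever $M_i$ defines a total infinite word. I would lay out $\gamma$ over the alphabet $\{\#,0,1,D\}$ as an infinite concatenation of self-delimiting \emph{tokens}, each real token having the shape $\#\,0^{i}1\,b$, meaning ``the next letter of $\alpha_i$ is $b \in \{0,1\}$''. The word $\gamma$ is produced by dovetailing: at stage $s$, decode $s$ into a pair $(i,t)$, simulate $M_i$ for $t$ steps, and if by then $M_i$ has produced one more letter than has already been emitted for index $i$, append the token $\#\,0^{i}1\,b$ for that letter; otherwise append the single dummy symbol $D$. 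Exactly one token or dummy is appended per stage, so $\gamma$ is total and clearly computable. Because the dovetailing reaches each index $i$ with arbitrarily large time bounds, the tokens carrying index $i$ spell out precisely $\alpha_i[0], \alpha_i[1], \alpha_i[2], \dots$ in order, and infinitely many of them, whenever $\alpha_i$ is total.

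Extraction is the final step. For a fixed $i$ with $\alpha_i$ total, I would exhibit a transducer $T_i$ computing $\alpha_i$ from $\gamma$: scanning left to right, it skips every $D$; on each $\#$ it counts the following block of $0$s, compares the count with the hardcoded value $i$, and if they match it outputs the single letter $b$ following the terminating $1$, otherwise it skips the rest of that token. Counting up to $i$ and detecting ``more than $i$'' uses $O(i)$ states, which is finite because $i$ is fixed in $T_i$; thus $T_i$ is a genuine one-way (hence two-way) finite transducer, and $\alpha_i \pc_{\WFT} \gamma$. Combined with the binary reduction and transitivity, every computable word lies $\pc_{\WFT}$-below the computable word $\gamma$, so $\gamma$ realizes the greatest degree of computable strings.

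I expect the main obstacle to be the tension between having to extract, for each $i$, an unboundedly growing subsequence of $\gamma$ while using only finite memory; this is exactly what the self-delimiting unary index codes, together with the freedom to let $T_i$ depend on $i$, resolve. A secondary point needing care is keeping $\gamma$ total and computable despite the $M_i$ being possibly partial, which the dummy symbols inserted during dovetailing handle.
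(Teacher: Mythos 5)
Your proof is correct, but it takes a different route from the paper. The paper's proof is a one-line citation: it invokes the result of \cite{endrullis2015} that there exists a computable string $\tau$ with $\alpha \pc_{\DFT} \tau$ for every computable $\alpha$, and then observes that $\pc_{\DFT}$ implies $\pc_{\WFT}$. You instead construct such a universal word explicitly: reduce to binary sources by block coding, dovetail over all Turing machines emitting self-delimiting tokens $\#\,0^i1\,b$ (with dummies $D$ keeping the word total and computable even when machines diverge), and extract $\alpha_i$ with a hardcoded-index one-way transducer of $O(i)$ states. This effectively reproves the cited result of \cite{endrullis2015} rather than using it as a black box; note that since your extraction transducer is one-way, you obtain exactly the same stronger statement ($\alpha \pc_{\DFT} \gamma$, hence \emph{a fortiori} $\alpha \pc_{\WFT} \gamma$). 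What the paper's approach buys is brevity; what yours buys is self-containedness, and it makes visible where the finite-memory tension is resolved (the unary index coding plus the freedom to let the extractor depend on $i$). Both arguments close the loop the same way: downward closure of computability under $\pc_{\WFT}$ guarantees that the degree of the universal word consists only of computable strings, so it is indeed the greatest such degree.
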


\begin{proof} According to \cite{endrullis2015}, there exists a computable string $\tau$ such that for any computable string $\alpha$, $\alpha \pc_{\DFT} \tau$, thus \emph{a fortiori} $\alpha \pc_{\WFT} \tau$.

\end{proof}

\begin{fact}

The $\MSO$ theory of $\tau$ is not decidable, since there exists computable strings with undecidable $\MSO$-theory.

\end{fact}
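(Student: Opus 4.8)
The plan is to combine the maximality of $\tau$ among computable strings with the one-directional decidability transfer established in the subhierarchies proposition. That proposition shows that $\MSO$-decidability propagates only \emph{downward}: if $\alpha \pc_{\WFT} \beta$ and $\mf{W}^\beta$ has a decidable $\MSO$-theory, then so does $\mf{W}^\alpha$. Read contrapositively, any computable string $\gamma$ with \emph{undecidable} $\MSO$-theory forces $\mf{W}^\tau$ to have undecidable $\MSO$-theory, since $\gamma \pc_{\WFT} \tau$ holds by the defining property of $\tau$ (it dominates every computable string). Thus the entire argument reduces to producing one computable $\gamma$ whose word structure has an undecidable $\MSO$-theory.

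To build such a $\gamma$ I would encode the halting problem into a computable unary predicate by a time-stamping device. Fix a computable pairing and an encoding of $\mb{N}^2$ into the positions of an infinite word, arranged so that ``the position codes a pair whose first coordinate is $n$'' is definable in $\MSO[<]$ for each fixed $n$ (a block encoding with separators, where the $n$-th block is reached by iterating ``next separator'' $n$ times, makes this routine). Let $\gamma$ mark exactly the positions coding a pair $(n,t)$ such that the $n$-th Turing machine halts on input $n$ in exactly $t$ steps. This condition is decidable by simulating $t$ steps, so $\gamma$ is computable.

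The crucial point is that $n$ lies in the (undecidable) halting set if and only if $\mf{W}^\gamma \models \phi_n$, where $\phi_n$ is the $\MSO$-sentence asserting the existence of a marked position coding a pair with first coordinate $n$; moreover the map $n \mapsto \phi_n$ is computable. Hence a decision procedure for the $\MSO$-theory of $\mf{W}^\gamma$ would decide the halting set, which is impossible, so $\mf{W}^\gamma$ has undecidable $\MSO$-theory. Plugging $\gamma$ into the contrapositive above then yields the claim.

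The main obstacle is confined to this first step: setting up the encoding so that ``coded pair with first coordinate $n$'' is genuinely $\MSO[<]$-definable and the sentences $\phi_n$ uniformly capture the halting predicate. Once the encoding is fixed, both the reduction from the halting problem and the final appeal to the domination $\gamma \pc_{\WFT} \tau$ are immediate.
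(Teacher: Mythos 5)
Your top-level argument is exactly the paper's: combine the defining property of $\tau$ (every computable $\gamma$ satisfies $\gamma \pc_{\WFT} \tau$) with the contrapositive of the downward decidability transfer from the subhierarchies proposition. Where you go beyond the paper is in actually constructing a computable string with undecidable $\MSO$-theory; the paper asserts this existence as known folklore, with no proof or reference, so your version is more self-contained. Your construction is sound in spirit, but the encoding needs one repair: the pairs $(n,t)$ with first coordinate $n$ cannot all live in the $n$-th block, because in a word with infinitely many separators every block is finite. The standard fix is to transpose the coordinates: let the $t$-th block have length $t+1$ and let its $(n+1)$-th position code the pair $(n,t)$, so that ``first coordinate equals $n$'' becomes ``exactly $n$ positions of the same block precede this one'', which is uniformly $\FO[<]$-definable; and replace ``halts in exactly $t$ steps'' by ``halts within $t$ steps'' so that the implicit restriction $t \ge n$ is harmless. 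Even simpler is the classical witness: take a computable enumeration $(e_t)_{t\ge 0}$ of the halting set and put $\gamma := \prod_{t} 0^{e_t}1$; then $n$ lies in the halting set iff $\mf{W}^\gamma$ satisfies ``there is a maximal $0$-block of length exactly $n$'', a sentence computable from $n$. With either repair, your reduction and the final appeal to $\gamma \pc_{\WFT} \tau$ go through exactly as you describe.
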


Figure \ref{fig:hierarchy} summarizes the previous results. Note that the $\WFT$-degree of ultimately periodic sequences, the  $\WFT$-degree of $\pi$ and the  $\WFT$-degree of $\tau$ have to be distinct. Several challenging issues naturally arise about the structure of the $\WFT$ hierarchy and its subhierarchies. Among others, an interesting question is to describe the degrees of well-known sequences with decidable $\MSO$-theory, for instance morphic words \cite{barany2008}.

\begin{figure}

\begin{center}
       \begin{tikzpicture}[scale=0.6]

      \draw[] (0,-1) -- (6.2,5.5) -- (-6.2,5.5) -- (0,-1);
      \draw[] (0,-1) -- (10,7.5) -- (-10,7.5) -- (0,-1);
      \node at (-0.5,2) {{\scriptsize no degree}};
      \node at (-0.3,1.6) {{\scriptsize between}};
       \node at (0,1.2) {{\scriptsize $\pi$ and $0$}};
      
       \node at (2.8,0) {{\scriptsize Least degree:}};
        \node at (2.9,-0.4) {{\scriptsize ultimately periodic}};

        \node[draw,circle] (0) at (0,-0.2) {$0$};
        
         \node[] at (0.2,5) {\textbf{Strings with decidable $\MSO$-theory}};
         
         \node[] at (5.5,7) {\textbf{Computable strings}};
        
        \node[draw,circle] (2) at (-2.6,3) {$\pi$};
        
         \node at (-1.1,3.2) {{\scriptsize A prime}};
         \node at (-1.3,2.8) {{\scriptsize degree}};
        
        \node[draw,circle] (1) at (0,6.6) {$\tau$};
        
        \node at (-2.7,6.8) {{\scriptsize Greatest degree of}};
        \node at (-2.8,6.4) {{\scriptsize computable strings}};
        
           \draw[->] (2) edge (0);

       \end{tikzpicture}

       \caption{\label{fig:hierarchy} An partial look on the $\WFT$ hierarchy}
       
              \end{center}

\end{figure}

\section{Conclusion and outlook}

\textbf{\textsf{Preorders of advices, logic and transducers.}} Our first concern in this paper was the study of various preorders over infinite words, related to the notion of advice strings. The results draw a generic correspondance between definability with advice, logical transductions and machine transductions. Table \ref{fig:sum} summarizes this philosophy in an elegant way, note that the notion of (relativized) $\MSO$-relabelings is less standard than $\MSOT$. The gap between $\MSO$-relabelings and $\MSOT$ shows that having basic knowledge on the languages is far from being sufficient to understand the richness of presentable structures.

\begin{table}[h!]
\centering
\begin{tabular}{| l | c | c | c |}

\hline

&$\reg$&& $\aut$\\

\textbf{Advice}&& $\reg^\infty$ & $\aut^\infty$ \\

&&$\wreg$& $\waut$\\

\hline

\textbf{Logic} & rel. $\MSO$-relabelings & $\MSO$-relabelings & $\MSOT$\\

\hline

\textbf{Machine}&Mealy machines & $\omega$-regular functions & $\WFT$ \\

\hline

\end{tabular}

\caption{\label{fig:sum}\small Equivalent definitions for preorders over $\omega$-words}
\end{table}

\textbf{\textsf{A meaningful hierarchy of infinite words.}} Two-way transductions appear here to be more basic than relations defined by one-way machines, since they are clearly motivated by logical issues. Furthermore, it fits our informal conditions to be a ``good'' complexity measure over infinite words. A more involved study of the $\WFT$ hierarchy may help classifying certain hierarchies of structures, or even understand standard automatic presentations. We recall that such transductions over infinite words are (rather) unexplored.


\bibliographystyle{alpha}
\bibliography{ms}

\begin{thebibliography}{10}

\bibitem{zaid2016}
Faried {Abu Zaid}.
\newblock {\em Algorithmic solutions via model theoretic interpretations}.
\newblock PhD thesis, 2016.

\bibitem{abu2017}
Faried {Abu Zaid}, Erich Gr{\"a}del, and Frederic Reinhardt.
\newblock {Advice Automatic Structures and Uniformly Automatic Classes}.
\newblock In {\em {26th EACSL Annual Conference on Computer Science Logic (CSL
  2017)}}, 2017.

\bibitem{allouche2003}
Jean-Paul Allouche and Jeffrey Shallit.
\newblock {\em Automatic sequences: theory, applications, generalizations}.
\newblock Cambridge university press, 2003.

\bibitem{alur2012}
Rajeev Alur, Emmanuel Filiot, and Ashutosh Trivedi.
\newblock Regular transformations of infinite strings.
\newblock In {\em Proceedings of the 2012 27th Annual IEEE/ACM Symposium on
  Logic in Computer Science}, pages 65--74. IEEE Computer Society, 2012.

\bibitem{baer1969}
Robert~M. Baer and Edwin~H. Spanier.
\newblock Referenced automata and metaregular families.
\newblock {\em Journal of Computer and System Sciences}, 3(4):423--446, 1969.

\bibitem{barany2008}
Vince B{\'a}r{\'a}ny.
\newblock A hierarchy of automatic $\omega$-words having a decidable {MSO}
  theory.
\newblock {\em RAIRO-Theoretical Informatics and Applications}, 42(3):417--450,
  2008.

\bibitem{belovs2008}
Aleksandrs Belovs.
\newblock Some algebraic properties of machine poset of infinite words.
\newblock {\em RAIRO-Theoretical Informatics and Applications}, 42(3):451--466,
  2008.

\bibitem{blumensath2000}
Achim Blumensath and Erich Grädel.
\newblock Automatic structures.
\newblock In {\em Logic in Computer Science, 2000. Proceedings. 15th Annual
  IEEE Symposium on}, pages 51--62. IEEE, 2000.

\bibitem{carayol2012}
Arnaud Carayol and Christof L{\"o}ding.
\newblock Uniformization in automata theory.
\newblock In {\em International Congress of Logic, Methodology and Philosophy
  of Science. College Publications, London}, 2012.

\bibitem{doueneau2018}
Ga{\"{e}}tan Dou{\'{e}}neau{-}Tabot.
\newblock On the complexity of infinite advice strings.
\newblock In {\em 45th International Colloquium on Automata, Languages, and
  Programming, {ICALP} 2018, July 9-13, 2018, Prague, Czech Republic}, pages
  122:1--122:13, 2018.

\bibitem{endrullis2015}
J{\"o}rg Endrullis, Jan~Willem Klop, Aleksi Saarela, and Markus Whiteland.
\newblock Degrees of transducibility.
\newblock In {\em International Conference on Combinatorics on Words}, pages
  1--13. Springer, 2015.

\bibitem{engelfriet2001}
Joost Engelfriet and Hendrik~Jan Hoogeboom.
\newblock {MSO} definable string transductions and two-way finite-state
  transducers.
\newblock {\em ACM Transactions on Computational Logic (TOCL)}, 2(2):216--254,
  2001.

\bibitem{khoussainov1995}
Bakhadyr Khoussainov and Anil Nerode.
\newblock Automatic presentations of structures.
\newblock In {\em Logic and computational complexity}, pages 367--392.
  Springer, 1995.

\bibitem{khoussainov2004}
Bakhadyr Khoussainov, Andr{\'e} Nies, Sasha Rubin, and Frank Stephan.
\newblock Automatic structures: richness and limitations.
\newblock In {\em Logic in Computer Science, 2004. Proceedings of the 19th
  Annual IEEE Symposium on}, pages 44--53. IEEE, 2004.

\bibitem{KRSZ12}
Alex Kruckman, Sasha Rubin, John Sheridan, and Ben Zax.
\newblock A {M}yhill-{N}erode theorem for automata with advice.
\newblock In {\em GandALF}, pages 238--246, 2012.

\bibitem{loding2007}
Christof L{\"o}ding and Thomas Colcombet.
\newblock Transforming structures by set interpretations.
\newblock {\em Logical Methods in Computer Science}, 3, 2007.

\bibitem{millar1978}
Terrence~S Millar.
\newblock Foundations of recursive model theory.
\newblock {\em Annals of Mathematical Logic}, 13(1):45--72, 1978.

\bibitem{oliver2005}
Graham~P Oliver and Richard~M Thomas.
\newblock Automatic presentations for finitely generated groups.
\newblock In {\em STACS}, volume 3404, pages 693--704. Springer, 2005.

\bibitem{rabinovich2014}
Alexander Rabinovich.
\newblock A proof of kamp’s theorem.
\newblock {\em Logical Methods in Computer Science}, 10(1), 2014.

\bibitem{reinhardt2013}
Fr{\'e}d{\'e}ric Reinhardt.
\newblock Automatic structures with parameters.
\newblock 2013.

\bibitem{sacks1963}
Gerald~E Sacks.
\newblock {\em Degrees of unsolvability}.
\newblock Number~55. Princeton University Press, 1963.

\bibitem{salomaa1968}
Arto Salomaa.
\newblock On finite automata with a time-variant structure.
\newblock {\em Information and Control}, 13(2):85--98, 1968.

\bibitem{scott1967}
Dana Scott.
\newblock Some definitional suggestions for automata theory.
\newblock {\em Journal of Computer and System Sciences}, 1(2):187--212, 1967.

\bibitem{semenov1984}
Aleksei~Lvovich Semenov.
\newblock Logical theories of one-place functions on the set of natural
  numbers.
\newblock {\em Mathematics of the USSR-Izvestiya}, 22(3):587, 1984.

\bibitem{thomas1997}
Wolfgang Thomas.
\newblock Languages, automata, and logic.
\newblock In {\em Handbook of formal languages}, pages 389--455. Springer,
  1997.

\bibitem{tsankov2011}
Todor Tsankov.
\newblock The additive group of the rationals does not have an automatic
  presentation.
\newblock {\em The Journal of Symbolic Logic}, 76(04):1341--1351, 2011.

\end{thebibliography}

\newpage


\tableofcontents

\appendix

\pagebreak

\section{Proof of Proposition \ref{prop:carwreg}}

\label{proof:prop:carwreg}

We first give a logical description for advice unary languages.

\begin{lemma} \label{lem:unary}

Let $\beta \in \Delta^\omega$. A language $U \subseteq 0^+$ is in $\reg^\infty[\beta]$ if (and only if) there is an $\MSO[<, \Delta]$-formula $\phi(x)$ such that $\beta \models \phi(n) \iff 0^{n+1} \in U $.

\end{lemma}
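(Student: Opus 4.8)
The plan is to translate freely between $\omega$-regular languages over the product alphabet $(\{0,\square\}\times\Delta)$ and $\MSO[<,\Delta]$-formulas evaluated in the fixed word model $\mf{W}^\beta$, using the standard Büchi correspondence between $\MSO$ and $\omega$-regularity over infinite words. The only real content beyond this correspondence is bookkeeping the convolution encoding: for a unary word $0^{n+1}$, the convolution $0^{n+1}\otimes\beta$ is the infinite word over $(\{0,\square\}\times\Delta)$ whose first component is $0$ on positions $\le n$ and $\square$ on positions $>n$, and whose second component is exactly $\beta$. Thus the position $n$ acts as a ``switch point'' that is recoverable inside the convoluted word as the unique $0$-labelled position whose successor is $\square$-labelled.

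For the (substantive) \emph{if} direction, I would assume a formula $\phi(x)$ with $\beta\models\phi(n)\iff 0^{n+1}\in U$ and build an $\omega$-regular $L'$. First I would rewrite $\phi$ into an $\MSO[<,\{0,\square\}\times\Delta]$-formula $\widehat\phi(x)$ by replacing every atom $P_a(y)$ (for $a\in\Delta$) with $P_{(0,a)}(y)\vee P_{(\square,a)}(y)$, so that $\widehat\phi$ reads the second component of a convoluted word exactly as $\phi$ reads $\beta$. Then I would take $L'$ to be the language of the sentence
\[\Phi := \big(\text{first component is } 0^+\square^\omega\big)\ \wedge\ \exists x\,\big(\text{$x$ is the last $0$-position}\ \wedge\ \widehat\phi(x)\big),\]
which is $\omega$-regular by Büchi's theorem. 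For $w=0^{m}$ with $m\ge 1$ one has $w=0^{(m-1)+1}$, the last $0$-position of $w\otimes\beta$ is $m-1$, and the second component is $\beta$, so $w\otimes\beta\in L'\iff\beta\models\phi(m-1)\iff w\in U$; for $w=\epsilon$ the first-component clause of $\Phi$ fails, matching $\epsilon\notin U$. Hence $U=\{w\mid w\otimes\beta\in L'\}\in\regi[\beta]$.

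For the \emph{only if} direction: given $U\in\regi[\beta]$ via an $\omega$-regular $L'$, I would take (again by Büchi) an $\MSO[<,\{0,\square\}\times\Delta]$-sentence $\Psi$ defining $L'$, and obtain $\phi(x)$ by the reverse substitution: replace each atom $P_{(0,a)}(y)$ by $y\le x\wedge P_a(y)$ and each $P_{(\square,a)}(y)$ by $y>x\wedge P_a(y)$. Relativising the predicates this way forces the free variable $x$ to play the role of the switch point $n$, so that $\beta\models\phi(n)\iff 0^{n+1}\otimes\beta\in L'\iff 0^{n+1}\in U$.

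The only genuine care required — and the step I expect to be fiddliest rather than deep — is ensuring that the switch point is $\MSO$-definable inside $w\otimes\beta$ and that the degenerate cases (the empty word, and the fact that no finite $w$ yields an infinite $0$-block) are handled, so that the constructed $L'$ (resp. $\phi$) captures exactly the unary words of $U$ and nothing else.
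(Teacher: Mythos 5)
Your proposal is correct and follows essentially the same route as the paper: both directions rest on the Büchi correspondence between $\MSO$ and $\omega$-regular languages, combined with the atom-substitution trick that encodes the convolution's ``switch point'' ($P_{(0,a)}(y) \leftrightarrow P_a(y) \wedge y \le x$, $P_{(\square,a)}(y) \leftrightarrow P_a(y) \wedge y > x$). The paper details only the direction from $U \in \regi[\beta]$ to the formula and declares the converse ``similar''; your explicit construction of $L'$ for that converse is exactly the similar argument intended, so there is nothing to correct.
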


\begin{proof}

Assume $U \in \reg^\infty[\beta]$. There is an $\omega$-regular language $L \subseteq (\{0, \square\} \times \Delta)^\omega$ such that $U = \{w \in 0^*~|~w\otimes\alpha \in L\}$. Automata-logic translations show that the set $L$ can be described by an $\MSO[<, \{0,\square\}\times \Delta]$-sentence $\psi$. Let $\phi_a(x)$ be the $\MSO[<, \Delta]$-formula obtained from $\psi$ by adding one free first-order variable $x$ and replacing each $P_{(0,b)}(y)$ by $P_{b}(y) \wedge y \le x$ and each $P_{(\square,b)}(y)$ by $P_{b}(y) \wedge y > x$. An induction then provides $\beta \models \phi(n) \iff 0^{n+1} \in U $. The converse results from a similar argument.

\end{proof}

$1 \Rightarrow 4$. Assume that $\reg^\infty[\alpha] \subseteq \reg^\infty[\beta]$, then $\Pref(\alpha) \in \reg^\infty[\beta]$.  Given a letter $a \in \Gamma$, we get $\{0^{n+1}~|\alpha[n]=a\} \in \reg^\infty[\beta]$ using closure properties of Proposition \ref{prop:closewreg}. Applying Lemma \ref{lem:unary} for each letter then builds an $\MSO$-relabelling.

$4 \Rightarrow 2 \Rightarrow 3$. If we have a relabelling $(\phi_a(x))_{a \in \Gamma}$, let $\Phi$ be the formula $\bigwedge_a (P_a(x) \leftrightarrow \phi_a(x))$ translated on the signature $\Gamma \times \Delta$. $\Phi$ describes an $\omega$-regular language $L' \subseteq (\Gamma \times \Delta)^\omega$ such that $\{\alpha\} = \{w~|~w\otimes\beta \in L'\}$. Thus $\{\alpha\} \in \wreg[\beta]$ and $2$ follows from closure properties of $\omega$-regular languages. On the other hand, uniformization theorems (transformations of relations into functions, see \cite{carayol2012}) applied to $L'$ give $3$.

$3\Rightarrow 1$. If $\alpha$ is the image of $\beta$ under some $\omega$-regular function, closure properties of $\omega$-regular languages show that every automaton with advice $\alpha$ can also use $\beta$.

\begin{remark}

The proof also gives equivalence with $\Pref(\alpha) \in \reg^\infty[\beta]$ and $\{\alpha\} \in \wreg[\beta]$. These languages are, in a certain sense, complete for the advice classes.

\end{remark}

\section{Proof of Theorem \ref{theo:infini}}

\label{proof:lem:restrict}

We split the proof in two independent parts to treat either $\FO$-definable languages only, or $\omega$-regular in a general way. The first case uses logical notions, whereas the second is highly based on structural properties of ($\omega$-)automata.

\subsection{$\FO$-definable languages}

At first glance, there is no reason why we could only check a finite part of a suffix to deduce an $\FO$-definable $\omega$-property. Indeed, it seems hard to formalise the intuition that, since our models all are suffixes of a given word, what happens infinitely often is always true, because first-order sentences are somehow too complex. However, there is no need to despair, because $\FO$ has the same expressive power (over infinite words) as \emph{linear temporal logic} ($\LTL$) - a result known as Kamp's theorem. In order to avoid possible confusions since several equivalent syntaxes exist, we now recall a syntax of $\LTL$.

\begin{definition}[{$\LTL[\Gamma]$}]

The set of $\LTL[\Gamma]$-formulas is defined inductively: every $a \in \Gamma$ is a formula, and $\top$ as well (atoms), if $\phi_1$ and $\phi_2$ are formulas, so are $\phi_1 \land \phi_2$, $\phi_1 \lor \phi_2$,$\neg \phi_1$, $\X \phi_1$ and $\phi_1 \U \phi_2$.

\end{definition}

We may use the following abbreviations: $\bot:= \neg \top$ and $\G \phi:= \neg (\top \U \neg \phi)$. The semantics of $\LTL$ in $\omega$-words being well-known and quite intuitive, we do not recall it here.

\begin{theorem}[Kamp, future-only fragment \cite{rabinovich2014}]

\label{theo:kamp}

Over the word structures of $\Gamma^\omega$, for every $\FO[<, \Gamma]$-sentence $\phi$, there is an equivalent $\phi' \in \LTL[\Gamma]$.

\end{theorem}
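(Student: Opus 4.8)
The easy direction, $\LTL[\Gamma] \subseteq \FO[<,\Gamma]$, is routine and I would only sketch it: one translates $\X$ and $\U$ into $\FO$ by the standard inductive encoding, carrying a free variable that marks the current position, with $\X\phi$ asking about the immediate successor and $\phi_1 \U \phi_2$ asserting a later witness for $\phi_2$ with $\phi_1$ holding in between. The content of the theorem is the converse inclusion, which I would attack by induction on the structure of $\FO$-formulas. The obstruction to a naive induction is that an $\LTL$-formula carries a single implicit free variable (its point of evaluation), whereas eliminating a quantifier forces one to reason about two positions at once. I would therefore prove a \emph{stronger} statement that also allows the past modalities (a Since operator $\S$ and a one-step predecessor operator), and recover the future-only fragment only at the very end.

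The stronger statement I would establish is: every $\FO[<,\Gamma]$-formula $\phi(x)$ with one free variable is equivalent, over the word structures of $\Gamma^\omega$, to a past-and-future $\LTL$-formula evaluated at $x$. I would phrase the induction in terms of \emph{quantifier rank} and \emph{types}: for each $k$ there are only finitely many $\FO$-types of rank $\le k$ of a pointed word $(\alpha,n)$, and by Ehrenfeucht--Fra\"iss\'e games two pointed words agree on all rank-$\le k$ formulas exactly when Duplicator wins the $k$-round game. The claim to prove by induction on $k$ is then that each rank-$k$ type is definable by a past-and-future $\LTL$-formula. The base case $k=0$ is immediate, a type of rank $0$ being determined by the letter at the marked position, hence definable by a disjunction of atoms $a \in \Gamma$; Boolean combinations are free.

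The heart of the argument --- and the step I expect to be the main obstacle --- is the quantifier case $\exists y\, \psi(x,y)$ with $\psi$ of rank $k$. I would split it according to the three cases $y<x$, $y=x$, $y>x$. The case $y=x$ collapses to a rank-$k$ formula in $x$ alone and is handled by the induction hypothesis. For $y>x$ one invokes a Feferman--Vaught style composition: the truth of $\psi(x,y)$ when $x<y$ is a Boolean combination of the rank-$k$ type of $x$, the rank-$k$ type of $y$, and the rank-$k$ type of the open interval $(x,y)$; asserting the existence of such a $y$ to the right then becomes a single $\U$-formula that scans forward, maintaining the interval condition through the ``until'' clause and landing on a witness of the correct type, whose dependence on lying to the right of $x$ is recaptured by a $\S$ modality looking back toward $x$. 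The case $y<x$ is dual, using $\S$ to scan backward. Making this composition precise --- decomposing a two-variable formula into independent forward-, backward- and point-parts and re-expressing each along a single scan --- is exactly Gabbay's separation phenomenon and is the technical crux of Kamp's theorem.

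Finally I would specialize to sentences. An $\FO$-sentence $\phi$ is a Boolean combination of formulas of the shape $\exists y\, \psi(y)$, so by the above it corresponds to a past-and-future $\LTL$-formula asserted at the initial position $0$; concretely ``there exists a position satisfying $\theta$'' becomes an eventually-formula $\theta \lor (\top \U \theta)$ evaluated at $0$. To reach the future-only fragment $\LTL[\Gamma]$ of the statement, I would appeal once more to separation: every past-and-future formula is equivalent to a Boolean combination of pure-past, pure-present and pure-future formulas, and at the origin of an $\omega$-word every pure-past formula has a constant truth value (there is no earlier position), so the past components can be replaced by $\top$ or $\bot$, leaving a future-only formula. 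This yields the required $\phi' \in \LTL[\Gamma]$. Alternatively one could bypass the induction entirely by routing through star-free languages, using that both $\FO$- and $\LTL$-definable $\omega$-languages coincide with the aperiodic ones; but that merely relocates the difficulty into the two $\omega$-Sch\"utzenberger-type theorems.
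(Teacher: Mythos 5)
The paper does not prove this statement at all: it is Kamp's theorem, imported as a black box from the literature (the citation is to Rabinovich's proof), and used in Appendix B as a tool in the proof of Theorem \ref{theo:infini}. So there is no proof in the paper to compare yours against; the only question is whether your sketch would stand on its own as a proof of the cited result.

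As written, it would not. Your outline correctly reproduces the architecture of the known composition-style proofs: induction on quantifier rank via EF-types, splitting $\exists y\,\psi(x,y)$ according to the position of $y$ relative to $x$, proving the stronger past-and-future statement, and finally killing the past modalities at position $0$ (that last step is correct and standard). But the step you yourself label the ``technical crux'' is not a step that can be deferred --- it is the entire theorem. Concretely, when you try to express ``there exists $y>x$ such that the interval $(x,y)$ has rank-$k$ type $T$ and $y$ has the right type'' as an until-formula, you hit the fundamental mismatch that $\phi_1 \U \phi_2$ imposes only a \emph{pointwise} condition $\phi_1$ on the intermediate positions, whereas an interval type is a property of the interval as a whole and is in general not equivalent to a conjunction of conditions at its individual points. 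Bridging exactly this mismatch (via Gabbay separation, Rabinovich's decomposition of interval types, or the algebraic route through aperiodic semigroups) is what every proof of Kamp's theorem spends its pages on, and your proposal offers no argument for it. A secondary imprecision: your Feferman--Vaught claim is stated too loosely; for $x<y$ the rank-$k$ type of $(\alpha,x,y)$ is determined by the types of the five segments $[0,x)$, $\{x\}$, $(x,y)$, $\{y\}$, $(y,\infty)$, not by ``the type of $x$, the type of $y$, and the type of $(x,y)$'' --- as phrased, ``the type of $x$'' is either circular (if it means the pointed word) or insufficient (if it means the letter). In short: as a map of how the proof goes, your proposal is accurate and matches the strategy of the reference the paper cites; as a proof, it has a genuine gap at its center.
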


The next step is to get a negation normal form.

\begin{lemma}

\label{lem:nnfltl}

For every $\LTL$-formula, there is an equivalent formula where negations only appear in front of atoms, and other connectives are $\U$, $\G$, $\X$, $\land$ and $\lor$.

\end{lemma}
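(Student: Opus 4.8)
The plan is to prove the statement by structural induction on the $\LTL[\Gamma]$-formula, driving every negation down towards the atoms by repeated use of the standard dualities, while keeping the outer connectives within the permitted set $\{\U, \G, \X, \land, \lor\}$. Concretely, I would define simultaneously a normal form for a formula $\phi$ and for its negation $\neg\phi$, always making the recursive calls on the \emph{proper subformulas} of $\phi$. For a formula whose outermost operator is not a negation, the normal form is obtained by recursing into the immediate subformulas and leaving that operator in place; here I would note that the atoms $a \in \Gamma$ and $\top$ stay unchanged, and that a leading $\neg$ in front of such an atom is already in the desired shape.

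The bulk of the work is the case analysis for $\neg\psi$ according to the shape of $\psi$. The boolean cases use De Morgan, $\neg(\phi_1 \land \phi_2) \equiv \neg\phi_1 \lor \neg\phi_2$ and dually, double negation gives $\neg\neg\phi_1 \equiv \phi_1$, and the next-step case uses the self-duality $\neg\X\phi_1 \equiv \X\neg\phi_1$, which is valid precisely because over $\Gamma^\omega$ every position has a successor. If one treats $\G$ as a primitive of the input, the case $\neg\G\phi_1 \equiv \top \U \neg\phi_1$ (eventually $\neg\phi_1$) handles it. The key case is the \emph{until}: since the target set of connectives contains no Release (nor weak-until) operator, I would expand the dual of $\U$ directly into the permitted connectives via
$$\neg(\phi_1 \U \phi_2) \;\equiv\; \G(\neg\phi_2) \;\lor\; \big(\,(\neg\phi_2) \U (\neg\phi_1 \land \neg\phi_2)\,\big),$$
and then apply the induction hypothesis to the strictly smaller arguments $\neg\phi_1$ and $\neg\phi_2$. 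Correctness of this equivalence is a routine semantic check over $\omega$-words, splitting on whether $\phi_2$ ever holds and, if so, looking at the first such position.

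In every case the recursive calls are made on subformulas strictly smaller than $\psi$, so the induction is well-founded and terminates; the fact that the until case duplicates $\neg\phi_2$ causes only a (harmless) blow-up in size and does not affect correctness. The main obstacle, and the only genuinely nonroutine point, is exactly this until case: one must find the right formulation of the dual of $\U$ that stays inside $\{\U, \G, \X, \land, \lor\}$ without introducing a Release operator, and then verify its semantic correctness carefully. Everything else follows the familiar pattern of negation normalisation.
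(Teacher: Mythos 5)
Your proof is correct and follows essentially the same route as the paper: the paper's (folklore) proof idea is precisely to apply inductively the equivalences $\neg \X \phi \equiv \X \neg \phi$ and $\neg (\phi \U \psi) \equiv (\G \neg \psi) \lor (\neg \psi \U (\neg \psi \land \neg \phi))$, which is exactly the dual-of-until expansion you use (up to commutativity of $\land$), together with the standard De Morgan and double-negation steps.
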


\begin{proof}[Proof idea (folklore).]

Apply inductively standard $\LTL$ equivalences, namely $\neg \X \phi \equiv \X \neg \phi$ and $\neg (\phi \U \psi) \equiv (\G \neg \psi) \lor (\neg \psi \U (\neg \psi \land \neg \phi))$.

\end{proof}

We fix an $\FO[<, \Gamma]$-sentence $\phi$ and an infinite word $\alpha \in \Gamma^\omega$. By Theorem \ref{theo:kamp} and Lemma \ref{lem:nnfltl}, there exists a $\LTL[\Gamma]$-formula $\phi'$ in negation normal form and equivalent to $\phi$ (over all word models). The idea is now to remark that, since the word $\alpha$ is fixed, the connective $\G$ is somehow useless. Let the $\G$-subformulas of $\phi'$ be the largest subformulas whose main connective is $\G$. If there is $n \ge 0$ such that $\alpha[n:] \models \G \nu$, the $\G$-subformula $\G \nu$ is said to be \emph{consistent}, in that case $\alpha[m:] \models \G \nu$ for all $m\ge n$.

Let $\phi''$ be the formula obtained from $\phi'$ by replacing each consistent $\G$-subformula by $\top$ and each non-consistent $\G$-subformula by $\bot$. $\phi''$ is in negation normal form and has no longer $\G$ connectives. Furthermore, there is $N \ge 0$ such that for all $m  \ge N$, $\alpha[m:]\models \phi''$ if and only if  $\alpha[m:]\models \phi'$ (take $N$ to be the maximum of all the $n$ from the previous paragraph).

Now we translate back the formula $\phi''$ into an $\FO$-sentence on \emph{finite words}.

\begin{lemma} \label{lem:ltlphi}

For every $\LTL[\Gamma]$-formula $\phi''$ where negations only appear in front of atoms, and other connectives are $\U$,$\X$, $\land$ and $\lor$, there is an $\FO$-sentence $\psi$ such that for all $\beta \in \Gamma^\omega$, $\beta \models \phi''$ if and only if there is $n \ge 0$ such that $\beta[:n] \models \psi$. 

\end{lemma}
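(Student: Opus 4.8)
The plan is to translate $\phi''$ into first-order logic by the \emph{standard translation}, and to exploit the fact that a $\G$-free formula can only constrain a \emph{finite} part of the word, so that whenever it holds its truth is already forced by some finite prefix. First I would define, by induction on the $\G$-free negation-normal-form formula $\phi''$, an $\FO[<,\Gamma]$-formula $T_{\phi''}(x)$ with one free first-order variable (to be read as ``the current position''): $T_a(x) := P_a(x)$, $T_{\neg a}(x) := \neg P_a(x)$, $T_{\phi_1\wedge\phi_2} := T_{\phi_1}\wedge T_{\phi_2}$ and dually for $\vee$, $T_{\X\phi}(x):=\exists y\,\big(y>x\wedge\neg\exists z\,(x<z<y)\wedge T_\phi(y)\big)$, and $T_{\phi_1\U\phi_2}(x):=\exists y\,\big(y\ge x\wedge T_{\phi_2}(y)\wedge\forall z\,(x\le z<y\rightarrow T_{\phi_1}(z))\big)$; the constants $\top,\bot$, which may appear after the elimination of the $\G$-subformulas, go to $x=x$ and $x\ne x$. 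I would then take $\psi:=\exists x\,\big((\forall z\ x\le z)\wedge T_{\phi''}(x)\big)$, the $\FO$-sentence stating that $T_{\phi''}$ holds at the first position.

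Next I would establish the equivalence through two claims, each by structural induction on $\phi''$. The first is \emph{soundness}: for every finite prefix $w=\beta[:m]$ and position $n<m$, if $w\models T_{\phi''}(n)$ then $\beta,n\models\phi''$. This holds because every existential witness of $T_{\phi''}$ located inside $w$ remains a witness in $\beta$, while every universal quantifier occurring in the translation ranges over a \emph{bounded} interval $[x,y)$, so no assertion is made about the unseen suffix; this is exactly the step that would fail for $\G$ (whose translation carries an unbounded $\forall y\ge x$), and it pinpoints why the hypothesis that $\phi''$ is $\G$-free is indispensable. The second is a \emph{finite-witness} property: if $\beta,n\models\phi''$ then $\beta[:m]\models T_{\phi''}(n)$ for some $m>n$. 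For this I would first record the elementary \emph{monotonicity} remark that, once $n<m$, the truth of $T_{\phi''}(n)$ over $\beta[:m]$ is preserved as $m$ grows (again because only bounded quantification occurs), and then induct, taking at each binary connective the maximum of the two prefix lengths given by the induction hypothesis.

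I expect the only real obstacle to be the $\U$-case of the finite-witness direction. There $\beta,n\models\phi_1\U\phi_2$ supplies a witness $k\ge n$ with $\beta,k\models\phi_2$ together with $\beta,j\models\phi_1$ for the \emph{finitely many} $j\in[n,k)$; applying the induction hypothesis to each of these finitely many satisfactions yields finitely many prefix lengths, and monotonicity lets a single prefix longer than all of them force $T_{\phi_1\U\phi_2}(n)$. It is precisely the $\G$-freeness of $\phi''$ that keeps this set of witness positions finite, so that no connective ever forces us to inspect the whole infinite word. Combining soundness and the finite-witness property with the definition of $\psi$ then yields $\beta\models\phi''$ if and only if $\beta[:m]\models\psi$ for some $m$, which is the claim.
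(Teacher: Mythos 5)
Your proposal is correct and follows essentially the same route as the paper, whose proof is exactly the ``easy induction on the $\LTL[\Gamma]$-formula'' that you carry out: your standard translation with bounded quantification, together with the soundness and finite-witness (monotonicity) claims, is simply a careful write-up of that induction, with the $\U$-case handled precisely as intended.
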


\begin{proof}[Proof idea.]

Easy nduction on the $\LTL[\Gamma]$-formula.

\end{proof}

Since Lemma \ref{lem:ltlphi} holds in particular for suffixes $\alpha[m:]$ of $\alpha$, we get that for all $m \ge N$, $\alpha[m:] \models \phi$ if and only if $\exists n \ge m, \alpha[m:n] \models \psi$.

\subsection{General case of $\omega$-regular languages}

We will follow a completely different scheme here. Our approach is based on results of Semenov \cite{semenov1984}, where he gives a characterization of $\omega$-words whose $\MSO$-theory is decidable. This main result is not useful in our context, but one of technical lemmas stated in the proof is of a particular relevance. We briefly recall the definitions of \cite{semenov1984}. A congruence  $\mc{E}$ on $\Gamma^*$ is an equivalence relation of finite index, compatible with concatenation. The key idea of Semenov lies in the following definition.

\begin{definition}

Let $\mc{E}$ be a congruence, an \emph{$\mc{E}$-index} $c$ is a nonempty finite word on the alphabet of equivalence classes, of length at most the index of the congruence. Given a $\mc{E}$-index $c$, define the set of its values $\val(c)\subseteq \Gamma^*$ by induction:

\item

\begin{itemize}

\item if $c$ is only one equivalence class $E$, then $\val(c) = E \cap \Gamma$ (letters);

\item if $c = c'E$ where $E$ is a class, then $w \in \val(c)$ if and only three conditions are met: $w \in E$, every proper suffix/prefix of $w$ belongs to $\val(c')$, every subword of $w$ belonging to $\val(c')$ is either a suffix or a prefix.

\end{itemize}

\end{definition}

It is not hard to see that for every $\mc{E}$-index $c$, $\val(c)$ is a regular set which does not contain a proper subword of its words. The intuition behind this definition is that it helps computing the possible segments of runs in a given automaton.

Denote by $\ind(\mc{E})$ the finite set of all $\mc{E}$-indices. To quantify the occurrences of their values in a given $\alpha \in \Gamma^\omega$, let $\mc{E}^n \alpha:= \{c \in \ind(\mc{E})~|~\exists w \in \val(c) \text{ subword of }\alpha[n:]\}$. It is clear that $\mc{E}^{n+1} \alpha \subseteq \mc{E}^{n} \alpha $, and since there is a finite number of indices, the sequence $(\mc{E}^n\alpha)_{n \ge 1}$ is ultimately constant with value $\mc{E}\alpha:= \bigcap_{n \ge 1} \mc{E}^{n}$. A position $n \ge 0$ in a given $\alpha$ is said to be $\mc{E}$-remote if $\mc{E}^n \alpha = \mc{E} \alpha$ (we reached some kind of stability with respect to the congruence). Note that $\mc{E}$-remote positions of $\alpha$ form a final non-empty segment of $\mb{N}$.

\begin{definition}[regular system]

\label{def:regsys}

Given a congruence $\mc{E}$, a regular $\mc{E}$-system $\mc{R}$ is a mapping from $\mc{P}(\ind(\mc{E}))$ into regular subsets of $\Gamma^*$, such that for every $\omega$-word $\alpha$ and every position $n \ge 0$, there is a unique position $m \ge n$ such that $\alpha[n:m]\in \mc{R}(\mc{E}\alpha)$.

\end{definition}

\begin{example}

If $S$ is a set of $\mc{E}$-indices, let $ L(S) := \bigcap_{c \in S} \Gamma^* \val(c) \Gamma^*$. Let $\mc{R}$ mapping $S$ to the set of words of $ L(S)$ that have no proper prefix in $L(S)$ (smallest elements for prefix-ordering). Then $\mc{R}$ is a $\mc{E}$-regular system.

\end{example}

If $n\ge 0$ is a position, denote by  $\chi_n$ the characteristic sequence $0^n10^\omega$.

\begin{lemma}[\cite{semenov1984}, Lemma 8]

\label{lem:semenov}

For every $\omega$-regular language $L$, there is a congruence $\mc{E}$, a regular $\mc{E}$-system $\mc{R}$ and a finite word automaton $\mc{A}$ such that the following holds. For every $\alpha \in \Gamma^\omega$ and every $\mc{E}$-remote position $n$ (of $\alpha$), $\alpha \in L$ if and only if $\mc{A}$ accepts $\alpha[:m] \otimes \chi_n[:m]$ where $m\ge n$ is the unique position such that $\alpha[n:m] \in \mc{R}(\mc{E}\alpha)$.
\end{lemma}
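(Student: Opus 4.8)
The plan is to fix a Büchi automaton $\mc{B} = (Q, q_0, \delta, F)$ recognizing $L$ and to let $\mc{E}$ be its transition-profile congruence: two finite words $u,v$ are $\mc{E}$-equivalent when, for every pair of states $(p,q)$, they agree both on the existence of a $p$-to-$q$ run and on the existence of such a run visiting $F$. This is a congruence of finite index, its classes being exactly the elements of the transition monoid of $\mc{B}$ enriched with acceptance information, so all the finiteness requirements on $\ind(\mc{E})$ and $\val(c)$ hold. The whole difficulty is then to recast Büchi acceptance, an infinitary condition, as the single-block test of the statement. I would do this through the standard Ramsey factorization: for any $\alpha$ there is an idempotent profile $e$ and a prefix profile $s$ with $s \cdot e = s$ such that $\alpha$ splits into a prefix of profile $s$ followed by infinitely many consecutive blocks each of profile $e$; then $\alpha \in L$ depends only on the pair $(s,e)$, namely on whether the loop $e$ carries an accepting $F$-visit reachable from $q_0$ via $s$.

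Next I would connect this factorization to the index machinery. The set $\mc{E}\alpha$ collects those indices $c$ whose values $\val(c)$ occur arbitrarily far out in $\alpha$; since $\val(c)$ contains no proper subword of its own words, these are precisely the irreducible recurrent patterns of the tail, and I would show that $\mc{E}\alpha$ determines the recurrent idempotent $e$. I would then take $\mc{R}$ essentially as in the Example preceding the statement: for a set $S$ of indices set $L(S) = \bigcap_{c \in S} \Gamma^* \val(c) \Gamma^*$ and let $\mc{R}(S)$ be the words of $L(S)$ with no proper prefix in $L(S)$, so that $\alpha[n:m] \in \mc{R}(\mc{E}\alpha)$ is the shortest block from $n$ that contains, as a subword, a value of every recurrent index. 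Remoteness of $n$ guarantees that from $n$ onward exactly the indices of $\mc{E}\alpha$ are available, so such a minimal block exists and is unique; the key structural claim is that its transition profile equals the recurrent idempotent $e$, because it simultaneously witnesses all recurring segments and therefore absorbs into the idempotent loop.

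Finally I would build $\mc{A}$. The marker $\chi_n$ tells $\mc{A}$ where the cut $n$ lies inside the finite input $\alpha[:m] \otimes \chi_n[:m]$. Running the deterministic profile automaton whose states are elements of the transition monoid, $\mc{A}$ computes the profile $s$ of $\alpha[:n]$ and the profile $e$ of $\alpha[n:m]$, and accepts precisely when $(s,e)$ lies in the finite set of pairs for which $s\, e^\omega$ is accepted by $\mc{B}$. Correctness then follows from the structural claim: for remote $n$ the computed $e$ is the recurrent idempotent, so the block test reproduces the Ramsey acceptance condition for $\alpha$. I expect the main obstacle to be exactly this structural claim --- that a minimal $\mc{R}(\mc{E}\alpha)$-block taken at a remote position has profile equal to the recurrent idempotent. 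Establishing it requires a careful interplay between the nested inductive definition of $\val(c)$ through indices of the form $c'E$, the stabilization of $\mc{E}^n\alpha$ at remote positions, and the idempotent structure supplied by Ramsey's theorem; everything else (finite index of $\mc{E}$, regularity of each $\val(c)$ and of $\mc{R}(S)$, and the routine construction of $\mc{A}$) is comparatively mechanical.
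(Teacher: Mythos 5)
First, a point of reference: the paper does not prove this lemma at all --- it is imported verbatim from Semenov (Lemma 8 of \cite{semenov1984}), so your attempt must be judged on its own merits rather than against an in-paper argument.

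There is a genuine gap, and it sits exactly where you placed your ``key structural claim,'' which is in fact false as stated, not merely hard. You propose that $\mc{E}\alpha$ determines the recurrent Ramsey idempotent $e$, and that the minimal $\mc{R}(\mc{E}\alpha)$-block at a remote position has profile $e$, so that $\mc{A}$ may accept iff the pair $(s,e)$ is accepting. Consider $L$ = ``infinitely many $a$'s at even positions'' (an $\omega$-regular language), $\alpha_1 = (ab)^\omega$ and $\alpha_2 = (ab)^j\, b\, (ab)^\omega$, with $n = 2j$ remote for both. The prefixes $\alpha_1[:n]$ and $\alpha_2[:n]$ are the \emph{same word} $(ab)^j$, hence have the same profile $s$; and the tails $(ab)^\omega$ and $(ba)^\omega$ have identical factor sets, so for \emph{every} congruence $\mc{E}$ whatsoever the recurrent index sets $\mc{E}\alpha_1$ and $\mc{E}\alpha_2$ coincide (index membership depends only on which values occur as subwords of tails). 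Yet $\alpha_1 \in L$ and $\alpha_2 \notin L$. Hence no function of $(s, \mc{E}\alpha)$ can decide membership, the recurrent idempotents of $\alpha_1$ and $\alpha_2$ genuinely differ while their index sets agree, and your reduction collapses. The lemma itself survives only because $\mc{A}$ reads the \emph{concrete} block $\alpha[n:m]$ (here $ab\cdots$ versus $ba\cdots$, which differ and carry the alignment information); proving that this finite block suffices to determine tail acceptance is precisely the content you defer, and it is where Semenov's nested definition of $\val(c'E)$ does its work.

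A second, related defect: minimality in $\mc{R}(S)$ is minimality for \emph{subword coverage}, which gives no absorption property on profiles. Nothing forces the minimal covering block to have an idempotent profile --- in a monoid tracking position parity, an odd-length block has a parity-flipping profile, which cannot be idempotent --- so ``accept iff $s\,e^\omega$ is accepted, where $e$ is the block's profile'' is not even well-typed in general ($(s,e)$ need not be a linked pair), and the naive repair ``$\alpha \in L$ iff $\alpha[:n](\alpha[n:m])^\omega \in L$'' fails on the same parity example. Your scaffolding (transition-profile congruence, Ramsey factorization, the paper's example regular system, the marker $\chi_n$) is sensible, and your observation that remoteness makes $\mc{E}\alpha$ recoverable from the block is correct and useful; but the bridge from that data to acceptance is the entire substance of Semenov's lemma, and in your proposal it is both unproved and, in the specific form claimed, refutable. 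As it stands this is a plan for a proof, not a proof.
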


This result is especially suitable in our context. Remark that given a congruence $\mc{E}$ and an infinite word $\alpha$, there is $N\ge 0$ such that for all $m \ge N$, every position is $\mc{E}$-remote in $\alpha[m:]$  (take $N$ to be any $\mc{E}$-remote position of $\alpha$). This allows us to reformulate a weaker version of Lemma \ref{lem:semenov}.

\begin{lemma}

\label{lem:finitela}

For every $\omega$-regular language $L$ and every infinite word $\alpha$, there exist $N\ge 0$ and a finite word automaton $\mc{D}$ such that for all $m \ge N$:

\item

\begin{itemize}

\item if $\alpha[m:] \in L$ there is a (unique) $i_m\ge m$ such that $\mc{D}$ accepts $\alpha[m:i_m]$;

\item  else there is no $i \ge m$ such that $\mc{D}$ accepts $\alpha[m:i]$.

\end{itemize}
\end{lemma}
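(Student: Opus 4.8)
The plan is to deduce Lemma~\ref{lem:finitela} from Semenov's Lemma~\ref{lem:semenov} by exploiting the fact that, once we pass a finite threshold, every suffix of the fixed word $\alpha$ behaves uniformly with respect to the congruence. First I would apply Lemma~\ref{lem:semenov} to $L$, obtaining a congruence $\mc{E}$, a regular $\mc{E}$-system $\mc{R}$ and a finite automaton $\mc{A}$; these objects depend only on $L$, not on $m$. I then fix $N$ to be any $\mc{E}$-remote position of $\alpha$, so that (as in the remark preceding the lemma) for every $m \ge N$ every position of the suffix $\alpha[m:]$ is $\mc{E}$-remote.

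The key observation I would establish is that $\mc{E}(\alpha[m:]) = \mc{E}\alpha$ for all $m \ge N$: indeed $\mc{E}^k(\alpha[m:]) = \mc{E}^{m+k}\alpha$ by the very definition of these index sets (subwords of $\alpha[m+k:]$), and this equals $\mc{E}\alpha$ once $m \ge N$. Hence $\mc{R}(\mc{E}(\alpha[m:])) = \mc{R}(\mc{E}\alpha)$ is a \emph{fixed} regular language $R$, independent of $m$. Applying Lemma~\ref{lem:semenov} to the word $\gamma := \alpha[m:]$ at the $\mc{E}$-remote position $p = 0$, I obtain: $\alpha[m:] \in L$ if and only if $\mc{A}$ accepts $\alpha[m:i_m] \otimes \chi_0[:(i_m - m)]$, where $i_m$ is the unique position with $\alpha[m:i_m] \in R$. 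The crucial simplification is that $\chi_0 = 10^\omega$ is now a \emph{fixed} annotation, no longer depending on the reference position.

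I would then build $\mc{D}$ as a product of two finite automata reading $\alpha[m:]$ over $\Gamma$: a copy of $\mc{A}$ fed the input paired with $\chi_0$ (first letter with $1$, all later letters with $0$, which $\mc{D}$ can hardcode into its transitions), together with a deterministic automaton $\mc{B}$ recognising the fixed regular language $R$. I declare $\mc{D}$ to accept exactly when $\mc{B}$ is in a final state (the prefix read lies in $R$) \emph{and} simultaneously $\mc{A}$ is in an accepting state. By the regular-system property applied to $\alpha[m:]$ at position $0$, the prefix lies in $R$ at exactly one length, namely $i_m - m$; so $\mc{D}$ can accept at most one $i$, and it does so precisely when $\mc{A}$ accepts there, i.e. precisely when $\alpha[m:] \in L$. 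This yields both bullet points, with the same $\mc{D}$ and the same $N$ serving all $m \ge N$.

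The main obstacle, and the step deserving the most care, is verifying that a \emph{single} automaton $\mc{D}$ and threshold $N$ work for all suffixes simultaneously. This hinges on the stabilisation $\mc{E}(\alpha[m:]) = \mc{E}\alpha$ (so that $R$ is $m$-independent) together with the ``resetting'' of the position marker: in Semenov's lemma the annotation $\chi_p$ varies with $p$, but taking each suffix's own origin as reference collapses it to the constant $\chi_0 = 10^\omega$, which $\mc{D}$ can absorb into its transition function. Once this is granted, uniqueness of $i_m$ and the absence of spurious accepting lengths follow immediately from the defining property of the regular $\mc{E}$-system, and the two cases of the statement match the presence or absence of an accepting $\mc{A}$-state at $i_m$.
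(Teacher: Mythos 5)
Your proposal is correct and follows essentially the same route as the paper's proof: apply Semenov's Lemma~\ref{lem:semenov}, choose $N$ so that every position of each suffix $\alpha[m:]$ is $\mc{E}$-remote (making $R = \mc{R}(\mc{E}\alpha)$ a fixed, $m$-independent regular language and collapsing the position marker to the constant $\chi_0 = 10^\omega$, which the automaton hardcodes), and then take the product of the resulting automaton with a recognizer for $R$, accepting exactly when both accept. Your write-up is in fact somewhat more detailed than the paper's (e.g.\ the explicit verification that $\mc{E}^k(\alpha[m:]) = \mc{E}^{m+k}\alpha$ and the uniqueness argument via the regular-system property), but the construction and key ideas are identical.
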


\begin{proof} 
We use the notations of Lemma \ref{lem:semenov}. Since $\alpha$ is fixed, $R:= \mc{R}(\mc{E}\alpha)$ is a fixed regular language verifying the subword property evoked in Definition \ref{def:regsys}. Also note that for all $n \ge 0$, $ \mc{R}(\mc{E}\alpha[n:]) = R$ since $\mc{E}\alpha[n:] = \mc{E}\alpha$. Choose $N$ to be such that for all $m \ge N$, every position in $\alpha[m:]$ is $\mc{E}$-remote. In particular $0$ is an $\mc{E}$-remote position of each $\alpha[m:]$, hence we can get rid of the characteristic sequence $\chi_n$ if we take this value. In other words, we build from Lemma \ref{lem:semenov} an automaton $\mc{A}'$ such that for all $m \ge N$, $\alpha[m:]\in L$ if and only if $\mc{A}'$ accepts $\alpha[m:m']$ where $m'\ge m$ is the unique position such that $\alpha[m:m'] \in R$.

Now, the construction of $\mc{D}$ is based on the product of $\mc{A'}$ and an automaton $\mc{A}''$ recognizing $R$. It checks if $\mc{A'}$ accepts when $\mc{A}''$ accepts, what necessarily happens after a finite time.

\end{proof}

Lemma \ref{lem:finitela} provides exactly the elements we needed to achieve the proof.

\section{Self-contained proof of Corollary \ref{cor:autincl}}

\label{proof:theo:caraut}

We show that $\mc{P}^f(\mf{W}^\alpha) \in \aut^\infty[\beta]$ implies $\alpha \pc_{\MSOT} \beta$. We actually adapt and shorten the proof of \cite{loding2007} in our context. We first present a simple class of $\reg^\infty[\beta]$-presentation.

\begin{definition} [homogeneity]
A $\reg^\infty[\beta]$-presentation is said to be \emph{homogeneous} if there is an integer $K\ge 1$ such that the finite words encoding the elements belong to $\bigcup_{1 \le k \le K} k^*$.
\end{definition}

Such presentations are especially simple, in the sense that the only relevant information in an word is its length (and its ``color'' $k$, which is bounded). They are closely related to $\MSO$-transductions, as shown in Lemma \ref{lem:homomso}.

\begin{lemma}

\label{lem:homomso}

If $\mf{W}^\alpha$ has an homogeneous $\reg^\infty[\beta]$-presentation, $\alpha \pc_{\MSOT} \beta$.

\end{lemma}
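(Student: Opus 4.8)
The plan is to exploit the defining feature of a homogeneous presentation: an element of the domain is encoded by a word $k^m \in k^*$, so the only data carried by an encoding word is the pair $(k,m)$ consisting of its \emph{color} $k$ (bounded by some $K$) and its \emph{length} $m$. First I would recall that the presentation consists of languages $(L, L_=, L_<, (L_a)_{a\in\Gamma})$ in $\reg^\infty[\beta]$, and that $L \subseteq \bigcup_{1\le k\le K} k^*$. By Proposition~\ref{prop:langreg} (Corollary~\ref{prop:langreg}) each such language is, after appending a computable number $f(|w|)$ of padding symbols, in $\reg[\beta]$; so each relation is checkable by a terminating automaton reading a convolution against a finite prefix of $\beta$. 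Because the presentation presents the \emph{word structure} $\mf{W}^\alpha = \langle \mb{N}, <, (P_a)_{a\in\Gamma}\rangle$, its domain is in bijection with $\mb{N}$, and I would first argue that this forces the set of valid lengths-and-colors to be describable in $\MSO$ over $\beta$.

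The key step is to build, position by position, an $\MSO$-transduction $\tau$ with $\tau(\beta)=\alpha$. The natural $k$-copying structure to use has $k=K$ copies of the domain of $\mf{W}^\beta$: I place the encoding word $k^m$ at position $m$ in copy $k$. Concretely, for each color $k$ and each letter $a\in\Gamma$, I want formulas $\phi_k^a(x)$ that hold exactly when $k^x$ is a valid encoding word in $L$ naming an element of $\mf{W}^\alpha$ whose label is $a$. The membership test ``$k^x \in L$'' and the label test ``$k^x \in L_a$'' are $\omega$-regular conditions on $k^x \otimes \beta$; since the word $k^x$ is entirely determined by the number $x$, each is an $\MSO[<,\Delta]$-property of the single position $x$ in $\mf{W}^\beta$, by the standard automata--logic translation (this is essentially Lemma~\ref{lem:unary} applied to unary languages, generalized to the bounded-color case). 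Likewise the order formula $\phi^<_{(k,i),(k',j)}(x,y)$ encoding ``the element named by $k^x$ is $<$ the element named by $(k')^y$'' comes from the $\omega$-regular language $L_<$ on the convolution, again an $\MSO$-definable binary relation on the two positions of $\beta$. Assembling these formulas yields an $\MSOT$ whose output structure $I_\tau(\mf{W}^\beta)$ is isomorphic to $\mf{W}^\alpha$, giving $\alpha\pc_{\MSOT}\beta$.

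The main obstacle I expect is justifying that the $\omega$-regular presentation relations really do reduce to $\MSO$-properties of the \emph{single} length parameters, rather than of genuinely two-dimensional data. The point to nail down is that an encoding word $k^m$ is completely recoverable from $(k,m)$, so the convolution $k^m \otimes \beta$ contains no more information than the pair (color $k$, position $m$) superimposed on $\beta$; hence any $\omega$-regular language restricted to such inputs is pulled back to an $\MSO[<,\Delta]$ formula with the length as the free first-order variable. I would make this precise by taking the $\omega$-automaton for each presentation language, restricting its input to the ``staircase'' shape $k^m\square^\omega \otimes \beta$, and observing that its run is determined by $m$, $k$ and $\beta$; a routine automata-to-$\MSO$ argument then extracts the formula. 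A secondary point to handle carefully is surjectivity and well-definedness: I must check that the formulas $\phi^a_k$ are exhaustive and mutually exclusive on the domain (so that each output position gets exactly one label and the domain is genuinely all of $\mb{N}$), which follows from the presentation being a presentation of a word structure together with injectivity, and that the resulting $I_\tau(\mf{W}^\beta)$ is indeed a linear order isomorphic to $\mathbb{N}$ with the $P_a$ partitioning it. Once these verifications are in place, the identification of such $1$-copying-per-color interpretations with $\MSOT$ (noted in the remark following the definition of $\MSOT$) closes the argument.
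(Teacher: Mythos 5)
Your proposal is correct and takes essentially the same route as the paper: a $K$-copying $\MSOT$ with one copy per color $k$, whose domain, label, and order formulas are obtained by translating the $\omega$-regular presentation languages restricted to unary encodings into $\MSO[<,\Delta]$-formulas over $\beta$ (this is exactly Lemma~\ref{lem:unary} and its binary analogue for $<$, which is also what the paper invokes). The initial detour through Corollary~\ref{prop:langreg} is unnecessary but harmless, and your closing well-definedness checks (including the appeal to injectivity) are at the same level of rigor as the paper's own sketch.
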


\begin{proof}[Proof sketch.] Let $L \subseteq \bigcup_{1 \le k \le K} k^*$ be the language encoding the domain of $\mf{W}^\alpha$ in the presentation. We build a $K$-copying $\MSO$-transduction $\tau$ such that $\tau(\beta) = \alpha$. Indeed, according to Lemma \ref{lem:unary}, for $1 \le  k \le K$ fixed, the set of positions $\{n~|~k^n \in L\}$ can be described by a formula $\phi_k(x)$. A similar argument show that there exists a formula $\phi_{<}(x,y)$ to describe the relation $<$.

\end{proof}

\begin{lemma} \label{lem:pro} If $\mc{P}^f(\mf{A}) \in \aut^\infty[\beta]$, then $\mf{A}$ has an homogeneous $\reg^\infty[\beta]$-presentation.

\end{lemma}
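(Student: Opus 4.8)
The plan is to first peel off a plain (non-homogeneous) $\regi[\beta]$-presentation of $\mf{A}$ from the given presentation of $\mc{P}^f(\mf{A})$, and only then do the real work: upgrading it to a homogeneous one by re-encoding each element by its rank in a definable well-order. Throughout I may assume, by Proposition \ref{prop:injective}, that $\mc{P}^f(\mf{A})$ comes with an \emph{injective} $\regi[\beta]$-presentation over some encoding alphabet $\Sigma$; and by Theorem \ref{theo:countable} that $\mf{A}$ is countable (so its elements, i.e.\ the singletons, form a countable set).

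First I would observe that the elements of $\mf{A}$ are exactly the $\subseteq$-atoms of $\mc{P}^f(\mf{A})$: the empty set is the $\subseteq$-minimum and a singleton is a nonempty set with no proper nonempty subset, both $\FO$-expressible from $\subseteq$. Since $\regi[\beta]$ is closed under Boolean operations, cylindrification and projection (Proposition \ref{prop:closewreg}), every $\FO$-combination of the relations of the presentation is again in $\regi[\beta]$. Restricting the domain language to the codes of atoms, recovering each $R_i$ from $R'_i$ and equality from the injective $=$, therefore yields a $\regi[\beta]$-presentation of $\mf{A}$ itself — but with arbitrary code words, so not yet homogeneous.

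Next I would make the code words irrelevant by pinning each element to a natural number. Length-lexicographic comparison of words over $\Sigma$ is regular, so it induces a $\regi[\beta]$-recognizable linear order $<'$ on the atoms; being a well-order on a countable set, it has order type at most $\omega$. Hence for every element $a$ its initial segment $S_a := \{b : b \le' a\}$ is \emph{finite}, i.e.\ a genuine element of $\mc{P}^f(A)$, and the relation linking the code of $a$ to the code of $S_a$ is $\FO$-definable from $\subseteq$ and $<'$, thus $\regi[\beta]$. One has $|S_a| = \operatorname{rank}(a)+1$, and the covering (successor) relation $S_{a'} = S_a \cup \{a'\}$ is likewise definable, so the $S_a$ form a definable chain isomorphic to $(\mb{N},\le)$.

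The hard part will be the counting step: I want to re-encode $a$ by the unary word $1^{\operatorname{rank}(a)}$ (a single color suffices, so $K=1$) and show that the new domain, equality and relation languages lie in $\regi[\beta]$. The obstacle is that producing a $\regi[\beta]$ relation matching the unary length $n$ against the code of the rank-$n$ element is exactly a cardinality count, and such matching is not regular for arbitrary sets. This is where the powerset structure together with the \emph{fixed} advice is indispensable: rather than counting $|S_a|$ directly, I would have an automaton read $1^n \otimes \beta$ and advance $n$ steps along the definable successor-chain of initial segments, moving the counter in lockstep by exploiting $\omega$-regular properties of the suffixes of the fixed word $\beta$ (in the spirit of Theorem \ref{theo:infini}, where fixing the advice lets a terminating automaton read off $\omega$-regular suffix information). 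Establishing this synchronization is the technical heart of the argument, and is where I would follow, in condensed form, the construction of \cite{loding2007}. Once it is in place, the equality language collapses to $\{1^n \otimes 1^n\}$ and each relation $R_i$ transfers by composing the re-encoding relation with the original $\regi[\beta]$ relations and projecting out the old codes (Proposition \ref{prop:closewreg}); this produces the desired homogeneous $\regi[\beta]$-presentation of $\mf{A}$.
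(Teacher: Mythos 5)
Your opening moves are sound (extracting the atoms by an $\FO$-definition in $\mc{P}^f(\mf{A})$, invoking Propositions \ref{prop:injective} and \ref{prop:closewreg}, aiming to re-encode atoms over a bounded alphabet), and your observations about the length-lexicographic order $<'$ and the definability of the map $a \mapsto S_a$ and of the successor relation on initial segments are all correct. The genuine gap is exactly the step you call the technical heart, and the mechanism you sketch for it does not exist. You want the relation $\{1^n \otimes w~|~w \text{ codes the rank-}n \text{ atom}\}$ to be in $\regi[\beta]$ by letting an automaton on $1^n \otimes \beta$ ``advance $n$ steps along the definable successor-chain of initial segments, in lockstep''. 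But an automaton reading $1^n \otimes \beta$ has only $1^n$ and the fixed advice on its tape; the codes of the intermediate sets $S_{a_1}, S_{a_2}, \dots$ are unbounded words that are simply not part of its input, so there is nothing to move in lockstep with. Iterating a $\regi[\beta]$-recognizable successor relation $n$ times is a transitive-closure (i.e.\ exact-counting) operation, which is precisely what automata with advice lack; and Theorem \ref{theo:infini} cannot supply it, since it only lets a terminating automaton decide $\omega$-regular properties of suffixes of $\beta$, not count along a definable chain. Deferring this step to ``the construction of \cite{loding2007}'' does not repair it, because that construction is not a synchronization along a rank chain at all — it does something structurally different.

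What the paper (and \cite{loding2007}) actually does is abandon exact ranks. It defines an index $\ind(w) := \max \imp(w)$ purely from the automaton recognizing $\atoms$: a position $n$ is \emph{important} for $w$ when more than a fixed constant many continuations of $w[:n]$ are again codes of atoms. Recognizability of $\{w \otimes 0^n \square^\omega~|~\ind(w) = n\}$ is then immediate from closure properties, because counting \emph{up to a fixed bound} is regular — no counting to $n$ ever occurs. This index is not injective; the role of the powerset structure is instead to prove it is \emph{bounded-to-one} (Lemma \ref{lem:bound}): if too many prefixes reached the same state and too many accepted suffixes left it, one could form more finite subsets of the resulting atoms than the $\subseteq$-automaton has distinct behaviours to separate their codes — a pigeonhole contradiction (Lemma \ref{lem:comb}). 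Homogeneity then follows by encoding the $k$-th element of $\ind^{-1}(n)$, in any fixed regular well-order, as $k^n$ over $K$ colors. Note this also shows your target of $K=1$ is too ambitious: the index one can actually recognize is only boundedly ambiguous, and collapsing $\ind^{-1}(n)$ injectively into unary positions would again demand non-regular arithmetic on unary words. So the powerset structure is needed not to make initial segments into elements, as in your plan, but to bound the ambiguity of an automaton-defined index.
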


Assume Lemma \ref{lem:pro} holds. If $\mc{P}^f(\mf{W}^\alpha) \in \aut^\infty[\beta]$ then $\mf{W}^\alpha$ has  an homogeneous $\reg^\infty[\beta]$-presentation, thus $\alpha \pc_{\MSOT} \beta$ by Lemma \ref{lem:homomso}.

The rest of this section is devoted to the proof of Lemma \ref{lem:pro}.

\subsection*{Proof of Lemma \ref{lem:pro}}

Let $\mf{A}$ be a structure of domain $A$ such that $\mc{P}^f(\mf{A}) \in \aut^\infty[\beta]$. The $\reg^\infty[\beta]$-presentation of $\mc{P}^f(\mf{A}) $ can be assumed injective by Proposition \ref{prop:injective}. Let $\Sigma$ the encoding alphabet and $\nu : P^f(A) \rightarrow \Sigma^*$ the encoding function.

Let $\atoms:= \{w \square^\omega~|~\nu(w) \text{ is a singleton}\}$. Our main purpose is to number the elements of $\atoms$ in a regular-like way. More formally, we build a function $\ind: \atoms \rightarrow \mb{N}$ such that $|\ind^{-1}(n)| \le K$ for all $n \ge 0$ and the language $\{w \otimes 0^n \square^\omega~|~\ind(w) = n\}$ is $\omega$-regular with advice $\beta$ ($K$ being a large enough constant) . Once this is done, the lemma follows almost directly. We denote by $L_{\subseteq}$ the language $\{w \square^\omega \otimes w' \square^\omega~|~\nu(w) \subseteq \nu(w')\}$.

\paragraph*{A bounded-to-one index function}

\label{sec:index}

Thanks to the $\reg^\infty[\beta]$-presentation, there exist two (deterministic Müller) $\omega$-automata $\mc{A}_{\atoms}$ and $\mc{A}_{\subseteq}$ such that $\atoms = \{w~|~w \otimes \beta \in \mc{L}(\mc{A}_{\atoms})\}$ and $L_{\subseteq} = \{w~|~w \otimes \beta \in \mc{L}(\mc{A}_{\subseteq})\}$ (recall that here $w$ itself is a convolution). Let $\qs$ and  $\qi$ be the sets of states of these two automata and let the constant $K_{im}:= (2 |\qi|+1)|\qs|$ ($\ge 2$).

\begin{definition} Let $w \in \atoms$, the set of its \emph{important positions} $\imp(w) \subseteq \mb{N}$ is such that $n \in \imp(w) $ if and only if $|\{w'\in \Sigma^* \square^\omega~|~w[:n]w' \in \atoms\}| > K_{im}$.
\end{definition}

$\imp(w)$ can be seen as the set of prefixes $w[:n]$ whose lecture does not give much information on who is $w$. Note that it is an initial segment of $\mb{N}$ (we can w.l.o.g. assume that it is always non-empty as soon as $A$ is not finite). Furthermore, if $w = v \square^\omega$, the elements $\imp(w)$ are smaller than $|v| +1$. Therefore the following definition makes sense.

\begin{definition} If $w \in \atoms$, $\ind(w)\in \mb{N}$ is $\max \imp(w)$.
\end{definition}

Before giving the essential property of $\ind$, we first remark that  $\{w \otimes 0^n \square^\omega~|~\ind(w) = n\}$ is $\omega$-regular with advice $\beta$ (closure properties).

\label{sec:bound}

\begin{lemma}

\label{lem:bound}

There is a constant $K$ such that for all $n \ge 0$, $|\ind^{-1}(n)| \le K$.

\end{lemma}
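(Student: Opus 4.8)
The plan is to bound $|\ind^{-1}(n)|$ by first reducing it to a count of ``heavy'' prefixes, and then controlling that count by a swapping argument that plays the automaton $\mc{A}_{\atoms}$ off against $\mc{A}_{\subseteq}$, using crucially the injectivity of the presentation of $\mc{P}^f(\mf{A})$.

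\textbf{Reduction to heavy prefixes.} Call $p \in \Sigma^n$ \emph{heavy} if the prefix $p$ has more than $K_{im}$ singleton-completions, i.e. if $n \in \imp(w)$ for the words $w$ with $w[:n] = p$. The number of completions is non-increasing along a branch, so the heavy prefixes form a prefix-closed set and, by definition of $\ind$, a word $w$ satisfies $\ind(w) = n$ exactly when $w[:n]$ is heavy while $w[:n+1]$ is \emph{light}. Each light node has at most $K_{im}$ completions, and (assuming $\Sigma$ binary by Proposition \ref{prop:binary}) a heavy node of length $n$ has at most $|\Sigma|$ children; hence every $w \in \ind^{-1}(n)$ lies below a light child of a heavy length-$n$ prefix, giving $|\ind^{-1}(n)| \le |\Sigma| \cdot K_{im} \cdot h_n$, where $h_n$ is the number of heavy prefixes of length $n$. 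It therefore suffices to bound $h_n$ by a constant independent of $n$.

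\textbf{Grouping and the swap.} I would group the heavy prefixes of length $n$ according to the state $\mc{A}_{\atoms}$ reaches after reading $p \otimes \beta[:n]$; there are at most $|\qs|$ classes, so it is enough to bound the number of heavy prefixes reaching a fixed state $q$. Two such prefixes $p, p'$ share, by determinism of $\mc{A}_{\atoms}$, exactly the same set $C$ of completions, and $|C| > K_{im}$; in particular, for every $s \in C$ both $ps$ and $p's$ are atoms, and they are distinct. The swap mechanism is the following consistency check: since the presentation is injective, $\mc{A}_{\subseteq}$ run on $p's \otimes ps \otimes \beta$ must \emph{reject} (as $p's \neq ps$ means the atom $p's$ does not belong to the singleton coded by $ps$), whereas on $ps \otimes ps \otimes \beta$ it \emph{accepts}. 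These two runs coincide from position $n$ onward (identical suffix $s$ on both tracks and identical advice suffix), so $p$ and $p'$ are forced to drive $\mc{A}_{\subseteq}$ to \emph{different} states at position $n$. This shows that distinct heavy prefixes reaching $q$ must be pairwise distinguishable by $\mc{A}_{\subseteq}$ already at position $n$.

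\textbf{The main obstacle.} The hard part is upgrading this \emph{pairwise} distinguishability into an $n$-independent bound on $h_n$: a priori the separating sets used in the $\mc{A}_{\subseteq}$ runs have codes whose length-$n$ prefixes are not under our control, so one cannot directly pigeonhole the $|\qi|$ states of $\mc{A}_{\subseteq}$ against a single fixed context. This is exactly where the large number $|C| > K_{im} = (2|\qi|+1)|\qs|$ of completions is spent: among that many completions I would extract, by a pigeonhole on the at most $|\qi|$ states of $\mc{A}_{\subseteq}$ (the factor $2|\qi|+1$ accounting for the two tracks of a membership run plus one extra to force a genuine repetition), a pair of completions and a pair of heavy prefixes whose $\mc{A}_{\subseteq}$-runs collide at position $n$; the swap above then yields an accepting run where a rejecting one is required, contradicting injectivity. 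Pushing this counting through — making the pigeonhole land on a collision that defeats the pairwise-distinctness constraint rather than merely reproducing it — is the delicate point of the proof, and the constant $K_{im}$ is calibrated precisely so that the contradiction appears once a heavy state is reached by more than the intended number of prefixes. Combining the per-state bound with the $\le |\qs|$ states gives $h_n$ bounded, and hence $|\ind^{-1}(n)| \le K$ for a constant $K$ (of order $|\Sigma|\,K_{im}^2$), as claimed.
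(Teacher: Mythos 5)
Your first step (the reduction) is sound and essentially reproduces part of the paper's argument: the paper also splits a word $w$ with $\ind(w)=n$ into its length-$n$ prefix and its suffix, bounds the number of suffixes by $|\Sigma|\cdot K_{im}\cdot|\qs|$ using exactly your observation that position $n+1$ is ``light'' (its Lemma \ref{lem:ineq2}), and then groups the length-$n$ prefixes by the state of $\mc{A}_{\atoms}$ (its Lemma \ref{lem:ineq1}). The gap is in the step you yourself flag as delicate, and it is not a matter of pushing a counting through: the route you sketch cannot work, because it only ever runs $\mc{A}_{\subseteq}$ on pairs of \emph{singleton} codes. The swap gives you precisely that, for heavy prefixes $p\neq p'$ reaching the same $\mc{A}_{\atoms}$-state, the ``diagonal'' pair-state $\delta_{\subseteq}(p\otimes p)$ differs from every ``off-diagonal'' pair-state $\delta_{\subseteq}(p'\otimes p)$ — and nothing more. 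These constraints are satisfiable with just \emph{two} states of $\qi$ no matter how many heavy prefixes there are: a deterministic automaton that merely tracks whether the two tracks have been equal so far realizes $\delta_{\subseteq}(p_i\otimes p_j)=r_1$ if $p_i=p_j$ and $r_2$ otherwise, accepts from $r_1$ exactly the suffix pairs $s\otimes s$, and rejects everything from $r_2$. Hence no pigeonhole over singleton-vs-singleton runs can ever force a diagonal/off-diagonal collision, and no bound on $h_n$ in terms of $|\qi|$, $|\qs|$ follows from these constraints alone.

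The missing idea is that one must use codes of \emph{non-singleton} finite sets, i.e.\ genuinely exploit the powerset structure. The paper's key Lemma \ref{lem:comb} does this: from $K_1=2|\qi|+1$ prefixes $v_1,\dots,v_{K_1}$ reaching $q$ and $K_1$ accepted completions $v'_1,\dots,v'_{K_1}$, it forms the grid of $K_1^2$ atoms $v_iv'_j$ and considers all $2^{K_1^2}$ finite subsets of the corresponding elements. By injectivity each subset has its own code $w$, and the membership of $v_iv'_j$ in the set coded by $w$ is determined by the profile consisting of the states $\delta_{\subseteq}(v_i\otimes w[:n])$ together with which suffix pairs are accepted from which state — at most $|\qi|^{K_1}\cdot 2^{|\qi|K_1}\le 2^{2|\qi|K_1}$ profiles in total. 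Since $2^{K_1^2}>2^{2|\qi|K_1}$ exactly when $K_1>2|\qi|$, this is the real calibration of $K_{im}=(2|\qi|+1)|\qs|$ (not the ``two tracks plus one'' heuristic you give), and it yields: a state $q$ cannot simultaneously have $\ge K_1$ incoming length-$n$ prefixes and $\ge K_1$ accepted completions. Applied to heavy prefixes (which by definition have $>K_{im}\ge K_1$ completions, all shared since they reach the same state), this bounds the number of heavy prefixes per state by $K_1$ and finishes the proof along the lines you set up.
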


The rest of this paragraph is dedicated to the combinatorial proof of this lemma. We first give some notations. Let $\delta_{\atoms}$ be the transition function of $\mc{A}_{\atoms}$ with advice $\beta$ (for a finite word $w$, $\delta_{\atoms}(w)$ is the state reached after reading $w \otimes \beta[:|w|]$). Let $\mc{L}_{\atoms}^n(q)$ be the partial language of $\mc{A}_{\atoms}$ in $q$ with advice $\beta[n:]$ (infinite words accepted starting in the state $q$ with advice $\beta[n:]$). We use similar notations $\delta_{\subseteq}$ and $\mc{L}_{\subseteq}^n$ for the automaton $\mc{A}_{\subseteq}$.

\begin{lemma}

\label{lem:comb}

Let $K_1:= 2 |\qi| +1$. For all $n \ge 0$ and every state $q \in \qs$, either $|\mc{L}_{\atoms}^{n+1}(q)| < K_1$ or $|\{v~|~|v| = n \text{ and } \delta_{\atoms}(v) = q\}| < K_1$.

\end{lemma}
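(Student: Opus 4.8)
The plan is to argue by contraposition: assume that both $|\mc{L}_{\atoms}^{n+1}(q)| \ge K_1$ and $|\{v : |v| = n,\ \delta_{\atoms}(v) = q\}| \ge K_1$ hold simultaneously, and derive a contradiction from the facts that the presentation of $\mc{P}^f(\mf{A})$ is injective and that $\mc{A}_{\subseteq}$ recognises the (reflexive, antisymmetric) inclusion order. First I would fix $K_1$ pairwise distinct finite words $v_1, \dots, v_{K_1}$ of length $n$ all reaching $q$, and $K_1$ pairwise distinct accepted continuations $u_1, \dots, u_{K_1} \in \mc{L}_{\atoms}^{n+1}(q)$. Since $\mc{A}_{\atoms}$ is deterministic, its run on a word is governed solely by the state reached at the splitting position, so every ``crossed'' word $v_i u_j$ (glued through $q$, which is where the one-position shift between $n$ and $n{+}1$ must be absorbed) is again accepted, i.e. lies in $\atoms$. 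By injectivity these $K_1^2$ words encode $K_1^2$ pairwise distinct singletons; hence for $(i,j) \neq (k,l)$ the encoded elements are $\subseteq$-incomparable, while $v_i u_j$ is of course $\subseteq$-comparable to itself.

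The heart of the argument is then a pigeonhole count on the states that $\mc{A}_{\subseteq}$ assigns to the prefix-pairs. Running $\mc{A}_{\subseteq}$ on the convolution of two crossed atoms, the machine first reads the pair of length-$n$ prefixes, reaching a state $s_{ik} := \delta_{\subseteq}(v_i \otimes v_k)$, and only afterwards reads the pair of continuations; by determinism, acceptance is entirely governed by $s_{ik}$ together with the continuation pair. Comparing $v_i u_j$ with $v_k u_j$ (same continuation) shows that from $s_{ik}$ the machine must accept iff $i = k$: the \emph{diagonal} states $r_i := s_{ii}$ act as equality-testers on the continuations, whereas every \emph{off-diagonal} state $s_{ik}$ with $i \neq k$ must reject every continuation pair. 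A single state cannot do both, since it cannot simultaneously accept $(u_1 \otimes u_1)$ (as a diagonal state must) and reject it (as an off-diagonal state must); so the diagonal and off-diagonal states form disjoint subsets of $\qi$. The remaining work is to exploit \emph{both} families at once, comparing the prefixes against a fixed reference prefix in both orders, so that the factor $2$ in $K_1 = 2|\qi| + 1$ forces, by pigeonhole over $\qi$, a prefix-pair whose $\mc{A}_{\subseteq}$-state coincides with one constrained to have the opposite behaviour; feeding a suitable continuation pair into this common state then makes $\mc{A}_{\subseteq}$ both accept and reject the same input, which is the contradiction.

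I expect two difficulties. The first is bookkeeping, but it must be done with care: the off-by-one between the length-$n$ prefixes and the continuation language $\mc{L}_{\atoms}^{n+1}(q)$ means the gluing passes through a single transition at position $n$, and one has to check that $\beta$ is consumed at the correct absolute positions throughout, so that the crossed words really belong to $\atoms$. The second, and genuinely delicate, point is extracting the \emph{exact} constant $2|\qi| + 1$: because a deterministic synchronous automaton can verify equality letter-by-letter, no contradiction can come from the equality-testing behaviour in isolation, so the fatal collision has to be manufactured from the interaction of the two families of witnesses together with the antisymmetry of $\subseteq$. This is precisely where a hard pigeonhole over $\qi$, rather than a soft counting argument, is indispensable, and I anticipate it to be the main obstacle of the proof.
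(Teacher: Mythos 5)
Your setup (crossing the $K_1$ prefixes with the $K_1$ continuations to obtain $K_1^2$ words of $\atoms$ encoding pairwise distinct, hence $\subseteq$-incomparable, singletons) coincides with the paper's, but the argument you want to build on top of it cannot be completed. Every constraint you extract comes from running $\mc{A}_{\subseteq}$ on pairs of \emph{crossed atoms}, i.e.\ on singleton-versus-singleton comparisons, where inclusion degenerates into equality of the encoding words. Those constraints --- diagonal states $s_{ii}$ must accept exactly the pairs $u_j \otimes u_j$, off-diagonal states $s_{ik}$ ($i \neq k$) must reject every continuation pair --- are simultaneously satisfiable by a \emph{two-state} automaton (one ``equal so far'' state serving as every $s_{ii}$, one rejecting sink serving as every $s_{ik}$), and this for arbitrarily large $K_1$. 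Hence no pigeonhole over $\qi$, however organized, can force two of these states into ``opposite behaviour'': the constraint system is consistent no matter how large $K_1$ is compared to $|\qi|$. Comparing against a reference prefix in both orders does not change this, because on singletons the relation $\subseteq$ is symmetric and antisymmetry is vacuous. You actually identified the obstruction yourself (``no contradiction can come from the equality-testing behaviour in isolation''), but your proposed escape remains entirely inside singleton comparisons, so it is blocked by the very same obstruction.

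The missing idea is that the hypothesis concerns a presentation of the \emph{powerset} structure $\mc{P}^f(\mf{A})$, not of $\mf{A}$: each of the $2^{K_1^2}$ finite subsets of the crossed atoms' elements also has an encoding word $w$, and inclusion of a singleton into such a set is exactly membership. The paper's proof runs $\mc{A}_{\subseteq}$ on $v_i v'_j \otimes w$ and observes that the membership pattern realized by $w$ is a function only of $d_w : i \mapsto \delta_{\subseteq}(v_i \otimes w[:n]) \in \qi$ together with $f_w : (j,q) \mapsto \{0,1\}$ recording which continuation pairs are accepted from which states. There are at most $|\qi|^{K_1} \cdot 2^{|\qi| K_1} \le 2^{2|\qi| K_1}$ such pairs $(d_w, f_w)$, yet all $2^{K_1^2}$ membership patterns must occur, so $K_1^2 \le 2|\qi| K_1$, contradicting $K_1 = 2|\qi| + 1$. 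So the contradiction is a counting (information-theoretic) argument over \emph{set} encodings, not a state-collision argument over \emph{atom} encodings; this is also exactly where the constant $2|\qi|+1$ comes from. Your bookkeeping worry about the off-by-one between length-$n$ prefixes and $\mc{L}_{\atoms}^{n+1}(q)$ is legitimate but minor (the paper's own proof silently works with $\mc{L}_{\atoms}^{n}(q)$); it is not where the difficulty lies.
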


\begin{proof} Let $B:= \{v~|~|v| = n \text{ and } \delta_{\atoms}(v) = q\}$.

Assume that you have $K_1$ distincts (finite) words $v_1 \dots v_{K_1}$ in $B$, and $K_1$ distincts (infinite) words $v'_1 \dots v'_{K_1}$ in $\mc{L}_{\atoms}^{n}(q)$. Define then $w_{i,j}:= v_i v'_j$ for $1 \le i,j \le K_1$. According to the definitions, we have $w_{i,j} \in \atoms$ (since $v_i$ leads to $q$ in $n$ steps and $v'_j$ is accepted starting in $q$ at time $n$).

Let $W:= \{w_{i,j}~|~1 \le i,j \le K_1\}$. Let $W' \subseteq \dom(\mc{S})$ be the set of elements encoded (as singletons) by the words of $W'$. Then $|W'| = (K_1)^2$. Let $P = \mc{P}(W') \subseteq \mc{P}^{f}(\dom(\mc{S}))$ and $C$ the set of (infinite) words encoding the elements of $P$ in the presentation of $P^f(\mc{S})$. Then $|C| = |P| = 2^{(K_1)^2}$.

For each $w \in C$ define:
\item
\begin{itemize}

\item $d_w: [1, K_1] \rightarrow \qi$ mapping $i$ to $\delta_{\subseteq}(v_i \otimes w[:n])$;

\item $f_w:  [1, K_1] \times \qi \rightarrow \{0,1\}$ with $f_w(j,q) = 1$ if and only if $v'_j \in \mc{L}^{n}_{\subseteq}(q)$.

\end{itemize}

Then remark that if $w_1, w_2 \in C$ and $d_{w_1} = d_{w_2}$ and $f_{w_1} = f_{w_2}$, then $w_1 = w_2$. Indeed, this means that exactly the same $v_{i,j}$ are in relation $w_1$ and $w_2$.

Now there are $|\qi|^{K_1} = 2^{\log_2(|\qi|) K_1} \le   2^{|\qi| K_1}$ different $d_w$ possibles, and $2^{|\qi| K_1}$ different $f_w$ possibles. This meaning that we can define at most $2^{2|\qi| K_1}$ elements in $C$, so we have $(K_1)^2 \le 2 |\qi| K_1 = (K_1 -1)K_1$ what brings a contradiction.

\end{proof}

Now we can forget the powerset and work only with the atoms and Lemma \ref{lem:comb}. The following result shows that the number of possible prefixes before the $\ind$ is bounded.

\begin{lemma}

\label{lem:ineq1}

Let $I_n := \{w[:n]~|~w \in \atoms \text{ and } \ind(w) = n\}$, then $\forall n\ge 0$, $|I_n| <K_{im}$.

\end{lemma}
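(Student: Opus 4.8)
The plan is to classify the prefixes in $I_n$ by the state they reach in the automaton $\mc{A}_{\atoms}$, and then to apply the dichotomy of Lemma \ref{lem:comb} state by state. Throughout I rely on the fact that $\mc{A}_{\atoms}$ is deterministic, so that $\delta_{\atoms}$ is a genuine function on finite words.

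First I would fix $v \in I_n$ and set $q_v := \delta_{\atoms}(v) \in \qs$, the state reached after reading $v \otimes \beta[:n]$. By definition $v = w[:n]$ for some $w \in \atoms$ with $\ind(w) = n$, and since $\imp(w)$ is an initial segment this forces $n \in \imp(w)$. The key observation is that the extensions of $v$ into an atom form exactly a partial language of $\mc{A}_{\atoms}$: reading $v \otimes \beta[:n]$ ends in $q_v$, and by determinism every continuation $w' \in \Sigma^*\square^\omega$ is then read from $q_v$ against the shifted advice $\beta[n:]$, whence
$$\{w' \in \Sigma^*\square^\omega ~|~ vw' \in \atoms\} = \mc{L}_{\atoms}^{n}(q_v).$$
The membership $n \in \imp(w)$ now translates into $|\mc{L}_{\atoms}^{n}(q_v)| > K_{im} \ge K_1$.

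Next I would group $I_n$ according to $q_v$. For every state $q$ that occurs as some $q_v$, the previous step yields $|\mc{L}_{\atoms}^{n}(q)| \ge K_1$, so the first alternative of Lemma \ref{lem:comb} (namely $|\mc{L}_{\atoms}^{n}(q)| < K_1$) is impossible and we must have $|\{v ~|~ |v| = n \text{ and } \delta_{\atoms}(v) = q\}| < K_1$; a fortiori at most $K_1 - 1$ elements of $I_n$ reach $q$. For a state that never occurs the count is $0$, so in all cases the number of $v \in I_n$ with $\delta_{\atoms}(v) = q$ is at most $K_1 - 1$. Summing over $\qs$ gives
$$|I_n| = \sum_{q \in \qs} |\{v \in I_n ~|~ \delta_{\atoms}(v) = q\}| \le (K_1 - 1)\,|\qs| < K_1 \, |\qs| = (2|\qi| + 1)|\qs| = K_{im},$$
which is the claimed bound.

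The only genuinely delicate point is the bookkeeping of superscripts: one must make sure that a prefix of length $n$ is paired with the partial language carrying the same time index $n$, since $v$ consumes precisely $\beta[:n]$ and leaves $\beta[n:]$ for the continuation. With this matching, Lemma \ref{lem:comb} applies at the state $q_v$ exactly as established in its proof, and everything else reduces to a pigeonhole count over the finite state set $\qs$. Note also that the normal form $\atoms = \{w\square^\omega ~|~ \dots\}$ is what guarantees that the continuations $w'$ range over $\Sigma^*\square^\omega$, so that the identification with $\mc{L}_{\atoms}^{n}(q_v)$ is exact rather than approximate.
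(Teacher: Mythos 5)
Your proof is correct and takes essentially the same route as the paper's: both hinge on the observation that $\ind(w)=n$ makes $n$ an important position of $w$, so $|\mc{L}_{\atoms}^{n}(q_v)| > K_{im} \ge K_1$, and then on Lemma \ref{lem:comb} to conclude that fewer than $K_1$ words of length $n$ can reach such a state. The only difference is cosmetic: the paper argues by contradiction via pigeonhole from $|I_n| \ge K_{im} = |\qs| K_1$, whereas you sum the per-state bound directly over $\qs$, obtaining the same (in fact marginally sharper) bound $|I_n| \le (K_1-1)|\qs| < K_{im}$.
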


\begin{proof} 

Assume that $|I_n| \ge K_{im}$, then remark that $K_{im} = |\qs| K_1$. Our assumption implies that there is a state $q \in \qs$ and $v_1 \dots v_{K_1} \in I_n$ distincts such that for each one $\delta_{\atoms}(v_i) = q$. This implies that $|\{v~|~|v| = n \text{ and } \delta_{\atoms}(v) = q\}| \ge K_1$, therefore by Lemma \ref{lem:comb} we have $|\mc{L}_{\atoms}^{n}(q)| < K_1$.

Note that there must be (at least) an infinite word $v_1 v' \in \atoms $ whose index is $n$ (by definition of $v_1$). In particular, $n$ is an important position and $\{w~|~v_1 w \in \atoms\} = \mc{L}_{\atoms}^{n}(q) $ has more than $K_{im} \ge K_1$ elements. Contradiction.

\end{proof}

Next, we show that the number of possible suffixes after the index is bounded.

\begin{lemma}

\label{lem:ineq2}

Let $J_n := \{w[n:]~|~w \in \atoms \text{ and } \ind(w) = n\}$, then $\forall n\ge 0$ we have $|J_n| \le K_{im} |\qs|  |\Sigma|$.

\end{lemma}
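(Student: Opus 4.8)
The plan is to bound the number of admissible tails $w[n:]$ by decomposing each such tail into its leading letter $w[n]$ and its remaining suffix $w[n+1:]$, and then to argue that the remaining suffix is pinned down to at most $K_{im}$ possibilities once we remember the state that the automaton $\mc{A}_{\atoms}$ occupies right after reading $w[:n+1]$. Concretely, to every $w \in \atoms$ with $\ind(w) = n$ I would associate the pair $(w[n], q_w)$, where $q_w := \delta_{\atoms}(w[:n+1])$ is the state reached on $w[:n+1] \otimes \beta[:n+1]$. There are at most $|\Sigma|$ choices for the leading letter $w[n]$ and at most $|\qs|$ choices for $q_w$, so it will suffice to show that once the pair $(w[n], q_w)$ is fixed, the suffix $w[n+1:]$ ranges over a set of size at most $K_{im}$; multiplying the three bounds then yields $|J_n| \le K_{im}\,|\qs|\,|\Sigma|$.

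The key step is to exploit the very definition of $\ind$. Since $\ind(w) = \max \imp(w) = n$, the position $n+1$ is \emph{not} important, which by definition means $|\{w' \in \Sigma^*\square^\omega~|~w[:n+1]w' \in \atoms\}| \le K_{im}$. The heart of the argument is to identify this set of completions with a partial language of the automaton: by determinism, a completion $w'$ satisfies $w[:n+1]w' \in \atoms$ if and only if $\mc{A}_{\atoms}$ first reaches $q_w = \delta_{\atoms}(w[:n+1])$ and then accepts $w' \otimes \beta[n+1:]$ starting from $q_w$, i.e. if and only if $w' \in \mc{L}_{\atoms}^{n+1}(q_w)$. Hence the set of completions equals $\mc{L}_{\atoms}^{n+1}(q_w)$, and the non-importance of $n+1$ gives $|\mc{L}_{\atoms}^{n+1}(q_w)| \le K_{im}$. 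Since the tail $w[n+1:]$ is itself a completion of $w[:n+1]$, it lies in $\mc{L}_{\atoms}^{n+1}(q_w)$, a set that depends only on $q_w$ and has at most $K_{im}$ elements.

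Putting the pieces together, the fibre of the map $w[n:] \mapsto (w[n], q_w)$ over a fixed pair $(a,q)$ contains at most $K_{im}$ tails, because such a tail is of the form $a\cdot w[n+1:]$ with $a$ fixed and $w[n+1:] \in \mc{L}_{\atoms}^{n+1}(q)$; since there are at most $|\Sigma|\,|\qs|$ pairs, the bound $|J_n| \le K_{im}\,|\qs|\,|\Sigma|$ follows. I expect the main obstacle to be making the identification ``completions of $w[:n+1]$'' $=$ ``$\mc{L}_{\atoms}^{n+1}(q_w)$'' fully precise: one must invoke determinism of $\mc{A}_{\atoms}$ so that the finite prefix run is unique and its endpoint is exactly recorded by $q_w$, and one must be slightly careful with the leading letter $w[n]$, which may already be the padding symbol $\square$ at the boundary $n = |v|$ (in which case the tail is simply $\square^\omega$); once this correspondence and the boundary case are handled, the remaining counting is immediate.
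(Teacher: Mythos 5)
Your proof is correct and follows essentially the same route as the paper's: both decompose each tail $w[n:]$ into its first letter and the remainder $w[n+1:]$, use the fact that $n+1 \notin \imp(w)$ (since $\ind(w) = \max\imp(w) = n$) to bound the completions of $w[:n+1]$ by $K_{im}$, and identify those completions with the partial language $\mc{L}_{\atoms}^{n+1}(q_w)$ via determinism, yielding the product bound $K_{im}\,|\qs|\,|\Sigma|$. The only cosmetic difference is that you assert the full equality of the completion set with $\mc{L}_{\atoms}^{n+1}(q_w)$, while the paper states just the two containments it actually needs (the tail belongs to $\mc{L}_{\atoms}^{n+1}(q_w)$, and $\mc{L}_{\atoms}^{n+1}(q_w)$ is contained in the completions).
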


\begin{proof} Since $|J_n| \le |\Sigma| | B |$ where $B:= \{w[n+1:]~|~w \in \atoms \text{ and } \ind(w) = n\}$, we only need to bound the size of $B$ by $K_{im} |\qs|$.

For all $v' \in B$, there exist a state $q \in \qs$  and a word $v$ of size $n+1$ such that $\delta_{Atoms}(v) = q$, $v' \in \mc{L}_{\atoms}^{n+1}(q)$ and $\ind(vv') = n$ (by definition). In particular, this means that $n+1 \not \in \imp(vv')$. By definition of $\imp$, $|\{v''~|~vv'' \in \atoms'\}| \le K_{im}$. But $\mc{L}_{\atoms}^{n+1}(q) \subseteq \{v''~|~vv'' \in \atoms'\}$. Therefore $|\mc{L}_{\atoms}^{n+1}(q)| \le K_{im}$.

Since each $v' \in B$ comes from (at least) one such $\mc{L}_{\atoms}^{n+1}(q)$ (for a certain $q \in \qs$), we have shown $|B| \le K_{im} |\qs | $.

\end{proof}

Lemma \ref{lem:bound} follows directly from Lemmas \ref{lem:ineq1} et \ref{lem:ineq2}. Indeed, if $\ind(w) = n$, then $w[:n] \in I_n$ and $w[n:] \in J_n$, this meaning that $|Index^{-1}(n)| \le  K := K_{im} \times (K_{im} |Q_{\atoms}| |\Sigma|)$. 

\paragraph*{Back to homogeneous presentations}

\label{sec:conclusion}

We come back to the proof of Lemma \ref{lem:pro}. Recall that we want to show that $\mf{A}$ has an homogeneous $\reg^\infty[\beta]$-presentation. We have built a function $\ind: \atoms \rightarrow \mb{N}$ such that:

\begin{itemize}

\item $\{w \otimes 0^n \square^\omega~|~\ind(w) = n\} \in \wreg[\beta]$;

\item there is $K$ such that $|\ind^{-1}(n)| \le K$ for all $n$.

\end{itemize}

Let $1 \dots K$ be new letters, we can order the elements of each $\ind^{-1}(n)$ such that $ \{w \otimes k^n \square^\omega~|~w \text{ is the } k\text{-th word in } \ind^{-1}(n)\} \in \wreg[\beta]$. Indeed there is a regular well-ordering over finite words. This ordering defines an injective function $\code: \atoms \rightarrow \bigcup_{1 \le k \le K} k^*$ mapping $w$ to the unique $k^n$ such that $w \otimes k^n \square^\omega \in L$. Furthermore the language $ L:= \{k^n \square^\omega~|~\exists w \code(w) = k^n \} $ is in $ \wreg[\beta]$. The function $\eta : L \rightarrow A$  defined by $\eta(k^n \square^\omega) = \nu(\code^{-1}(k^n)) $ and is a bijection between $L$ and $A$. It is then easy to check that $\eta$ is the encoding function of an (injective) $\wreg[\beta]$-presentation of $\mf{A}$ over the language $L$. Since all the words of $L$ are of the form $k^n \square^\omega$, removing the padding symbols $\square$ from the encodings provides an homogeneous $\regi[\beta]$-presentation.

\section{Proof of Theorem \ref{theo:2WFT}}

\label{proof:theo:2WFT}

As evoked in the main body of this paper, we show that $\alpha \pc_{\MSOT} \beta$ implies $\alpha \pc_{\WFT} \beta$. The sketch of this proof is the following. We will first recall a result from \cite{alur2012}: if $\alpha \pc_{\MSOT} \beta$, then $\alpha$ can be computed from $\beta$ by a \emph{streaming string transducer}. We then transform such a transducer into a $\WFT$ when fixing its input. This result is not the same as Theorem \ref{theo:alur}, since we do not want to obtain $\omega$-lookaheads in the resulting transducer.

\subsection{Streaming string transducers}

Informally, a \emph{streaming string transducer} is a one-way transducer with a finite set $\mc{X}$ of registers that can store information and can be updated in a simple way. We call \emph{substitution} a mapping from $\mc{X}$ to $(\Gamma \uplus \mc{X})^*$. A substitution $\sigma$ is said to be \emph{copyless} if each variable appears at most once in the hole set $\{\sigma(x) ~|~x\in \mc{X}\}$. Let $\mc{S}_{\mc{X},\Gamma}$ the set of all copyless substitutions. Substitutions can be extended from $(\Gamma \uplus \mc{X})^*$ to $(\Gamma \uplus \mc{X})^*$ and thus composed.

\begin{example}

$x \mapsto x, y \mapsto x$ is not copyless, but $x \mapsto z, y \mapsto yx$ is so.

\end{example}

\begin{definition}[\cite{alur2012}] A (determinstic) \emph{streaming $\omega$-string transducer ($\SST$)} is an $8$-tuple $\mc{S} = ( Q, q_0, \Delta, \Gamma, \delta, \mc{X}, \lambda, F)$ where:

\begin{itemize}

\item $Q$ is the finite set of states with initial state $q_0$;

\item $\Delta$ (resp. $\Gamma$) is the input (resp. output) alphabet;

\item $\delta : Q \times \Delta \rightarrow Q$ is the (partial) transition function;

\item $\mc{X}$ is a finite set of registers (also called variables);

\item $\lambda : Q \times \Delta \rightarrow $ is the (partial) register update;

\item $F : \mc{P}(Q) \rightarrow \mc{X}^*$ is the (partial) output function such that for $P \in \dom(F)$ the string $F(P)$ is copyless of the form $x_1 \cdots x_n$ and for all $q,q' \in P$ with $q' = \delta(q,a)$ we have $\rho(q,a)(x_i) = x_i$ for all $i < n$ and $\lambda(q,a)(x_n) = x_n u$ for some $u \in \Gamma\cup \mc{X}^*$.

\end{itemize}

\end{definition}

If $P$ is the set of states that appear infinitely often in the run of  $\mc{S}$ on $\beta \in \Delta^\omega$, we want the value of $F(P) = x_1 \cdots x_n$ to be the (infinite) output string. This value has to change in a convergent way, therefore we force $\lambda(q,a)(x_i) = x_i$ and $\lambda(q,a)(x_n) = x_n u$ when the transition remains in $P$. This way, the register update only ``adds something in the end''.

More formally, a run $\rho$ of $\mc{S}$ on $\beta \in \Delta^\omega$ is a run of the underlying deterministic automaton $( Q, q_0, \Delta, \delta)$: for $k \ge 0$, $\rho(k)$ is the state reached after reading \mbox{$\alpha[:n]$}. We define inductively a sequence of \emph{ground} (from $\mc{X}$ to $\Gamma^*$) substitutions $(\sigma_k)_{k \ge 0}$: $\sigma_0(x) = \varepsilon$ for all $x$ and $\sigma_{k+1} = \sigma_k \circ \lambda(\rho(k),\alpha[k])$. If $P$ is the set of states appearing infinitely often in $\rho$ and $F(P) = x_1 \cdots x_n$, then necessarily $\lim_k \sigma_k(x_1 \cdots x_n)$ converges to a (finite or infinite) word. If finite, we make it infinite adding $\square^\omega$ in the end. It defines the output of $\mc{S}$ on $\beta$.

\begin{example} \label{ex:sst} Figure \ref{fig:2WFT} shows an $\SST$ that outputs the word $\widetilde{\alpha}$ on the input $\alpha \in (\Delta^*\#)^\omega$. Let $\mc{X} := \{x, \out\}$, and $F(\{q_0\}) = \out$; transitions are represented with the notation $[\textsf{input}~|~\textsf{substitution}]$ and $a$ stands for every symbol except $\#$. 

\begin{figure}[h!]
\begin{center}
      \begin{tikzpicture}[scale=0.6]
        \tikzstyle{etat} = [draw, fill=white];

        \node (dep) at (-2,0) {};
        \node[state] (l1) at (0,0) {$q_0$};
        \draw[->] (dep) edge[]  node [above] {} (l1);
        \draw[->] (l1) edge [loop above] node {$\Large \substack{a~|~x\mapsto ax,~\out \mapsto \out \\ \#~|~x\mapsto \varepsilon,~\out \mapsto \out x}$} (l1);
       \end{tikzpicture}

    \caption{\small \label{fig:2WFT} A simple $\SST$}
 \end{center}
\end{figure}

\end{example}

\begin{theorem}[\cite{alur2012}, Theorem 2]

\label{theo:sst}

$\SST$-definable functions between infinite strings are exactly $\MSOT$-definable functions.

\end{theorem}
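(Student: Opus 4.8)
The plan is to prove the two inclusions separately, since the statement asserts equality of two classes of (partial) functions on $\omega$-words.

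First I would treat the inclusion $\SST\text{-definable}\subseteq\MSOT\text{-definable}$, which I expect to be the more transparent direction. Fix an $\SST$ $\mc{S}=(Q,q_0,\Delta,\Gamma,\delta,\mc{X},\lambda,F)$ and an input $\beta$. The underlying run $\rho$ is deterministic, so the predicate ``the state at position $k$ is $q$'' is definable by an $\MSO[<,\Delta]$-formula. Every letter appearing in the final output was written into some register as one of the $\Gamma$-letters occurring in an update $\lambda(\rho(k),\beta[k])(x)$; since each transition contributes a bounded number $C$ of such letters, the candidate output positions form a set of shape $\{(k,c)\mid 1\le c\le C\}$, i.e. a $C$-copying scheme indexed by (register, offset). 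The crucial point is that copylessness turns the flow of register contents into a forest: a letter written at step $k$ is, at each subsequent step, carried into at most one register, so it either survives into one of the registers named in the final selector or is eventually discarded, and this survival is a reachability property of the deterministic flow graph, hence $\MSO$-definable. The relative order of two surviving letters is read off from their final slots in $F(P)=x_1\cdots x_n$ and is $\MSO$-definable by the same tracing. For the $\omega$-setting I would handle the Muller condition by observing that $P:=\{q\mid \forall x\,\exists y>x\ \rho(y)=q\}$, the set of recurrent states, is $\MSO$-definable, so $F(P)$ and therefore the target order are fixed; the convergence clause built into the definition of $F$ guarantees that the limit word enumerated by these formulas is exactly the output.

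The reverse inclusion $\MSOT\text{-definable}\subseteq\SST\text{-definable}$ is the harder direction and I expect it to be the main obstacle. Here one must simulate a $k$-copying $\MSO$-transduction $\tau$ by a one-way device equipped only with \emph{copyless} registers. The idea is to scan $\beta$ left to right while keeping, in the registers, the fragments of output already pinned down by the prefix read so far, grouped into maximal $<$-contiguous blocks. The heart of the argument is a bound on how many such blocks must coexist: because the output order is $\MSO$-definable, the positions where the order alternates between already-seen and not-yet-seen input are controlled by the finitely many $\MSO$-types of the current prefix, so a constant number of registers suffices. When a new letter is read, the types update deterministically and blocks are concatenated and extended; the delicate part is to check that these combinations can always be performed copylessly, i.e. that no register content is duplicated, which holds because each output position occupies a unique slot in the global order. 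For infinite inputs I would again use the $\MSO$-definable recurrent set $P$ to fix $F(P)$, and arrange the updates so that all registers named in $F(P)$ stabilise except for a single growing ``spine,'' matching the convergence requirement of the $\SST$ model.

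Finally, I would note that over \emph{finite} words this equivalence is exactly Engelfriet--Hoogeboom's $\MSOT=\WFT$ composed with the known $\WFT$--$\SST$ translation, so that the genuinely new content over $\omega$-words is confined to the bookkeeping around Muller acceptance and the convergence of register contents. If one is willing to invoke the finite-word equivalence as a black box, the proof reduces to adapting those two translations to the $\omega$-setting while tracking the recurrent states. The one step I expect to be genuinely technical is the register bound in $\MSOT\subseteq\SST$, resting on the finiteness of $\MSO$-types and a careful analysis of how the definable output order interacts with the left-to-right scan.
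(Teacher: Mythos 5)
First, a point of comparison that matters for this review: the paper does \emph{not} prove this statement. It is imported verbatim from \cite{alur2012} (Theorem 2 there), and the paper only ever uses a weakened consequence of it, namely Corollary \ref{cor:ssst}: for a \emph{fixed} input $\beta$ with $\alpha \pc_{\MSOT} \beta$, the transformation is realized by a simple $\SST$. So your proposal is an attempt to reprove a cited black box; it has to be judged against the argument of \cite{alur2012}, which establishes the cycle $\SST$-definable $\subseteq$ $\MSOT$-definable $\subseteq$ two-way transducers with $\omega$-regular lookahead $\subseteq$ $\SST$-definable. Your first direction ($\SST \subseteq \MSOT$) is a correct outline of the standard argument: copylessness makes the flow of register contents a forest, survival and relative order of output letters are regular properties of the ($\MSO$-definable) run, and the set of recurrent states is $\MSO$-definable.

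The gap is in the second direction, and it is not where you locate it. Your type-based block analysis is essentially sound as far as it goes: two already-read output positions carrying the same prefix type (and copy index) can never be separated by a future output position, so the number of maximal blocks pinned down by a prefix is bounded by the number of types times the number of copies. Over \emph{finite} words this, plus deterministic type updates, yields an $\SST$ whose final output function assembles the surviving blocks at the end of the word. Over \emph{infinite} words, however, the real difficulty is not the copylessness of the merges: it is that whether two already-produced blocks are adjacent --- and even their relative order --- can depend on an $\omega$-regular property of the unread, infinite suffix, and this uncertainty may never be resolved at any finite time. A deterministic one-way $\SST$ must defer such decisions forever; the only mechanisms available for deferral are the Muller-set-indexed output function $F$ and the convergence restriction that all registers of $F(P)$ except the last eventually freeze. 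Proving that the pending arrangement always stabilizes into finitely many eventually-frozen chunks plus a single growing spine is precisely the hard content of \cite{alur2012} (it is why that paper routes through two-way machines with $\omega$-regular lookahead); your sentence ``arrange the updates so that all registers named in $F(P)$ stabilise except for a single growing spine'' restates this obligation rather than discharging it.

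Your closing claim that the genuinely new $\omega$-word content is ``confined to bookkeeping'' around Muller acceptance is contradicted by the very paper you are working in: as discussed between Theorems \ref{theo:engel} and \ref{theo:alur}, the finite-word equivalence $\MSOT = \WFT$ \emph{fails} over infinite words unless $\omega$-regular lookahead is added, and removing that lookahead even for a single fixed input is the paper's own involved Theorem \ref{theo:2WFT}. So the future-dependence issue is not an artifact of your particular construction --- it is the essential obstacle in the $\omega$-setting, and your sketch does not engage with it.
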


We shall only need a weaker reformulation of this result, i.e. if $\alpha \pc_{\MSOT} \beta$ then $\alpha$ is computable from $\beta$ by an $\SST$. The model can even be simplified in that case.

\begin{definition} A \emph{simple $\SST$} is an $\SST$ with a distinguished register $\out \in \mc{X}$ such that  $\dom(F) = 2^Q$, and for all set of states $P$, $F(P) = \out$.

\end{definition}

\begin{example} The $\SST$ of Example \ref{ex:sst} is simple.

\end{example}

\begin{corollary}
\label{cor:ssst}
If $\alpha \pc_{\MSOT} \beta$ then $\alpha$ is the image of $\beta$ under some function realized by a simple $\SST$.
\end{corollary}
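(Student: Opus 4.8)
The plan is to invoke Theorem \ref{theo:sst} to replace the abstract hypothesis $\alpha \pc_{\MSOT} \beta$ by a concrete transducer, and then to exploit the fact that the input $\beta$ is \emph{fixed} in order to eliminate the final concatenation $F(P) = x_1 \cdots x_n$ that prevents a general $\SST$ from being simple. First I would note that, since $\alpha \pc_{\MSOT}\beta$, there is an $\MSOT$ sending $\beta$ to $\alpha$, so by Theorem \ref{theo:sst} there is an $\SST$ $\mc{S} = (Q, q_0, \Delta, \Gamma, \delta, \mc{X}, \lambda, F)$ with $\mc{S}(\beta) = \alpha$. Let $\rho$ be its run on $\beta$, let $P \subseteq Q$ be the set of states occurring infinitely often, and fix a threshold $m_0$ such that $\rho(k) \in P$ for all $k \ge m_0$. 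Write $F(P) = x_1 \cdots x_n$ and let $(\sigma_k)_{k \ge 0}$ be the sequence of ground substitutions associated with $\rho$.

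The key observation, which I would establish next, is that beyond $m_0$ the register holding the output grows in an append-only fashion. By the defining constraint of $F$, every transition $(\rho(k), \beta[k])$ with $k \ge m_0$ stays inside $P$, hence leaves $x_1, \dots, x_{n-1}$ unchanged and updates $x_n$ by $x_n \mapsto x_n\, u_k$ for some $u_k \in (\Gamma \cup \mc{X})^*$; copylessness forbids $x_n$ from occurring in $u_k$, so $u_k$ only mentions the now-frozen registers $x_1, \dots, x_{n-1}$. Consequently $\sigma_k(u_k)$ is a fixed finite word $\overline{u}_k \in \Gamma^*$ depending only on the transition $(\rho(k),\beta[k])$, and $\sigma_{k+1}(x_n) = \sigma_k(x_n)\,\overline{u}_k$. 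Writing $w_0 := \sigma_{m_0}(x_1)\cdots \sigma_{m_0}(x_n) \in \Gamma^*$ for the (hardcoded, since $\beta$ is fixed) value of the frozen prefix together with the current content of $x_n$, one obtains $\alpha = w_0 \cdot \prod_{k \ge m_0} \overline{u}_k$.

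I would then build the simple $\SST$ $\mc{S}'$ with a single register $\out$. Its states consist of a length-$m_0$ chain $s_0, \dots, s_{m_0-1}$ reading the prefix $\beta[:m_0]$ (undefined on any other letter), together with the states of $\mc{S}$ for the tail. Starting from $\sigma'_0(\out) = \varepsilon$, every prefix transition leaves $\out$ untouched ($\out \mapsto \out$) except the last one, which performs $\out \mapsto \out\, w_0$; the chain ends in the state $\rho(m_0)$. On the tail, $\mc{S}'$ follows the transition function $\delta$ of $\mc{S}$ and, on a transition $(q,a)$, performs $\out \mapsto \out\, \overline{u}_{(q,a)}$, where $\overline{u}_{(q,a)}$ is the fixed word computed above for $P$-internal transitions (and $\varepsilon$ for the others, which are never taken on $\beta$). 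Finally I set $\dom(F') = 2^{Q'}$ with $F'(P') = \out$ for every $P'$. Tracing the run of $\mc{S}'$ on $\beta$ yields the output $w_0 \cdot \prod_{k \ge m_0}\overline{u}_k = \alpha$, as required.

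The main obstacle — and the reason the construction must be set up with care — is precisely the requirement $\dom(F') = 2^{Q'}$ in the definition of a simple $\SST$: taking $P' = Q'$ in the $F$-constraint forces \emph{every} transition of $\mc{S}'$, not merely the $P$-internal ones, to update $\out$ purely by appending. This is why I initialize $\out$ to $\varepsilon$ and build $w_0$ by a single append $\out \mapsto \out\, w_0$ rather than by an overwrite, and why all remaining transitions are given append-only updates. With every update of the form $\out \mapsto \out\, u$, the constraint holds for all $P' \in 2^{Q'}$ and the value of $\out$ converges along the run on $\beta$, so $\mc{S}'$ is a genuine simple $\SST$ realizing a function whose image of $\beta$ is $\alpha$.
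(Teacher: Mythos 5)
Your overall strategy (extract an $\SST$ from Theorem \ref{theo:sst}, fix the run on $\beta$, hardcode the prefix up to the threshold, and make the output register append-only) matches the paper's, but your key observation is false, and the construction built on it fails. You claim that copylessness forces the append string $u_k$ in $\lambda(\rho(k),\beta[k])(x_n)=x_n u_k$ to mention only the frozen registers $x_1,\dots,x_{n-1}$, so that $\sigma_k(u_k)$ is a fixed word $\overline{u}_{(q,a)}$ determined by the transition. In fact copylessness gives exactly the opposite: since each $x_i$ ($i<n$) already occurs in its own update $x_i\mapsto x_i$, none of $x_1,\dots,x_n$ may occur in $u_k$; what $u_k$ \emph{may} contain are the auxiliary registers in $\mc{X}\setminus\{x_1,\dots,x_n\}$, and those are not frozen --- their contents keep evolving along the run. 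The paper's own Example \ref{ex:sst} (blockwise reversal, Figure \ref{fig:2WFT}) is a concrete counterexample: on reading $\#$ the update is $\out\mapsto\out\,x$, and the word appended is the reversal of the current block, which changes at every occurrence of the same transition. So $\sigma_k(u_k)$ is not a function of $(\rho(k),\beta[k])$, and your machine $\mc{S}'$, which appends the hardcoded word $\overline{u}_{(q,a)}$, outputs the wrong string.

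The gap is not a repairable detail of your construction but a structural obstruction: a simple $\SST$ with a \emph{single} register $\out$ whose appends are determined by the transitions is just a one-way transducer, and such machines cannot realize all $\MSOT$ transformations of a fixed word (blockwise reversal again witnesses this, cf.\ Example \ref{ex:msotrob} and Section \ref{sec:hierarchy}). The paper's proof avoids this by keeping the \emph{entire} register set: $\mc{S}'$ has registers $\mc{X}\uplus\{\out\}$, at the switching transition it sets every $x\in\mc{X}$ to the constant $w_x:=\sigma_K(x)$ and $\out$ to $w_{x_1}\cdots w_{x_n}$, and afterwards it updates $\out$ with the \emph{same} strings $u\in(\Gamma\uplus\mc{X})^*$ that $\mc{S}$ used for $x_n$, while the auxiliary registers continue to be updated as in $\mc{S}$ and thus supply the varying content at the moment it is appended. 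If you restore the registers $\mc{X}$ in your construction and route $x_n$'s appends into $\out$ (taking care that copylessness is preserved, e.g.\ by trivializing the update of $x_n$ itself), your argument goes through and coincides with the paper's.
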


\begin{proof}

By applying Theorem \ref{theo:sst} (in its weaker reformulation), if $\alpha \pc_{\MSOT} \beta$ then $\alpha$ is the image of $\beta$ under some $\SST$ $\mc{S} := ( Q, q_0, \Delta, \Gamma, \delta, \mc{X}, \lambda, F)$. Let $\rho$ be the run on $\beta$, $(\sigma_k)_{k \ge 0}$ the associated sequence of ground substitutions, $P$ be the set of states that appear infinitely often in $\rho$. Necessarily $P \in \dom(F)$ thus $F(P) = x_1 \cdots x_n$. This sequence of registers was unpredictable when the input could change, but now it is fixed since $\beta$ is so. Let $K \ge 0$ such that for all $k \ge K$, $\rho(k) \in P$. For $x \in \mc{X}$ let $w_x := \sigma_K(x) \in \Gamma^*$. According to the remarks above, if $i < n$  then for all $k \ge K$,  $\sigma_k(x_i) = w_{x_i}$.

We define a simple $\SST$ $\mc{S}'$. The set of registers is $\mc{X} \uplus \{\out\}$ and the states are $\{0 \dots K-1\} \uplus P$. The graph of this transducer begins with a line of length $K$ on $\{0 \dots K-1\}$ with trivial updates of the registers, and in the last transition the value of $x \in \mc{X}$ is updated to $w_x$ and $\out$ is updated to $w_{x_1} \cdots w_{x_n}$. This transition leads to the state $\rho(K) \in P$ and then $\mc{S}'$ moves in $P$ like $\mc{S}$ does. The updates of $\out$ are defined like the updates of $x_n$. We define the output function of $\mc{S}'$ so that it fulfills the requirements of a simple $\SST$. It is easy to check that $\alpha$ is the image of $\beta$ under the function realized by $\mc{S}'$. Note that this transducer has no reason to preserve the output of $\mc{S}$ on other words. 

\end{proof}

\begin{remark} It follows from the definitions that in a simple $\SST$, $\lambda(p,a)(\out)$ (if defined) must be of the form $\out w$ with $w \in (\Gamma \uplus \mc{X})^*$ and no other register can use the value of $\out$.
\end{remark}

\subsection{Transforming simple $\SST$ into $\WFT$}

We first show how to transform a simple $\SST$ into a two-way transducer with a lookbehind feature. Informally, such a transducer has access to the state of a finite automaton reading a prefix of the input word.

\begin{definition}

A \emph{two-way transducer with lookbehind ($\WFT_b$)} is a two-way transducer $\mc{T} = (Q, q_0,  \Delta \uplus \{\vdash\}, \Gamma, \delta, \theta)$ together with a deterministic automaton $\mc{A} = (S, s_0, \Delta, \zeta)$ ($S$ set of states, $s_0 \in S$ initial state, $\zeta$ (partial) transition function), such that $\delta: Q \times (\Delta \uplus \{\vdash\}) \times S \rightarrow Q \times \{\triangleleft, \triangleright\}$ (partial function).

\end{definition}

When in state $q \in Q$ and position $n$ of the input $\beta \in \Delta^\omega$, the transition of the $\WFT_b$ is chosen as a function of $q$, $\beta[n]$ and $s:= \zeta(s_0,\beta[:n])$. Note that the access to $\zeta(s_0,\beta[:n])$ is purely an oracle and requires no effective run of $\mc{A}$. The definition of a run and the of output mechanism of a $\WFT_b$ is straightforward and similar to that of a $\WFT$.

\begin{lemma}

\label{lem:ssstwftb}

 If $\alpha$ is the image of $\beta$ under a simple $\SST$, then $\alpha$ is the image of $\beta$ under some $\WFT_b$.

\end{lemma}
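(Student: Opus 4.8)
The plan is to let the $\WFT_b$ replay the (now fixed) run of the simple $\SST$ and to produce its output register on the fly, using two-way motion only to unfold the nested register contents. First I would take the lookbehind automaton $\mc{A}$ to be the underlying deterministic automaton $(Q,q_0,\Delta,\delta)$ of the simple $\SST$ $\mc{S}$ computing $\alpha$ from $\beta$. Then at every position $n$ the oracle value $\zeta(s_0,\beta[:n]) = \rho(n)$ tells the transducer the current state of $\mc{S}$, hence (together with $\beta[n]$) the whole transition and its register update $\lambda(\rho(n),\beta[n])$. Recall from the remark after Corollary \ref{cor:ssst} that in a simple $\SST$ the output register is only appended to: writing $w_k \in (\Gamma \uplus \mc{X})^*$ for the suffix such that $\lambda(\rho(k),\beta[k])(\out) = \out\, w_k$, the ground substitutions satisfy
\[ \sigma_{k+1}(\out) = \sigma_k(\out)\,\sigma_k(w_k), \qquad\text{so}\qquad \alpha = \prod_{k \ge 0}\sigma_k(w_k). \]
Since each chunk $\sigma_k(w_k)$ is finite, it suffices to emit these chunks in increasing order of $k$: the transducer walks to position $k$ (advancing one step each time it finishes a chunk), reads $w_k$ left to right, outputs every $\Gamma$-letter directly, and for every register occurrence $x$ in $w_k$ runs a subroutine that emits the current value $\sigma_k(x)$.

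The heart of the construction is this subroutine. Unfolding one step of the run gives $\sigma_k(x) = \sigma_{k-1}(\lambda_{k-1}(x))$, and writing $\lambda_{k-1}(x) = u_0 y_1 u_1 \cdots y_m u_m$ with $u_i \in \Gamma^*$ and $y_j \in \mc{X}$ yields $\sigma_k(x) = u_0\,\sigma_{k-1}(y_1)\,u_1\cdots \sigma_{k-1}(y_m)\,u_m$. Iterating this identity exhibits $\sigma_k(x)$ as a nested expansion indexed by decreasing time, in which every letter of $\Gamma$ is produced at a unique \emph{event} $(j,\text{offset})$ with $j < k$. Because $\mc{S}$ is copyless, each produced letter flows into the final output along a single path, so the events that make up $\sigma_k(x)$ are linearly ordered and each input position hosts only a bounded number of them. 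The idea I would use is therefore to replace the recursion by a finite-state navigation: to output $\sigma_k(x)$ the transducer locates its first letter by chasing, backward in time, the leftmost register dependency until it meets a transition that appends an actual $\Gamma$-letter in leftmost position, outputs it, and then repeatedly computes the output-successor of the current event by local two-way moves. At each visited position the lookbehind instantly restores the relevant $\SST$ state, so the appended word and the dependency structure are known without any re-scanning, and the bounded number of register ``roles'' suffices as finite control.

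The main obstacle is precisely that the naive recursion has depth up to $k$, which is unbounded, whereas a two-way transducer has only finite memory and no stack. Overcoming it is exactly where copylessness is essential: it guarantees that the reordering from time order to output order has bounded crossing number, the characteristic feature that two-way transducers (unlike one-way ones) can realize. Concretely, I expect the successor-of-an-event relation to be definable from the local transition data plus a bounded amount of state, so that the navigation never needs to remember an unbounded history; making this relation explicit, and checking that the induced head movement stays within the already-read prefix $\beta[:k]$, is the technical core of the argument. The remaining points are routine: the output is produced progressively because every chunk $\sigma_k(w_k)$ is finite and handled within $\beta[:k]$ before the head advances to position $k{+}1$, and if $\lim_k \sigma_k(\out)$ happens to be finite the transducer pads with $\square^\omega$, so that in all cases it realizes $\beta \mapsto \alpha$.
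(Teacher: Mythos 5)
Your proposal is correct and follows essentially the same route as the paper's proof: you take the $\SST$'s own deterministic automaton as the lookbehind, decompose $\alpha = \prod_k \sigma_k(w_k)$ using the append-only output register, and replace the unbounded recursion by finite-state two-way navigation whose stack-free backtracking is justified precisely by copylessness. The ``successor-of-an-event'' relation you describe (and partly defer as the technical core) is exactly what the paper's pseudocode makes explicit in its function \emph{Next}, via the instruction ``find the unique $x \in \mc{X}$, $w_1, w_2$ such that $\lambda(\rho(\pos-1),(\vdash\beta)[\pos])(x) = w_1\,\regs\,w_2$'', whose uniqueness is the copylessness argument you invoke.
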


\begin{proof}[Proof sketch]

Let $\mc{S}= ( S, s_0, \Delta, \Gamma, \zeta, \mc{X}, \lambda, \out)$ be a simple $\SST$ and $\beta \in \Delta^\omega$ such that $\alpha \in \Gamma^\omega$ is output by $\mc{S}$ on input $\beta$. We denote by $\rho$ the corresponding run and $(\sigma^k)_{k \ge 0}$ the sequence of ground substitutions. We are going to build a $\WFT_b$ $\mc{T}$  outputting $\alpha$ on $\beta$. Let $\mc{A} = (S, s_0, \Delta, \zeta)$ be the deterministic automaton of $\mc{S}$, this automaton will be the lookbehind of $\mc{T}$. In other words, we can assume that when in position $k+1$ of the input tape, $\mc{T}$ chooses its transition depending on $(\vdash\beta)[k+1] = \beta[k]$ and $\rho(k) = \zeta(s_0, \beta[:k])$. Hence the transition can also depend on $\lambda(\rho(k),\beta[k])$. 

In order to keep the proof readable, we shall provide a pseudocode describing the behavior of $\mc{T}$. The main issue is that since $\mc{T}$ has no registers, it cannot store unbounded information before outputting, as $\mc{S}$ used to do. Hence, we shall use the two-way moves of  $\mc{T}$ to compute ``recursively'' the value of the registers at each step, and output immediately what was added to $\out$. But the recursion procedure has to be quite subtle, since we cannot store a stack.

The pseudocode is given in Algorithm \ref{algo:algosst}, we now justify its correction, i.e. that it computes $\alpha$ on input $\beta$. Recall that the input tape contains in fact $\vdash \beta$, hence the first ``move right'' sends the reading head on $\beta[0]$.

 \newcommand{\proc}{\text{\texttt{process}}}
 \newcommand{\pos}{\text{\texttt{pos}}}
 \newcommand{\regs}{\text{\texttt{reg}}}

\begin{algorithm}
\SetKw{KwVar}{Variables.}
\SetKw{KwMem}{Finite-memory machine.}
\SetKwProg{Fn}{Function}{}{}

 \KwVar{\emph{All variables are global. $\pos \in \mb{N}$ denotes the current position on the input tape, implicitly updated at each move; $\regs \in \mc{X}$ is the register we are currently working on; $\proc \in (\mc{X} \cup \Gamma)^*$ is what remains to be output for \regs; $x \in \mc{X} \cup \Gamma $ will be used temporarily.}}
 
 \Fn{Next}{
  
			\eIf{$\proc \neq \varepsilon$}{
   
				$x \leftarrow \proc[0]$;
   
				$\proc \leftarrow \proc[1:]$;
				
				\eIf{$x \in \mc{X}$}{
   
					move left;
					
					$\regs \leftarrow x$;
					
					\eIf{$(\vdash\beta)[\pos] = \vdash$}{
					
						$\proc \leftarrow \varepsilon$;
					
					}{
					
						$\proc \leftarrow \lambda(\rho(\pos-1), (\vdash\beta)[\pos])(\regs)$;
					
					}

				}{
   					output $x$;
				}
   
			}{
			
				\eIf{$\regs = \out$}{
   
					break the inner while of the main program;
   
				}{
					move right;
					
					find the unique $x \in \mc{X}, w_1 \in (\Gamma \cup \mc{X})^*, w_2 \in (\Gamma \cup \mc{X})^*$ such that $\lambda(\rho(\pos-1), (\vdash\beta)[\pos])(x) = w_1 \regs w_2$;
					
   					$\regs \leftarrow x$;
					
					$\proc \leftarrow w_2$;
				
				}

				}
			}
					
	\While{true}{
	
	move right;
	
	$\regs \leftarrow \out$;
	
	$\proc \leftarrow \lambda(\rho(\pos-1), (\vdash\beta)[\pos])(\out)$;
	
	\While{true}{
	
	Next();
	}
	}
	
 \caption{\label{algo:algosst} the transducer $\mc{T}$}
\end{algorithm}

\begin{description}

\item[Algorithm \ref{algo:algosst} really describes a $\WFT_b$.] Variables only contain a bounded information, except \texttt{pos} but it is not directly used in the computations. Hence we describe a finite-memory machine. Conditions of the ``if'' depend on information which is available by a $\WFT_b$.

\item[Instruction ``find the unique''.] It is not clear that such $x, w_1, w_2$ exist and are unique. Their existence will be ensured at runtime. Uniqueness follows directly from the fact that the $\lambda$ substitutions are copyless. Note that this instruction is the key argument to make the procedure work without a stack of unbounded size.

\item[Specifications of Next.] The key invariant is the following. Let $k+1$, $x$ and $w$ be the values of $\pos$, $\regs$ and $\proc$ at a certain instant, such that $k \ge 0$ and $w$ is a suffix of $\lambda(\rho(k),\beta[k])(x)$. We claim that after a certain number of calls to Next(), we have $\pos = k+1$, $\regs = x$, $\proc = \varepsilon$, and during this time $\mc{T}$ has output $\sigma_k(w) \in \Gamma^*$. This result can by shown by induction on $(k, |w|)$ with lexicographical ordering.

\item[Main invariant.] Let $w_k \in (\mc{X} \cup \Gamma)^*$ such that $\lambda(\rho(k),\beta[k])(\out) = \out w_k$. Using what was done for Next we get that after the $(k+1)$-th main ``while'', $\mc{T}$ has output the string $\sigma_0(w_0) \cdots \sigma_k(w_k) \in \Gamma^*$.

\item[Full correction.] 
Since $\alpha = \lim_k \sigma_k(\out)$ and $\sigma_{k}(\out) = \sigma_0(w_0) \cdots \sigma_{k-1}(w_k)$, the previous invariant shows that $\mc{T}$ outputs $\alpha$.

\end{description}
\end{proof}

A run of this algorithm is detailed in Example \ref{ex:recurs} below.

\begin{example}

\label{ex:recurs}

Let $\mc{X} = \{x,y,\out\}$ and $\Gamma = \{a,b\}$. The last substitutions applied are written under the positions $k-1$, $k$ and $k+1$. We  want to output the last value of $xa$ added to $\out$ in position $k+1$.

\vspace*{-0.2cm}
    \begin{center}
      \begin{tikzpicture}[scale=0.8]
        \tikzstyle{etat} = [draw, fill=white];
         \node (41) at (-5,1) {\dots};
        \node (30) at (-3,1.5) {$(k-1)$};
	\node (31) at (-3,1) {$x \mapsto x$};
	\node (32) at (-3,0.5) {$y \mapsto a$};
        \node (33) at (-3,0) {$\out \mapsto \out$};
        \node (20) at (1,1.5) {$(k)$};
	\node (21) at (1,1) {$x \mapsto b y x$};
	\node (22) at (1,0.5) {$y \mapsto y$};
        \node (23) at (1,0) {$\out \mapsto \out$}; 
        \node (10) at (5,1.5) {$(k+1)$};
	\node (11) at (5,1) {$x \mapsto b$};
	\node (12) at (5,0.5) {$y \mapsto y$};
        \node (13) at (5,0) {$\out \mapsto \out x a$}; 
        \draw[->,dashed] (41) edge[]  (31);
        \draw[->,dashed] (32) edge[]  (21);
        \draw[->,dashed] (31) edge[]  (21);
        \draw[->,dashed] (21) edge[]  (13);
       \end{tikzpicture}
    \end{center}
    
    \vspace*{-0.2cm}
    
    The $\WFT_b$ moves left from $(k+1)$ to $k$ to find $\sigma_{k+1}(x)$ (since $x$ is the first variable appearing in $x a$). It can already output $b$, then goes left to look for $\sigma_{k}(y)$. It outputs $a$ and reaches the end of a branch, so it moves right to $(k)$, keeping in memory that the last register was $y$, which only appears as a right member in $x \mapsto byx$. So the next value to output is $\sigma_{k}(x)$, we do it with the same procedure. When it has finally output the full value of $\sigma_{k+1}(x)$, it moves right and notice that the whole recursion process ends after outputting $a$.

\end{example}

\begin{remark} We could in fact show that if a function is realizable by a simple $\SST$, it can be realized by a $\WFT_b$ as well, but it is not useful in our context. This refined statement does not hold for $\SST$ in general.

\end{remark}

The next lemma shows that adding lookbehinds does not increase the expressiveness.

\begin{lemma}

\label{lem:wftbwft}

If $\alpha$ is the image of $\beta$ under a $\WFT_b$, then $\alpha$ is the image of $\beta$ under some $\WFT$ (without lookbehind).

\end{lemma}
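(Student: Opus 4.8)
The plan is to build a plain $\WFT$ $\mc{T}'$ that simulates the given $\WFT_b$, the point being that the lookbehind value consulted at position $n$ is nothing but the prefix state $s_n := \zeta(s_0,\beta[:n])$, a quantity depending only on the finite prefix $\beta[:n]$. The idea is to carry $\mc{A}$'s current prefix state inside the finite control of $\mc{T}'$, maintaining the invariant that whenever $\mc{T}'$ simulates a transition of $\mc{T}$ while its head is on position $n$, it also holds $s_n$; from $(q,\beta[n],s_n)$ it can then fire exactly the move prescribed by $\delta(q,\beta[n],s_n)$, emit the same output, and advance. Updating $s_n$ along the natural left-to-right motion is immediate: on a rightward step from $n$ to $n+1$ we read $\beta[n]$ and set $s_{n+1}=\zeta(s_n,\beta[n])$. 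Since the head of $\mc{T}$ starts on $\vdash$ and moves one cell at a time, every position is first reached from the left, so this forward update already fixes $s_n$ correctly on the initial visit to each cell.

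The obstacle is the leftward moves, that is, the later visits to a position reached from the right. Arriving at $n$ from $n+1$ we would need to recover $s_n$ from $s_{n+1}=\zeta(s_n,\beta[n])$, but $\zeta$ is not invertible, and we cannot afford to store one prefix state per position since there are infinitely many. The resolution is the standard elimination of regular look-around for two-way devices: the prefix state can always be reconstructed where it is consulted by coupling the simulation of $\mc{T}$ with the forward simulation of $\mc{A}$ along left-to-right segments and recording the bounded crossing information at each boundary. Concretely, if $\alpha$ is ultimately periodic the statement is immediate by Fact \ref{fact:ultper}, so we may assume it is not; then Lemma \ref{lem:wftper} gives an $N$ bounding the number of visits of $\mc{T}$ to each position, hence the sequence of states in which $\mc{T}'$ crosses any boundary between $n-1$ and $n$ has length at most $2N$ and can be stored in the finite control. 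Carrying these bounded crossing sequences lets $\mc{T}'$ propagate $s_n$ to every visit of position $n$, not just the first, which removes the need to invert $\zeta$; this is the only genuinely technical point of the construction.

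Finally I would check that $\mc{T}'$ reproduces $\mc{T}$ exactly: its run projects onto the run of $\mc{T}$ on $\beta$, it emits the same output symbols in the same order, and its head visits the whole input whenever $\mc{T}$'s does. As the lookbehind automaton has been absorbed into the finite control, $\mc{T}'$ is a genuine $\WFT$, and it outputs $\alpha$ on input $\beta$, giving $\alpha \pc_{\WFT} \beta$ as required. The delicate accounting of the crossing sequences — ensuring determinism is preserved and that the reconstructed value of $s_n$ agrees on every crossing of a boundary — is where the real work lies; everything else is bookkeeping.
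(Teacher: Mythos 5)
Your reduction to the non-ultimately-periodic case and your treatment of rightward moves are fine, and you have correctly isolated the crux: when the head moves left from position $n+1$ to $n$, the prefix state $s_n=\zeta(s_0,\beta[:n])$ cannot be recovered by inverting $\zeta$. But the mechanism you propose for that crux does not work. A crossing sequence is attached to a \emph{boundary}, i.e. to a tape position, and there are infinitely many boundaries; a $\WFT$ has no per-position storage, only a finite control whose content travels with the head. Concretely, consider a run that walks right from position $n$ to position $n+k$ and then walks back left to $n$ (this is compatible with the bound of Lemma \ref{lem:wftper}: each intermediate position is visited only twice during the excursion, and $k$ is unbounded). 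During the leftward walk the simulator needs $s_{n+k-1}, s_{n+k-2}, \dots, s_n$ in succession; having ``recorded the crossing information at each boundary'' during the rightward pass means holding $k$ items of information simultaneously, one per boundary, which is unbounded memory no matter that each individual crossing sequence has length at most $2N$. (A secondary problem: a crossing sequence includes \emph{future} crossings of that boundary, unknown when the boundary is first crossed; crossing sequences are a tool for analyzing completed runs, not data a deterministic machine can maintain online.) So the sentence ``carrying these bounded crossing sequences lets $\mc{T}'$ propagate $s_n$ to every visit of position $n$'' is precisely the step that fails, and it is the entire content of the lemma.

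The idea that is actually needed is different: when the simulator reaches position $n$ from the right, it \emph{recomputes} $s_n$ by a sub-excursion — walk left until the state of $\mc{A}$ can be anchored (at worst at the endmarker $\vdash$), then run $\mc{A}$ forward while returning — and the genuine difficulty is the return protocol, since the head cannot mark position $n$; this is resolved by the Hopcroft--Ullman-style zigzag argument, which is what the look-around elimination in \cite{engelfriet2001} is built on. This is also exactly where finiteness of the ``behind'' matters (and why the same trick cannot remove $\omega$-lookahead). The paper's own proof simply invokes the lookbehind-removal construction of \cite{engelfriet2001} for finite words and observes that a lookbehind only concerns a finite part of the infinite input, so the adaptation is routine; your proof names ``the standard elimination of regular look-around'' but then substitutes for it a storage scheme that finite memory cannot support. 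Either carry out the zigzag construction or cite the known result as the paper does. A minor additional point: Lemma \ref{lem:wftper} is stated for $\WFT$, not for $\WFT_b$; it does extend (at a fixed position the lookbehind state is fixed, so two visits in the same transducer state still force a loop), but this needs to be said.
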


\begin{proof}[Proof idea.] \cite{engelfriet2001} shows how to remove lookbehinds in the case of $\WFT$ over finite words. Since the ``behind'' only concerns a finite part of our infinite string, the adaptation is straightforward.
\end{proof}

We can now complete the proof of Theorem \ref{theo:2WFT}. If $\alpha \pc_{\MSOT} \beta$, then by Corollary \ref{cor:ssst} $\alpha$ is the image of $\beta$ under a simple $\SST$. By Lemma \ref{lem:ssstwftb} $\alpha$ is the image of $\beta$ under a $\WFT_b$. Finally Lemma \ref{lem:wftbwft} concludes that $\alpha \pc_{\WFT} \beta$.

\section{Proof of Theorem \ref{theo:prime}}

\label{proof:theo:prime}

We proceed in several steps to get that $\alpha \pc_{\WFT} \pi$ implies $\alpha \pc_{\DFT} \pi$. First (Lemma \ref{lem:10}), we show that a two-way computation on $\pi$ can be performed by a transducer that only changes its reading direction when seeing the letter $1$. We further prove (Lemma \ref{lem:big}) that it can be simulated by a one-way transduction in a ``bigger'' word, which is in the same $\DFT$ degree as $\pi$ (Lemma \ref{lem:pik}).

We assume there that $\alpha$ is not ultimately periodic. According to Lemma \ref{lem:wftper} $n \ge 0$, there is a moment when the transducer no longer goes before position $n$ on its input tape. Our constructions will refer implicitly to what happens ``far enough'' in the word.

\begin{lemma}

\label{lem:10}

If $\alpha \pc_{\WFT} \pi$, the transformation can be performed by a transducer whose states $Q$ are partitioned in two sets $Q^{\triangleright}$ and $Q^{\triangleleft}$ such that the following holds for the transition function $\delta$. For all $q \in Q^{\triangleright}$ (resp. $Q^{\triangleleft}$) $\delta(q,0) = (q',\triangleright)$ (resp. $ (q',\triangleleft)$); for all $q \in Q$ and $a \in \{0,1\}$, $\delta(q,a) = (q',\triangleright)$ (resp. $ (q',\triangleleft)$) implies $q' \in Q^{\triangleright}$ (resp. $q' \in Q^{\triangleleft}$).

\end{lemma}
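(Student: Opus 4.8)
The plan is to keep the transformation fixed and \emph{redesign} the transducer, since the normal form cannot in general be obtained by merely inspecting the run of the given $\mc{T}$: a transducer may legitimately zig-zag (move right, right, left, right, right, left, $\dots$) while reading a long block of $0$'s, visiting each position only twice, so genuine direction changes can occur arbitrarily deep inside a $0$-block. What saves us is the arithmetic of $\pi=\prod_{n\ge 0}0^n1$: the lengths of its maximal blocks of $0$'s tend to infinity, so beyond a finite prefix the head always reads very long \emph{uniform} blocks, on which the behaviour of a two-way device is forced to be eventually periodic. First I would dispose of the degenerate case: if $\alpha$ is ultimately periodic, Fact \ref{fact:ultper} gives a one-way transducer realising $\pi\mapsto\alpha$, and a one-way transducer trivially satisfies the statement (put every state in $Q^{\triangleright}$). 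So assume $\alpha$ is not ultimately periodic and apply Lemma \ref{lem:wftper} to get a bound $N$ such that the run of $\mc{T}$ on $\vdash\pi$ visits each position at most $N$ times. In particular the head returns to the leftmost cell only finitely often, so the initial behaviour (including the finitely many short blocks) can be hardcoded, and I may assume every block under consideration is longer than any constant fixed below.

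The heart of the argument is the interaction of $\mc{T}$ with a single long block $0^m$ delimited by two occurrences of $1$. Inside the block the head reads only $0$, so this is a run of a two-way device on a unary word, and the visit bound $N$ forces every crossing sequence (the list of state/direction pairs in which the head crosses a fixed cut, in run order) to have length at most $2N$: a rightward crossing of the cut between $p$ and $p+1$ consumes a visit to $p$, a leftward one a visit to $p+1$. Since there are only finitely many crossing sequences of bounded length, a pigeonhole over the cuts of a long block produces two cuts, at bounded distance, carrying identical crossing sequences; the uniformity of $0^m$ then allows pumping the enclosed segment. I would use this Hennie--Shepherdson--style analysis to establish that, for each entry state and entry side and for all sufficiently long blocks, the interaction is \emph{eventually periodic in $m$}: the exit side and exit state are independent of $m$, and the emitted output has the shape $u\,v^{k}\,w$ with $u,v,w$ fixed and $k$ growing linearly in $m$ (one fixed output chunk per period $p\le 2N$), the non-periodic parts $u,w$ being confined to bounded neighbourhoods of the two delimiting $1$'s.

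I would then build $\mc{T}'$ by doubling the state space to record a direction, which is exactly the partition into $Q^{\triangleright}$ and $Q^{\triangleleft}$, and by making $\mc{T}'$ traverse each block \emph{monotonically}, one position per step, reproducing the output of $\mc{T}$ by emitting strings (recall that the output function $\theta$ already outputs words, not single letters). Concretely, on reading a $1$, $\mc{T}'$ performs a bounded internal simulation, depending only on the entering state, of the transient wiggle near that $1$ and decides whether the pass \emph{bounces} (exits the same $1$) or \emph{crosses}. A bounce is purely local: it is realised by a single $1$-transition that reverses the head and emits the corresponding finite output, which is precisely where turning is permitted. A crossing is realised by the monotone sweep in the crossing direction, emitting the prefix $u$ and then the chunks $v$ through a cycle of $p$ states, the suffix $w$ and the far $1$'s output being batched onto the far $1$-transition. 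Since $\mc{T}'$ never reverses while reading $0$ and the class of every state matches the move it enforces, the partition conditions hold; and since each block-pass contributes the same output chunk in the same run order, the total output is unchanged, so $\mc{T}'$ still maps $\pi$ to $\alpha$.

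The main obstacle is the periodicity step coupled with the far-boundary transient. Establishing eventual periodicity of both the exit configuration and the output on long unary blocks is the genuine crux, and fusing the crossing-sequence pumping with the output bookkeeping requires care. The subtlest point of the construction is that a turn-free sweep cannot anticipate the far $1$: the bounded non-periodic output $w$ that $\mc{T}$ produces while wiggling near the exit boundary cannot be spread over the last positions (their number is unknown to $\mc{T}'$), and must instead be released all at once when $\mc{T}'$ finally reads that $1$; one must then verify that this re-batching preserves both the output string and the exit state exactly, so that the monotone $\mc{T}'$ is faithful to $\mc{T}$ on $\pi$.
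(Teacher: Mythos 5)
Your overall strategy coincides with the paper's: decompose the run of $\mc{T}$ on $\pi$ into passes over the blocks $0^m$, show that for large $m$ each pass either bounces back within a bounded number of steps or crosses the block, and then rebuild a transducer that hardcodes bounces at the letters $1$ and realises crossings by a monotone sweep. The main technical difference is how the block analysis is obtained: the paper exploits determinism on a uniform block directly (after at most $N$ steps reading only $0$'s, the machine repeats a state and is caught in a two-way loop, whose net drift decides bounce versus crossing), whereas you invoke crossing sequences and pumping à la Shepherdson. Your route is workable but heavier, and the loop argument gives you for free the fact that the output of a crossing pass is, \emph{in temporal order}, a transient $u$, then one fixed chunk $v$ per loop iteration, then a terminal part --- a structure that is less immediate to extract from crossing-sequence pumping.

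There is, however, a genuine error at the crux of your analysis: the claim that for a crossing pass ``the exit side and exit state are independent of $m$'' with $u,v,w$ fixed. The exit \emph{side} is indeed independent of $m$ for large $m$, but the exit \emph{state} and the terminal chunk $w$ are not: they depend on $m$ modulo the net displacement $d$ of the loop (equivalently, modulo your pumping distance). Concretely, consider a machine whose behaviour on $0$'s is the loop $q_0 \to q_1 \to q_0 \to \cdots$, moving right at every step: the state in which the far $1$ is read is determined by the parity of $m$, and since $\pi$ contains blocks of every length, both parities occur for arbitrarily large blocks. Your construction batches a single fixed $w$ and a single fixed exit state onto the far $1$-transition, so it would emit the wrong output and continue the simulation in the wrong state on every block whose length lies in the wrong residue class; the re-batching you flag as the subtlest point genuinely fails as stated. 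The repair is what the paper's ``simulate a bounded loop'' implicitly does: the cycle of $p$ states you already run during the monotone sweep tracks the phase $r = m \bmod d$, and the far $1$-transition must be made phase-dependent, emitting $w_r$ and moving to the exit state $q_r$ dictated by the cycle state in which the $1$ is read. With the periodicity claim restated per residue class and this phase-dependent exit, your proof goes through.
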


\begin{proof}[Proof sktech.]

Let $\mc{T}$ be a $N$-states transducer performing the transformation. We study how $\mc{T}$ copes with the $10^n1$ blocks of $\pi$. Assume $\mc{T}$ enters the block from the left side and goes right after reading the $1$. We consider the (two-way) run staying in $0^n$, before the next visit of a letter $1$. Since $\mc{T}$ can only see $0$, it is finally (after a most $N$ steps) caught in a loop (of size at most $N$). By loop we mean a two-way loop, between by two configurations sharing the same state. Two cases may occur.
\item
\begin{itemize}
\item The next $1$ visited is the left one: $\mc{T}$ comes back to its previous position. In that case, the run in $0^n$ cannot be longer than $N + N^2$ (due to the loop), else it should go ``right''.

\item The next $1$ visited is the right one: $\mc{T}$ went through to block $0^n$.
\end{itemize}

For $n$ large enough, the occurring case does not depends on $n$, but only on the state when entering the block. It is not hard to derive a formal construction from the previous remark: the first case can be hardcoded without moving (since the run is bounded) , and the second one can be simulated in a one-way manner (simulate a bounded loop). Adapt consequently the output function. We get a new transducer that does not change its reading direction in blocks of $0$.

\end{proof}

\begin{remark} This lemma remains valid when replacing $\pi$ by any binary sequence where the gap between two consecutive $1$ goes to the infinity.

\end{remark}

 Let $\pi^k = \prod_{n=0}^{\infty} (0^n1)^k$ for $k \ge 1$.

\begin{lemma}
\label{lem:big}

$\alpha \pc_{\WFT} \pi$, there is $k \ge 1$ such that $\alpha \pc_{\DFT} \pi^k$.

\end{lemma}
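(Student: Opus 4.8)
The plan is to unfold the back-and-forth run of a two-way transducer on $\pi$ into a single forward sweep over the stretched word $\pi^k$, the $k$ copies of each block $0^n$ serving to replay, one per copy, the boundedly many times the two-way head crosses that block. We may assume $\alpha$ is not ultimately periodic, the ultimately periodic case being covered by Remark \ref{rem:ultdft}, which places such $\alpha$ in the least $\DFT$-degree. By Lemma \ref{lem:10} I would fix a transducer $\mc{T}$ with $\mc{T}(\pi)=\alpha$ that reverses direction only on the letter $1$, and let $N$ be its number of states; by Lemma \ref{lem:wftper} the run of $\mc{T}$ visits every position of $\pi$ at most $N$ times, so it crosses each maximal block $0^n$ at most $N$ times in total. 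I would then set $k:=N$ (possibly slightly larger for bookkeeping) and picture the run as a nearest-neighbour walk on the positions carrying a $1$, drifting to $+\infty$ and meeting each such position at most $N$ times.

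The observation that makes a forward replay faithful is a \emph{direction-independence} of block crossings: since a block $0^n$ carries only $0$'s and, by Lemma \ref{lem:10}, the move direction is determined by the state alone, the sequence of states traversed and the string emitted during one crossing of $0^n$ depend only on the entry state and on $n$, regardless of whether the crossing is rightward or leftward. Hence a one-way machine that reads a single fresh copy of $0^n$ from $\pi^k$ can reproduce the output of any one crossing of the corresponding block once it is given the entry state, and the finite datum to be propagated from one $1$ to the next is just the crossing sequence there, a word of length at most $N$ over the state set. With these ingredients I would describe a one-way transducer $\mc{U}$ that sweeps $\pi^k$ once, maintains the crossing-sequence profiles of the blocks that are currently active, and emits on each copy it reads the output of the crossing allotted to that copy.

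The main obstacle, where I expect almost all of the work to concentrate, is to match the \emph{time} order in which $\mc{T}$ emits $\alpha$ with the fixed left-to-right \emph{spatial} order in which $\mc{U}$ must consume $\pi^k$: a two-way run interleaves in time the crossings of several blocks (a leftward excursion emits, block after block, in order of \emph{decreasing} length, exactly opposite to the increasing lengths met by a forward sweep), so a naive super-block-by-super-block replay would reorder $\alpha$. To control this I would first argue that, because $\alpha$ is not ultimately periodic, the depth of the head's backtracks is bounded: an unbounded sequence of ever deeper excursions would, by a pumping argument on the states combined with Lemma \ref{lem:wftper}, force the output to be eventually periodic. This bounds the window of simultaneously active blocks, hence the amount of crossing-sequence information $\mc{U}$ must carry; within such a bounded window the regular gap structure of $\pi$ (consecutive blocks differ by a single $0$) lets $\mc{U}$, while reading one copy, also supply the few extra symbols needed to reproduce the crossings of the neighbouring active blocks, so that it can schedule all emissions in genuine time order. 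Verifying that the resulting schedule concatenates to exactly $\alpha$ is then a bookkeeping induction along the run, after which $\alpha \pc_{\DFT} \pi^k$ follows; combined later with Lemma \ref{lem:pik} this will give $\alpha \pc_{\DFT}\pi$.
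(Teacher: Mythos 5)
Your skeleton matches the paper's own proof of Lemma \ref{lem:big}: dispose of ultimately periodic $\alpha$ separately, take the transducer of Lemma \ref{lem:10}, observe that a crossing of a block $0^n$ is unidirectional and that its state and output sequence can be reproduced by reading $0^n$ forward from the entry state, bound the depth of the head's excursions, and then unfold the boundedly many crossings onto the copies inside $\pi^k$, absorbing the bounded length differences between neighbouring blocks into the finite control. All of this is exactly the paper's plan, including the choice of $k$ as (a multiple of) the number of states.

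The gap is in your justification of the pivotal step, the bound on excursion depth. You claim that unboundedly deep backtracks would, by a pumping argument, ``force the output to be eventually periodic.'' That mechanism fails: when a state repeats at two $1$-positions of $\pi$, the loop it creates is a \emph{drifting} loop over blocks of strictly increasing lengths, so the outputs of successive iterations are pumped, growing words, and their concatenation need not be ultimately periodic at all (a one-way transducer that merely copies $\pi$ is caught in precisely such a drifting loop, yet outputs the non-periodic word $\pi$). The contradiction the paper derives is structural, not about the output. Writing $p_n$ for the position of the $(n+1)$-th letter $1$, consider the run segment between the last visit of $p_n$ and the last visit of $p_{n+1}$; if during this segment the head travels more than $c$ blocks to the right, one finds two configurations with the same state at $1$-positions whose block indices are congruent modulo $m$, where $m$ is the lcm of the lengths of the $0$-labelled cycles of the underlying automaton. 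Since for long blocks the transition depends only on the entry state and the block length mod $m$, determinism forces the run from the second configuration to be the rightward translate of the run from the first, hence to drift right forever and never return to $p_{n+1}$ --- contradicting that the segment must end with a visit of $p_{n+1}$. Note that this congruence-mod-$m$ device is precisely the subtlety your sketch glosses over: because the blocks of $\pi$ grow, a naive state-pigeonhole does not yield a genuine loop, and the paper explicitly flags this as the non-trivial point. With the bounding step repaired along these lines, the rest of your construction, and the final composition with Lemma \ref{lem:pik}, goes through.
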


\begin{proof}[Proof sketch.]
Let $\mc{T}$ be the $N$-states transducer obtained by Lemma \ref{lem:10}. Let $p_n$ for $n \ge 0$ be the position of the $(n+1)$-th letter $1$ in $\pi$, namely $p_n = \frac{n(n+1)}{2}$. A non-trivial run on $\pi$ can be decomposed in finite runs $R_n$ between the last visit in $p_n$ and the last visit in $p_{n+1}$.

We claim that there is $c \ge 1$ such that for all $n$ large enough, $R_n$ never visits position $p_{n+c}$. Thus $R_n$ is contained between $p_n$ and $p_{n+c}$, so in $10^n 1 \cdots 0^{n+c} 1$. It follows that $R_n$ can be simulated by a one-way run on $(0^n 1)^{cN}$, by unfolding of the two-way moves (we use here the previous lemma). Putting everything together, we get that $\alpha \pc_{\DFT} \pi^{cN}$.

Let us now prove what we claimed. The idea is that if $\mc{T}$ goes arbitrarily far in $R_n$, it must be caught in a (two-way) loop and thus cannot come back to $p_{n+1}$. This is not totally trivial, because the $10^i1$ blocks are of increasing size in the word, what might ``break'' the loop. Let $\mc{A}$ be the underlying (deterministic) automaton of $\mc{T}$, $\delta$ its transition function and $m$ the least common multiple of the cycles labelled with $0$ in the graph of $\mc{A}$. For all $i$ large enough and congruent modulo $m$, the state $\delta(q,10^i1)$ is independent from $i$, since we must finish every cycle. Thus if $\mc{A}$ goes ``too far'', there exists in $R_n$ two configurations, in the same state and in positions $p_l$ and $p_{l'}$ with $l<l'$, before equally congruent blocks. This initiates a two-way loop that makes $\mc{A}$ unable to go to $p_{n+1}$.
\end{proof}

\begin{lemma}
\label{lem:pik}

For all $k \ge 1$, $ \pi$, $\pi^k \pc_{\DFT} \pi$.
\end{lemma}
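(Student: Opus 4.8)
The plan is to realize the reduction by a single left-to-right pass that keeps one representative of each repeated block of $\pi^k$ and erases the $k-1$ duplicates. Writing $\pi = \prod_{n \ge 0} 0^n 1$ and $\pi^k = \prod_{n \ge 0}(0^n 1)^k$, the word $\pi^k$ is a concatenation of elementary \emph{sub-blocks} $0^n 1$ in which, for each $n$, exactly $k$ consecutive identical sub-blocks occur. The key observation is that consecutive sub-blocks are delimited by the letter $1$, so a one-way machine can detect sub-block boundaries without knowing $n$; and since every group consists of exactly $k$ copies, with $k$ fixed, counting the occurrences of $1$ modulo $k$ already suffices to locate the first copy of each group.

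First I would fix the $\DFT$ $\mc{T}$ whose state set is $\{0, \dots, k-1\}$, interpreted as a counter holding the index modulo $k$ of the sub-block currently being scanned. On input letter $0$ in state $c$, $\mc{T}$ outputs $0$ if $c = 0$ and $\varepsilon$ otherwise, and stays in state $c$; on input letter $1$ in state $c$, it outputs $1$ if $c = 0$ and $\varepsilon$ otherwise, and then moves to state $(c+1) \bmod k$. Thus $\mc{T}$ copies verbatim every sub-block whose global index is a multiple of $k$, skips all the others, and all its transitions move the head rightward, so $\mc{T}$ is genuinely one-way.

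Finally I would verify that $\mc{T}$ outputs $\pi$ on input $\pi^k$, which is exactly the relation $\pi^k \pc_{\DFT} \pi$. The sub-blocks copied are those at global indices $0, k, 2k, \dots$; since the group for a given $n$ occupies the indices $nk, \dots, nk + k -1$, the copied sub-blocks are precisely the first copy $0^n 1$ of each group, and their concatenation is $\prod_{n \ge 0} 0^n 1 = \pi$. I expect the only delicate point to be the synchronisation of the counter with the group boundaries, and this is precisely where the finiteness of $k$ is used: a fixed counter modulo $k$ stays aligned only because each block contributes exactly $k$ sub-blocks. It is also the reason the argument runs in this direction alone. The converse transformation would force a one-way machine to reproduce each unbounded block $0^n$ a total of $k$ times, which no $\DFT$ can do since it cannot re-scan the erased zeros, whereas a two-way transducer may revisit the block; so $\pi$ and $\pi^k$ coincide in the $\WFT$ hierarchy, but only this discarding-of-duplicates reduction is available one-way.
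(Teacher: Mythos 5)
Your construction goes in the wrong direction. In this paper $\alpha \pc_{\DFT} \beta$ means that $\alpha$ is the \emph{image of} $\beta$ under a one-way transducer (see the definition of $\pc_{\WFT}$), so Lemma \ref{lem:pik} asserts that a $\DFT$ reading $\pi$ must output $\pi^k$. Your machine reads $\pi^k$ and outputs $\pi$, which proves the converse relation $\pi \pc_{\DFT} \pi^k$ — a true statement, and your mod-$k$ counter of $1$'s does realize it correctly — but not the lemma. The direction matters for the intended use: in the proof of Theorem \ref{theo:prime}, Lemma \ref{lem:big} gives $\alpha \pc_{\DFT} \pi^k$, and Lemma \ref{lem:pik} is composed with it to conclude $\alpha \pc_{\DFT} \pi$; that composition requires $\pi^k$ to be the image of $\pi$, so your version of the lemma would leave the main theorem unproved.

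Worse, your closing remark — that no $\DFT$ can produce $\pi^k$ from $\pi$ because it ``cannot re-scan the erased zeros'' — is exactly the claim the paper refutes with a one-line construction. No re-scanning is needed, because the $k$ copies of $0^n 1$ are carved out of $k$ \emph{distinct} input blocks rather than out of one block read $k$ times: rewrite $\pi = \prod_{n=0}^{\infty} \prod_{j=0}^{k-1} 0^{kn+j} 1$, and let the transducer count zeros modulo $k$ within each block (resetting at each $1$), outputting one $0$ per $k$ zeros read and a $1$ at each block end. On a block $0^m 1$ it outputs $0^{\lfloor m/k \rfloor} 1$, and since $\lfloor (kn+j)/k \rfloor = n$ for all $0 \le j \le k-1$, the group of $k$ consecutive blocks $0^{kn}1, \dots, 0^{kn+k-1}1$ is mapped to $(0^n 1)^k$, so the full output is $\pi^k$. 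The correct proof is thus a division-by-$k$ exploiting the arithmetic structure of $\pi$ (block lengths take every value $0,1,2,\dots$), not a discarding of duplicates; your intuition that one-way machines cannot duplicate unbounded blocks is sound in general, but it does not apply here because the duplication is simulated by compression of longer and longer blocks already present in the input.
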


\begin{proof}

We rewrite $\pi$ as $\prod_{n=0}^{\infty} \prod_{j=0}^{k-1}0^{kn + j}1$. A one-way transducer outputting $(0^n 1)^k$ when reading $\prod_{j=0}^{k-1}0^{kn + j}1$ can be built.

\end{proof}

Lemmas \ref{lem:big} and \ref{lem:pik} give $\alpha \pc_{\DFT} \pi$ and hence conclude the proof.

\end{document}